\documentclass[12pt]{amsbook}
\usepackage[all,cmtip]{xy}

\usepackage{a4wide}
\usepackage{pxfonts}
\usepackage{amssymb,amsxtra,mathrsfs}
\usepackage{eucal}
\usepackage{setspace,upgreek}
\usepackage{url}

\newcommand{\mf}[1]{\mathfrak{#1}}
\newcommand{\mc}[1]{\mathcal{#1}}

\newcommand{\mr}[1]{\mathrm{#1}}

\newcommand{\ps}[1]{\pmb{\mathsf{#1}}}

\newcommand{\ov}{\overline}

\newcommand{\lr}[2]{\langle\, #1,\,#2\,\rangle}
\newcommand{\up}{\upharpoonright}
\newcommand{\ep}{\upvarepsilon} 
\newcommand{\ze}{\mathsf{0}} 
\newcommand{\un}{\mathsf{1}} 
\newcommand{\N}{\mathbb{N}}
\newcommand{\M}{\mathbb{M}}
\newcommand{\K}{\mathbb{K}}
\newcommand{\R}{\mathbb{R}}
\newcommand{\C}{\mathbb{C}}
\newcommand{\Z}{\mathbb{Z}}
\newtheorem{theorem}{Theorem}
\newtheorem{definition}[theorem]{Definition}
\newtheorem{postulate}[theorem]{Postulate}

\newtheorem{convention}[theorem]{Convention}

\newtheorem{posit}[theorem]{Posit}

\newtheorem{proposition}[theorem]{Proposition}
\newtheorem{corollary}[theorem]{Corollary}
\newtheorem{lemma}[theorem]{Lemma}

\theoremstyle{definition}

\newtheorem{notation}
[theorem]
{Notation}

\newtheorem{remark}
[theorem]
{Remark}
\numberwithin{theorem}{section}
\sloppy
\bibliographystyle{amsplain}
\bibliography{database}
\begin{document}
\title[Dynamical Patterns]
{The $2-$category of species of dynamical patterns}

\author{Benedetto Silvestri}

\date{\today}
\keywords{dynamical patterns, 2-categories, enriched categories,
general relativity, quantum gravity, cosmology}
\subjclass[2010]{Primary 18D05, 18D20, 81R99, 81T99, 83Cxx; Secondary 81R15, 81T05, 81T20}
\maketitle
\tableofcontents
\flushbottom
\begin{abstract}
A new category $\mathfrak{dp}$, called of dynamical patterns
addressing a primitive, nongeometrical concept of dynamics, is defined 
and employed to construct a $2-$category $2-\mathfrak{dp}$,
where the irreducible plurality of species of context-depending dynamical patterns
is organized.
We propose a framework characterized by the following 
additional features.
A collection of experimental settings is associated with any species,
such that each one of them induces 
a collection of experimentally detectable trajectories.
For any connector $T$, a morphism between species, 
any experimental setting $E$ of its target species 
there exists a set such that with each of its elements $s$
remains associated 
an experimental setting $T[E,s]$ of its source species,
$T[\cdot,s]$ is called charge associated with $T$ and $s$.
The vertical composition of connectors 
is contravariantly represented in terms of charge composition.
The horizontal composition of connectors and $2-$cells of $2-\mathfrak{dp}$
is represented in terms of charge transfer.
A collection of trajectories induced by $T[E,s]$ 
corresponds to a collection of trajectories induced by $E$
(equiformity principle).
Context categories, species and connectors are organized respectively as $0,1$ and $2$ cells 
of $2-\mathfrak{dp}$ with factorizable functors via $\mathfrak{dp}$ as $1-$cells 
and as $2-$cells, arranged themself to form objects of categories,
natural transformations between $1-$cells obtained as horizontal composition
of natural transformations between the corresponding factors.
We operate a nonreductionistic interpretation 
positing that the physical reality holds the structure of 
$2-\mathfrak{dp}$, where the fibered category $\mathfrak{Cnt}$ of connectors 
is the only empirically knowable part.
In particular each connector exists as an irreducible entity of the physical reality,
and empirically detectable through the charges associated with it 
and experimentally represented my means of its equiformity principle.
The algebraic structure of $\mathfrak{Cnt}$ is experimentally 
detectable in terms of charge composition and charge transfer.
$\mathfrak{dp}$ widely generalizes the category of $C^{\ast}-$dynamical systems.
The dynamical group is replaced by a 
$\mathsf{top}-$enriched category called dynamical category,
the group action by the dynamical functor
namely a functor of $\mathsf{top}-$enriched categories 
from the dynamical category to
the category of unital topological $\ast-$algebras naturally enriched over $\mathsf{top}$,
finally an equivariant map between $C^{\ast}-$dynamical systems
is replaced by a couple formed by a functor $f$ between dynamical categories
and a natural transformation from the composition of 
the dynamical functor of the source with $f$
to the dynamical functor of the target.
As an emblematic model we show that the equivariance
under diffeomorphic actions
of the flow of complete perfect fluids on general spacetimes
is assembled into a species $\mathsf{a}$ 
on the category $\mathsf{St}$ of spacetimes and complete vector fields,
with smooth maps relating the vector fields as morphisms.
As a result the equivalence principle in general relativity
emerges as the equiformity principle of 
the identity connector of $\mathsf{a}$. 
Said a quantum gravity a suitable species $\mathsf{b}$ on $\mathsf{St}$ 
such that the underlying topological $\ast-$algebras are noncommutative, 
then the existence of a connector from $\mathsf{a}$ to $\mathsf{b}$ 
enables a quantum realization 
of the velocity of maximal integral curves of 
complete vector fields over spacetimes.
When applied to Robertson-Walker spacetimes we establish that the Hubble parameter, 
the acceleration of the scale function and new constraints for its positivity 
evaluated on a subset of the range of the galactic time of a geodesic $\alpha$, 
are expressed in terms of a quantum realization of the velocity of $\alpha$.
As a result the existence of a connector satisfying these constraints 
implies a positive acceleration and represents an alternative to the dark energy hypothesis.
\end{abstract}
\section{Introduction}
\label{intr}
In order to establish when physical theories 
may be considered equivalent in all spacetimes,
Fewster and Verch \cite{28fv} define 
locally covariant theories and their embeddings
in terms of the category of functors from the category
of globally hyperbolic spacetimes $\mr{Loc}$ to the abstract 
category of physical systems $\mr{Phys}$,
for which the category $\mr{CA}^{\ast}$ 
of $C^{\ast}-$algebras and $\ast-$morphisms,
represents a model.
A similar concept in the special case of $\mr{CA}^{\ast}$ 
was previously discussed in
Brunetti, Fredenhagen and Verch \cite{28bfv},
where in order to address in a general covariant setting the concept of quantum field,
they defined any locally covariant theory as a functor from essentially 
$\mr{Loc}$ to $\mr{CA}^{\ast}$.
They regarded a quantum field as a natural transformation
between functors obtained by composing 
locally covariant theories with
the forgetful functor from $\mr{CA}^{\ast}$ 
to the category of topological spaces.
In \cite{28fv} and \cite{28bfv}
the target categories $\mr{Phys}$ and $\mr{CA}^{\ast}$ respectively
are interpreted essentially as the collection of kinematical systems with embeddings as morphisms,
allowing the dynamical transformations to be realized in terms of morphisms.
In what follows the categories modeling $\mr{Phys}$ are named kinematical categories. 
\par
We instead retain that dynamics are actualized by objects of a category,
they in general are not byproducts of geometrical transformations, 
although they could be covariant under geometrical actions.
Thus our initial posit which will be later extended, reads as follows.
\par
\emph
{Dynamics is a primitive collection of entities organized to form the category $\mf{dp}$}.
\par 
The meaning of primitive will be later formalized.
$\mf{dp}$ is constructed in Def. \ref{09081303} and Cor. \ref{09100952} 
and called category of dynamical patterns.
It is a nontrivial generalization of the category $\mf{ds}$ of 
topological dynamical systems with equivariant maps as morphisms,
where a topological dynamical system is determined by a
morphism in the category of topological groups 
whose target is the group of continuous $\ast-$automorphisms
of a unital topological $\ast-$algebra endoved with the 
topology of simple convergence.
Let us call dynamical group the source object of a dynamical system. 
\par
In constructing $\mf{dp}$ the dynamical group is replaced by a 
\emph{dynamical category},
that is a category enriched\footnote{to be precise 
is a weaker version of the standard enrichment
we call quasi-enrichment.}
over $\mr{top}$,
the category of topological spaces, 
and the group morphism replaced by the 
\emph{dynamical functor},
namely a functor of $\mr{top}-$enriched categories 
from the dynamical category to the 
naturally $\mr{top}-$enriched category $\mr{tsa}$ of 
unital topological $\ast-$algebras and continuous $\ast-$morphisms.
Morphisms of $\mf{dp}$ are couples formed by a 
\emph
{$\mr{top}-$functor $f$ from the dynamical category of the target 
to the dynamical category of the source
and a 
natural transformation from the composition of 
the dynamical functor of the source composed with $f$,
to the dynamical functor of the target}, 
generalizing the concept of equivariant morphisms in 
$\mf{ds}$.
\par
$\mr{CA}^{\ast}$ is trivially embedded in $\mf{cds}$,
the subcategory of $\mf{ds}$ formed by $C^{\ast}-$dynamical systems,
by the map $\mc{A}\mapsto\lr{\mc{A},\{Id_{\mc{A}}\}}{Id_{\mc{A}}\mapsto Id_{\mc{A}}}$ of objects
and the map $f\mapsto(Id_{d(f)}\mapsto Id_{c(f)},f)$ of morphisms;
while $\mf{ds}$ is embedded in $\mf{dp}$.
The reasons to prefer in the definition of $\mf{dp}$
the category of topological $\ast-$algebras rather than $\mr{CA}^{\ast}$
or an abstract $\mr{top}-$enriched category reside in what follows.
In the first case it is in order to embed not only $\mf{cds}$
but also $W^{\ast}-$dynamical systems which are continuous w.r.t. the sigma-weak topology.
In the second case it is in order to model physical theories, 
which require a framework to produce experimentally detectable values.
Finally as we shall see below, the requisite to address in a compact and elegant way
the geometric equivariance of the flow generated in particular by perfect fluids on spacetimes,
forces us to move from dynamical groups to dynamical categories.
The following properties characterize $\mf{dp}$.
\begin{enumerate}
\item
A species contextualized in a category, is a functor 
from this category to $\mf{dp}$\footnote{Exactly
species are functors valued in the category $\mf{Chdv}$, 
defined in Def. \ref{12312135} and Cor. \ref{31122146},
however there exists a canonical
functor from $\mf{dp}$ to $\mf{Chdv}$ permitting to associate 
with any functor valued in $\mf{dp}$ a species.
It is worthwhile remarking that the
natural transformation between functors 
with values in $\mr{tsa}$ 
present in the definition of the morphisms of $\mf{dp}$,
is replaced in the $\mf{Chdv}$ case with a
natural transformation
between functors 
with values in the category 
$\mr{ptsa}$ whose object set is as $\mr{tsa}$ but whose morphism set is the subset 
of linear positive continuous maps $T$ between unital topological $\ast-$algebras such that 
$T(\un)\leq\un$.
By dropping the request on $T$ of being an algebra morphism
will allow later to consider connectors between classical and quantum species,
but at this stage of the discussion this point is irrelevant.},
and it encodes a collection of context-depending \emph{dynamical patterns}
equivariant under action of the morphisms of the context category.
\item
A collection of experimental settings $Exp(\mr{a})$ is associated with any species $\mr{a}$.
Roughly an experimental setting is a couple $(\mf{S},\mr{R})$,
where $\mr{R}$ maps any context $M$ into a subcategory of the 
dynamical category of the dynamical pattern $\mr{a}(M)$,
while $\mf{S}$ maps $M$ into a $\mr{R}_{M}-$fibered family of
continuous positive linear functionals over 
the topological $\ast-$algebras underlying $\mr{a}(M)$.
Each functional stands for a statistical ensemble 
whose strength\footnote{strength with the meaning used in \cite{28dl}.} 
is the value the functional assumes at the identity.
$\mr{R}$ and $\mf{S}$ are equivariant with respect to 
the geometrical action of the context category 
and most importantly $\mf{S}$ is
equivariant under the action of the 
\emph{dynamical subcategories represented by $\mr{R}$ via the conjugate of the dynamical functors}
Def. \ref{01161844}.
The fact that we consider dynamical subcategories reflects 
the eventual occurrance of broken dynamical symmetries.
\item
To any species $\mr{a}$,
any context $M$
and any couple of objects $x,y$ of the dynamical category of 
$\mr{a}(M)$ 
is assigned a family of experimentally detectable trajectories
whose initial conditions are represented by couples of 
statistical ensembles and observables and such that the dynamics is realized
by morphisms from $x$ to $y$ via the \emph{dynamical functor} of $\mr{a}(M)$
Def. \ref{01162119}.
By restriction of the initial conditions a family of trajectories
can be assigned to any experimental setting of $\mr{a}$.
\item
A connector is a natural transformation 
between species contextualized over the same category
and it is decoded by the two diagrams exposed in 
Lemma \ref{01011521},
in which the category $\mf{dp}$ trasparently determines
the dynamical nature of the connector.
We have the following properties
\begin{enumerate}
\item
\textbf{Charges}.
Any connector $\mr{T}$ induces a set valued map $\Upgamma(\cdot,\mr{T})$
over the set $Exp(c(\mr{T}))$ of experimental settings of 
the target species of $\mr{T}$ and a function $\mr{T}[\cdot,\cdot]$, 
mapping any couple $(\mf{Q},s)$ where $\mf{Q}\in Exp(c(\mr{T}))$ and 
$s\in\Upgamma(\mf{Q},\mr{T})$
into an experimental setting $\mr{T}[\mf{Q},s]\in Exp(d(\mr{T}))$
of the source species of $\mr{T}$
Thm. \ref{10081910}\eqref{10081910st1}.
The map $\mr{T}[\cdot,s]$ is called charge associated with $\mr{T}$ and $s$.
If $\mr{T}$ connects species of dynamical systems 
the degeneration is removed and $\mr{T}[\mf{Q}]$ stands for $\mr{T}[\mf{Q},s]$
\cite{28sil2}.
\item
\textbf{The vertical composition of connectors is 
contravariantly represented as charge composition}.
\begin{enumerate}
\item
\emph{general connectors}:
under suitable hypothesis 
there exists a contravariant representation
of the vertical composition of connectors 
in terms of composition of charges Cor. \ref{10151636};
\item
\emph{connectors of species of dynamical systems}:
if $\mr{a}$ is a species of dynamical systems the result is stronger, 
fixed a context category $\mf{D}$, the assignements 
$\mr{a}\mapsto Exp(\mr{a})$ and $\mr{T}\mapsto\mr{T}[\cdot]$
determine a contravariant functor at values in $\mr{set}$
and defined on the functor
category of species contextualized on $\mf{D}$ with connectors as morphisms
\cite{28sil2}.
\end{enumerate}
\item
\textbf{The horizontal composition of connectors with $2-$cells is represented as charge transfer}.
For any connector $\mr{T}$ and any $2-$cell $\mr{L}$
$\ast-$composable to the right
with $\mr{T}$ we have that $\mr{T}\ast\mr{L}$ is a connector, 
such that the charge $(\mr{T}\ast\mr{L})[\cdot,r]$,
for a suitable $r$ depending by $s$,
maps the pullback through $\mr{y}$
of any experimental setting 
$\mf{Q}$ of the target species of $\mr{T}$ into an experimental setting 
which is included in the pullback through $\mr{x}$ of the experimental setting
obtained by mapping $\mf{Q}$ through the charge $\mr{T}[\cdot,s]$.
Here $\mr{x}$ and $\mr{y}$ are the source and target of $\mr{L}$ respectively,
and $s$ is a suitable element of $\Upgamma(\mf{Q},\mr{T})$
Cor. \ref{11200910}.
\item
\textbf{Equiformity principle}\footnote{The precise and general statement 
for links is established in Prp. \ref{01162038} and physically interpreted in Prp. \ref{12011304},
while in Thm. \ref{10081910} we show that any connector is a link.}.
Let $\mr{T}$ be a connector from the species $\mr{a}$ to the species $\mr{b}$,
$\mf{Q}$ be an experimental setting of $\mr{b}$
and $s\in\Upgamma(\mf{Q},\mr{T})$.
Thus for all contexts $M,N$ 
and morphisms $\phi:M\to N$  
we have that 
the map obtained by conjugate action of $\mr{T}_{1}^{m}(N)$
over any suitable trajectory of $\mr{a}$, relative to $N$
and assigned to $\mr{T}[\mf{Q},s]$ 
equals 
the map obtained by conjugate action of $\mr{b}_{1}^{m}(\phi)$
over a trajectory of $\mr{b}$, relative to $M$
and assigned to $\mf{Q}$.
Here 
$\mr{T}_{1}^{m}(N)$ is the morphism map of a functor, determined by $\mr{T}$,
from
the dynamical category of $\mr{b}(N)$
to 
the dynamical category of $\mr{a}(N)$ 
and
$\mr{b}_{1}^{m}(\phi)$ is the morphism map of a functor, determined by $\mr{b}$ and $\phi$,
from
the dynamical category of $\mr{b}(N)$
to 
the dynamical category of $\mr{b}(M)$
Thm. \ref{10081910}\eqref{10081910st5}. 
\end{enumerate}
\item
Species provide a great variety of context-depending dynamics
which cannot be modeled by employing functors valued in $\mr{CA}^{\ast}$
or which would require ad hoc constructions involving morphisms
in the context category.
\begin{enumerate}
\item
The diffeomorphic equivariance of the flow generated by perfect fluids 
on spacetimes is the emblematic model of a species in $\mf{dp}$
Cor. \ref{01191431}, Thm. \ref{09201707} and Thm. \ref{11151519}.
\item
The two symmetries of the relative Cauchy evolution
established in \cite[Prp. 3.7 and Prp. 3.8]{28fv}, 
are essentially two specific manifestations 
of the equiformity principle see Prp. \ref{11171604} and the comment following it.
\label{11191107}
\end{enumerate}
\end{enumerate}
Context categories, species and connectors form 
$0,1$ and $2$ cells respectively of a $2-$category $2-\mf{dp}$,
such that the collection of all connectors 
can be organized in a fibered\footnote{fibered here means simply 
a collection of categories labelled by some set.}
category $\mf{Cnt}$  over a subset of couples of species
and be provided with a partial internal operation,
the vertical composition $\circ$, 
and a module structure over the collection of $2-$cells 
induced by the horizontal composition Prp. \ref{10052123}.
\par
We regard $2-\mf{dp}$, and in particular $\mf{Cnt}$, 
a \emph{nontrivial dynamics-oriented} generalization 
of the category of covariant sectors in algebraic quantum field theory.
We point out what follows.
\begin{enumerate}
\item
New it is the construction of $\mf{dp}$ to model a dynamical pattern
as a functor between $\mr{top}-$enriched categories. 
In our framework dynamical phenomena 
in general 
reflect structural properties of primitive entities
rather than be a byproduct of geometric transformations
induced by morphisms in the context category.
\label{12022035}
\item
New it is the structure of experimental setting.
Our definition extends that of state space in \cite{28bfv},
since it includes the dynamics
and it extends 
the state space associated with a covariant sector in 
Doplicher, Haag and Roberts \cite{28dhr2},
since it extends the Poincar\'e action to a dynamical category action.
More in general we introduce the concept of equivariance under action of 
dynamics of \emph{non-geometrical} origin.
\item
New it is the use of natural transformations and in particular connectors to construct
charges and as a result state spaces which are covariant under dynamical action.
Thus a connector extends the concept of covariant sector of \cite{28dhr2}
to include dynamics of \emph{non-geometrical} origin.
\item
New it is the use of vertical composition of natural transformations
and in particular of connectors,
to generalize the concept of charge composition.
\item
New it is the concept of charge transfer.
\item
New it is the \textbf{equiformity principle} in particular 
for dynamics of \emph{non-geometric} origin.
As a natural transformation a connector is an \emph{embedding} of species, 
in such a role is analog to a natural transformation between theories 
\cite{28fv,28bfv}.
Nevertheless a connector embeds functors valued in $\mf{dp}$ rather than in $\mr{Phys}$
or in any kinematical category.
\emph
{Exactly because of the peculiar structure of $\mf{dp}$ which encodes 
directly the concept of \textbf{dynamics}, 
the connector encrypts in a natural way empirical information 
-
concerning the correspondence between 
trajectories associated with its target species
with those associated with its source species
- 
decoded in terms of its equiformity principle.
Now since $\mf{dp}$ permits to address dynamics which are
\textbf{not} geometrically determined\footnote{given a functor 
$\mr{a}$ from $\mf{D}$ to $\mf{Chdv}$ 
and an object $M$ of $\mf{D}$
we say that the dynamics of $\mr{a}(M)$ is geometrically determined
if the morphism map $\uptau_{\mr{a}(M)}$ of its dynamical functor factorizes through $\mr{a}_{3}$.}
\label{geom},
it results impossible to convey the above information 
by exploiting $\mr{Phys}$ or any kinematical category\footnote{more 
specifically it is impossible to obtain 
Lemma \ref{01011521} if we replace $\mf{Chdv}$ with $\mr{CA}^{\ast}$
or more in general with $\mr{Phys}$.}.
Moreover even if the dynamics is geometrically determined, 
the equiformity principle unifies diverse symmetries by explaining them as particular consequences.
More explicitly the first commutative diagram in Lemma \ref{01011521} integrates them as 
specific outcomes of the commutativity of subdiagrams,
see Prp. \ref{11171604} and the comment following it.}
\end{enumerate}
Our interpretation later referred as $\mr{Ep}$, is as follows Posit \ref{10031521}.
\par
\emph
{The physical reality is intrinsically dynamical and it is structured by $2-\mf{dp}$,
the only knowable part being $\mf{Cnt}$ the fibered category of connectors.
Each connector is an irreducible entity empirically detectable by means 
of the charges associated with it and by employing its equiformity principle.
The structure of $\mf{Cnt}$ as a whole is empirically detectable by the
representation of the vertical composition in terms of charge composition 
and by the representation of the horizontal composition in terms of charge transfer.}
\par
We shall return later on the equiformity principle,
here we have two remarks.
Firstly let $\un_{\mf{dp}}$ be the identity species contextualized in $\mf{dp}$, 
and $\un_{\un_{\mf{dp}}}$ be the identity natural transformation whose source and target 
species equal $\un_{\mf{dp}}$,
then $\un_{\un_{\mf{dp}}}$ is a connector and 
it is primitive meaning that  
it is an identity with respect to the vertical and horizontal composition in $\mf{Cnt}$.
Therefore if we identify $\mf{dp}$ with the connector $\un_{\un_{\mf{dp}}}$,
then according to $\mr{Ep}$, the category $\mf{dp}$ 
is an existing primitive entity, thus making precise the meaning primitive 
used above to characterize $\mf{dp}$.
Secondly the empirical representation of a species 
emerges in terms
of the collection of the experimental settings 
generated by all its sectors
and in terms of the equiformity principle of 
all its sectors,
where a sector is a connector whose source equal the target.
\par
As we announced the dynamical pattern approach appears 
to be required in order to encode 
with only one structure
the equivariance under diffeomorphic actions of 
the flow of complete perfect fluids on spacetimes.
To show this we briefly describe 
the construction of the functor $\mf{a}$ valued in $\mf{dp}$ and defined on the category $\mr{St}_{n}$.
Roughly $\mr{St}_{n}$ is the category of the couples $(\mc{M},U)$, 
where $\mc{M}$ is a $n-$dimensional spacetime and $U$ is an observer field on $\mc{M}$,
namely a complete timelike unit future-pointing smooth vector field on $\mc{M}$,
with smooth maps between spacetimes preserving the orientation
and relating the observer fields, as morphisms. 
\par
In particular with $n=4$ the observer field $U$ on $\mc{M}$ can be the component of a 
perfect fluid $(\uprho,p,U)$ on $\mc{M}$, where the integral curves of $U$
describe the trajectories of particles moving in the gravitational field
described by the metric tensor of $\mc{M}$ and subject to a density energy
and density pression $\uprho$ and $p$ respectively, \cite[Def. 12.4]{28one}.
As a result we have that if $(\mc{M},U)$ and $(\mc{N},V)$ are objects of $\mr{St}_{4}$,
such that $U$ is the component of a perfect fluid on $\mc{N}$ and there exists
a diffeomorphism relating $U$ and $V$, then $V$ is the component of a perfect fluid on $\mc{N}$,
Thm. \ref{11151519}
\par
As a first step we define the collection of $\mr{vf}-$topologies 
Def. \ref{12051029}
formed by functions $\upxi$ mapping each object $(\mc{M},U)$ of $\mr{St}_{n}$ 
into a relevant topology on $\mc{A}(M)$ making it a topological $\ast-$algebra,
where $\mc{A}(M)$ is the commutative $\ast-$algebra 
of complex valued smooth maps on $M$ the manifold supporting $\mc{M}$. 
Now fixed a $\mr{vf}-$topology $\upxi$, 
the idea behind the construction of $\mf{a}$ 
it is to associate with any object $(\mc{M},U)$ of $\mr{St}_{n}$ the dynamical pattern
whose dynamical category denoted by $[M,U]$
holds as object set the collection of open subsets of $\mc{M}$,
with morphisms the real numbers that via the flow of the complete vector field $U$
map one open set into the other.
The dynamical functor $F_{[M,U]}$ is such that its object map sends any open subset $W$
of $\mc{M}$ to $\mc{A}(W)$
provided by the topology inherited by the topology $\upxi_{[M,U]}$, 
while its morphism map $F_{[M,U]}^{m}$ sends any real number $t$ into the conjugate 
on $\mc{A}(W)$ of the flow of $U$ evaluated in $t$.
Finally for any morphism $\phi$ between two objects $(\mc{M},U)$ and $(\mc{N},V)$ 
the value in $\phi$
of the morphim map of $\mf{a}$ is the couple formed by a $\mr{top}-$functor $f_{\phi}$
from $[N,V]$ to $[M,U]$
and a natural transformation $T_{\phi}$
from $F_{[M,U]}\circ f_{\phi}$ to $F_{[N,V]}$
Thm. \ref{09201707} and Cor. \ref{01191431}.
The definition of $\mr{vf}-$topology is intrinsically related to $\mr{St}_{n}$
and provides the minimal requirements in order to ensure 
the continuity of $F_{[M,U]}^{m}$ and $T_{\phi}$.
\par
In the remaining of this introduction we discuss 
interpretational features of the equiformity principle.
Let us start by remarking from the above example that
\begin{enumerate}
\item
The identity natural transformation of the functor $\mf{a}$
realizes the equivalence principle of general relativity
in particular providing 
diffeomorphic covariance of the integral curves 
of complete perfect fluids 
Cor. \ref{01201634}.
\label{11091230b}
\item
Item \eqref{11091230b} suggests to interpret 
the equiformity principle induced by any connector as
a generalized equivalence principle between the source 
and target species.
\label{11091230c}
\end{enumerate}
In extreme synthesis we can say that the equiformity principle of $\mr{T}$
roughly affirms that
\emph
{a collection of trajectories 
assigned to any experimental setting $\mf{Q}$ of the \emph{target} species of $\mr{T}$
corresponds
to a collection of trajectories 
assigned to the experimental setting $\mr{T}[\mf{Q},s]$ of the \emph{source} species of $\mr{T}$
for any $s\in\Upgamma(\mf{Q},\mr{T})$}.
\par
Here what we point out is the correspondence between target and source species.
Let us analyze some consequences of this principle 
when the source is a classical species
and the target is a quantum species.
Here by classical (quantum) species
we mean a species such that it is commutative 
(noncommutative)
the algebra associated with any context and any object of the dynamical category 
of the species.
Then the principle establishes that 
classical and quantum trajectories correspond.
\par
Notice that we are saying that 
\emph
{the dynamical evolution of classical observables
when measured against classical statistical ensembles, 
equals the dynamical evolution of suitable quantum observables
when measured against suitable quantum statistical ensembles}.
Incidentally the main outcome of the reductionistic point of view is 
to regard general relativity a coarse grain approximation of,
and then worthy to be reduced to, 
a theory where spacetime emerges from a more fundamental quantum entity,
or at least where the gravitational field is quantized.
Thus in both cases according to the reductionistic view, 
general relativity is compelled to be reduced and then replaced by a quantum theory
of gravity.
\emph
{Instead according $\mr{Ep}$ classical and quantum species \emph{coexist} 
and this coexistence is empirically detectable in terms of the equiformity principle of
the connectors between them.}
\par
More precisely assume that there exist $\mf{b}$ and $\mr{T}$,
where $\mf{b}$ is a (strict) quantum gravity, namely
a suitable functor from the category $\mr{St}_{n}$ to $\mf{dp}$ 
such that for any context $(\mc{M},U)$ of $\mr{St}_{n}$
the dynamical category of $\mf{b}(\mc{M},U)$ is $[M,U]$
and for any open set $W$ of $\mc{M}$
the topological $\ast-$algebra $\mc{A}_{\mf{b}((\mc{M},U))}(W)$ is noncommutative, 
while $\mr{T}$ is a natural transformation from 
the species of general relativity $\mf{a}$ 
to the quantum gravity species $\mf{b}$.
Thus the equiformity principle of $\mr{T}$
establishes in particular the
\emph
{quantum realization of the velocity of maximal integral curves of 
the complete vector field $U$}
Cor. \ref{11091638}.
\par
Clearly this sort of classical-quantum coexistence is automatically 
precluded by the actual reductionistic paradigm, 
however the equiformity principle produces experimentally testable equalities, 
that can be employed in order to opt for the paradigm embodied by $\mr{Ep}$ or for reductionism.
\par 
The paper is organized as follows.
We start in section \ref{11301512} by introducing 
the propensity map slightly generalizing the usual state-effect duality.
In section \ref{11301644} we introduce the category of dynamical patterns and 
the category of channels and devices. 
Define the concept of species, experimental settings of a species, links between 
species and the fundamental equiformity principle for links.
We prove that any connector is a link between  
any experimental setting of its target and 
a suitable experimental setting of its source, thus providing 
an equiformity principle.
Then we establish charge composition and charge transfer of connectors.
These represent three of the five main results of the paper.
In section \ref{11301655} we introduce the general language to address the 
$2-$category $2-\mf{dp}$ and the fibered category of connectors.
In section \ref{10222056}
we construct in the fourth main result an example of species of dynamical patterns
namely the $n-$dimensional classical gravity species $\mr{a}^{n}$,
and state that the equivalence principle of general relativity
emerges as the equiformity principle of the connector associated with $\mr{a}^{n}$. 
Finally in section \ref{10181450} we define the collection of quantum gravity species
and prove that the existence of a connector $\mr{T}$ from $\mr{a}^{n}$ 
and an $n-$dimensional strict quantum gravity species 
provides a quantum realization of the velocity of maximal integral curves of 
complete vector fields over spacetimes Cor. \ref{11091638}.
As an application to Robertson-Walker spacetimes 
we establish that the Hubble parameter, 
the acceleration of the scale function 
and the new constraints of its positivity evaluated over a subset of the range 
of the galactic time of a geodesic $\alpha$, 
are expressed in terms of a quantum realization of the velocity of $\alpha$
Thm. \ref{04151343} and Cor. \ref{04171614}.
\emph
{Therefore the positivity of the acceleration 
follows as a result of the existence of $\mr{T}$
satisfying these constraints 
rather than the existence of dark energy}.
\section{Terminology and preliminaries}
\label{not1}
In the entire paper 
we consider the Zermelo-Fraenkel theory together the axiom of universes 
stating that for any set there exists a universe 
containing it as an element \cite[p. $10$]{28ks}.
See \cite[Expose I Appendice]{28sga4} for the definition and properties of universes, 
see also \cite[p. $22$]{28mcl} and \cite[$\S1.1$]{28bor}. 
\subsubsection{Sets}
Given two sets $A,B$ let $\mc{P}(A)$ denote the power set of $A$,
$\hookrightarrow$ denotes the set embedding
and for maps $g,f$ such that $d(g)=c(f)$ and $B\subset d(f)$
we let $(g\circ f)(B)$ denote the image of $B$ through the map
$g\circ f\circ(B\hookrightarrow d(f))$.
Let $\mr{ev}_{(\cdot)}$ denote the evaluation map, i.e. if $F:A\to B$ 
is any map and $a\in A$, then $\mr{ev}_{a}(F)\coloneqq F(a)$.
$\R$ and $\C$ are the fields of real and complex numbers provided by the standard topology,
set $\N_{0}\coloneqq\N-\{0\}$ and $\R_{0}\coloneqq\R-\{0\}$, 
while $\widetilde{\R}\coloneqq\R\cup\{\infty\}$
provided by the topology of one-point compactification.
If $A$ is any set then $\un_{A}$ or simply $\un$, is the identity map on $A$.
For any semigroup $S$ let $S^{op}$ denote the opposite semigroup, 
while for any subset $A$ of $S$ let $\langle A\rangle$ denote 
the subsemigroup of $S$ generated by $A$.
If $S,X$ are topological spaces, let $\mc{C}(S,X)$ denote the set of continuous maps on $S$ and into $X$,
and $Op(S)$, $Ch(S)$ and $Comp(S)$ denote the sets of open, closed and compact subsets of
$S$, while $\mc{B}(S)$ the $\sigma-$field of Borel subsets of $S$. If $A\subset S$ then $\ov{A}$ 
or $\mr{Cl}(A)$ denotes the closure of $A$ in $S$. 
\subsubsection{Categories}
Let $A$, $B$ and $C$ be categories. 
Let $\mr{Obj}(A)$ denote the set of the objects of $A$, often we let $x\in A$ denote $x\in \mr{Obj}(A)$. 
For any $x,y\in A$ let $\mr{Mor}_{A}(x,y)$ be the set of morphisms of $A$ from $x$ to $y$, 
also denoted by $A(x,y)$, let $\un_{x}$ the unit 
morphism of $x$, while 
$\mr{Inv}_{A}(x,y)=\{f\in \mr{Mor}_{A}(x,y)\,\vert\,
(\exists g\in \mr{Mor}_{A}(y,x))(f\circ g=\un_{y},g\circ f=\un_{x}))\}$ 
denotes the possibly empty set of invertible morphisms from $x$ to $y$.
Set $\mr{Mor}_{A}=\bigcup\{\mr{Mor}_{A}(x,y)\,\vert\,x,y\in A\}$ and $\mr{Inv}_{A}=\bigcup\{\mr{Inv}_{A}(x,y)\,\vert\,x,y\in A\}$, while
$\mr{Aut}_{A}(y)=\mr{Inv}_{A}(y,y)$, for $y\in A$, and $\mr{Aut}_{A}=\bigcup\{\mr{Aut}_{A}(x)\,\vert\,x\in A\}$.
For any $T\in \mr{Mor}_{A}(x,y)$ we set $d(T)=x$ and $c(T)=y$, while the composition on $\mr{Mor}_{A}$ is always denoted by $\circ$.
Let $A^{op}$ denote the opposite category of $A$, \cite[p. $33$]{28mcl}.
Let $\mr{Fct}(A,B)$ denote the category of functors from $A$ to $B$ and 
natural transformations provided by pointwise composition,
see \cite[$1.3$]{28ks}, \cite[p.$40$]{28mcl}, \cite[p. $10$]{28bor}.
Let us identify any $F\in\mr{Fct}(A,B)$ with the couple $(F_{o},F_{m})$, 
where $F_{o}:\mr{Obj}(A)\to \mr{Obj}(B)$ said the object map of $F$, 
while $F_{m}:\mr{Mor}_{A}\to \mr{Mor}_{B}$ such that 
$F_{m}^{x,y}:\mr{Mor}_{A}(x,y)\to \mr{Mor}_{B}(F_{o}(x),F_{o}(y))$ where $F_{m}^{x,y}=F_{m}\up \mr{Mor}_{A}(x,y)$ for all $x,y\in A$.
Often and only when there is no risk of confusion we adopt 
the standard convention to denote $F_{o}$ and $F_{m}$ simply by $F$.
Let $\un_{A}\in\mr{Fct}(A,A)$ be the identity functor on $A$, defined in the obvious way.
Let $\circ:\mr{Fct}(B,C)\times\mr{Fct}(A,B)\to\mr{Fct}(A,C)$ 
denote the vertical composition of functors \cite[Def. $1.2.10$]{28ks},
\cite[p. $14$]{28mcl}, for any $\upsigma\in\mr{Fct}(A,B)$, the identity morphism 
$\un_{\upsigma}$ of $\upsigma$ in the category $\mr{Fct}(A,B)$ 
is such that $\un_{\upsigma}(M)=\un_{\upsigma_{o}(M)}$, for all $M\in A$.
Let $\beta\ast\alpha\in \mr{Mor}_{\mr{Fct}(A,C)}(H\circ F,K\circ G)$ be the Godement product 
or horizontal composition between the 
natural transformations $\beta$ and $\alpha$, where $H,K\in\mr{Fct}(B,C)$ and $F,G\in\mr{Fct}(A,B)$ while 
$\beta\in \mr{Mor}_{\mr{Fct}(B,C)}(H,K)$ and $\alpha\in \mr{Mor}_{\mr{Fct}(A,B)}(F,G)$, see \cite[Prp. $1.3.4$]{28bor} 
(or \cite[p. $42$]{28mcl} where it is used the symbol $\circ$ instead of $\ast$). 
The product $\ast$ is associative, in addition
we have the following rule for all $\gamma\in\mr{Mor}_{\mr{Fct}(A,B)}(H,L)$, $\alpha\in \mr{Mor}_{\mr{Fct}(A,B)}(F,H)$, 
and $\delta\in \mr{Mor}_{\mr{Fct}(B,C)}(K,M)$, $\beta\in \mr{Mor}_{\mr{Fct}(B,C)}(G,K)$, see \cite[Prp. $1.3.5$]{28bor}
\begin{equation}
\label{bor135}
(\delta\ast\gamma)\circ(\beta\ast\alpha)=(\delta\circ\beta)\ast(\gamma\circ\alpha).
\end{equation}
We have 
\begin{equation}
\label{20061403}
\beta\ast\un_{F}=\beta\circ F_{o},
\end{equation}
moreover $\un_{G}\ast\un_{F}=\un_{G\circ F}$, and $\un_{F}\circ\un_{F}=\un_{F}$.
For all categories $\mf{D},\mf{F}$, 
all $\mr{a},\mr{b},\mr{c}\in\mr{Fct}(\mf{D},\mf{F})$, all
$\mr{T}\in\prod_{O\in\mf{D}}\mr{Mor}_{\mf{F}}(\mr{a}(O),\mr{b}(O))$ 
and $\mr{S}\in\prod_{O\in\mf{D}}\mr{Mor}_{\mf{F}}(\mr{b}(O),\mr{c}(O))$ 
we set $S\circ T\in\prod_{O\in\mf{D}}\mr{Mor}_{\mf{F}}(\mr{a}(O),\mr{c}(O))$ 
such that $(S\circ T)(M)\coloneqq S(M)\circ T(M)$ for all
$M\in\mf{D}$.
\par
Let $B$ be a category, for any $x\in B$ set $\Delta_{x}\coloneqq\bigcup\{\mr{Mor}_{B}(x,y)\,\vert\, y\in B\}$ and 
$\Gamma_{x}\coloneqq\bigcup\{\mr{Mor}_{B}(y,x)\,\vert\, y\in B\}$.
Let $(\cdot)^{\dagger}$ and $(\cdot)_{\star}$ be maps defined on $\mr{Mor}_{B}$ such that for any $T\in \mr{Mor}_{B}$ we have
$T^{\dagger}:\Delta_{c(T)}\to\Delta_{d(T)}$ $S\mapsto S\circ T$, 
and $T_{\star}:\Gamma_{d(T)}\to\Gamma_{c(T)}$ $S\mapsto T\circ S$, 
moreover let $(\cdot)^{\ast}\coloneqq(\cdot)^{\dagger}\circ(\cdot)^{-1}\up \mr{Inv}_{B}$, thus
\begin{equation}
\label{08161111}
\begin{aligned}
(T\circ J)^{\dagger}&=J^{\dagger}\circ T^{\dagger},
\\
(T\circ J)_{\star}&=T_{\star}\circ J_{\star},
\\
(U\circ W)^{\ast}&=U^{\ast}\circ W^{\ast},
\end{aligned}
\end{equation}
for any $T,J\in \mr{Mor}_{B}$, such that $(T,J)\in Dom(\circ)$, 
and $U,W\in \mr{Inv}_{B}$ such that $(U,W)\in Dom(\circ)$. 
Moreover whenever it does not make confusion let $T^{\dagger}$, $T_{\star}$ and $W^{\ast}$ 
denote also their restrictions.
\par
Let $\mc{V}$ be a universe, then following \cite{28ks}, a set is $\mc{V}-$small
if it is isomorphic to a set belonging to $\mc{V}$, and a set is a $\mc{V}-$set if it belongs to $\mc{V}$.
A $\mc{V}-$category $A$ is a category such that $\mr{Mor}_{A}(M,N)$ is $\mc{V}-$small for any $M,N\in A$,
and $A$ is $\mc{V}-$small if it is a $\mc{V}-$category and $\mr{Obj}(A)$ is $\mc{V}-$small. 
Let us add the following definition:
$\mc{T}$ is a $\mc{V}-$type category if it is a $\mc{V}-$category such that $\mr{Obj}(\mc{T})\simeq B$ and $B\subseteq\mc{V}$.
Clearly any $\mc{V}-$small category is a $\mc{V}-$type category since $\mc{V}\subset\mc{P}(\mc{V})$.
Moreover if $\mc{V}_{0}$ is a universe such that $\mc{V}\in\mc{V}_{0}$, then 
$\mc{V}\subset\mc{V}_{0}$ since $\mc{V}_{0}\subset\mc{P}(\mc{V}_{0})$, hence  
any $\mc{V}-$type category is a $\mc{V}_{0}-$small category.
\begin{proposition}
\label{12311604}
Let $A,B$ be categories and $\mc{V}$ a universe, thus 
\begin{enumerate}
\item
\begin{equation*}
\begin{aligned}
\mr{Obj}(\mr{Fct}(A,B))
&\subset
\mc{P}(\mr{Obj}(A)\times\mr{Obj}(B))
\times
\mc{P}(\mr{Mor}_{A}\times\mr{Mor}_{B}),
\\
\mr{Mor}_{\mr{Fct}(A,B)}(f,g)
&\subset
\prod_{x\in \mr{Obj}(A)}\mr{Mor}_{B}(f_{o}(x),g_{o}(x)),\,
\forall f,g\in \mr{Obj}(\mr{Fct}(A,B));
\end{aligned}
\end{equation*}
\label{12311104}
\item
if $A$ is $\mc{V}-$small and $B$ is a $\mc{V}-$category,
thus $\mr{Fct}(A,B)$ is a $\mc{V}-$category;
\label{12311604st2}
\item
if $A$ and $B$ are $\mc{V}-$small categories then $\mr{Fct}(A,B)$ is a $\mc{V}-$small category.
\label{12311604st3}
\end{enumerate}
\end{proposition}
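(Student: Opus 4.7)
My plan is to treat the three items in order, since each builds on the previous by adding size hypotheses.

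For item \eqref{12311104} the argument is purely a matter of unpacking definitions. A functor $F\in\ms{Fct}(A,B)$ is identified in the paper with the pair $(F_{o},F_{m})$, where $F_{o}:Obj(A)\to Obj(B)$ and $F_{m}:Mor_{A}\to Mor_{B}$ are maps, hence as functional graphs are subsets of $Obj(A)\times Obj(B)$ and $Mor_{A}\times Mor_{B}$ respectively; equivalently they are elements of the corresponding power sets, giving the first inclusion. For the second inclusion, a natural transformation $\alpha\in Mor_{\ms{Fct}(A,B)}(f,g)$ is by definition a family $\{\alpha_{x}\}_{x\in Obj(A)}$ with $\alpha_{x}\in Mor_{B}(f_{o}(x),g_{o}(x))$, so $\alpha$ is an element of the indicated product.

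For item \eqref{12311604st2} I would use \eqref{12311104} together with the universe axioms. Fix $f,g\in Obj(\ms{Fct}(A,B))$. Since $A$ is $\mc{V}$-small, $Obj(A)$ is $\mc{V}$-small, hence isomorphic to an element of $\mc{V}$. Since $B$ is a $\mc{V}$-category, each $Mor_{B}(f_{o}(x),g_{o}(x))$ is $\mc{V}$-small. The axiom that $\mc{V}$ is closed under products indexed by $\mc{V}$-small sets (see \cite[Expose I Appendice]{sga4}) then yields that $\prod_{x\in Obj(A)}Mor_{B}(f_{o}(x),g_{o}(x))$ is $\mc{V}$-small. A subset of a $\mc{V}$-small set is $\mc{V}$-small (subsets of elements of $\mc{V}$ belong to $\mc{P}(\mc{V})\cap\mc{V}$-images via $\mc{P}(x)\in\mc{V}$ when $x\in\mc{V}$), so by \eqref{12311104} the morphism set $Mor_{\ms{Fct}(A,B)}(f,g)$ is $\mc{V}$-small, which is exactly the content of $\ms{Fct}(A,B)$ being a $\mc{V}$-category.

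For item \eqref{12311604st3}, in view of \eqref{12311604st2} it remains only to show that $Obj(\ms{Fct}(A,B))$ is $\mc{V}$-small. Because $A$ and $B$ are both $\mc{V}$-small, each of the four sets $Obj(A)$, $Obj(B)$, $Mor_{A}$, $Mor_{B}$ is $\mc{V}$-small; closure of $\mc{V}$ under Cartesian pairs and under the power set operation then gives that $\mc{P}(Obj(A)\times Obj(B))$ and $\mc{P}(Mor_{A}\times Mor_{B})$ are $\mc{V}$-small, and therefore so is their Cartesian product. By \eqref{12311104}, $Obj(\ms{Fct}(A,B))$ is a subset of this $\mc{V}$-small set and hence $\mc{V}$-small, completing the proof.

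I do not anticipate a genuine obstacle; the only subtlety is being explicit about which universe axioms (closure under power set, pairing, and $\mc{V}$-small-indexed products, together with the fact that subsets of elements of $\mc{V}$ belong to $\mc{V}$) are invoked at each step, so that the argument does not silently leave Zermelo-Fraenkel augmented with the axiom of universes.
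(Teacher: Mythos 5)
Your proposal is correct and follows essentially the same route as the paper: item (1) by unpacking the identification $F=(F_{o},F_{m})$ and the componentwise description of natural transformations, item (2) from the second inclusion of (1) together with the universe axioms (closure under $\mc{V}$-small-indexed products and under passing to subsets), and item (3) from (2), the first inclusion of (1), and closure under pairing and power sets. The paper's proof is just a terser version of this, citing \cite[Def.~1.1.1]{ks} for the universe axioms you list explicitly (note only that $Mor_{A}$ being $\mc{V}$-small for $\mc{V}$-small $A$ also uses closure under $\mc{V}$-indexed unions, which is the one axiom missing from your final inventory).
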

\begin{proof}
St.\eqref{12311104} follows easily by the definitions.
St.\eqref{12311604st2} follows since the second equality in st.\eqref{12311104} and \cite[Def. 1.1.1.]{28ks}.
For any $\mc{V}-$small category $C$ we have $\mr{Mor}_{C}\in\mc{V}$ since 
\cite[Def. 1.1.1.(v)]{28ks}, thus the st.\eqref{12311604st3} follows 
by st.\eqref{12311604st2}, the first equality in st.\eqref{12311104},
and \cite[Def. 1.1.1.(viii,iv)]{28ks}. 
\end{proof}
Let $\mc{V}-\mr{set}$ be the category of $\mc{V}-$sets, functions as morphisms with map composition, 
and $\mc{V}-\mr{cat}$ be the $2-$category whose object set is the set of $\mc{V}-$small categories,
and for any $A,B$ $\mc{V}-$small categories $\mr{Mor}_{\mc{V}-\mr{cat}}(A,B)$ is the 
$\mc{V}-$small category $\mr{Fct}(A,B)$, see Prp. \ref{12311604}\eqref{12311604st3}.
\par
\textbf{In the remaining of the paper we assume fixed three universes $\mc{U},\mc{U}_{0},\mc{U}_{1}$ 
such that $\mc{U}\in\mc{U}_{0}\in\mc{U}_{1}$}, 
the existence of $\mc{U}_{0}$ and then $\mc{U}_{1}$ fixed $\mc{U}$ 
being ensured by the axiom of universes. 
\par
Let $\mr{set}$ and $\mr{cat}$ denote $\mc{U}-\mr{set}$ and $\mc{U}_{0}-\mr{cat}$ respectively,
while let $\mr{Cat}$ denote $\mc{U}_{1}-\mr{cat}$. 
\par
Thus $\mr{set}$ is a $\mc{U}-$category but it is \textbf{not} an object of $\mc{U}-\mr{cat}$.
However $\mr{set}$ is a subcategory of $\mc{U}_{0}-\mr{set}$
and any $\mc{U}-$type category is an object of $\mr{cat}$,
note that $\mr{set}$ is an $\mc{U}-$type category so an object of $\mr{cat}$.
Next $\mc{U}_{0}\subset\mc{U}_{1}$ hence $\mr{cat}$ is a $2-$subcategory of $\mr{Cat}$.
\par
\textbf{For any structure $S$ whenever we say ``the set of the $S$'s'', we always mean the subset of 
those elements of $\mc{U}$ satisfying the axioms of $S$}.
Therefore for what said, letting $\mc{S}$ be the category whose object 
set is the set of the $S$'s and $\mr{Mor}_{\mc{S}}(a,b)$
is the set of the morphisms of the structure $S$ from $a$ to $b$, with $a,b\in \mr{Obj}(\mc{S})$, then
$\mc{S}$ is an object of $\mr{cat}$, 
and via the forgetful functor is equivalent to a subcategory $\tilde{\mc{S}}$ of $\mr{set}$.
Let $\mc{S}$ be called the category of structure $S$, and let $\tilde{\mc{S}}$ denote the image of $\mc{S}$
via the forgetful functor.
\par
Let $\mr{top}$ be the category such that $\mr{Obj}(\mr{top})$ is the set of topological spaces,
$\mr{Mor}_{\mr{top}}(X,Y)=\mc{C}(X,Y)$ for all $X,Y\in\mr{top}$, and map composition as morphism composition.
Let $\mr{tg}$ be the category such that $\mr{Obj}(\mr{tg})$ is the set of topological groups, 
$\mr{Mor}_{\mr{tg}}(G,H)$ is the set of continuous group morphisms on $G$ and into $H$, for all $G,H\in\mr{tg}$,
and map composition as morphism composition.
\par
In \cite[Def. $1.3.2$]{28lei} is presented a definition of enrichment over a category, 
here we need a weaker definition. 
Let $A$ be a subcategory of $\mr{set}$ and $B$ a category.
$B$ is said to be $A-$quasi enriched if
\begin{itemize}
\item
$\mr{Mor}_{B}(x,y)\in A$ for all $x,y\in B$; 
\item
for all $T\in \mr{Mor}_{B}$ and $y\in B$;
\begin{itemize}
\item
$T^{\dagger}\up \mr{Mor}_{B}(c(T),y)\in \mr{Mor}_{A}(\mr{Mor}_{B}(c(T),y),\mr{Mor}_{B}(d(T),y))$, 
\item
$T_{\star}\up \mr{Mor}_{B}(y,d(T))\in \mr{Mor}_{A}(\mr{Mor}_{B}(y,d(T)),\mr{Mor}_{B}(y,c(T)))$.
\end{itemize}
\end{itemize}
Notice that $B$ is a $\mc{U}-$category and any subcategory of $B$ is $A-$quasi enriched.
Clearly if $B$ is $A-$quasi enriched then $B^{op}$ is $A-$quasi enriched, 
while if any category is $A-$enriched then it is $A-$quasi enriched.
If $B,C$ are $A-$quasi enriched define $\mr{Fct}_{A}(B,C)$ the subset of the $F\in\mr{Fct}(B,C)$ such that 
for all $u,v\in B$
\begin{equation*}
F_{m}^{u,v}\in \mr{Mor}_{A}(\mr{Mor}_{B}(u,v), \mr{Mor}_{C}(F_{o}(u),F_{o}(v))).
\end{equation*} 
If $D$ is $A-$quasi enriched, 
then $G\circ F\in\mr{Fct}_{A}(B,D)$ for any $F\in\mr{Fct}_{A}(B,C)$ and $G\in\mr{Fct}_{A}(C,D)$,
where $\circ$ is the vertical composition of functors. 
For any structure $S$ we convein that $\mc{S}-$quasi enriched means $\tilde{\mc{S}}-$quasi enriched.
\subsubsection{Preordered topological linear spaces}
Let $\K\in\{\R,\C\}$, and $X,Y$ be topological linear spaces over $\K$ ($\K-$t.l.s. often simply t.l.s. if $\K=\C$). 
Let $\mf{L}(X,Y)$ denote the $\K-$linear space of continuous $\K-$linear maps 
from $X$ to $Y$, set $\mf{L}(X)\coloneqq\mf{L}(X,X)$ and $X^{\ast}\coloneqq\mf{L}(X,\K)$.
$\mf{L}_{s}(X,Y)$ is the $\K-$t.l.s. whose underlying linear space is $\mf{L}(X,Y)$ provided by the topology of simple 
convergence, while $\mf{L}_{w}(X,Y)$ is the locally convex linear space whose underlying linear space is $\mf{L}(X,Y)$ 
provided by the topology generated by the following set of seminorms $\{q_{(\upphi,x)}\mid(\upphi,x)\in Y^{\ast}\times X\}$, 
where $q_{(\upphi,x)}(A)\doteq|\upphi(Ax)|$. 
There exists a unique category $\K-\mr{tls}$ (or simply $\mr{tls}$ if $\K=\C$) 
whose object set is the set of all $\K-$t.l.s.'s, $\mr{Mor}_{\mr{tls}}(X,Y)=\mf{L}(X,Y)$ for all $X,Y\in\mr{tls}$,
and map composition as morphism composition.
$\mr{Aut}_{\K-\mr{tls}}(X)$ will be provided with the topology induced by the one in $\mf{L}_{s}(X)$
\begin{equation}
\label{08160957}
(\cdot)^{\ast}\in \mr{Mor}_{\mr{tg}}(\mr{Aut}_{\K-\mr{tls}}(X),\mr{Aut}_{\K-\mr{tls}}(X_{s}^{\ast})).
\end{equation}
If $A,B$ have richer structure than that of t.l.s., then $\mf{L}(A,B)$ stands for $\mf{L}(X,Y)$ where 
$X,Y$ are the underlying t.l.s. underlying $A,B$ respectively, similarly for $A^{\ast}$ and $\mf{L}(A)$.
In case $X$ is a normed space we assume $\mf{L}(X)$ to be provided by the topology generated by the usual $\sup-$norm.
If $\mr{X}$ is any structure including as a substructure the one of normed space say $\mr{X}_{0}$, 
for example the normed space underlying any normed algebra,
we let $\mf{L}(\mr{X})$ denote the normed space $\mf{L}(\mr{X}_{0})$.
If $X,Y$ are Hilbert spaces and $U\in\mf{L}(X,Y)$ is unitary then $\mr{ad}(U)\in\mf{L}(\mf{L}(X),\mf{L}(Y))$ 
denotes the isometry defined by $\mr{ad}(U)(a)\coloneqq UaU^{-1}$, for all $a\in\mf{L}(X)$.
\par
A preordered topological linear space (p.t.l.s) consists of a couple formed by an object $X$ of $\R-\mr{tls}$ and by
a preorder on it, i.e. a reflexive, transitive relation $\geq$ providing $X$ with the structure of preordered linear space,
see for the definition \cite[II.15]{28tvs}, and such that the set $X^{+}\coloneqq\{x\in X\,\vert\, x\geq\ze\}$ of positive 
elements of $X$ is closed. $a\leq b$ stands for $b\geq a$ while $x>y$ stands for $x\geq y$ and $x\neq y$, likewise for 
$y<x$. Let $[a,b]\coloneqq\{x\in X\,\vert\, a\leq x\leq b\}$ and $]a,b[\coloneqq\{x\in X\,\vert\,a<x<b\}$, for any 
$a,b\in X$. Any function from $X$ to $Y$ mapping $X^{+}$ into $Y^{+}$ is called positive.
If $X$ is any $\R-$linear space and $C$ is a pointed convex cone in $X$, i.e. $\ze\in C$, $C+C\subseteq C$ and
$\lambda\cdot C\subseteq C$ 
for all $\lambda>0$, then the relation $x\geq y$ iff $x-y\in C$ provides $X$ by the structure of preordered linear space,
see \cite[II.12, Prp. 13]{28tvs} or \cite[p. 20]{28sch} (convex cone called wedge and preordered vector space called
ordered vector space in \cite{28sch}). For any two p.t.l.s.'s $X,Y$, define 
\begin{equation}
\label{08271622}
\mr{P}(X,Y)\coloneqq\{T\in\mf{L}(X,Y)\,\vert\, T(X^{+})\subseteq Y^{+}\},
\end{equation}
set $\mr{P}(X)\coloneqq\mr{P}(X,X)$ and $\mf{P}_{X}\doteqdot\mr{P}(X,\R)$.
Note that $T\in\mr{P}(X,Y)$ is an order morphism since it is linear and $a\geq b$ iff $a-b\geq\ze$.
$\mr{P}(X,Y)$ is a pointed convex cone and it is closed in $\mf{L}_{s}(X,Y)$ since $Y^{+}$ is closed, 
hence the relation $T\geq S\Leftrightarrow T-S\in\mr{P}(X,Y)$ provides $\mf{L}_{s}(X,Y)$ with the 
structure of p.t.l.s. said canonical. Note that $T\geq S$ iff $T(x)\geq S(x)$, for all $x\in X^{+}$ and $S,T\in\mf{L}(X,Y)$.
By construction $\mf{L}_{s}(X,Y)^{+}=\mr{P}(X,Y)$, in particular
$(X_{s}^{\ast})^{+}=\mf{P}_{X}=\mr{P}(X,\R)=\{\upphi\in X^{\ast}\,\vert\,\upphi(X^{+})\subseteq\R^{+}\}$.
Let $\mr{K}(X)\coloneqq\{T\in\mf{L}(X)\,\vert\,\ze\leq T\leq \un\}$. 
There exists a unique category $\mr{ptls}$ whose object set is the set of all the p.t.l.s's,
$\mr{Mor}_{\mr{ptls}}(X,Y)=\mr{P}(X,Y)$ for all $X,Y\in\mr{ptls}$ and map composition as morphism composition.
$\mr{Aut}_{\mr{ptls}}(X)$ will be provided by the topology induced by the one in $\mf{L}_{s}(X)$.
Let $\mr{tls}_{\geq}$ denote the full subcategory of $\R-\mr{tls}$ whose object set equals $\mr{Obj}(\mr{ptls})$.
\subsubsection{Topological $\ast-$algebras and their order structures}
Here as a topological algebra (t.a.) we mean a complex algebra with a topology of Hausdorff providing it an object of 
$\mr{tls}$, 
and such that the product is \textbf{separately} continuous. For any (unital) t.a. $\mc{B}$ the semigroup $\mc{B}$ has 
to be understood as the 
(unital) multiplicative semigroup underlying $\mc{B}$. As a topological $\ast-$algebra ($\ast-$t.a.) we mean a unital 
involutive algebra 
with a topology providing it a t.a. and w.r.t. which the involution is a continuous map. Let $\un$ denote by abuse of 
language the unit of 
any unital algebra. 
If $X$ is in $\mr{tls}$ then $\mf{L}_{s}(X)$ is a unital t.a. provided by the composition of maps as the product.
For any two $\ast-$t.a. $\mc{A},\mc{B}$ define $Hom^{\ast}(\mc{A},\mc{B})$ 
to be the set of $\uptau\in\mf{L}(\mc{A},\mc{B})$
such that $\uptau(\un)=\un$, $\uptau(ab)=\uptau(a)\uptau(b)$ and $\uptau(a^{\ast})=\uptau(a)^{\ast}$ for all $a,b\in\mc{A}$. 
There exists a unique category $\mr{tsa}$ such that $\mr{Obj}(\mr{tsa})$ is the set of the topological $\ast-$algebras,
$\mr{Mor}_{\mr{tsa}}(\mc{A},\mc{B})=Hom^{\ast}(\mc{A},\mc{B})$ for all $\mc{A},\mc{B}\in\mr{tsa}$, and map composition as 
morphism composition.
$\mr{Aut}_{\mr{tsa}}(\mc{A})$ will be provided by the topology induced by the one in $\mf{L}_{s}(\mc{A})$. 
Notice that according the general convention used here for categories, 
for all $\mc{A}\in\mr{tsa}$ the symbol $\un_{\mc{A}}$ denotes the identity morphism 
of the object $\mc{A}$, i.e. the identity map from $\mc{A}$ to itself. 
This is the reason instead to denote simply by $\un$ the unit of $\mc{A}$.
\par
For any subset $S$ of $\mc{B}$ set $S'\coloneqq\{a\in\mc{B}\,\vert\,(\forall b\in S)(ab=ba)\}$ and $S''\coloneqq (S')'$ 
said the commutant and  bicommutant of $S$ respectively. Let $\mc{A}$ be a topological $\ast-$algebra. 
Define $\mc{A}_{ob}\coloneqq\{a\in\mc{A}\,\vert\, a=a^{\ast}\}$, $\Pr(\mc{A})\coloneqq\{p\in\mc{A}_{ob}\,\vert\, p p=p\}$,
$\mr{U}(\mc{A})\coloneqq\{u\in\mc{A}\,\vert\,u^{\ast}=u^{-1}\}$, $\mc{A}^{\star}\coloneqq\{a^{\ast}a\,\vert\, a\in\mc{A}\}$,
$\mc{A}^{\natural}\coloneqq\{\sum_{k=0}^{n}a_{k}^{\ast}a_{k}\mid a\in\mc{A}^{n},n\in\N\}$ and the set of positive elements of $A$
\begin{equation*}
\mc{A}^{+}\coloneqq\mr{Cl}(\mc{A}^{\natural}).
\end{equation*}
\textbf
{Our definition of positive elements and therefore 
of partial order in $\mc{A}$ differs from the one in \cite{28sch} where instead
it is used the set $\mc{A}^{\natural}$, and differs from that in \cite{28fra}. 
However the three definitions 
of partial order in $\mc{A}$ agree in case $\mc{A}$ is a locally $C^{\ast}-$algebra}, see below. 
The reason to choose such a set of positive elements,
resides in the fact that we want to have the set $\mr{P}(\mc{A},\mc{B})$ closed w.r.t. the topology of simple
convergence and the set $\mr{Ef}(\mc{A})$ (if the product is jointly continuous also $\Uptheta_{\mc{A}}$) closed in $\mc{A}$,
see below.
\par
$\ze,\un\in\mc{A}^{+}$ since $\un^{\ast}\un=\un$. 
$\mc{A}_{ob}$ is an $\R-$linear space and it is closed in $\mc{A}$ the involution being 
continuous, 
hence $\mc{A}^{+}\subset\mc{A}_{ob}$ and clearly it is the closure of $\mc{A}^{\natural}$ in the $\R-$t.l.s. $\mc{A}_{ob}$,
moreover $\mc{A}^{\natural}$ is a pointed convex cone of $\mc{A}_{ob}$ thus $\mc{A}^{+}$ is a pointed convex cone since 
\cite[II.13, Prp. 14]{28tvs}.
Hence $\mc{A}^{+}$ 
is a pointed closed convex cone in the $\R-$t.l.s. $\mc{A}_{ob}$, therefore by defining $a\geq b$ iff $a,b\in\mc{A}_{ob}$
and $a-b\in\mc{A}^{+}$ 
we provide $\mc{A}_{ob}$ with the structure of p.t.l.s. Now let $\geq$ be called standard and denote the relation 
in $\mc{A}$ inherited by $\geq$ on $\mc{A}_{ob}$, and for any $\mc{B}\in\mr{tsa}$ by abuse of language set 
\begin{equation}
\label{08101915}
\mr{P}(\mc{A},\mc{B})\coloneqq\{T\in\mf{L}(\mc{A},\mc{B})\,\vert\, T(\mc{A}^{\star})\subseteq\mc{B}^{+},
T(\mc{A}_{ob})\subseteq\mc{B}_{ob}\}.
\end{equation} 
Set $\mr{P}(\mc{A})\coloneqq\mr{P}(\mc{A},\mc{A})$, $\mf{P}_{\mc{A}}\coloneqq\mr{P}(\mc{A},\C)$, 
$\mf{P}_{\mc{A}}^{\natural}\coloneqq\{\upphi\in\mf{P}_{\mc{A}}\,\vert\,\upphi(\un)\neq 0\}$, and
$\mf{E}_{\mc{A}}\coloneqq\{\upomega\in\mf{P}_{\mc{A}}\,\vert\,\upomega(\un)=1\}$.
$\mf{E}_{\mc{A}}$ is a closed convex set of $\mc{A}_{s}^{\ast}$, and with any 
$\upphi\in\mf{P}_{\mc{A}}^{\natural}$ one associates the 
continuous state $\upphi(\un)^{-1}\upphi$.
We have
\begin{equation}
\label{08111513}
\mr{P}(\mc{A},\mc{B})=\{T\in\mf{L}(\mc{A},\mc{B})\,\vert\, T(\mc{A}^{+})\subseteq\mc{B}^{+},T(\mc{A}_{ob})\subseteq\mc{B}_{ob}\},
\end{equation}
since linearity and continuity, moreover 
\begin{equation}
\label{09101459}
(\forall T\in\mf{L}(\mc{A},\mc{B}))
(T\in\mr{P}(\mc{A},\mc{B})\Leftrightarrow T\up_{\mc{A}_{ob}}^{\mc{B}_{ob}}\in\mr{P}(\mc{A}_{ob},\mc{B}_{ob})).
\end{equation}
Any element in $\mr{P}(\mc{A},\mc{B})$ is an order morphism and
\begin{equation}
\label{08161036}
\mr{Mor}_{\mr{tsa}}(\mc{A},\mc{B})\subset\mr{P}(\mc{A},\mc{B}),
\end{equation}
so the maps $\mr{Obj}(\mr{tsa})\ni\mc{A}\mapsto\mc{A}_{ob}$ and 
$\mr{Mor}_{\mr{tsa}}(\mc{A},\mc{B})\ni T\mapsto T\up_{\mc{A}_{ob}}^{\mc{B}_{ob}}$
determine a functor from $\mr{tsa}$ to $\mr{ptls}$.
$\mr{P}(\mc{A},\mc{B})$ is a pointed convex cone and it is closed in $\mf{L}_{s}(\mc{A},\mc{B})$ since $\mc{B}^{+}$ 
is closed, so closed also in the $\R-$t.l.s. $\mf{L}_{s}(\mc{A},\mc{B})_{\R}$ underlying $\mf{L}_{s}(\mc{A},\mc{B})$.
Therefore the relation $T\geq S\Leftrightarrow T-S\in\mr{P}(X,Y)$ provides $\mf{L}_{s}(\mc{A},\mc{B})_{\R}$ with the 
structure of p.t.l.s. said canonical.
Let us convein to denote the p.t.l.s. $\mf{L}_{s}(\mc{A},\mc{B})_{\R}$ 
by $\mf{L}_{s}(\mc{A},\mc{B})$ or simply $\mf{L}(\mc{A},\mc{B})$, 
and to denote the p.t.l.s. $(\mc{A}_{s}^{\ast})_{\R}$ by $\mc{A}_{s}^{\ast}$ or simply $\mc{A}^{\ast}$. 
Let $\mr{K}(\mc{A})\coloneqq\{T\in\mf{L}(\mc{A})\,\vert\,\ze\leq T\leq \un\}$. 
Thus by construction $\mf{L}_{s}(\mc{A},\mc{B})^{+}=\mr{P}(\mc{A},\mc{B})$, in particular
\begin{equation}
\label{08271616}
(\mc{A}_{s}^{\ast})^{+}=\mf{P}_{\mc{A}}=\mr{P}(\mc{A},\C).
\end{equation}
Let $\mr{P}_{\mc{A}}$ be the set of the linear functionals $\upphi$ on $\mc{A}$ such that $\upphi(\mc{A}^{+})\subseteq\R^{+}$,
not to be confused with the set $\mr{P}(\mc{A})$. Let $\mr{E}_{\mc{A}}$ be called the set of states of $\mc{A}$ and defined 
as the subset of the $\uppsi\in\mr{P}_{\mc{A}}$ such that $\uppsi(\un)=1$.
A functional in $\mc{A}$ is hermitian if $\upphi=\upphi^{\ast}$, where $\upphi^{\ast}(a)\coloneqq\ov{\upphi(a^{\ast})}$ 
for all $a\in\mc{A}$. 
If $\upphi\in\mr{P}_{\mc{A}}$ 
(respectively $\upphi\in\mr{E}_{\mc{A}}$) then $\upphi$ is positive (respectively a state) w.r.t. 
the definition in \cite{28sch} and \cite{28fra} i.e. $\upphi(\mc{A}^{\star})\subseteq\R^{+}$, hence $\upphi$ is hermitian 
since \cite[Lemma 12.3]{28fra}, in particular $\upphi(\mc{A}_{ob})\subseteq\R$, therefore by \eqref{08111513} we 
obtain $\mf{P}_{\mc{A}}=\mr{P}_{\mc{A}}\cap\mc{C}(\mc{A},\C)$ and $\mf{E}_{\mc{A}}=\mr{E}_{\mc{A}}\cap\mc{C}(\mc{A},\C)$ so,
by linearity and continuity we conclude that 
\begin{equation}
\label{08102057}
\mf{P}_{\mc{A}}=\{\upphi\in\mc{A}^{\ast}\,\vert\,(\forall a\in\mc{A})(\upphi(a^{\ast}a)\geq 0)\}.
\end{equation}
Notice that since \cite{28fra} defines to be positive a linear functional 
in $\mc{A}$ if it maps $\mc{A}^{\star}$ into $\R^{+}$, 
then by \eqref{08102057} the set $\mf{P}_{\mc{A}}$ 
(respectively $\mf{E}_{\mc{A}}$) is the set of continuous positive linear functionals 
(respectively continuous states) w.r.t. the definition in \cite{28fra}, hence our order structure
in $\mc{A}^{\ast}$ coincides with the one in \cite{28fra}. 
If $\mc{A}$ is a locally $C^{\ast}-$algebra $\mc{A}^{\star}$is a closed cone since \cite[Cor. 10.16, Thm. 10.17]{28fra}, 
then $\mc{A}^{\star}=\mc{A}^{\natural}=\mc{A}^{+}$ and 
our order structure in $\mc{A}$ equals those in \cite{28sch} and \cite{28fra}. 
If $\mc{A}$ is a $C^{\ast}-$algebra then $\mr{P}_{\mc{A}}=\mf{P}_{\mc{A}}$ and $\mr{E}_{\mc{A}}=\mf{E}_{\mc{A}}$.
Set
\begin{equation*}
\mr{Ef}(\mc{A})\coloneqq\{a\in\mc{A}\,\vert\,\ze\leq a\leq\un\};
\end{equation*}
the set of effects of $\mc{A}$, it is closed since $\mc{A}^{+}$ it is so, moreover $\Pr(\mc{A})\subset\mr{Ef}(\mc{A})$ 
indeed $\un-p\in\Pr(\mc{A})$ and clearly $\Pr(\mc{A})\subset\mc{A}^{+}$. $\uppsi(e)\in[0,\uppsi(\un)]$ for any 
$\uppsi\in\mr{P}_{\mc{A}}$ and $e\in\mr{Ef}(\mc{A})$ since $\uppsi$ is an order morphism. 
We can define in $\mr{Ef}(\mc{A})$ a partial sum as the restriction
of the sum in $\mc{A}$ in the set of the $(a,b)\in\mr{Ef}(\mc{A})\times\mr{Ef}(\mc{A})$ such that $a+b\in\mr{Ef}(\mc{A})$. 
The domain of the partial sum is not empty since it contains the set of the $(p,q)\in\Pr(\mc{A})\times\Pr(\mc{A})$
such that $p q=\ze$, indeed $p+q\in\Pr(\mc{A})$. 
Define 
\begin{equation}
\label{11181224}
\begin{cases}
\ep_{\mc{A}}:\mc{A}\to\mc{A}^{\mc{A}},\,a\mapsto(b\mapsto aba^{\ast}),
\\
\updelta_{\mc{A}}:\mc{A}\to\mc{A}^{\mc{A}},\,a\mapsto(b\mapsto a^{\ast}ba);
\end{cases}
\end{equation}
clearly $\updelta_{\mc{A}}=\ep_{\mc{A}}\circ(\ast)$, and $\ep_{\mc{A}}(a)$ is a $\ast-$preserving 
continuous linear map for every $a\in\mc{A}$. If the product in $\mc{A}$ is jointly continuous, then $\ep_{\mc{A}}$ or 
simply $\ep$ is continuous. For all $a\in\mc{A}$ we have $\ep(a)(\mc{A}^{+})\subseteq\mc{A}^{+}$, since 
$\ep(a)(\mc{A}^{\natural})\subseteq\mc{A}^{\natural}$ and $\ep(a)$ is continuous, moreover by letting 
$\mc{A}^{op}$ be the opposite multiplicative semigroup of $\mc{A}$ and by letting $\mr{smg}$ be the category of semigroups
and their morphisms, we have 
\begin{equation}
\label{08091352} 
\begin{cases}
\ep_{\mc{A}}\in\mr{Mor_{smg}}(\mc{A},\mr{P}(\mc{A}));
\\
\updelta_{\mc{A}}\in\mr{Mor_{smg}}(\mc{A}^{op},\mr{P}(\mc{A})).
\end{cases}
\end{equation}
In general $\ep(a)(xy)\neq\ep(a)(x)\ep(a)(y)$ unless $a^{\ast}a=\un$ for instance $a\in\mr{U}(\mc{A})$, while in general 
$\ep(a+b)(d)\neq\ep(a)(d)+\ep(b)(d)$ unless $d\in\{a\}^{\prime}\cap\{b\}^{\prime}$ and $ab^{\ast}=\ze$.
\par
If $S$ is any structure richer than the structure of topological spaces, then $\mc{S}$ is a $\mr{top}-$quasi enriched 
category by providing for all $x,y\in\mc{S}$ the set $\mr{Mor}_{\mc{S}}(x,y)$ with the product topology, therefore in case 
$y$ is a uniform space, with the topology of simple convergence. Note that any of the following categories
$\mr{tg}$, $\K-\mr{tls}$, with $\K\in\{\R,\C\}$, $\mr{ptls}$, $\mr{tls}_{\geq}$ and $\mr{tsa}$ are examples
of such a $\mc{S}$. If $\mr{B}$ is a $\mr{top}-$quasi enriched category,
then $\mr{Aut}_{\mr{B}}(x)$ is a possibly trivial group whose product is separately continuous for all $x\in\mr{B}$. 
If $\mr{C}$ is a $\mr{top}-$enriched category, then $\mr{Aut}_{\mr{C}}(x)$ is a topological group for all $x\in\mr{C}$ such 
that the inversion map in $\mr{Aut}_{\mr{C}}(x)$ is continuous.
A topological groupoid is a $\mr{top}-$enriched groupoid (i.e. a category in which all the morphisms are invertible)
with continuous inversion, so a topological group is (identifiable with) a $\mr{top}-$enriched groupoid with a unique 
object whose inversion map is continuous.
\section{Propensity map}
\label{11301512}
Our framework described in \textbf{Postulate \ref{11030928}} and Def. \ref{08081839} is established over the concept of  
propensity map \textbf{Def.\ref{08081546}} interpreted as empirical representation of the propensity. 
The term propensity is used here with a different meaning with respect to the one ascribed usually 
to it.\footnote{For instance in \cite{28sua}.} 
We retain propensity a primitive, measuring independent, structural property of any triplet formed by a 
channel of statistical ensembles $J$, a statistical ensemble $\upomega$ and an effect $e$, and that this property admits 
an empirical representation in terms of frequency Rmk.\ref{08101005}.
Here we maintain the standard meaning of statistical ensemble,\footnote{see for instance \cite[p.116]{28dgg}
and \cite[p.246]{28dl}.} namely an ensemble of identical preparations, providing non-interacting copies of a 
system, called samples, realized by macroscopic apparatuses under well defined and repeatable conditions. 
The propensity map slightly generalizes the usual state-effect duality Rmk. \ref{08101104}.
As an application of our framework we prove in Schr\"{o}dinger picture the generalized Wigner formula for a sequence of 
measuring processes of semiobservables in \textbf{Thm.\ref{11191154}} and physically interpret it in Rmk.\ref{11191905}; 
while we prove it independently for a sequence of von Neumann measuring processes associated with discrete observables 
in Cor.\ref{10211852} and physically interpret it in Rmk.\ref{11181512}.
The semantics developed in Def. \ref{08081839} permits to show the Wigner formula in a more intuitive 
and less involved fashion in Schr\"{o}dinger picture compared with the Heisenberg picture as usually
done for instance in \cite{oza3}. 
\subsection{Channels, devices and operations}
\begin{lemma}
\label{08080010pre}
$\mr{tls}$ is $\imath(\mr{tls})-$quasi enriched, 
with $\imath$ here the forgetful functor from $\mr{tls}$ to $\mr{set}$,
if we provide $\mf{L}(X,Y)$ with the topology of simple convergence for every $X,Y\in\mr{tls}$,
moreover the maps $X\mapsto X_{s}^{\ast}$ and $T\mapsto T^{\dagger}$
determine an element in 
$\mr{Fct}_{\mr{tls}}(\mr{tls}^{op},\mr{tls})$\footnote{where we let 
$\mr{Fct}_{\mr{tls}}(\mr{tls}^{op},\mr{tls})$ denote $\mr{Fct}_{\imath(\mr{tls})}(\mr{tls}^{op},\mr{tls})$.}.
In other words if $X,Y\in\mr{tls}$, then
\begin{enumerate}
\item
$\dagger\in\mf{L}(\mf{L}_{s}(X,Y),\mf{L}_{s}(Y_{s}^{\ast}, X_{s}^{\ast}))$ and $T^{\dagger}(\upomega)=\upomega\circ T$
for every $T\in\mf{L}(X,Y)$ and $\upomega\in Y^{\ast}$;
\label{08080010prest1}
\item
Let $Z\in\mr{tls}$, $S\in\mf{L}(Y,Z)$ and $T\in\mf{L}(X,Y)$ thus, $(S\circ T)^{\dagger}=T^{\dagger}\circ S^{\dagger}$.
\label{08080010prest2}
\end{enumerate}
\end{lemma}
\begin{proof}
In this proof if $W\subseteq B^{A}$, then for every $a\in A$ we let $\mr{ev}_{a}^{W}:W\to B$, $g\mapsto g(a)$ and 
let $\mr{ev}_{a}$ denote $\mr{ev}_{a}^{W}$ whenever this is not cause of confusion.
In addition if $\mf{B}\in\mr{Bf}(A)$ i.e. $\mf{B}$ is a base of a filter of a set $A$, then $\mf{F}_{\mf{B}}^{A}$ or simply 
$\mf{F}_{\mf{B}}$ is the filter of $A$ generated by $\mf{B}$.
St.\ref{08080010prest2} is trivial so, let us prove st.\eqref{08080010prest1}.
Let $S\in\mf{L}(X,Y)$ clearly $S^{\dagger}(Y^{\ast})\subseteq X^{\ast}$ and $S^{\dagger}$ is linear. 
Next let $\mf{N}$ be a filter of $Y^{\ast}$ and $\upphi\in\lim\mf{N}$ with respect to $Y_{s}^{\ast}$ thus, 
\begin{equation}
\label{08211237}
(\forall y\in Y)(\upphi(y)=\lim_{\mf{N}}\mr{ev}_{y}).
\end{equation}
First we claim that $S^{\dagger}(\upphi)\in\lim_{\mf{N}}S^{\dagger}$ with respect to $X_{s}^{\ast}$.
This is equivalent to 
$S^{\dagger}(\upphi)\in\lim\mf{F}_{S^{\dagger}(\mf{N})}$ with respect to $X_{s}^{\ast}$ 
which is equivalent to state that for all $x\in X$
\begin{equation*}
\begin{aligned}
\upphi(Sx)
&=
\lim_{\mf{F}_{S^{\dagger}(\mf{N})}}\mr{ev}_{x}
\\
&=
\lim
\mf{F}_{\mr{ev}_{x}(\mf{F}_{S^{\dagger}(\mf{N})})}
\\
&=
\lim
\mf{F}_{\mr{ev}_{x}(S^{\dagger}(\mf{N}))}
\\
&=
\lim
\mf{F}_{\mr{ev}_{S(x)}(\mf{N})}
\\
&=
\lim_{\mf{N}}\mr{ev}_{S(x)};
\end{aligned}
\end{equation*}
where in the third equality we used the fact 
\begin{equation}
\label{08210826}
f(\mf{F}_{\mf{B}})\simeq_{A}f(\mf{B});
\end{equation}
with $f:Z\to A$, $\mf{B}\in\mr{Bf}(Z)$ and $\mf{C}\simeq_{A}\mf{D}$ iff by definition 
$\mf{C},\mf{D}\in\mr{Bf}(A)\wedge\mf{F}_{\mf{C}}=\mf{F}_{\mf{D}}$.
But the above equality follows by \eqref{08211237} so, our first claim follows 
and therefore $S^{\dagger}\in\mf{L}(Y_{s}^{\ast},X_{s}^{\ast})$ which proves that st.\eqref{08080010prest1} is well-set.
Next we claim to show that the map $\dagger$ is continuous with respect to the topologies of simple convergence.
In order to do that let $\mf{G}$ be a filter of $\mf{L}(X,Y)$ and let $T\in\lim\mf{G}$ with respect to $\mf{L}_{s}(X,Y)$, 
our second claim to be proven is that $T^{\dagger}\in\lim_{\mf{G}}\dagger$ with respect to $\mf{L}_{s}(Y_{s}^{\ast},X_{s}^{\ast})$.
Now 
$T\in\lim\mf{G}$ with respect to $\mf{L}_{s}(X,Y)$ 
iff 
$(\forall x\in X)(T(x)\in\lim_{\mf{G}}\mr{ev}_{x})$ with respect to $Y$
i.e.
\begin{equation}
\label{08211333}
(\forall x\in X)(T(x)\in\lim\mf{F}_{\mr{ev}_{x}(\mf{G})}\text{ with respect to $Y$}).
\end{equation}
Instead 
$T^{\dagger}\in\lim_{\mf{G}}\dagger$ with respect to $\mf{L}_{s}(Y_{s}^{\ast},X_{s}^{\ast})$ 
iff
$T^{\dagger}\in\lim\mf{F}_{\dagger(\mf{G})}$ with respect to $\mf{L}_{s}(Y_{s}^{\ast},X_{s}^{\ast})$ 
namely
$(\forall\uppsi\in Y^{\ast})(T^{\dagger}(\uppsi)\in\lim_{\mf{F}_{\dagger(\mf{G})}}\mr{ev}_{\uppsi})$ with respect to $X_{s}^{\ast}$ 
which is equivalent by considering \eqref{08210826} to state that
$(\forall\uppsi\in Y^{\ast})(T^{\dagger}(\uppsi)\in\lim\mf{F}_{\mr{ev}_{\uppsi}(\dagger(\mf{G}))})$ with respect to $X_{s}^{\ast}$ 
namely
$(\forall x\in X)(\forall\uppsi\in Y^{\ast})(\uppsi(Tx)=\lim_{\mf{F}_{\mr{ev}_{\uppsi}(\dagger(\mf{G}))}}\mr{ev}_{x})$ 
with respect to $\C$ 
and by employing \eqref{08210826} this is equivalent to the following one 
$(\forall x\in X)(\forall\uppsi\in Y^{\ast})(\uppsi(Tx)=\lim\mf{F}_{(\mr{ev}_{x}\circ\mr{ev}_{\uppsi}\circ\dagger)(\mf{G})})$
with respect to $\C$ 
but 
$\mr{ev}_{x}\circ\mr{ev}_{\uppsi}\circ\dagger=\uppsi\circ\mr{ev}_{x}$
so, it is equivalent to 
$(\forall x\in X)(\forall\uppsi\in Y^{\ast})(\uppsi(Tx)=\lim\mf{F}_{(\uppsi\circ\mr{ev}_{x})(\mf{G})})$ with respect to $\C$ 
which by \eqref{08210826} is equivalent to
\begin{equation*}
(\forall x\in X)(\forall\uppsi\in Y^{\ast})(\uppsi(Tx)=\lim_{\mf{F}_{\mr{ev}_{x}(\mf{G})}}\uppsi\text{ with respect to $\C$}). 
\end{equation*}
Now the above limit follows since \eqref{08211333} and the continuity of $\uppsi$ therefore, 
our second claim is proven thus, we have shown that the map $\dagger$ is continuous so, st.\eqref{08080010prest1} follows. 
\end{proof}
\begin{lemma}
\label{08080010}
$\mr{tls}_{\geq}$ is $\imath(\mr{ptls})-$quasi enriched, 
with $\imath$ here the forgetful functor from $\mr{ptls}$ to $\mr{set}$,
if we provide $\mf{L}(X,Y)$ with the topology of simple convergence for every $X,Y\in\mr{tls}_{\geq}$,
moreover the maps $X\mapsto X_{s}^{\ast}$ and $T\mapsto T^{\dagger}$, where 
$X_{s}^{\ast}$ is provided by the canonical structure of p.t.l.s., determine an element in 
$\mr{Fct}_{\mr{ptls}}(\mr{tls}_{\geq}^{op},\mr{tls}_{\geq})$\footnote{where we let  
$\mr{Fct}_{\mr{ptls}}(\mr{tls}_{\geq}^{op},\mr{tls}_{\geq})$ denote 
$\mr{Fct}_{\imath(\mr{ptls})}(\mr{tls}_{\geq}^{op},\mr{tls}_{\geq})$.}
such that $\mr{K}(X)^{\dagger}\subseteq\mr{K}(X_{s}^{\ast})$,
for all $X\in\mr{ptls}$. In other words if $X,Y\in\mr{ptls}$, then
\begin{enumerate}
\item
$\dagger\in\mf{L}(\mf{L}_{s}(X,Y),\mf{L}_{s}(Y_{s}^{\ast}, X_{s}^{\ast}))$ and $T^{\dagger}(\upomega)=\upomega\circ T$
for every $T\in\mf{L}(X,Y)$ and $\upomega\in Y^{\ast}$;
\label{08080010st1}
\item
if in addition $Z\in\mr{ptls}$, $S\in\mf{L}(Y,Z)$ and $T\in\mf{L}(X,Y)$, 
then $(S\circ T)^{\dagger}=T^{\dagger}\circ S^{\dagger}$;
\label{08080010st2}
\item
$\mr{P}(X,Y)^{\dagger}\subseteq\mr{P}(Y_{s}^{\ast},X_{s}^{\ast})$, i.e. 
$U^{\dagger}(\mf{P}_{Y})\subseteq\mf{P}_{X}$, for all $U\in\mr{P}(X,Y)$; 
\label{08080010st3}
\item
$\mr{K}(X)^{\dagger}\subseteq\mr{K}(X_{s}^{\ast})$.
\label{08080010st4}
\end{enumerate}
\end{lemma}
\begin{proof}
The first and second statements follow by Lemma \ref{08080010pre}, while the remaining ones are trivial.
\end{proof}
\begin{definition}
\label{08081848}
Let $\mc{A},\mc{B}\in\mr{tsa}$ define
\begin{equation*}
\begin{aligned}
\mr{Q}(\mc{A},\mc{B})&\coloneqq\{T\in\mr{P}(\mc{A},\mc{B})\,\vert\, T(\un)\leq\un\},
\\
\mf{Q}(\mc{A},\mc{B})&\coloneqq\{T\up_{\mr{Ef}(\mc{A})}^{\mr{Ef}(\mc{B})}\,\vert\,T\in\mr{Q}(\mc{A},\mc{B})\},
\end{aligned}
\end{equation*}
set $\mr{Q}(\mc{A})\coloneqq\mr{Q}(\mc{A},\mc{A})$ and $\mf{Q}(\mc{A})\coloneqq\mf{Q}(\mc{A},\mc{A})$. 
\end{definition}
Clearly $\mr{Q}(\mc{A})$ and $\mf{Q}(\mc{A})$ are subsemigroups with identity of $\mr{P}(\mc{A})$.
$\mr{Q}(\mc{A},\mc{B})$ is called the set of devices from $\mc{A}$ to $\mc{B}$ while
$\mf{Q}(\mc{A},\mc{B})$ is called the set of devices from $\mr{Ef}(\mc{A})$ to $\mr{Ef}(\mc{B})$.
Next we introduce our definition of propensity map
\begin{definition}
[\textbf{Propensity map}]
\label{08081546}
Let $\mc{A},\mc{B}\in\mr{tsa}$ define
\begin{equation*}
\boxed{
\begin{aligned}
&\mf{Z}(\mc{A},\mc{B})\coloneqq\{J\in\mr{P}(\mc{A}_{s}^{\ast},\mc{B}_{s}^{\ast})
\,\vert\,(\forall\upphi\in\mf{P}_{\mc{A}})(J(\upphi)(\un)\leq\upphi(\un))\},
\\
\\
\mf{b}_{\mc{A},\mc{B}}:&\mf{Z}(\mc{A},\mc{B})\times\mf{P}_{\mc{A}}^{\natural}\times\mr{Ef}(\mc{B})\to[0,1]
\qquad
(J,\upomega,e)\mapsto\frac{J(\upomega)(e)}{\upomega(\un)}.
\end{aligned}}
\end{equation*}
\end{definition}
Let $\mc{A},\mc{B}\in\mr{tsa}$, $J\in\mf{Z}(\mc{A},\mc{B})$, $\upomega\in\mf{P}_{\mc{A}}^{\natural}$ and 
$e\in\mr{Ef}(\mc{B})$. Thus $J(\upomega)\in\mf{P}_{\mc{B}}$ such that $J(\upomega)(e)\in[0,\upomega(\un)]$, 
indeed $J(\upomega)(e)\in[0,J(\upomega)(\un)]$
since $J(\upomega)\in\mf{P}_{\mc{B}}$, the remaining follows by the definition of $\mf{Z}(\mc{A},\mc{B})$.
Therefore $\mf{b}_{\mc{A},\mc{B}}$ is a well-defined map into $[0,1]$.
We call $\mf{Z}(\mc{A},\mc{B})$ the set of channels from $\mc{A}^{\ast}$ to $\mc{B}^{\ast}$, $\mf{P}_{\mc{A}}$ the set of 
statistical ensembles of $\mc{A}$, $\mr{Ef}(\mc{A})$ the set of effects of $\mc{A}$,
and $\mf{b}_{\mc{A},\mc{B}}$ the propensity map relative to $(\mc{A},\mc{B})$.
We will provide a complete physical interpretation of the above data in Def.\ref{08081839} and Postulate \ref{11030928}.
\begin{definition}
Set $\mf{Z}(\mc{A})\coloneqq\mf{Z}(\mc{A},\mc{A})$, $\mf{b}_{\mc{A}}\coloneqq\mf{b}_{\mc{A},\mc{A}}$ and define 
\begin{equation*}
\mf{p}_{\mc{A}}:\mf{P}_{\mc{A}}^{\natural}\times\mr{Ef}(\mc{A})\to[0,1]
\quad
(\upomega,e)\mapsto\mf{b}_{\mc{A}}(\mr{Id}_{\mc{A}_{s}^{\ast}},\upomega,e).
\end{equation*}
Set $p_{\mc{A}}\coloneqq\mf{p}_{\mc{A}}\up\mf{E}_{\mc{A}}\times\mr{Ef}(\mc{A})$. 
\end{definition}
$\mf{Z}(\mc{A})$ is a subsemigroup with identity of $\mr{P}(\mc{A}_{s}^{\ast})$ and $p_{\mc{A}}$ is the usual state-effect 
duality. Next we define operations and their actions on devices and channels.
\begin{definition}
\label{11010823}
Let $\mc{A}\in\mr{tsa}$, define
\begin{equation*}
\begin{cases}
\Uplambda_{\mc{A}}\coloneqq\{a\in\mc{A}\,\vert\,aa^{\ast}\leq\un\},
\\
\Uptheta_{\mc{A}}\coloneqq\{a\in\mc{A}\,\vert\,a^{\ast}a\leq\un\},
\\
\Upgamma_{\mc{A}}\coloneqq\Uptheta_{\mc{A}}\cap\Uplambda_{\mc{A}};
\end{cases}
\end{equation*}
$\Uplambda_{\mc{A}}$ is called the set of operations on $\mc{A}$. Next
\begin{equation*}
\begin{cases}
\upzeta_{\mc{A}}:\Uplambda_{\mc{A}}\to\mr{Q}(\mc{A}),\,a\mapsto\ep_{\mc{A}}(a),
\\
\upgamma_{\mc{A}}:\Uptheta_{\mc{A}}\to\mr{Q}(\mc{A}),\,a\mapsto\updelta_{\mc{A}}(a),
\\
\upzeta_{\mc{A}}^{\dagger}:\Uplambda_{\mc{A}}\to\mf{Z}(\mc{A}),\,a\mapsto\ep_{\mc{A}}(a)^{\dagger},
\\
\upgamma_{\mc{A}}^{\dagger}:\Uptheta_{\mc{A}}\to\mf{Z}(\mc{A}),\,a\mapsto\updelta_{\mc{A}}(a)^{\dagger};
\end{cases}
\end{equation*}
where $\dagger$ is the map of which in Lemma \ref{08080010}\eqref{08080010st1}. Furthermore define 
\begin{equation*}
\begin{cases}
\upzeta_{\mc{A},e}:\Uplambda_{\mc{A}}\to\mf{Q}(\mc{A}),\,a\mapsto\ep_{\mc{A}}(a)\up_{\mr{Ef}(\mc{A})}^{\mr{Ef}(\mc{A})};
\\
\\
\upgamma_{\mc{A},e}:\Uptheta_{\mc{A}}\to\mf{Q}(\mc{A}),\,a\mapsto\updelta_{\mc{A}}(a)\up_{\mr{Ef}(\mc{A})}^{\mr{Ef}(\mc{A})}.
\end{cases}
\end{equation*}
\end{definition}
\begin{lemma}
\label{08191955}
$\Uplambda_{\mc{A}}$  and $\Uptheta_{\mc{A}}$ are subsemigroups of $\mc{A}$, while $\Upgamma_{\mc{A}}$
is a semigroup with involution.
\end{lemma}
\begin{proof}
Let $a,b\in\Uptheta_{\mc{A}}$ thus, $(ab)^{\ast}ab=\updelta_{\mc{A}}(b)(a^{\ast}a)$, but $a^{\ast}a\leq\un$ while 
$\updelta_{\mc{A}}(b)$ is order preserving since it is positive by \eqref{08091352}, so  
$\updelta_{\mc{A}}(b)(a^{\ast}a)\leq\updelta_{\mc{A}}(b)(\un)=b^{\ast}b\leq\un$, so $(ab)^{\ast}ab\leq\un$ namely 
$ab\in\Uptheta_{\mc{A}}$ proving that $\Uptheta_{\mc{A}}$ is a semigroup. Next let $x,y\in\Uplambda_{\mc{A}}$ thus, 
$xy(xy)^{\ast}=\updelta_{\mc{A}}(x^{\ast})(yy^{\ast})\leq xx^{\ast}\leq\un$ proving that $\Uplambda_{\mc{A}}$ is a semigroup.
\end{proof}
\begin{convention}
By abuse of language we let $\mr{Q}(\mc{A})$, $\mf{Q}(\mc{A})$, $\mf{Z}(\mc{A})$, $\Uplambda_{\mc{A}}$,
$\Uptheta_{\mc{A}}$ and $\Upgamma_{\mc{A}}$ denote also the corresponding semigroups.
\end{convention}
\begin{remark}
\label{11030533}
Let $\mc{A}\in\mr{tsa}$ thus, $\mc{A}^{+}=(\mc{A}^{op})^{+}$, therefore we deduce that 
$\Uplambda_{\mc{A}}=\Uptheta_{\mc{A}^{op}}^{op}$ and 
$\upzeta_{\mc{A}}=\upgamma_{\mc{A}^{op}}$. These facts will be used mostly without
additional mention in order to eliminate redundancies in proofs and definitions.
\end{remark}
For our physical interpretation given in Def. \ref{08081839}, 
we require the algebraic information stated in the following two propositions
\begin{proposition}
\label{08090901}
Let $\circ$ be the map composition then any of the following set of data determines uniquely a category
\begin{enumerate}
\item
$\lr{\mr{Obj}(\mr{tsa})}{\{\mf{Z}(\mc{A},\mc{B})\,\vert\,\mc{A},\mc{B}\in\mr{tsa}\},\circ}$; 
\label{08090901st1}
\item
$\lr{\mr{Obj}(\mr{tsa})}{\{\mr{Q}(\mc{A},\mc{B})\,\vert\,\mc{A},\mc{B}\in\mr{tsa}\},\circ}$; 
\label{08090901st3}
\item
$\lr{\{\mr{Ef}(\mc{A})\,\vert\,\mc{A}\in\mr{tsa}\}}{\{\mf{Q}(\mc{A},\mc{B})\,\vert\,\mc{A},\mc{B}\in\mr{tsa}\},\circ}$. 
\label{08090901st2}
\end{enumerate}
\end{proposition}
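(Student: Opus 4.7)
The plan is to verify, for each of the three candidate data, the defining axioms of a category: closure of the putative hom-sets under the composition $\circ$ inherited from map composition, associativity, and existence of an identity morphism at each object. Associativity is free in all three cases since set-theoretic map composition is associative, so the real work concentrates on composition closure and the identification of identities.

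For statement \eqref{08090901st1}, given $J_{1}\in\mf{Z}(\mc{A},\mc{B})$ and $J_{2}\in\mf{Z}(\mc{B},\mc{C})$, I first observe that $J_{2}\circ J_{1}\in\ms{P}(\mc{A}_{s}^{\ast},\mc{C}_{s}^{\ast})$ since $\ms{P}$ is closed under composition (equivalently, by Lemma~\ref{08080010}\eqref{08080010st2} applied in the $\ms{ptls}$-setting together with \eqref{08271616}). The key identity $(\mc{A}_{s}^{\ast})^{+}=\mf{P}_{\mc{A}}$ then tells us $J_{1}(\upphi)\in\mf{P}_{\mc{B}}$ whenever $\upphi\in\mf{P}_{\mc{A}}$, hence the defining inequality for $J_{2}$ applies and
\begin{equation*}
(J_{2}\circ J_{1})(\upphi)(\un)=J_{2}(J_{1}(\upphi))(\un)\leq J_{1}(\upphi)(\un)\leq\upphi(\un).
\end{equation*}
The identity $\un_{\mc{A}_{s}^{\ast}}$ is clearly in $\ms{P}(\mc{A}_{s}^{\ast},\mc{A}_{s}^{\ast})$ and saturates the inequality, so it serves as the identity at $\mc{A}$.

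For \eqref{08090901st3}, given $T_{1}\in\ms{Q}(\mc{A},\mc{B})$ and $T_{2}\in\ms{Q}(\mc{B},\mc{C})$, positivity closure yields $T_{2}\circ T_{1}\in\ms{P}(\mc{A},\mc{C})$; moreover, since $T_{2}$ is an order morphism by the discussion following \eqref{08111513}, I obtain $(T_{2}\circ T_{1})(\un)=T_{2}(T_{1}(\un))\leq T_{2}(\un)\leq\un$, using $T_{1}(\un)\leq\un$ and the transitivity of the standard preorder on $\mc{C}_{ob}$. For the identity, $\un_{\mc{A}}\in Mor_{\ms{tsa}}(\mc{A},\mc{A})\subset\ms{P}(\mc{A})$ by \eqref{08161036}, and it trivially satisfies $\un_{\mc{A}}(\un)=\un$.

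For \eqref{08090901st2}, I first check that the restriction $T\up Ef(\mc{A})$ of any $T\in\ms{Q}(\mc{A},\mc{B})$ actually takes values in $Ef(\mc{B})$: for $e\in Ef(\mc{A})$ the positivity of $T$ forces $T(e)\geq\ze$ and also $T(\un-e)\geq\ze$, whence $T(e)\leq T(\un)\leq\un$. Consequently the composite of two such restrictions is well defined, and one has the crucial identity $(T_{2}\up Ef(\mc{B}))\circ(T_{1}\up Ef(\mc{A}))=(T_{2}\circ T_{1})\up Ef(\mc{A})$, which reduces composition closure to \eqref{08090901st3}. The identity at $Ef(\mc{A})$ is $\un_{\mc{A}}\up Ef(\mc{A})$, evidently an element of $\mf{Q}(\mc{A})$. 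The only mild subtlety, and the step that requires a touch of care, is the well-definedness of the hom-sets as \emph{distinct} morphism sets for distinct objects: since two different algebras $\mc{A}\neq\mc{A}'$ can \emph{a priori} yield the same set of effects $Ef(\mc{A})=Ef(\mc{A}')$, a strict reading of the data in \eqref{08090901st2} asks one to package morphisms with their source/target labels so that $Mor(Ef(\mc{A}),Ef(\mc{B}))$ is unambiguously $\mf{Q}(\mc{A},\mc{B})$; this is the standard device of tagging and is the main conceptual (rather than computational) point to address before the routine verification.
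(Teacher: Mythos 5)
Your proof is correct and follows essentially the same route as the paper's: associativity is inherited from map composition, closure of $\mf{Z}$ under composition via the chain of inequalities, the order-morphism argument on the unit for $\ms{Q}$, and the reduction of $\mf{Q}$ to $\ms{Q}$ after checking that such maps send effects to effects. (Your chain $(J_{2}\circ J_{1})(\upphi)(\un)\leq J_{1}(\upphi)(\un)\leq\upphi(\un)$ is in fact the corrected form of the inequality as the paper states it, and your remark on tagging the hom-sets in the third case is a harmless extra precaution the paper leaves implicit.)
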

\begin{proof}
St.\eqref{08090901st1} amounts to prove that $K\circ J\in\mf{Z}(\mc{A},\mc{C})$ 
for any $J\in\mf{Z}(\mc{A},\mc{B})$, $K\in\mf{Z}(\mc{B},\mc{C})$, which follows since $\mr{ptls}$ is a category and since 
for all $\phi\in\mf{P}_{\mc{A}}$ we have $K(J(\phi))(\un)\leq J(\phi)(\un)\leq\un$.
St.\eqref{08090901st3} follows since \eqref{08111513} while st.\eqref{08090901st2} by st.\eqref{08090901st3}. 
\end{proof}
\begin{proposition}
\label{08091056}
Let $\mc{A},\mc{B}\in\mr{tsa}$, then
\begin{enumerate}
\item
$\mr{K}(\mc{A})\subseteq\mr{Q}(\mc{A})$;
\label{08091056st1}
\item
$\upzeta_{\mc{A}}\in\mr{Mor_{smg}}(\Uplambda_{\mc{A}},\mr{Q}(\mc{A}))$; 
\label{11030548}
\item
$\upgamma_{\mc{A}}\in\mr{Mor_{smg}}(\Uptheta_{\mc{A}}^{op},\mr{Q}(\mc{A}))$; 
\label{11220848}
\item
$\upzeta_{\mc{A},e}\in\mr{Mor_{smg}}(\Uplambda_{\mc{A}},\mf{Q}(\mc{A}))$; 
\label{11030549}
\item
$\upgamma_{\mc{A},e}\in\mr{Mor_{smg}}(\Uptheta_{\mc{A}}^{op},\mf{Q}(\mc{A}))$; 
\label{08091056st2}
\item
$\mr{Q}(\mc{B},\mc{A})^{\dagger}\subseteq\mf{Z}(\mc{A},\mc{B})$;
\label{08091056st3}
\item
$\upzeta_{\mc{A}}^{\dagger}\in\mr{Mor_{smg}}(\Uplambda_{\mc{A}}^{op},\mf{Z}(\mc{A}))$;
\label{08091056last}
\item
$\upgamma_{\mc{A}}^{\dagger}\in\mr{Mor_{smg}}(\Uptheta_{\mc{A}},\mf{Z}(\mc{A}))$.
\label{08091056st4}
\end{enumerate}
\end{proposition}
\begin{proof}
St.\eqref{11030548} and st.\eqref{11220848} follow since \eqref{08091352}.
St.\eqref{08091056st2} is well-set, $\mf{Q}(\mc{A})$ being a semigroup since Prp.\ref{08090901}\eqref{08090901st2}, and 
it follows since $\upgamma_{\mc{A}}(\Uptheta_{\mc{A}})\subseteq\mr{Q}_{\mc{A}}$ and \eqref{08091352}. Similar proof shows
st.\eqref{11030549}. St.\eqref{08091056st3} follows since Lemma \ref{08080010}\eqref{08080010st3} and since any element 
in $\mf{P}_{\mc{A}}$ is an order morphism being linear. St.\eqref{08091056st4} is well-set, $\mf{Z}(\mc{A})$ being a 
semigroup since Prp.\ref{08090901}\eqref{08090901st1}, and it follows since st.(\ref{11220848},\ref{08091056st3}) and 
Lemma \ref{08080010}\eqref{08080010st2}. St.\eqref{08091056last} is well-set and it follows since 
st.(\ref{11030548},\ref{08091056st3}) and Lemma \ref{08080010}\eqref{08080010st2}.
\end{proof}
We have $(\cdot)^{\ast}\in\mr{Mor}_{\mr{tg}}(\mr{Aut}_{\mr{ptls}}(X),\mr{Aut}_{\mr{ptls}}(X_{s}^{\ast}))$ for any $X\in\mr{ptls}$, 
since Lemma \ref{08080010} so, 
$(\cdot)^{\ast}\in \mr{Mor}_{\mr{tg}}(\mr{Aut}_{\mr{tsa}}(\mc{A}),\mr{Aut}_{\mr{ptls}}(\mc{A}_{s}^{\ast}))$, 
for any $\mc{A}\in\mr{tsa}$. $\mr{Q}(\mc{A},\mc{B})$ is closed in $\mf{L}_{s}(\mc{A},\mc{B})$ 
since $\mr{P}(\mc{A},\mc{B})$ is so. $\Uptheta_{\mc{A}}$ is a subsemigroup of $\mc{A}$ and if the product in $\mc{A}$ is 
jointly continuous then $\Uptheta_{\mc{A}}$ is closed. $\updelta_{\mc{A}}^{\dagger}:\mc{A}\to\mr{P}(\mc{A}_{s}^{\ast})$ is a 
semigroup morphism since \eqref{08091352} and Lemma \ref{08080010}.
\par
Next we introduce the physical interpretation of the previous data.
In the following definition $\un$ is the unit of the unital algebra $\mc{A}$,
while $\un_{\mc{A}}$ is the identity map on $\mc{A}$ in agreement with section \ref{not1}.
\begin{definition}
[\textbf{Semantics}]
\label{08081839}
We call $(\mf{R},\mf{r},\mf{D},\mf{d},\mf{C},\mf{c},\mf{T},\mf{t},\mf{E},\mf{e},\mf{O},\mf{o})$ 
a semantics if it satisfies the following properties: For any $\mc{A},\mc{B},\mc{C}\in\mr{tsa}$, 
every $J\in\mf{Z}(\mc{A},\mc{B})$, $K\in\mf{Z}(\mc{B},\mc{C})$, 
$J_{1},J_{2}\in\mf{Z}(\mc{A},\mc{B})$ such that $J_{1}+J_{2}\in\mf{Z}(\mc{A},\mc{B})$ 
$T\in\mr{Q}(\mc{A},\mc{B})$, $S\in\mr{Q}(\mc{B},\mc{C})$, $T_{1},T_{2}\in\mr{Q}(\mc{A},\mc{B})$ such that 
$T_{1}+T_{2}\in\mr{Q}(\mc{A},\mc{B})$, $y\in\mc{A}_{ob}$, $e,f\in\mr{Ef}(\mc{A})$ such that 
$e+f\in\mr{Ef}(\mc{A})$, $\uppsi\in\mf{P}_{\mc{A}}$, $\upxi\in\mf{P}_{\mc{B}}$, $c,d\in\Uplambda_{\mc{A}}$, and 
$a,b\in\Uplambda_{\mc{A}}$ such that $a+b\in\Uplambda_{\mc{A}}$, $z\in\Upgamma_{\mc{A}}$, every Hilbert space $\mf{H}$, every 
semiobservable\footnote{for the concepts of semiobservable and its measuring process see Appendix.} $X$ on $\mf{H}$ with 
value space $(\Omega,\mf{B})$, every measuring process $\mf{x}$ of $X$ and every $B\in\mf{B}$ we have that 
\begin{enumerate}
\item Operations
\begin{enumerate}
\item
$\mf{R}(c)=$ the operation of filtering through $\mf{r}(c)$, 
\label{08081839o}
\item
$\mf{R}(d c)\equiv$ $\mf{R}(d)$ following $\mf{R}(c)$,
\item
$\mf{R}(a+b)\equiv$ $\mf{R}(a)$ in \textbf{concealed} alternative to $\mf{R}(b)$,
\item
$\mf{r}(z^{\ast})=$ the reverse of $\mf{r}(z)$;
\end{enumerate}
\item Devices
\begin{enumerate}
\item
$\mf{D}(T)=$ the device $\mf{d}(T)$,
\item
$\mf{D}(T)\equiv$ the reference frame transformation $\mf{d}(T)$,
\label{03041351}
\item
$\mf{d}(\un_{\mc{A}})=$ producing no variations,
\item
$\mf{D}(S\circ T)\equiv$ $\mf{D}(S)$ following $\mf{D}(T)$, 
\item
$\mf{D}(S\circ T)\equiv$ $\mf{D}(T)$ followed by $\mf{D}(S)$,
\item
$\mf{D}(T_{1}+T_{2})=$ $\mf{D}(T_{1})$ in \textbf{detected} alternative to $\mf{D}(T_{2})$, 
\item
$\mf{d}(\upzeta_{\mc{A}}(c))=$ implementing $\mf{R}(c)$.
\end{enumerate}
\item Channels
\begin{enumerate}
\item 
$\mf{C}(J)=$ the channel $\mf{c}(J)$, 
\item
$\mf{C}(K\circ J)\equiv$ $\mf{C}(K)$ following $\mf{C}(J)$,
\item
$\mf{C}(K\circ J)\equiv$ $\mf{C}(J)$ followed by $\mf{C}(K)$,
\item
$\mf{C}(J_{1}+J_{2})=$ $\mf{C}(J_{1})$ in \textbf{detected} alternative to $\mf{C}(J_{2})$, 
\item
$\mf{c}(\mf{I}_{\mf{x}}(B))=$ selecting the subensemble of those samples such that a value in $B$ is obtained after the 
measuring process $\mf{x}$ is performed,
\label{08081839mpc}
\item
$\mf{c}(T^{\dagger})\equiv$ induced by $\mf{D}(T)$;
\end{enumerate}
\item Statistical ensembles
\begin{enumerate}
\item
$\mf{T}(\uppsi)=$ the statistical ensemble $\mf{t}(\uppsi)$, 
\item
$\mf{t}(J(\uppsi))\equiv$ resulting next $\mf{C}(J)$ applies to $\mf{T}(\uppsi)$;
\item
$\mf{t}(T^{\dagger}(\upxi))\equiv$ that is detected in any reference frame $\bullet$ as $\mf{T}(\upxi)$ is detected in the 
reference frame obtained by applying $\mf{D}(T)$ to $\bullet$;
\label{03041352}
\end{enumerate}
\item Effects
\begin{enumerate}
\item 
$\mf{E}(e)\equiv$ the effect $\mf{e}(e)$,
\item
$\mf{e}(\ze)=$ of selecting no outputs,
\item
$\mf{e}(\un)=$ of selecting one output,
\item
$\mf{e}(X(B))=$ of selecting the subensemble of those samples such that a value in $B$ is obtained after the measuring 
process $\mf{x}$ is performed,
\label{08081839mpe}
\item
$\mf{e}(cc^{\ast})=$ produced by $\mf{R}(c)$,
\label{08081839oe}
\item 
$\mf{E}(e+f)\equiv$ $\mf{E}(e)$ in detected alternative to $\mf{E}(f)$,
\item
$\mf{e}(T(e))\equiv$ resulting next $\mf{D}(T)$ applies to $\mf{E}(e)$;
\end{enumerate}
\item Observables
\begin{enumerate}
\item
$\mf{O}(y)=$ the observable $\mf{o}(y)$,
\item
$\mf{o}(\un)=$ proportional to the number of samples
\footnote{this is the interpretation of Haag and Kastler in \cite{haks}.},
\item
$\mf{o}(\un)\equiv$ strength\footnote{with the same meaning of strength of a beam discussed in \cite{28dl}.},
\item
$\mf{o}(T(y))\equiv$ resulting next $\mf{D}(T)$ applies to $\mf{O}(y)$.
\end{enumerate}
\end{enumerate}
\end{definition}
\begin{postulate}
\label{11030928}
Let $(\mf{R},\mf{r},\mf{D},\mf{d},\mf{C},\mf{c},\mf{T},\mf{t},\mf{E},\mf{e},\mf{O},\mf{o})$ be a semantics,
$\mc{A},\mc{B}\in\mr{tsa}$, $\uppsi\in\mf{P}_{\mc{A}}$, $y\in\mc{A}_{ob}$, $J\in\mf{Z}(\mc{A},\mc{B})$, 
$\upomega\in\mf{P}_{\mc{A}}^{\natural}$ and $e\in\mr{Ef}(\mc{B})$. We postulate that
\begin{itemize}
\item
$\mf{b}_{\mc{A},\mc{B}}(J,\upomega,e)$ 
equals 
\textbf{the empirical representation of the propensity conditioned by $\mf{T}(\upomega)$ to detect $\mf{E}(e)$ when tested
on $\mf{T}(J(\upomega))$};
\end{itemize}
and 
\begin{itemize}
\item
$\uppsi(y)$ equals \textbf{the total value of $\mf{O}(y)$ in $\mf{T}(\uppsi)$}.
\end{itemize}
\end{postulate}
\begin{definition}
\label{11061008}
Let $(\mf{R},\mf{r},\mf{D},\mf{d},\mf{C},\mf{c},\mf{T},\mf{t},\mf{E},\mf{e},\mf{O},\mf{o})$ be a semantics, 
$\mc{A}\in\mr{tsa}$, $y\in\mc{A}_{ob}$ and $\upomega\in\mf{P}_{\mc{A}}^{\natural}$. 
We let 
\begin{itemize}
\item $\upomega(\un)^{-1}\upomega(y)$ be \textbf{the expectation value of $\mf{O}(y)$ in $\mf{T}(\upomega)$}.
\end{itemize}
\end{definition}
\begin{remark}
\label{03111723}
Notice that Def.\ref{08081839}\eqref{03041352} interprets the trivial fact 
$T^{\dagger}(\upxi)=\upxi\circ T$ by understanding $T$ in $\upxi\circ T$ in terms of 
Def.\ref{08081839}\eqref{03041351}. In particular if $T^{\dagger}\vert_{\mf{P}_{\mc{B}}}^{\mf{P}_{\mc{A}}}$ admits a right inverse say 
$V$, for instance whenever $T$ admits a left inverse, then for every $\eta\in\mf{P}_{\mc{A}}$ we have $\eta=T^{\dagger}(V\eta)$ 
namely: 
The statistical ensemble $\mf{t}(\eta)$ 
equals 
the statistical ensemble that is detected in any reference frame $\bullet$ as the statistical ensemble $\mf{t}(V\eta)$ 
is detected in the reference frame obtained by applying the reference frame transformation $\mf{d}(T)$ to $\bullet$.
\end{remark}
\subsection{The empirical representation of the propensity}
\begin{remark}
[Propensity versus probability and the role of time]
\label{05171147}
We retain the concept of propensity primitive and consider the value $\mf{b}_{\mc{A},\mc{B}}(J,\upomega,e)$, which is
a frequency as we shall see in Rmk.\ref{08101005}, its experimentally testable representative. 
Therefore, the empirical characters of $\mf{b}_{\mc{A},\mc{B}}(J,\upomega,e)$ 
are ascribable to the representative rather than to the propensity itself.
This because the concept of frequency is related to and dependent by the concept of time.
By the very definition of frequency, performing trials implies at least an operative meaning of time labelling them,
and this results to supply the theory with a primitive concept of time.
However we do not assume any global nor primitive concept of time.
Rather as advanced in the introduction, in our framework the couple formed by a species $\mr{a}$ of dynamical patterns and 
a context $M$ in the context category source of $\mr{a}$, determines a collection of experimentally detectable trajectories
whose dynamics is implemented by the morphism part $\uptau_{\mr{a}(M)}$ of the dynamical functor 
acting over the morphisms of the corresponding dynamical category $\mr{G}_{M}^{\mr{a}}$. Thus we can read 
$\mr{Mor}_{\mr{G}_{M}^{\mr{a}}}$ as a type of proper time associated with the species $\mr{a}$ in the context $M$.
\end{remark}
\begin{remark}
[The empirical representation of the propensity is a frequency]
\label{08101005}
The empirical representation of the propensity is a frequency.
Indeed $\mf{b}_{\mc{A},\mc{B}}(J,\upomega,e)=\mf{b}_{\mc{A},\mc{B}}(\upzeta^{\dagger}(e^{1/2})\circ J,\upomega,\un)$
and $\mf{b}_{\mc{A},\mc{B}}(J,\upomega,\un)=\frac{J(\upomega)(\un)}{\upomega(\un)}$ which \emph{equals the ratio
of the total value of the observable proportional to the number of samples in the statistical ensemble resulting next 
the channel $\mf{c}(J)$ applies to the statistical ensemble $\mf{t}(\upomega)$, over the total value of the 
observable proportional to the number of samples in the statistical ensemble $\mf{t}(\upomega)$}. 
\end{remark}
\begin{remark}
[The propensity map slightly generalizes the state-effect duality]
\label{08101104}
Let $\mc{A}\in\mr{tsa}$, $J\in\mf{Z}(\mc{A},\mc{B})$, $\upomega\in\mf{P}_{\mc{A}}^{\natural}$ 
such that $J(\upomega)(\un)\neq 0$ and $e\in\mr{Ef}(\mc{B})$. We have 
\begin{equation*}
\mf{b}_{\mc{A},\mc{B}}(J,\upomega,e)\leq\mf{p}_{\mc{B}}(J(\upomega),e)=p_{\mc{B}}(\frac{J(\upomega)}{J(\upomega)(\un)},e),
\end{equation*}
in particular $\mf{p}_{\mc{B}}$ implements $p_{\mc{B}}$, while $\mf{b}_{\mc{A},\mc{B}}(J,\upomega,e)$ has no counterpart in 
terms of $p_{\mc{B}}$ unless $J\in\mr{Q}(\mc{B},\mc{A})^{\dagger}$ or $J(\upomega)(\un)=\upomega(\un)$ see below,
hence $\mf{b}_{\mc{A},\mc{B}}$ generalizes $p_{\mc{B}}$. Furthermore \eqref{11251447a}$\Rightarrow$\eqref{11251447b} 
and \eqref{11251447c}$\Leftrightarrow$\eqref{11251447d}$\Leftrightarrow$\eqref{11251447e}, 
while if $\mc{B}$ is the closure of the linear space generated by 
$\mr{Ef}(\mc{B})$,\footnote{for instance any von Neumann algebra provided with the norm topology since 
the spectral decomposition of every selfadjoint element by the spectral theorem,
and since any element is linear combination of its real and imaginary parts.
As a result it is true also for any $C^{\ast}$ algebra being isometric via the universal representation to a $C^{\ast}$ 
subalgebra of a suitable von Neumann algebra.}then \eqref{11251447a}$\Leftarrow$\eqref{11251447b}. 
Here
\begin{enumerate}
\item
$J\in\mr{Q}(\mc{B},\mc{A})^{\dagger}$,
\label{11251447a}
\item
$(\exists\,T\in\mr{Q}(\mc{B},\mc{A}))(\forall\upphi\in\mf{P}_{\mc{A}}^{\natural})(\forall e\in\mr{Ef}(\mc{B}))
(\mf{b}_{\mc{A},\mc{B}}(J,\upphi,e)=p_{\mc{A}}(\frac{\upphi}{\upphi(\un)},T(e)))$,
\label{11251447b}
\item
$J(\upomega)(\un)=\upomega(\un)$,
\label{11251447c}
\item
$(\forall e\in\mr{Ef}(\mc{B}))(\mf{b}_{\mc{A},\mc{B}}(J,\upomega,e)=p_{\mc{B}}(\frac{J(\upomega)}{J(\upomega)(\un)},e))$,
\label{11251447d}
\item
$(\exists e\in\mr{Ef}(\mc{B}))(J(\upomega)(e)\neq 0\wedge
\mf{b}_{\mc{A},\mc{B}}(J,\upomega,e)=p_{\mc{B}}(\frac{J(\upomega)}{J(\upomega)(\un)},e))$;
\label{11251447e}
\end{enumerate}
where item \eqref{11251447a} is well-set since Prp.\ref{08091056}\eqref{08091056st3}, while item \eqref{11251447b} 
since Prp.\ref{08090901}\eqref{08090901st2}.
\end{remark}
\begin{remark}
[Compatibility between the semantics of the channel $\mf{I}_{\mf{x}}(B)$ and that of the effect $X(B)$]
\label{10311154}
Equality \eqref{10311146} in Appendix ensures compatibility between Def.\ref{08081839}\eqref{08081839mpc} and 
Def.\ref{08081839}\eqref{08081839mpe}. More specifically let $X$ be a semiobservable on $\mf{H}$ with values in 
$(\Lambda,\mf{B})$, let $\mc{M}=\lr{\mf{L}(\mf{H})}{\sigma(\mf{L}(\mf{H}),\mf{L}(\mf{H})_{\ast})}$, 
let $\mf{x}$ be a measuring process of $X$, $\upphi\in\mf{P}_{\mc{M}}^{\natural}$ and $B\in\mf{B}$ we have 
\begin{equation*}
\mf{b}_{\mc{M}}(\mf{I}_{\mf{x}}(B),\upphi,\un)=\mf{p}_{\mc{M}}(\upphi,X(B))=
\mf{b}_{\mc{M}}\left(\upzeta_{\mc{M}}^{\dagger}(X(B)^{\frac{1}{2}}),\upphi,\un\right).
\end{equation*}
Despite the second equality above and made exception for the case when $X$ is a discrete observable,
$\mf{x}$ is the von Neumann measuring process associated with $X$ and $B$ is a singlet, in general $\mf{I}_{\mf{x}}(B)$ 
differs from $\upzeta_{\mc{M}}^{\dagger}(X(B)^{\frac{1}{2}})$. 
That is why we opted to ascribe no interpretation in Def. \ref{08081839} to the operation $e^{\frac{1}{2}}$ with $e$ effect.
\end{remark}
\begin{remark}
[Expectation value and empirical representation of the propensity are compatible]
\label{11261141}
Let $\mc{A}\in\mr{tsa}$ admitt GNS constructions\footnote{for instance any $m^{\ast}$-convex algebra with a bounded 
approximate identity.}, $O\in\mc{A}_{ob}$, $\upomega\in\mf{P}_{\mc{A}}^{\natural}$, 
let $\lr{\mf{H}}{\uppi,\Upomega}$ be the GNS construction associated with the state 
$\uppsi\coloneqq\upomega(\un)^{-1}\upomega$, $E_{\upomega}^{O}$ be the resolution of the identity of the 
bounded selfadjoint operator $\uppi(O)$ and $\omega_{\Upomega}$ be the vector state on 
$\mc{M}=\lr{\mf{L}(\mf{H})}{\sigma(\mf{L}(\mf{H}),\mf{L}(\mf{H})_{\ast})}$, 
induced by the unit vector $\Upomega$. Thus, there exists a probability measure $\upmu_{\upomega}^{O}$ on $\R$ whose support
is the spectrum of $\uppi(O)$ and such that 
\begin{equation*}
\begin{aligned}
\frac{\upomega(O)}{\upomega(\un)}
&=
\int\lambda\,d\upmu_{\upomega}^{O}(\lambda);
\\
\upmu_{\upomega}^{O}(B)
&=
p_{\mc{M}}(\omega_{\Upomega},E_{\upomega}^{O}(B)),\,\forall B\in\mf{B}(\R).
\end{aligned}
\end{equation*}
As a result if we let $\mf{x}$ be a measuring process of $\uppi(O)$, then as required 
\emph{the expectation value of the observable $\mf{o}(O)$ in the statistical ensemble $\mf{t}(\upomega)$
equals the integral of the identity map on $\R$ against the measure mapping any Borelian set $B$ of $\R$ into 
the empirical representation of the propensity conditioned by the statistical ensemble $\mf{t}(\omega_{\Upomega})$
to detect the effect of selecting the subensemble of those samples such that a value in $B$ is obtained after the 
measuring process $\mf{x}$ is performed, when tested on the statistical ensemble $\mf{t}(\omega_{\Upomega})$}.
\end{remark}
\begin{remark}
[$G$-action on $\mc{A}$ and $G^{op}$-action on $\mc{A}_{s}^{\ast}$]
Let us analize an emblematic way of implementing a group $G$ as group of transformations on $\mc{A}$
and the group $G^{op}$, the opposite of $G$, as group of transformations on $\mc{A}_{s}^{\ast}$
with the additional property of possessing a semantics. 
Now since Prp.\ref{08091056} $\Uplambda_{\mc{A}}$ acts on $\mc{A}$ through the map $\upzeta_{\mc{A}}$ 
while $\Uplambda_{\mc{A}}^{op}$ acts on $\mc{A}_{s}^{\ast}$ through the map $\upzeta_{\mc{A}}^{\dagger}$ 
and the set of values of both these maps are provided with a semantics. 
Therefore, if we want a $G$ action $\uptau$ on $\mc{A}$ and the $G^{op}$ action $\uptau^{\dagger}$ on $\mc{A}_{s}^{\ast}$
both provided with semantics, then we can take a group morphism $V:G\to\mr{U}(\mc{A})$ and define 
\begin{equation}
\label{11021356}
\begin{aligned}
\uptau_{V}&\coloneqq\imath_{\mr{Q}(\mc{A})}^{\mf{L}(\mc{A})}\circ\upzeta_{\mc{A}}\circ
\imath_{\mr{U}(\mc{A})}^{\Uplambda_{\mc{A}}}\circ V:G\to\mf{L}(\mc{A});
\\
\uptau_{V}^{\dagger}&\coloneqq\imath_{\mf{Z}(\mc{A})}^{\mf{L}(\mc{A}_{s}^{\ast})}\circ\upzeta_{\mc{A}}^{\dagger}
\circ\imath_{\mr{U}(\mc{A})}^{\Uplambda_{\mc{A}}^{op}}\circ V\circ\imath_{G^{op}}^{G}:G^{op}\to\mf{L}(\mc{A}_{s}^{\ast}).
\end{aligned}
\end{equation}
Let us denote $\uptau_{V}$ simply by $\uptau$ thus,
\emph{$\uptau$ is an action of $G$ while $\uptau^{\dagger}$ is an action of $G^{op}$}.
Next let us set\footnote{a more contextualized semantics will be developed in 
Def.\ref{01200837}, see specifically Def.\ref{01200837}\eqref{01200837g}.} 
\begin{itemize}
\item
$\mf{r}(V(g))=$ the $G$ transformation of magnitude $g$,
\end{itemize}
so for every $g\in G$ and $a\in\mc{A}_{ob}$ we have according to our semantics that \emph{$\mf{O}(\uptau(g)a)$ equals 
the observable resulting next the device implementing the operation of filtering through the $G$ transformation of 
magnitude $g$ applies to the observable $\mf{o}(a)$}.
\par
If in particular $\mc{A}$ acts on some Hilbert space $\mf{H}$ and $\rho$ is any trace class operator on $\mf{H}$, 
then $\uptau^{\dagger}(g)(\omega_{\rho})=\omega_{\upzeta_{\mc{A}}(V^{\ast}(g))(\rho )}$ for every $g\in G$, as a result for every 
$v\in\mf{H}$ we obtain 
\begin{equation}
\label{11021119}
\uptau^{\dagger}(g)(\omega_{v})=\omega_{V^{\ast}(g)v}.
\end{equation}
Had we selected $\Uptheta_{\mc{A}}$ instead of $\Uplambda_{\mc{A}}$ as set of entities to be provided with a semantics, 
we would have employed $\upgamma_{\mc{A}}$ in place of $\upzeta_{\mc{A}}$ in \eqref{11021356} and obtained $\uptau$ as an 
action of $G^{op}$ and $\uptau^{\dagger}$ as an action of $G$. 
Finally if $G=G^{op}$, for instance when $G$ is commutative, then $\uptau$ and $\uptau^{\dagger}$ would be both actions of 
$G$. 
\end{remark}
\begin{remark}
[Detected versus concealed alternatives. How they combine]
\label{11041528}
Let $(p,\lambda)$ be a spectral couple on a Hilbert space $\mf{H}$ defined on $Z$ and $\mf{x}$ be the von Neumann 
measuring process associated with the discrete observable associated with $(p,\lambda)$ (Def.\ref{10311739} in Appendix).
Let $\mc{M}\coloneqq\lr{\mf{L}(\mf{H})}{\sigma(\mf{L}(\mf{H}),\mf{L}(\mf{H})_{\ast})}$ thus, $\mf{I}_{\mf{x}}$ being by 
definition the dual of an instrument, if $\{B_{i}\}_{i\in\Z}\subset\mf{B}(\R)$ is a family of mutually disjoint sets, 
then we have 
\begin{equation}
\label{11041528a}
\mf{I}_{\mf{x}}(\bigcup_{i\in\Z}B_{i})=\sum_{i\in\Z}\mf{I}_{\mf{x}}(B_{i}),
\end{equation}
sum converging in $\mf{L}_{s}(\mc{M}_{s}^{\ast})$, 
note that $\mc{M}_{s}^{\ast}=\lr{\mf{L}(\mf{H})_{\ast}}{\sigma(\mf{L}(\mf{H})_{\ast},\mf{L}(\mf{H}))}$,
while \eqref{10301254} and Def.\ref{10311739} in Appendix yield
\begin{equation}
\label{11041528c}
\begin{aligned}
\beta\in\lambda(Z)&\Rightarrow\mf{I}_{\mf{x}}(\{\beta\})=\upzeta^{\dagger}(\sum_{i\in\overset{-1}{\lambda}(\{\beta\})}p_{i});
\\
\beta\notin\lambda(Z)&\Rightarrow\mf{I}_{\mf{x}}(\{\beta\})=\ze.
\end{aligned}
\end{equation}
In general we have
\begin{equation}
\label{11041528b}
\upzeta^{\dagger}(\sum_{i\in\overset{-1}{\lambda}(\{\beta\})}p_{i}))\neq
\sum_{i\in\overset{-1}{\lambda}(\{\beta\})}\upzeta^{\dagger}(p_{i}).
\end{equation}
The right-hand side of \eqref{11041528a} and \eqref{11041528b} are channels limit in $\mf{L}_{s}(\mc{M}_{s}^{\ast})$ of 
filters of detected alternatives of channels, while the left-hand side of \eqref{11041528b} is the 
channel induced by the device implementing an operation which is the weak operator topology limit of a filter of 
concealed alternatives of operations.
\par
The property of the above dual instrument $\mf{I}_{\mf{x}}$, of encoding \emph{detected} alternatives of channels as in 
\eqref{11041528a} as well that of encoding channels induced by the device implementing \emph{concealed} alternatives of 
operations as in \eqref{11041528c}, makes $\mf{I}_{\mf{x}}$ one of those maps of channels where the detected and concealed 
alternatives combine. The next remark provides an application of what here stated, specifically we shall analyze 
quantitatively the difference between concealed and detected alternatives established in \eqref{11041528b}.
\end{remark}
\begin{remark}
[Detected versus concealed alternatives. Interference phenomenon]
\label{08091625}
Here we outline the interference phenomenon in order to elucidate the concepts of concealed and detected alternative 
and how the difference between them is related to the noncommutative nature of the observable algebra of a quantum system. 
Let $\mc{A}\in\mr{tsa}$, and for all $a_{1},a_{2},c\in\mc{A}$ set
\begin{equation*}
\begin{cases}
\lr{a_{1}}{c,a_{2}}\coloneqq a_{1}\upzeta_{\mc{A}}(c)(\un)a_{2}^{\ast};
\\
\mr{Int}(a_{1},a_{2},c)\coloneqq\lr{a_{1}}{c,a_{2}}+\lr{a_{1}}{c,a_{2}}^{\ast}.
\end{cases}
\end{equation*} 
If $a_{1}a_{2}^{\ast}=\ze$ and $cc^{\ast}\in\{a_{1}\}^{\prime}\cup\{a_{2}\}^{\prime}$, 
then $\mr{Int}(a_{1},a_{2},c)=\ze$, otherwise $\mr{Int}(a_{1},a_{2},c)$ might be different to $\ze$.
Next let $a_{1},a_{2},c\in\Uplambda_{\mc{A}}$ such that $a_{1}+a_{2}\in\Uplambda_{\mc{A}}$ thus, 
\begin{equation*}
\upzeta_{\mc{A}}((a_{1}+a_{2})c)(\un)=\sum_{i=1}^{2}\upzeta_{\mc{A}}(a_{i}c)(\un)+\mr{Int}(a_{1},a_{2},c);
\end{equation*}
therefore 
\begin{equation}
\label{11211153}
\begin{aligned}
\mf{b}_{\mc{A}}\left(\upzeta_{\mc{A}}^{\dagger}(c)\circ\upzeta_{\mc{A}}^{\dagger}(a_{1}+a_{2}),\upomega,\un\right)
&=
\mf{b}_{\mc{A}}\left(\upzeta_{\mc{A}}^{\dagger}(c)\circ\sum_{i=1}^{2}\upzeta_{\mc{A}}^{\dagger}(a_{i}),\upomega,\un\right)
+\frac{\upomega(\mr{Int}(a_{1},a_{2},c))}{\upomega(\un)}
\\
&=
\sum_{i=1}^{2}\mf{b}_{\mc{A}}\left(\upzeta_{\mc{A}}^{\dagger}(c)\circ\upzeta_{\mc{A}}^{\dagger}(a_{i}),\upomega,\un\right)
+\frac{\upomega(\mr{Int}(a_{1},a_{2},c))}{\upomega(\un)}.
\end{aligned}
\end{equation}
The first equality above yields:
\emph{The empirical representation of the propensity conditioned by the statistical ensemble 
$\mf{t}(\upomega)$ to detect the effect of selecting one output when tested on the statistical ensemble resulting next $Z$
following $X$ applies to the statistical ensemble $\mf{t}(\upomega)$, differs of 
the amount $\upomega(\un)^{-1}\upomega(\mr{Int}(a_{1},a_{2},c))$ from the empirical representation of the propensity 
conditioned by the statistical ensemble $\mf{t}(\upomega)$ to detect the effect of selecting one output when tested on 
the statistical ensemble resulting next $Z$ following $Y$ applies to the statistical ensemble $\mf{t}(\upomega)$. Here
\begin{itemize}
\item
$Z=$ the channel induced by the device implementing the operation of filtering through $\mf{r}(c)$; 
\item
$X=$ the channel induced by the device implementing the operation of filtering through $\mf{r}(a_{1})$ 
in \underline{concealed} alternative to the operation of filtering through $\mf{r}(a_{2})$;
\item
$Y=$ the channel induced by the device implementing the operation of filtering through $\mf{r}(a_{1})$ 
in \underline{detected} alternative to the channel induced by the device implementing the operation of filtering 
through $\mf{r}(a_{2})$.
\end{itemize}}
Now let us put into play time translation. In order to do this let $V:\R\to\mr{U}(\mc{A})$ be a group morphism, 
let $\uptau_{V}$ as in \eqref{11021356} simply denoted as $\uptau$, and let $t_{i}\in\R$ with $i\in\{0,1,2,3\}$ 
thus, by letting $x(t_{0},t_{1},t_{2})\coloneqq V(t_{1}-t_{0})xV(t_{2}-t_{1})$ for every $x\in\mc{A}$ 
and by taking into account that since Lemma \ref{08191955} we have 
$x\in\Uplambda_{\mc{A}}\Rightarrow x(t_{0},t_{1},t_{2})\in\Uplambda_{\mc{A}}$ 
and $y\in\Uplambda_{\mc{A}}\Rightarrow yV(t_{3}-t_{2})\in\Uplambda_{\mc{A}}$, 
we obtain
\begin{multline*}
\uptau^{\dagger}(t_{3}-t_{2})\circ\upzeta_{\mc{A}}^{\dagger}(c)\circ\uptau^{\dagger}(t_{2}-t_{1})
\circ\upzeta_{\mc{A}}^{\dagger}(a_{1}+a_{2})\circ\uptau^{\dagger}(t_{1}-t_{0})
=\\
\upzeta_{\mc{A}}^{\dagger}((a_{1}(t_{0},t_{1},t_{2})+a_{2}(t_{0},t_{1},t_{2}))cV(t_{3}-t_{2}))
=\\
\upzeta_{\mc{A}}^{\dagger}(cV(t_{3}-t_{2}))
\circ
\upzeta_{\mc{A}}^{\dagger}(a_{1}(t_{0},t_{1},t_{2})+a_{2}(t_{0},t_{1},t_{2})).
\end{multline*}
Thus, by the first equality in \eqref{11211153} and taking $t_{0}<t_{1}<t_{2}<t_{3}$ we deduce that 
\begin{multline*}
\mf{b}_{\mc{A}}\left(\uptau^{\dagger}(t_{3}-t_{2})\circ\upzeta_{\mc{A}}^{\dagger}(c)\circ\uptau^{\dagger}(t_{2}-t_{1})
\circ\upzeta_{\mc{A}}^{\dagger}(a_{1}+a_{2})\circ\uptau^{\dagger}(t_{1}-t_{0}),\upomega,\un\right)
=
\\
\mf{b}_{\mc{A}}\left(\upzeta_{\mc{A}}^{\dagger}(cV(t_{3}-t_{2}))\circ\sum_{i=1}^{2}\upzeta_{\mc{A}}^{\dagger}(a_{i}(t_{0},t_{1},t_{2})),
\upomega,\un\right)
+\frac{\upomega(\mr{Int}(a_{1}(t_{0},t_{1},t_{2}),a_{2}(t_{0},t_{1},t_{2}),cV(t_{3}-t_{2})))}{\upomega(\un)};
\end{multline*}
therefore we obtain 
\begin{multline}
\label{11211153time} 
\mf{b}_{\mc{A}}\left(\uptau^{\dagger}(t_{3}-t_{2})\circ\upzeta_{\mc{A}}^{\dagger}(c)\circ\uptau^{\dagger}(t_{2}-t_{1})
\circ\upzeta_{\mc{A}}^{\dagger}(a_{1}+a_{2})\circ\uptau^{\dagger}(t_{1}-t_{0}),\upomega,\un\right)
=
\\
\mf{b}_{\mc{A}}\left(
\uptau^{\dagger}(t_{3}-t_{2})\circ\upzeta_{\mc{A}}^{\dagger}(c)\circ\uptau^{\dagger}(t_{2}-t_{1})\circ
\left(\sum_{i=1}^{2}\upzeta_{\mc{A}}^{\dagger}(a_{i})\right)\circ\uptau^{\dagger}(t_{1}-t_{0}),
\upomega,\un\right)+
\\
\frac{\upomega(\mr{Int}(a_{1}(t_{0},t_{1},t_{2}),a_{2}(t_{0},t_{1},t_{2}),cV(t_{3}-t_{2})))}{\upomega(\un)}.
\end{multline}
If $\mc{A}$ is a von Neumann algebra acting on a Hilbert space $\mf{H}$ and $a_{1},a_{2}\in\mr{Pr}(\mc{A})$ 
such that $a_{1}a_{2}=\ze$ and $a_{3}\coloneqq\un-(a_{1}+a_{2})\neq\ze$, 
then the analysis in \eqref{11211153} 
can be equivalently obtained by constructing two suitable discrete observables one describing 
concealed alternatives the other the detected ones. More exactly let $(a,\lambda)$ and $(a,\mu)$ be two spectral couples on
$\mf{H}$ defined on $\{1,2,3\}$ such that $\lambda_{1}=\lambda_{2}=1$, $\lambda_{3}=0$, while $\mu_{i}=i$ with $i\in\{1,2\}$
and $\mu_{3}=0$. 
Let $\mf{x}$ and $\mf{y}$ be the von Neumann measuring processes associated with the discrete observables associated with 
$(a,\lambda)$ and $(a,\mu)$ respectively. Thus \eqref{11041528a} and \eqref{11041528c} yield 
$\mf{I}_{\mf{z}}(\{1,2\})=\sum_{i=1}^{2}\mf{I}_{\mf{z}}(\{i\})$ with $\mf{z}\in\{\mf{x},\mf{y}\}$, and 
$\mf{I}_{\mf{x}}(\{1\})=\upzeta_{\mc{A}}^{\dagger}(a_{1}+a_{2})$ and $\mf{I}_{\mf{x}}(\{2\})=\ze$; while
$\mf{I}_{\mf{y}}(\{i\})=\upzeta_{\mc{A}}^{\dagger}(a_{i})$ with $i\in\{1,2\}$. Therefore, 
\begin{equation*}
\begin{cases}
\mf{I}_{\mf{x}}(\{1,2\})=\upzeta_{\mc{A}}^{\dagger}(a_{1}+a_{2}),
\\
\mf{I}_{\mf{y}}(\{1,2\})=\sum_{i=1}^{2}\upzeta_{\mc{A}}^{\dagger}(a_{i});
\end{cases}
\end{equation*}
and then \eqref{11211153} would read as follows
\begin{equation}
\label{11211154} 
\mf{b}_{\mc{A}}\left(\upzeta_{\mc{A}}^{\dagger}(c)\circ\mf{I}_{\mf{x}}(\{1,2\}),\upomega,\un\right)
=
\mf{b}_{\mc{A}}\left(\upzeta_{\mc{A}}^{\dagger}(c)\circ\mf{I}_{\mf{y}}(\{1,2\}),\upomega,\un\right)
+\frac{\upomega(\mr{Int}(a_{1},a_{2},c))}{\upomega(\un)}.
\end{equation}
\emph{In conclusion \eqref{11211153time}, or the simplified atemporal versions \eqref{11211153} and \eqref{11211154}, 
are what we mean to be the interference phenomenon.}
\end{remark}
\subsection{Applications}
Tipically the Wigner formula for a sequence of measurements of discrete observables and its generalization to 
continuous observables is provided in Heisenberg picture, see for instance \cite[(W2) p.5597]{oza3} for discrete 
observables and \cite[(97), (87) and (32)]{oza3} for continuous observables.
However by employing the semantics developed in Def.\ref{08081839} and with the help of only \cite[(5.3)]{oza1} in the
form given in \eqref{11191108} in Appendix, we judge that the Wigner formula is more intuitive and technically much 
simpler to prove in Schr\"{o}dinger picture than in Heiseberg picture as performed in \cite{oza3}.
\par
In Thm.\ref{11191154} and Rmk.\ref{11191905} respectively we prove and physically interpret the generalized Wigner formula 
for a sequence of measuring processes of semiobservables, in particular continuous observables.
In Cor.\ref{10211852} and Rmk.\ref{11181512} respectively we prove independently from the above result
and physically interpret the generalized Wigner formula for a sequence of von Neumann measuring processes associated with 
discrete observables.
\par
Incidentally our results are in terms of a statistical ensemble, rather than a state, obtained after the action of a 
channel; the normalization reappearing in virtue of Postulate \ref{11030928} whenever we are interested to calculate 
probabilities as at the end of Rmk.\ref{11191905} and Rmk.\ref{11181512}. 
\begin{convention}
Let $X$ be a semigroup, $n\in\Z_{\geq 1}$ and $s:[1,n]\cap\Z\to X$. If $n=1$, then $\prod_{k=n}^{1}s_{k}=s_{1}$; 
if $n\in\Z_{\geq 2}$, then $\prod_{k=n}^{1}s_{k}=\prod_{k=1}^{n}s_{p(k)}$ where $p:[1,n]\cap\Z\to[1,n]\cap\Z$ 
is such that $p(1)=n$ and $p(k+1)=p(k)-1$ for every $k\in[1,n-1]\cap\Z$. 
\end{convention}
Let us start with the following trivial result: 
\begin{lemma}
\label{11180953}
Let $\mc{A}\in\mr{tsa}$, $n\in\Z_{\geq 1}$, $s:[0,n]\cap\Z\to\R_{\geq 0}$ be such that $s_{0}=0$ and $s_{k}>s_{k-1}$ 
for every $k\in[1,n]\cap\Z$, $J:[1,n]\cap\Z\to\mf{Z}(\mc{A})$ and $\uptau:\R\to\mr{Aut}(\mc{A})$ be a group action. Thus,
\begin{equation*}
\prod_{k=n}^{1}J_{k}\circ\uptau^{\dagger}(s_{k}-s_{k-1})
=
\uptau^{\dagger}(s_{n})\circ\prod_{k=n}^{1}\uptau^{\dagger}(-s_{k})\circ J_{k}\circ\uptau^{\dagger}(s_{k}).
\end{equation*}
\end{lemma}
\begin{convention}
If $\mf{H}$ is a Hilbert space, then we let $\mr{tc}(\mf{H})$ denote the linear space of trace class operators on $\mf{H}$
and $\mr{tc}^{+}(\mf{H})$ denote the set of positive trace class operators on $\mf{H}$.
\end{convention}
\begin{definition}
Let $\mf{H}$ be a Hilbert space and $a\in\Uplambda_{\mf{L}(\mf{H})}$,
define $\upeta_{\mf{H}}(a)\coloneqq\upzeta_{\mf{L}(\mf{H})}(a)\up_{\mr{tc}(\mf{H})}^{\mr{tc}(\mf{H})}$. 
\end{definition}
We set some standard notation about normal functionals. Let $\mc{M}$ be a von Neumann algebra acting on a 
Hilbert space $\mf{H}$, $\rho$ be a trace class operator acting on $\mf{H}$, then we let $\omega_{\rho}^{\mc{M}}$, 
or simply $\omega_{\rho}$ whenever it will not cause confusion, be the following normal functional $a\mapsto\mr{Tr}(\rho a)$ 
on $\mc{M}$. 
\begin{lemma}
\label{11191157}
Let $\mc{M}$ be a von Neumann algebra acting on a Hilert space $\mf{H}$, $a\in\Upgamma_{\mc{M}}$,
$\mf{x}$ be a measuring process of a semiobservable on $\mf{H}$ with value space $(\Omega,\mf{B})$.
Thus, for every $B\in\mf{B}$ and every $\rho\in\mr{tc}(\mf{H})$ we have 
\begin{equation*}
(\upzeta_{\mc{M}}(a)\circ\mc{I}_{\mf{x}}(B)\circ\upzeta_{\mc{M}}(a^{\ast}))^{\dagger}\omega_{\rho}
=
\omega_{\upeta_{\mf{H}}(a)\circ\mc{Y}_{\mf{x}}(B)\circ\upeta_{\mf{H}}(a^{\ast})\rho}.
\end{equation*}
\end{lemma}
\begin{proof}
Let $x\in\mc{M}$ so,
\begin{equation*}
\begin{aligned}
((\upzeta_{\mc{M}}(a)\circ\mc{I}_{\mf{x}}(B)\circ\upzeta_{\mc{M}}(a^{\ast}))^{\dagger}\omega_{\rho})(x)
&=
\mr{Tr}(\rho(\upzeta_{\mc{M}}(a)\circ\mc{I}_{\mf{x}}(B)\circ\upzeta_{\mc{M}}(a^{\ast}))x)
\\
&=
\mr{Tr}(\rho a(\mc{I}_{\mf{x}}(B)\circ\upzeta_{\mc{M}}(a^{\ast}))(x)a^{\ast})
\\
&=
\mr{Tr}((a^{\ast}\rho a)(\mc{I}_{\mf{x}}(B)\circ\upzeta_{\mc{M}}(a^{\ast}))x)
\\
&=
\mr{Tr}\left((\upeta_{\mf{H}}(a^{\ast})\rho)\mc{I}_{\mf{x}}(B)(\upzeta_{\mc{M}}(a^{\ast})x)\right)
\\
&=
(\upzeta_{\mc{M}}^{\dagger}(a^{\ast})\circ\mf{I}_{\mf{x}}(B))(\omega_{\upeta_{\mf{H}}(a^{\ast})\rho})(x)
\\
&=
\upzeta_{\mc{M}}^{\dagger}(a^{\ast})(\omega_{(\mc{Y}_{\mf{x}}(B)\circ\upeta_{\mf{H}}(a^{\ast}))\rho})(x)
\\
&=
\mr{Tr}((\mc{Y}_{\mf{x}}(B)\circ\upeta_{\mf{H}}(a^{\ast}))(\rho)\upzeta_{\mc{M}}(a^{\ast})x)
\\
&=
\mr{Tr}((\mc{Y}_{\mf{x}}(B)\circ\upeta_{\mf{H}}(a^{\ast}))(\rho)a^{\ast}xa)
\\
&=
\mr{Tr}(a(\mc{Y}_{\mf{x}}(B)\circ\upeta_{\mf{H}}(a^{\ast}))(\rho)a^{\ast}x)
\\
&=
\mr{Tr}((\upeta_{\mf{H}}(a)\circ\mc{Y}_{\mf{x}}(B)\circ\upeta_{\mf{H}}(a^{\ast}))(\rho)x);
\end{aligned}
\end{equation*}
where the sixth equality follows by \eqref{11191108}.
\end{proof}
\begin{theorem}
[Generalized\footnote{because $\rho$ is not necessarily positive.}Wigner formula in Schr\"{o}dinger picture. 
Semiobservables]
\label{11191154}
Let $\mc{M}$ be a von Neumann algebra acting on a Hilbert space $\mf{H}$, $\rho\in\mr{tc}(\mf{H})$, 
let $V:\R\to\mr{U}(\mc{M})$ be a group action, let $n\in\Z_{\geq 1}$, $s:[0,n]\cap\Z\to\R_{\geq 0}$ be such that $s_{0}=0$ 
and $s_{k}>s_{k-1}$ for every $k\in[1,n]\cap\Z$.
For every $k\in[1,n]\cap\Z$ let $X_{k}$ be a semiobservable on $\mf{H}$ with value space $(\Omega_{k},\mf{B}_{k})$, 
$\mf{x}_{k}$ be a measuring process of $X_{k}$ and $B_{k}\in\mf{B}_{k}$.
Let $\uptau:\R\to\mr{Aut}(\mc{M})$ be the group action so defined $\uptau(t)\coloneqq\upzeta_{\mc{M}}(V(t))$ 
for every $t\in\R$. Thus, 
\begin{equation*}
\left(\prod_{k=n}^{1}\mf{I}_{\mf{x}_{k}}(B_{k})\circ\uptau^{\dagger}(s_{k}-s_{k-1})\right)\omega_{\rho}
=
\omega_{\left(\prod_{k=n}^{1}\upeta_{\mf{H}}(V(s_{k}))\circ\mc{Y}_{\mf{x}_{k}}(B_{k})\circ\upeta_{\mf{H}}(V(-s_{k}))\right)\rho}\circ\uptau(s_{n}).
\end{equation*}
\end{theorem}
\begin{proof}
By taking $J_{k}=\mf{I}_{\mf{x}_{k}}(B_{k})$ and recalling that by definition $\mf{I}_{\mf{x}_{k}}=\mc{I}_{\mf{x}_{k}}^{\dagger}$
we obtain by Lemma \ref{11180953} and by the fact that $\dagger$ is contravariant that 
\begin{equation*}
\prod_{k=n}^{1}\mf{I}_{\mf{x}_{k}}(B_{k})\circ\uptau^{\dagger}(s_{k}-s_{k-1})
=
\uptau^{\dagger}(s_{n})\circ\prod_{k=n}^{1}
(\upzeta_{\mc{M}}(V(s_{k}))\circ\mc{I}_{\mf{x}_{k}}(B_{k})\circ\upzeta_{\mc{M}}(V(-s_{k})))^{\dagger}.
\end{equation*}
Thus the statement follows since Lemma \ref{11191157}.
\end{proof}
\begin{remark}
[Interpretation of the Wigner formula. Semiobservables.]
\label{11191905}
In addition to the hypothesis of Thm.\ref{11191154} assume $\rho\in\mr{tc}^{+}(\mf{H})$ and set $\mf{r}(V(t))=$ the time 
translation of magnitude $t$.
Thus, Def.\ref{08081839} yields:
\par
\emph{$\mf{T}(\left(\prod_{k=n}^{1}\mf{I}_{\mf{x}_{k}}(B_{k})\circ\uptau^{\dagger}(s_{k}-s_{k-1})\right)\omega_{\rho})=$
the statistical ensemble resulting next 
the channel selecting the subensemble of those samples such that a value in $B_{n}$ is obtained after the
measuring process $\mf{x}_{n}$ is performed;
following 
the channel induced by the device implementing the operation of filtering through
the time translation of magnitude $s_{n}-s_{n-1}$;
following 
the channel selecting the subensemble of those samples such that a value in $B_{n-1}$ is obtained after the
measuring process $\mf{x}_{n-1}$ is performed;
following 
the channel induced by the device implementing the operation of filtering through
the time translation of magnitude $s_{n-1}-s_{n-2}$;
following
$\dots\dots$
the channel selecting the subensemble of those samples such that a value in $B_{1}$ is obtained after the
measuring process $\mf{x}_{1}$ is performed;
following 
the channel induced by the device implementing the operation of filtering through
the time translation of magnitude $s_{1}$;
applies to the statistical ensemble $\mf{t}(\omega_{\rho})$}.
\par
Assume that $\mr{Tr}(\rho)\neq 0$ thus, Thm.\ref{11191154} and Postulate \ref{11030928} establish that
\begin{equation*}
\mr{Tr}(\rho)^{-1}
\mr{Tr}
\left(\left(\prod_{k=n}^{1}\upeta_{\mf{H}}(V(s_{k}))\circ\mc{Y}_{\mf{x}_{k}}(B_{k})\circ\upeta_{\mf{H}}(V(-s_{k}))\right)\rho\right)
\end{equation*}
\emph{equals the empirical representation of the propensity conditioned by the statistical ensemble 
$\mf{t}(\omega_{\rho})$ to detect the effect of selecting one output when tested on 
$\mf{T}(\left(\prod_{k=n}^{1}\mf{I}_{\mf{x}_{k}}(B_{k})\circ\uptau^{\dagger}(s_{k}-s_{k-1})\right)\omega_{\rho})$}.
\end{remark}
We might apply the above results to the case of discrete observables and the von Neumann measuring processes associated 
with them. However we prefer to derive directly the Wigner formula for a sequence of measurements of discrete observables.
\begin{lemma}
\label{10211832}
Let $\mc{N}$ be a von Neumann algebra acting on a Hilbert space $\mf{H}$, $\rho\in\mc{N}\cap\mr{tc}(\mf{H})$, 
let $t_{i}\in\mr{U}(\mc{N})$ and $e_{i}\in\mc{N}$ for every $i\in[1,n]$ with $n\in\Z_{\geq 1}$. 
Thus by letting $t_{0}=\un$ the unit of $\mc{N}$ and $u_{k}\coloneqq t_{k-1}^{\ast}t_{k}$ for every $k\in[1,n]$, we obtain
\begin{equation*}
\left(\prod_{k=n}^{1}\ep_{\mc{N}}^{\dagger}(e_{k})\circ\ep_{\mc{N}}^{\dagger}(u_{k})\right)\omega_{\rho}
=
\ep_{\mc{N}}^{\dagger}(t_{n})(\omega_{\left(\prod_{k=n}^{1}\ep_{\mc{N}}(\ep_{\mc{N}}(t_{k})e_{k}^{\ast})\right)\rho}).
\end{equation*}
\end{lemma}
\begin{proof}
In this proof $\ep$ stands for $\ep_{\mc{N}}$. Since the first relation in \eqref{08091352} we have
\begin{equation*}
\prod_{k=n}^{1}\ep(\ep(t_{k})e_{k}^{\ast})
=
\ep\left(\prod_{k=n}^{1}t_{k}e_{k}^{\ast}t_{k}^{\ast}\right);
\end{equation*}
next 
\begin{equation*}
\prod_{k=n}^{1}t_{k}e_{k}^{\ast}t_{k}^{\ast}
=
t_{n}\prod_{k=n}^{1}e_{k}^{\ast}u_{k}^{\ast};
\end{equation*}
therefore 
\begin{equation*}
\begin{aligned}
\omega_{\left(\prod_{k=n}^{1}\ep(\ep(t_{k})e_{k}^{\ast})\right)\rho}
&=
\omega_{\ep\left(t_{n}\prod_{k=n}^{1}e_{k}^{\ast}u_{k}^{\ast}\right)\rho}
\\
&=
\ep^{\dagger}\left(\left(\prod_{k=1}^{n}u_{k}e_{k}\right)t_{n}^{\ast}\right)\omega_{\rho}
\\
&=
\left(\ep^{\dagger}(t_{n}^{\ast})\circ\prod_{k=n}^{1}\ep^{\dagger}(e_{k})\circ\ep^{\dagger}(u_{k})\right)\omega_{\rho};
\end{aligned}
\end{equation*}
where the second equality follows since
\begin{equation*}
\omega_{\ep(v)\rho}=\ep^{\dagger}(v^{\ast})\omega_{\rho};
\end{equation*}
while the third one since $\ep^{\dagger}$ is contravariant.
\end{proof}
\begin{theorem}
\label{10211852pre}
Let $\mc{N}$ be a von Neumann algebra acting on a Hilbert space $\mf{H}$, $\rho\in\mc{N}\cap\mr{tc}(\mf{H})$, 
$V:\R\to\mr{U}(\mc{N})$ be a group action, let $n\in\Z_{\geq 1}$, $s:[0,n]\cap\Z\to\R_{\geq 0}$ be such that $s_{0}=0$ 
and $s_{k}>s_{k-1}$ for every $k\in[1,n]\cap\Z$, and $e:[1,n]\cap\Z\to\Uplambda_{\mc{N}}$. 
Let $\uptau:\R\to\mr{Aut}(\mc{N})$ be the group action so defined $\uptau(t)\coloneqq\upzeta_{\mc{N}}(V(t))$ 
for every $t\in\R$ thus, 
\begin{equation*}
\left(\prod_{k=n}^{1}\upzeta_{\mc{N}}^{\dagger}(e_{k})\circ\uptau^{\dagger}(s_{k}-s_{k-1})\right)\omega_{\rho}
=
\omega_{\left(\prod_{k=n}^{1}\upzeta_{\mc{N}}(\uptau(s_{k})e_{k}^{\ast})\right)\rho}\circ\uptau(s_{n}).
\end{equation*}
\end{theorem}
\begin{proof}
Since Lemma \ref{10211832} applied for $t_{k}=V(s_{k})$ for every $k\in[1,n]\cap\Z$.
\end{proof}
\begin{corollary}
[Generalized\footnote{because $\rho$ is not necessarily positive.}Wigner formula in Schr\"{o}dinger picture. 
Channels induced by operations]
\label{10211852}
Let $\mc{M}$ be a von Neumann algebra acting on a Hilbert space $\mf{H}$, $\rho\in\mr{tc}(\mf{H})$, 
$V:\R\to\mr{U}(\mc{M})$ be a group action, let $n\in\Z_{\geq 1}$, $s:[0,n]\cap\Z\to\R_{\geq 0}$ be such that $s_{0}=0$ 
and $s_{k}>s_{k-1}$ for every $k\in[1,n]\cap\Z$, and $e:[1,n]\cap\Z\to\Uplambda_{\mc{M}}$. 
Let $\uptau:\R\to\mr{Aut}(\mc{M})$ be the group action so defined $\uptau(t)\coloneqq\upzeta_{\mc{M}}(V(t))$ 
for every $t\in\R$ thus, 
\begin{equation*}
\left(\prod_{k=n}^{1}\upzeta_{\mc{M}}^{\dagger}(e_{k})\circ\uptau^{\dagger}(s_{k}-s_{k-1})\right)\omega_{\rho}
=
\omega_{\left(\prod_{k=n}^{1}\upeta_{\mf{H}}(\uptau(s_{k})e_{k}^{\ast})\right)\rho}\circ\uptau(s_{n}).
\end{equation*}
\end{corollary}
\begin{proof}
Apply Thm.\ref{10211852pre} to the von Neumann algebra $\mc{N}=\mf{L}(\mf{H})$ and to the group action 
$\imath_{\mr{U}(\mc{M})}^{\mr{U}(\mc{N})}\circ V$, then our statement follows by restricting at $\mc{M}$ the equality of 
normal functionals on $\mc{N}$ so obtained.
\end{proof}
\begin{remark}
[Interpretation of the Wigner formula. Discrete observables.]
\label{11181512}
For every $k\in[1,n]\cap\Z$, let $o_{k}\in\mc{M}_{ob}$ be with discrete spectrum, for instance compact, let $\sigma(o_{k})$
be the spectrum of $o_{k}$ and $E_{k}$ be the resolution of the identity of $o_{k}$.
Next let $k\in[1,n]\cap\Z$, $\mf{x}_{k}$ be the von Neumann measuring process associated with the discrete observable 
$E_{k}$ and $\mf{I}_{k}$ be the von Neumann channel map associated with the discrete observable $E_{k}$ namely 
$\mf{I}_{k}=\mf{I}_{\mf{x}_{k}}$ (Appendix) thus, by \eqref{10301254} in Appendix we deduce 
$\mf{I}_{k}(\{\lambda\})=\upzeta_{\mc{M}}^{\dagger}(E_{k}(\{\lambda\}))$ for every $\lambda\in\sigma(o_{k})$. 
Thus, by letting $\lambda_{i}\in\sigma(o_{i})$, $i\in[1,n]\cap\Z$ we obtain by Cor.\ref{10211852} for every 
$\rho\in\mr{tc}^{+}(\mf{H})$
\begin{equation}
\label{10212036}
\left(\prod_{k=n}^{1}\mf{I}_{k}(\{\lambda_{k}\})\circ\uptau^{\dagger}(s_{k}-s_{k-1})\right)\omega_{\rho}
=
\omega_{\left(\prod_{k=n}^{1}\upeta_{\mf{H}}(\uptau(s_{k})E_{k}(\{\lambda_{k}\}))\right)\rho}\circ\uptau(s_{n}).
\end{equation}
Assume that $\mr{Tr}(\rho)\neq 0$ thus, the above equality and Postulate \ref{11030928} establish that
\begin{equation*}
\mr{Tr}(\rho)^{-1}\mr{Tr}\left(\left(\prod_{k=n}^{1}\upeta_{\mf{H}}(\uptau(s_{k})E_{k}(\{\lambda_{k}\}))\right)\rho\right)
\end{equation*}
\emph{equals the empirical representation of the propensity conditioned by the statistical ensemble $\mf{t}(\omega_{\rho})$
to detect the effect of selecting one output when tested on 
$\mf{T}(\left(\prod_{k=n}^{1}\mf{I}_{k}(\{\lambda_{k}\})\circ\uptau^{\dagger}(s_{k}-s_{k-1})\right)\omega_{\rho})$}.
\end{remark}
\section{Species of dynamical patterns and equiformity principle}
\label{11301644}
We introduce in \textbf{Def. \ref{09081303}} 
the concept of dynamical pattern and its transformations,
the building block of all the constructions of this paper.
In Rmk. \ref{09090657} we make explicit the definition and show in
Cor. \ref{09100952} that dynamical patterns form a category $\mf{dp}$.
In Def. \ref{09081302}, Rmk. \ref{09081318} and Cor. \ref{09101202} 
we introduce, explain and organize in a category the concept of preordered dynamical pattern,
employed to address the dual of a dynamical pattern via the construction in 
Cor. \ref{10111742} of a contravariant functor.
In Thm. \ref{12312222} we prove the existence of a functor from $\mf{dp}$
to the category $\mf{Chdv}$ of channels and devices introduced in 
Def. \ref{12312135} and Cor. \ref{31122146}.
$\mf{Chdv}$ is essential in order to extract empirical information from $\mf{dp}$.
We consider a species of dynamical patterns contextualized in a category $\mf{D}$
to be a functor from $\mf{D}$ to $\mf{Chdv}$.
In Def. \ref{01162119} and Def. \ref{10201408}
we define collections of trajectories associated with any species
which encode the dynamical information of the species.
Experimental settings are introduced in Def. \ref{01161844},
while in Def. \ref{01161922} we define a link between experimental settings,
an auxiliary concept in order to express the fundamental equiformity principle in Prp. \ref{01162038}.
In Def. \ref{01200837} we present the physical interpretation of the data.
Lemma \ref{01011521} prepares to the main result of this section and one of the entire 
paper namely \textbf{Thm. \ref{10081910}}.
There we prove that given any connector $\mr{T}$, 
a natural transformation between species, 
then for any experimental setting $\mf{Q}$ of its target species 
and any function $s$ associated with $\mf{Q}$ and $\mr{T}$,
there exists an experimental setting $\mr{T}[\mf{Q},s]$ of its source species, 
such that $\mr{T}$ is a link from $\mf{Q}$ to $\mr{T}[\mf{Q},s]$, 
so inducing an equiformity principle.
With the price of coarsening the equiformity principle to the standard 
experimental setting of the source species we can eliminate the degeneration
in $s$ Thm. \ref{01181342}.
Finally we establish in the second and third main result
that vertical composition of connectors is contravariantly
represented as charge composition \textbf{Cor. \ref{10151636}}
and that horizontal composition is represented as charge transfer \textbf{Cor. \ref{11200910}}.
We recall that $\mr{tsa}$ and $\mr{ptls}$ are $\mr{top}-$quasi enriched categories.
\begin{definition}
\label{09081303}
The set of dynamical patterns $\mf{dp}$ is defined to be the set of the couples 
$\lr{\mr{W}}{\upeta}$ where $\mr{W}$ is a $\mr{top}-$quasi enriched 
category and $\upeta\in\mr{Fct}_{\mr{top}}(\mr{W},\mr{tsa})$.
Let $\mf{A}=\lr{\mr{W}}{\upeta}\in\mf{dp}$, then we denote $\mr{W}$ by $\mr{G}_{\mf{A}}$, $\mr{Mor}_{\mr{W}}(x,y)$ by 
$\mr{G}_{\mf{A}}(x,y)$, for all $x,y\in\mr{W}$, $\upeta$ by $\upsigma_{\mf{A}}$,
while the object and morphism maps $\upeta_{o}$ and $\upeta_{m}$ by $\mc{A}_{\mf{A}}$ and $\uptau_{\mf{A}}$ respectively.
The elements of $\mf{dp}$ are called dynamical patterns.
Let $\mf{A},\mf{B},\mf{C}\in\mf{dp}$, define $\un_{\mf{A}}\coloneqq(\un_{\mr{G}_{\mf{A}}},\un_{\upsigma_{\mf{A}}})$ and 
\begin{equation}
\label{09101001}
\mr{Mor}_{\mf{dp}}(\mf{A},\mf{B})
\coloneqq\coprod_{f\in\mr{Fct}_{\mr{top}}(\mr{G}_{\mf{B}},\mr{G}_{\mf{A}})}
\mr{Mor}_{\mr{Fct}(\mr{G}_{\mf{B}},\mr{tsa})}(\upsigma_{\mf{A}}\circ f,\upsigma_{\mf{B}}),
\end{equation}
and 
\begin{equation}
\label{09091136}
\begin{aligned}
(\circ):\mr{Mor}_{\mf{dp}}(\mf{B},\mf{C})&\times \mr{Mor}_{\mf{dp}}(\mf{A},\mf{B})\to \mr{Mor}_{\mf{dp}}(\mf{A},\mf{C}),
\\
(g,S)&\circ(f,T)\coloneqq(f\circ g,S\circ(T\ast \un_{g})).
\end{aligned}
\end{equation}
We call $\mf{A}$ a $\mc{U}-$type dynamical pattern or $\mc{U}-$type $\mf{dp}$
if $\mf{A}\in\mf{dp}$, $\mr{Obj}(\mr{G}_{\mf{A}})\simeq A$ and $A\subseteq\mc{U}$.
\end{definition}
Often we call $\mr{G}_{\mf{A}}$ the dynamical category of $\mf{A}$
and $\upsigma_{\mf{A}}$, as well by abuse of language $\uptau_{\mf{A}}$,
the dynamical functor of $\mf{A}$.
This definition nontrivially extends the category of dynamical systems
as we shall see in \cite{28sil2},
where we specialize to the subcategory of those dynamical patterns
whose dynamical category is the groupoid associated with a topological group,
and thus the dynamical functor reduces to a representation of the 
topological group in terms of $\ast-$automorphisms of a $\ast-$topological algebra,
while morphisms are couples formed by a continuous group morphism and an
equivariant map between the respective representations.
\begin{remark}
\label{09090657a}
If $\mf{A}\in\mf{dp}$ then according to the notation in Def. \ref{09081303} we have
$\mf{A}=\lr{\mr{G}_{\mf{A}}}{\upsigma_{\mf{A}}}=
\lr{\left(\mr{Obj}(\mr{G}_{\mf{A}}),\{\mr{G}_{\mf{A}}(x,y)\}_{x,y\in \mr{Obj}(\mr{G}_{\mf{A}})}\right)}
{(\mc{A}_{\mf{A}},\uptau_{\mf{A}})}$. 
If in addition $\mf{A}$ is of $\mc{U}-$type, then $\mr{G}_{\mf{A}}$ 
is a $\mc{U}-$type category in particular it is an object of $\mr{cat}$.
\end{remark}
Next we provide a decodification of Def. \ref{09081303}.
\begin{remark}
\label{09090657}
Let $\mr{Mor}_{\mr{tsa}}(X,Y)$ be provided with the topology of pointwise convergence for all $X,Y\in\mr{tsa}$. 
Thus 
$\lr{\left(\mr{Obj}(\mr{G}_{\mf{A}}),\{\mr{G}_{\mf{A}}(x,y)\}_{x,y\in \mr{Obj}(\mr{G}_{\mf{A}})}\right)}{(\mc{A}_{\mf{A}},\uptau_{\mf{A}})}
\in\mf{dp}$ 
iff 
\begin{enumerate}
\item
$\mr{G}_{\mf{A}}$ is a $\mc{U}-$category 
such that $\mr{G}_{\mf{A}}(x,y)$ is a topological space and the morphism composition 
$\circ:\mr{G}_{\mf{A}}(y,z)\times\mr{G}_{\mf{A}}(x,y)\to\mr{G}_{\mf{A}}(x,z)$ is a separately continuous map, for all 
$x,y,z\in \mr{Obj}(\mr{G}_{\mf{A}})$;
\item
$\mc{A}_{\mf{A}}:\mr{Obj}(\mr{G}_{\mf{A}})\to \mr{Obj}(\mr{tsa})$;
\item
$\uptau_{\mf{A}}:\mr{Mor}_{\mr{G}_{\mf{A}}}\to \mr{Mor}_{\mr{tsa}}$
such that
$\uptau_{\mf{A}}^{y,z}:\mr{G}_{\mf{A}}(y,z)\to 
\mr{Mor}_{\mr{tsa}}\left(\mc{A}_{\mf{A}}(y),\mc{A}_{\mf{A}}(z)\right)$ 
is a continuous map, for all
$y,z\in \mr{Obj}(\mr{G}_{\mf{A}})$;
\item
$\uptau_{\mf{A}}(g\circ h)=\uptau_{\mf{A}}(g)\circ\uptau_{\mf{A}}(h)$, 
and $\uptau_{\mf{A}}(\un_{x})=\un_{\mc{A}_{\mf{A}}(x)}$,
for all $x,y,z\in \mr{Obj}(\mr{G}_{\mf{A}})$, $g\in\mr{G}_{\mf{A}}(y,z)$
and $h\in\mr{G}_{\mf{A}}(x,y)$.
\end{enumerate}
Let $\mf{A},\mf{B}\in\mf{dp}$, thus $(f,T)\in \mr{Mor}_{\mf{dp}}(\mf{A},\mf{B})$ iff
\begin{enumerate}
\item
$f=(f_{o},f_{m})$ such that 
$f_{o}:\mr{Obj}(\mr{G}_{\mf{B}})\to \mr{Obj}(\mr{G}_{\mf{A}})$
and
$f_{m}:\mr{Mor}_{\mr{G}_{\mf{B}}}\to \mr{Mor}_{\mr{G}_{\mf{A}}}$;
\item
for all $y,z\in \mr{Obj}(\mr{G}_{\mf{B}})$ 
\begin{enumerate}
\item
$f_{m}^{y,z}:\mr{G}_{\mf{B}}(y,z)\to\mr{G}_{\mf{A}}(f_{o}(y),f_{o}(z))$ is a continuous map;
\item
$f_{m}(g\circ h)=f_{m}(g)\circ f_{m}(h)$ and $f_{m}(\un_{x})=\un_{f_{o}(x)}$,
for all $x\in \mr{Obj}(\mr{G}_{\mf{B}})$, $g\in\mr{G}_{\mf{B}}(y,z)$ and $h\in\mr{G}_{\mf{B}}(x,y)$;
\item
$T\in\prod_{x\in \mr{Obj}(\mr{G}_{\mf{B}})}
\mr{Mor}_{\mr{tsa}}\left(\mc{A}_{\mf{A}}(f_{o}(x)),\mc{A}_{\mf{B}}(x)\right)$ 
such that for all 
$g\in\mr{G}_{\mf{B}}(y,z)$ 
we have that the following diagram in $\mr{tsa}$ is commutative
\begin{equation*}
\xymatrix{
\mc{A}_{\mf{A}}(f_{o}(y))
\ar[rr]^{T(y)}
\ar[dd]_{\uptau_{\mf{A}}(f_{m}(g))}
&&
\mc{A}_{\mf{B}}(y)
\ar[dd]^{\uptau_{\mf{B}}(g)}
\\
&&
\\
\mc{A}_{\mf{A}}(f_{o}(z))
\ar[rr]_{T(z)}
&&
\mc{A}_{\mf{B}}(z)}
\end{equation*}
\end{enumerate}
\end{enumerate}
\end{remark}
In order to prove in Cor. \ref{09100952}
that dynamical patterns and their transformations
form a category we need the following 
\begin{lemma}
\label{09101139}
The composition in \eqref{09091136} is a well-defined associative map such that 
$(f,T)\circ\un_{\mf{A}}=\un_{\mf{B}}\circ(f,T)=(f,T)$, 
for all $\mf{A},\mf{B}\in\mf{dp}$ and $(f,T)\in \mr{Mor}_{\mf{dp}}(\mf{A},\mf{B})$.
\end{lemma}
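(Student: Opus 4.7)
The plan is to verify each of the three claims by unpacking \eqref{09091136} and invoking the identities of horizontal and vertical composition of natural transformations that were recorded in the preliminaries, in particular \eqref{bor135} and \eqref{20061403}.

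First, for well-definedness, given $(f,T)\in Mor_{\mf{dp}}(\mf{A},\mf{B})$ and $(g,S)\in Mor_{\mf{dp}}(\mf{B},\mf{C})$, I would check that the first component $f\circ g$ is an element of $\ms{Fct}_{\ms{top}}(\mr{G}_{\mf{C}},\mr{G}_{\mf{A}})$ using the fact (noted after the definition of $\ms{Fct}_{A}(B,C)$ in the preliminaries) that the vertical composition of $\ms{top}$-quasi enriched functors is $\ms{top}$-quasi enriched. For the second component, from $T\in Mor_{\ms{Fct}(\mr{G}_{\mf{B}},\ms{tsa})}(\upsigma_{\mf{A}}\circ f,\upsigma_{\mf{B}})$ I obtain $T\ast\un_{g}\in Mor_{\ms{Fct}(\mr{G}_{\mf{C}},\ms{tsa})}(\upsigma_{\mf{A}}\circ f\circ g,\upsigma_{\mf{B}}\circ g)$ (using that $\un_g$ has source and target $g$), and composing vertically with $S\in Mor_{\ms{Fct}(\mr{G}_{\mf{C}},\ms{tsa})}(\upsigma_{\mf{B}}\circ g,\upsigma_{\mf{C}})$ lands in $Mor_{\ms{Fct}(\mr{G}_{\mf{C}},\ms{tsa})}(\upsigma_{\mf{A}}\circ(f\circ g),\upsigma_{\mf{C}})$, as required by \eqref{09101001}.

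For the identity laws, I would compute directly. For $(f,T)\circ\un_{\mf{A}}=(f,T)\circ(\un_{\mr{G}_{\mf{A}}},\un_{\upsigma_{\mf{A}}})$, the first component is $\un_{\mr{G}_{\mf{A}}}\circ f=f$, while the second is $T\circ(\un_{\upsigma_{\mf{A}}}\ast\un_{f})=T\circ\un_{\upsigma_{\mf{A}}\circ f}=T$, using $\un_{G}\ast\un_{F}=\un_{G\circ F}$ and that $\upsigma_{\mf{A}}\circ f$ is precisely the source of $T$. Symmetrically, for $\un_{\mf{B}}\circ(f,T)$, the first component is $f\circ\un_{\mr{G}_{\mf{B}}}=f$, and the second is $\un_{\upsigma_{\mf{B}}}\circ(T\ast\un_{\un_{\mr{G}_{\mf{B}}}})$. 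Here I apply \eqref{20061403} with $F=\un_{\mr{G}_{\mf{B}}}$ to rewrite $T\ast\un_{\un_{\mr{G}_{\mf{B}}}}=T\circ(\un_{\mr{G}_{\mf{B}}})_{o}=T$, and the remaining composition with $\un_{\upsigma_{\mf{B}}}$ yields $T$ again.

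Associativity is the only step that requires nontrivial manipulation, and so it is the main obstacle, though a purely formal one. Given a third composable $(h,R)\in Mor_{\mf{dp}}(\mf{C},\mf{D})$, both $((h,R)\circ(g,S))\circ(f,T)$ and $(h,R)\circ((g,S)\circ(f,T))$ have $f\circ g\circ h$ as their first component by associativity of functor composition. For the second components I would compute
\begin{equation*}
((h,R)\circ(g,S))\circ(f,T)=\bigl(f\circ g\circ h,\,(R\circ(S\ast\un_{h}))\circ(T\ast\un_{g\circ h})\bigr),
\end{equation*}
\begin{equation*}
(h,R)\circ((g,S)\circ(f,T))=\bigl(f\circ g\circ h,\,R\circ\bigl((S\circ(T\ast\un_{g}))\ast\un_{h}\bigr)\bigr).
\end{equation*}
To identify the two second components, I apply the interchange law \eqref{bor135} to rewrite
\begin{equation*}
(S\circ(T\ast\un_{g}))\ast\un_{h}=(S\ast\un_{h})\circ((T\ast\un_{g})\ast\un_{h}),
\end{equation*}
then use associativity of $\ast$ together with $\un_{g}\ast\un_{h}=\un_{g\circ h}$ to conclude $(T\ast\un_{g})\ast\un_{h}=T\ast\un_{g\circ h}$. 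A final application of associativity of vertical composition in $\ms{Fct}(\mr{G}_{\mf{D}},\ms{tsa})$ gives equality of the two expressions, completing the proof.
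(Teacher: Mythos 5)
Your proposal is correct and follows essentially the same route as the paper's proof: the same well-definedness check via $T\ast\un_{g}\in Mor_{\ms{Fct}(\mr{G}_{\mf{C}},\ms{tsa})}(\upsigma_{\mf{A}}\circ f\circ g,\upsigma_{\mf{B}}\circ g)$, the same associativity computation using the interchange law \eqref{bor135} together with $\un_{g}\ast\un_{h}=\un_{g\circ h}$, and the same identity-law verification via $\un_{\upsigma_{\mf{A}}}\ast\un_{f}=\un_{\upsigma_{\mf{A}}\circ f}$ and \eqref{20061403}. No gaps.
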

\begin{proof}
Let $\mf{A},\mf{B},\mf{C},\mf{D}\in\mf{dp}$, 
$(f,T)\in \mr{Mor}_{\mf{dp}}(\mf{A},\mf{B})$, $(g,S)\in \mr{Mor}_{\mf{dp}}(\mf{B},\mf{C})$, 
and $(h,V)\in \mr{Mor}_{\mf{dp}}(\mf{C},\mf{D})$.
Then $f\in\mr{Fct}_{\mr{top}}(\mr{G}_{\mf{B}},\mr{G}_{\mf{A}})$, 
$T\in \mr{Mor}_{\mr{Fct}(\mr{G}_{\mf{B}},\mr{tsa})}(\upsigma_{\mf{A}}\circ f,\upsigma_{\mf{B}})$,
and $g\in\mr{Fct}_{\mr{top}}(\mr{G}_{\mf{C}},\mr{G}_{\mf{B}})$, 
$S\in \mr{Mor}_{\mr{Fct}(\mr{G}_{\mf{C}},\mr{tsa})}(\upsigma_{\mf{B}}\circ g,\upsigma_{\mf{C}})$.
Thus $f\circ g\in\mr{Fct}_{\mr{top}}(\mr{G}_{\mf{C}},\mr{G}_{\mf{A}})$, and 
$T\ast \un_{g}\in \mr{Mor}_{\mr{Fct}(\mr{G}_{\mf{C}},\mr{tsa})}(\upsigma_{\mf{A}}\circ f\circ g,\upsigma_{\mf{B}}\circ g)$
hence $S\circ(T\ast\un_{g})\in \mr{Mor}_{\mr{Fct}(\mr{G}_{\mf{C}},\mr{tsa})}(\upsigma_{\mf{A}}\circ f\circ g,\upsigma_{\mf{C}})$ 
which prove that the composition in \eqref{09091136} is well-defined.
Next $((h,V)\circ(g,S))\circ(f,T)=(fgh,V\circ(S\ast \un_{h})\circ(T\ast \un_{gh}))$, 
while $(h,V)\circ((g,S)\circ(f,T))=\left(fgh,V\circ\left((S\circ(T\ast \un_{g}))\ast \un_{h}\right)\right)$.
Next we have 
\begin{equation*}
\begin{aligned}
(S\ast \un_{h})\circ(T\ast \un_{gh})
&=
(S\ast \un_{h})\circ\left((T\ast \un_{g})\ast \un_{h}\right)
\\
&=(S\circ(T\ast \un_{g}))\ast(\un_{h}\circ \un_{h})=(S\circ(T\ast \un_{g}))\ast \un_{h},
\end{aligned}
\end{equation*}
where in the first equality we used 
$\un_{gh}=\un_{g}\ast \un_{h}$ and the associativity of $\ast$, the second equality follows since
\eqref{bor135}; thus the composition in \eqref{09091136} is associative.
Next $(f,T)\circ\un_{\mf{A}}=(\un_{\mr{G}_{\mf{A}}}\circ f, T\circ(\un_{\upsigma_{\mf{A}}}\ast\un_{f}))$, and
$\un_{\mf{B}}\circ(f,T)=(f\circ \un_{\mr{G}_{\mf{B}}},\un_{\upsigma_{\mf{B}}}\circ(T\ast\un_{\un_{\mr{G}_{\mf{B}}}}))$.
Moreover 
$\un_{\upsigma_{\mf{A}}}\ast\un_{f}=\un_{\upsigma_{\mf{A}}\circ f}$ so $T\circ(\un_{\upsigma_{\mf{A}}}\ast\un_{f})=T$,
and 
$\un_{\upsigma_{\mf{B}}}\circ(T\ast\un_{\un_{\mr{G}_{\mf{B}}}})
=T\ast\un_{\un_{\mr{G}_{\mf{B}}}}=T\circ(\un_{\mr{G}_{\mf{B}}})_{o}=T$ 
since \eqref{20061403}, therefore $(f,T)\circ\un_{\mf{A}}=\un_{\mf{B}}\circ(f,T)=(f,T)$.
\end{proof}
\begin{corollary}
\label{09100952}
There exists a unique $\mc{U}_{0}-$type category $\mf{dp}$ 
such that $\mr{Obj}(\mf{dp})$ is the set of the $\mf{A}\in\mf{dp}$ for which $\mf{A}$ is a $\mc{U}-$type dynamical pattern;
for any $\mf{A},\mf{B}\in \mr{Obj}(\mf{dp})$ the set of morphisms of $\mf{dp}$ 
from $\mf{A}$ to $\mf{B}$ is the set in \eqref{09101001}, 
and the law of composition of morphisms of $\mf{dp}$ is the map in \eqref{09091136}.
$\un_{\mf{D}}$ defined in Def. \ref{09081303} is the unit morphism 
in $\mf{dp}$ relative to $\mf{D}$, for all $\mf{D}\in \mr{Obj}(\mf{dp})$.
In particular $\mf{dp}$ is an object of $\mr{Cat}$.
\end{corollary}
\begin{proof}
The existence and uniqueness of the category $\mf{dp}$ follows since Lemma \ref{09101139}.
Let $\mf{A},\mf{B}\in\mr{Obj}(\mf{dp})$ then
$\mr{G}_{\mf{A}}$ and $\mr{G}_{\mf{B}}$ are $\mc{U}-$type categories 
hence $\mc{U}_{0}-$small categories, similarly $\mr{tsa}$ is a $\mc{U}_{0}-$small category,
therefore by Prp. \ref{12311604}\eqref{12311604st3}
\begin{equation*}
\begin{cases}
\upsigma_{\mf{A}},\upsigma_{\mf{B}}\in\mc{U}_{0},
\\
\mr{Obj}(\mr{Fct}_{\mr{top}}(\mr{G}_{\mf{B}},\mr{G}_{\mf{A}}))\in\mc{U}_{0},
\\
\mr{Mor}_{\mr{Fct}(\mr{G}_{\mf{B}},\mr{tsa})}(\upsigma_{\mf{A}}\circ f,\upsigma_{\mf{B}})\in\mc{U}_{0},
\forall 
f\in\mr{Fct}_{\mr{top}}(\mr{G}_{\mf{B}},\mr{G}_{\mf{A}});
\end{cases}
\end{equation*}
which together 
\cite[Def. 1.1.1.(vi,ix,x)]{28ks} and \cite[Def. 1.1.1.(iii,v)]{28ks} 
yield $\mf{A},\mf{B}\in\mc{U}_{0}$
and $\mr{Mor}_{\mf{dp}}(\mf{A},\mf{B})\in\mc{U}_{0}$ respectively, 
which proves that $\mf{dp}$ is a $\mc{U}_{0}-$type category.
\end{proof}
\begin{convention}
By abuse of language we denoted with the same symbol $\mf{dp}$ the category uniquely determined in Cor.\ref{09100952}
and the set in Def.\ref{09081303}. We convein to distinguish the two cases explicitly namely 
$\mf{A}\in\mr{Obj}(\mf{dp})$ stands for $\mf{A}$ is an object of the category $\mf{dp}$ in Cor.\ref{09100952}, while 
$\mf{B}\in\mf{dp}$ stands for $\mf{B}$ is an element of the set defined in Def.\ref{09081303}.
\end{convention}
\begin{definition}
\label{11301526}
Let $\mf{dp}_{\star}$ be the full subcategory of $\mf{dp}$ whose object set is the subset of the $\mf{A}\in\mr{Obj}(\mf{dp})$ such 
that $\mr{G}_{\mf{A}}$ is a groupoid whose inversion map is continuous.
\end{definition}
Note that if $\mf{A}\in\mr{Obj}(\mf{dp}_{\star})$ then $\uptau_{\mf{A}}(g^{-1})=\uptau_{\mf{A}}(g)^{-1}$ for all $g\in\mr{G}_{\mf{A}}$.
\begin{definition}
\label{09081302}
The set of preordered dynamical patterns $\mf{pdp}$ is defined to be the set of the couples $\lr{\mr{W}}{\upeta}$ where
$\mr{W}$ is a $\mr{top}-$quasi enriched category and $\upeta\in\mr{Fct}_{\mr{top}}(\mr{W},\mr{ptls})$.
Let $\mr{A}=\lr{\mr{W}}{\upeta}\in\mf{pdp}$, then we denote $\mr{W}$ by $\mr{G}_{\mr{A}}$, $\mr{Mor}_{\mr{W}}(x,y)$ by 
$\mr{G}_{\mr{A}}(x,y)$, for all $x,y\in\mr{W}$, $\upeta$ by $\upsigma_{\mr{A}}$,
while the object and morphism maps $\upeta_{o}$ and $\upeta_{m}$ by $\mr{X}_{\mr{A}}$ and $\uptau_{\mr{A}}$ respectively.
The elements of $\mf{pdp}$ are called preordered dynamical patterns. 
Let $\mr{A},\mr{B},\mr{C}\in\mf{pdp}$, define $\un_{\mr{A}}\coloneqq(\un_{\mr{G}_{\mr{A}}},\un_{\upsigma_{\mr{A}}})$ and
\begin{equation}
\label{09081302a}
\mr{Mor}_{\mf{pdp}}(\mr{A},\mr{B})\coloneqq\coprod_{f\in\mr{Fct}_{\mr{top}}(\mr{G}_{\mr{A}},\mr{G}_{\mr{B}})}
\mr{Mor}_{\mr{Fct}(\mr{G}_{\mr{A}},\mr{ptls})}(\upsigma_{\mr{A}},\upsigma_{\mr{B}}\circ f),
\end{equation}
and 
\begin{equation}
\label{09081302b}
\begin{aligned}
(\circ):\mr{Mor}_{\mf{pdp}}(\mr{B},\mr{C})&\times \mr{Mor}_{\mf{pdp}}(\mr{A},\mr{B})\to \mr{Mor}_{\mf{pdp}}(\mr{A},\mr{C}),
\\
(g,S)&\circ(f,T)\coloneqq(g\circ f,(S\ast \un_{f})\circ T).
\end{aligned}
\end{equation}
We call $\mr{A}$ a $\mc{U}-$type preordered dynamical pattern or $\mc{U}-$type $\mf{pdp}$
if $\mr{A}\in\mf{pdp}$, $\mr{Obj}(\mr{G}_{\mr{A}})\simeq A$ and $A\subseteq\mc{U}$.
\end{definition}
\begin{remark}
\label{09081318a}
If $\mr{A}\in\mf{pdp}$ then according to the notations in the previous definition we have
$\mr{A}=\lr{\mr{G}_{\mr{A}}}{\upsigma_{\mr{A}}}=
\lr{\left(\mr{Obj}(\mr{G}_{\mr{A}}),\{\mr{G}_{\mr{A}}(x,y)\}_{x,y\in \mr{Obj}(\mr{G}_{\mr{A}})}\right)}
{(\mr{X}_{\mr{A}},\uptau_{\mr{A}})}$. 
If in addition $\mr{A}$ is of $\mc{U}-$type, then $\mr{G}_{\mr{A}}$ 
is a $\mc{U}-$type category in particular it is an object of $\mr{cat}$.
\end{remark}
\begin{remark}
\label{09081318}
Let $\mr{Mor}_{\mr{ptls}}(X,Y)$ be provided with the topology of pointwise convergence for all $X,Y\in\mr{ptls}$. 
Thus $\lr{\left(\mr{Obj}(\mr{G}_{\mr{A}}),\{\mr{G}_{\mr{A}}(x,y)\}_{x,y\in \mr{Obj}(\mr{G}_{\mr{A}})}\right)}
{(\mr{X}_{\mr{A}},\uptau_{\mr{A}})}\in\mf{pdp}$ iff 
\begin{enumerate}
\item
$\mr{G}_{\mr{A}}$ is a $\mc{U}-$category such that
$\mr{G}_{\mr{A}}(x,y)$ is a topological space and the morphism composition 
$\circ:\mr{G}_{\mr{A}}(y,z)\times\mr{G}_{\mr{A}}(x,y)\to\mr{G}_{\mr{A}}(x,z)$ 
is a separately continuous map, for all 
$x,y,z\in \mr{Obj}(\mr{G}_{\mr{A}})$;
\item
$\mr{X}_{\mr{A}}:\mr{Obj}(\mr{G}_{\mr{A}})\to \mr{Obj}(\mr{ptls})$;
\item
$\uptau_{\mr{A}}:\mr{Mor}_{\mr{G}_{\mr{A}}}\to \mr{Mor}_{\mr{ptls}}$
such that 
$\uptau_{\mr{A}}^{y,z}:\mr{G}_{\mr{A}}(y,z)\to 
\mr{Mor}_{\mr{ptls}}\left(\mr{X}_{\mr{A}}(y),\mr{X}_{\mr{A}}(z)\right)$ 
is a continuous map, for all
$y,z\in \mr{Obj}(\mr{G}_{\mr{A}})$;
\item
$\uptau_{\mr{A}}(g\circ h)=\uptau_{\mr{A}}(g)\circ\uptau_{\mr{A}}(h)$, 
and $\uptau_{\mr{A}}(\un_{x})=\un_{\mr{X}_{\mr{A}}(x)}$,
for all $x,y,z\in \mr{Obj}(\mr{G}_{\mr{A}})$, $g\in\mr{G}_{\mr{A}}(y,z)$
and $h\in\mr{G}_{\mr{A}}(x,y)$.
\end{enumerate}
Let $\mr{A},\mr{B}\in\mf{pdp}$, thus $(f,T)\in \mr{Mor}_{\mf{pdp}}(\mr{A},\mr{B})$ iff
\begin{enumerate}
\item
$f=(f_{o},f_{m})$ such that 
$f_{o}:\mr{Obj}(\mr{G}_{\mr{A}})\to \mr{Obj}(\mr{G}_{\mr{B}})$
and
$f_{m}:\mr{Mor}_{\mr{G}_{\mr{A}}}\to \mr{Mor}_{\mr{G}_{\mr{B}}}$;
\item
for all $y,z\in \mr{Obj}(\mr{G}_{\mr{A}})$ 
\begin{enumerate}
\item
$f_{m}^{y,z}:\mr{G}_{\mr{A}}(y,z)\to\mr{G}_{\mr{B}}(f_{o}(y),f_{o}(z))$ is a continuous map;
\item
$f_{m}(g\circ h)=f_{m}(g)\circ f_{m}(h)$ and $f_{m}(\un_{x})=\un_{f_{o}(x)}$,
for all $x\in \mr{Obj}(\mr{G}_{\mr{A}})$, $g\in\mr{G}_{\mr{A}}(y,z)$ 
and $h\in\mr{G}_{\mr{A}}(x,y)$;
\item
$T\in\prod_{x\in \mr{Obj}(\mr{G}_{\mr{A}})}
\mr{Mor}_{\mr{ptls}}\left(\mr{X}_{\mr{A}}(x),\mr{X}_{\mr{B}}(f_{o}(x))\right)$ 
such that for all 
$g\in\mr{G}_{\mr{A}}(y,z)$ we have that the following diagram in $\mr{ptls}$ is commutative
\begin{equation*}
\xymatrix{
\mr{X}_{\mr{A}}(y)
\ar[rr]^{T(y)}
\ar[dd]_{\uptau_{\mr{A}}(g)}
&&
\mr{X}_{\mr{B}}(f_{o}(y))
\ar[dd]^{\uptau_{\mr{B}}(f_{m}(g))}
\\
&&
\\
\mr{X}_{\mr{A}}(z)
\ar[rr]_{T(z)}
&&
\mr{X}_{\mr{B}}(f_{o}(z))
}
\end{equation*}
\end{enumerate}
\end{enumerate}
\end{remark}
Under the same line of reasoning used in the proof of Lemma \ref{09101139} we show that
\begin{lemma}
\label{09101200}
The composition in \eqref{09081302b} is a well-defined associative map such that 
$(f,T)\circ\un_{\mr{A}}=\un_{\mr{B}}\circ(f,T)=(f,T)$, 
for all $\mr{A},\mr{B}\in\mf{pdp}$ and $(f,T)\in \mr{Mor}_{\mf{pdp}}(\mr{A},\mr{B})$.
\end{lemma}
\begin{corollary}
\label{09101202}
There exists a unique $\mc{U}_{0}-$type category $\mf{pdp}$ such that 
$\mr{Obj}(\mf{pdp})$ is the set of the $\mr{A}\in\mf{pdp}$ for which $\mr{A}$ is a $\mc{U}-$type preordered dynamical pattern, 
for any $\mr{A},\mr{B}\in\mr{Obj}(\mf{pdp})$ the set of morphisms of $\mf{pdp}$ from $\mr{A}$ to $\mr{B}$ is the set in 
\eqref{09081302a}, and the law of composition of morphisms of $\mf{pdp}$ is the map in \eqref{09081302b}.
$\un_{\mr{D}}$ defined in Def. \ref{09081302} is the unit morphism in $\mf{pdp}$ relative to $\mr{D}$, for all 
$\mr{D}\in \mr{Obj}(\mf{pdp})$. In particular $\mf{pdp}$ is an object of $\mr{Cat}$.
\end{corollary}
\begin{proof}
Since Lemma \ref{09101200} and then follows the line of reasoning present in the proof of Cor. \ref{09100952},
by considering the $\mc{U}-$category $\mr{ptls}$ instead of $\mr{tsa}$.
\end{proof}
\begin{convention}
By abuse of language we denoted with the same symbol $\mf{pdp}$ the category uniquely determined in Cor.\ref{09101202}
and the set in Def.\ref{09081302}. We convein to distinguish the two cases explicitly namely 
$\mr{A}\in\mr{Obj}(\mf{pdp})$ stands for $\mr{A}$ is an object of the category $\mf{pdp}$ in Cor.\ref{09101202}, while 
$\mr{B}\in\mf{pdp}$ stands for $\mr{B}$ is an element of the set defined in Def.\ref{09081302}.
\end{convention}
\begin{definition}
Let $\mf{pdp}_{\star}$ be the full subcategory of $\mf{pdp}$ whose object set is the subset of the $\mr{A}\in\mr{Obj}(\mf{pdp})$ 
such that $\mr{G}_{\mr{A}}$ is a groupoid whose inversion map is continuous.
\end{definition}
Note that if $\mr{A}\in\mf{pdp}_{\star}$ then $\uptau_{\mr{A}}(g^{-1})=\uptau_{\mr{A}}(g)^{-1}$ for all $g\in\mr{G}_{\mr{A}}$.
\begin{definition}
\label{12051932}
Let $\mf{A}\in\mf{dp}$, 
define 
$\mf{A}^{\dagger}\coloneqq\lr{\mr{G}_{\mf{A}}^{op}}{\upsigma_{\mf{A}}^{\dagger}}$,
where 
$\upsigma_{\mf{A}}^{\dagger}\coloneqq(\mc{A}_{\mf{A}}^{\ast},\uptau_{\mf{A}}^{\dagger})$, with
\begin{equation*}
\begin{aligned}
\mc{A}_{\mf{A}}^{\ast}:\mr{Obj}(\mr{G}_{\mf{A}})&\to \mr{Obj}(\mr{ptls}) & x&\mapsto(\mc{A}_{\mf{A}}(x))^{\ast},
\\
\uptau_{\mf{A}}^{\dagger}:\mr{Mor}_{\mr{G}_{\mf{A}}^{op}}&\to \mr{Mor}_{\mr{ptls}}
\\
\uptau_{\mf{A}}^{\dagger}:\mr{Mor}_{\mr{G}_{\mf{A}}^{op}}(x,y)&\to 
\mr{Mor}_{\mr{ptls}}(\mc{A}_{\mf{A}}^{\ast}(x),\mc{A}_{\mf{A}}^{\ast}(y))
&
g&\mapsto(\uptau_{\mf{A}}(g))^{\dagger},\,\forall x,y\in\mr{G}_{\mf{A}}.
\end{aligned}
\end{equation*}
Let $\mf{B}\in\mf{dp}$ such that $\mr{G}_{\mf{B}}$ is a groupoid whose inversion map is continuous. Define 
$\mf{B}^{\ast}\coloneqq\lr{\mr{G}_{\mf{B}}}{\upsigma_{\mf{B}}^{\ast}}$,
where $\upsigma_{\mf{B}}^{\ast}\coloneqq(\mc{A}_{\mf{B}}^{\ast},\uptau_{\mf{B}}^{\ast})$, with
$\uptau_{\mf{B}}^{\ast}:\mr{Mor}_{\mr{G}_{\mf{B}}}\to \mr{Mor}_{\mr{ptls}}$
such that 
\begin{equation*}
\uptau_{\mf{B}}^{\ast}:\mr{Mor}_{\mr{G}_{\mf{B}}}(x,y)\to 
\mr{Mor}_{\mr{ptls}}(\mc{A}_{\mf{B}}^{\ast}(x),\mc{A}_{\mf{B}}^{\ast}(y))
\quad
g\mapsto(\uptau_{\mf{B}}(g^{-1}))^{\dagger},\,\forall x,y\in\mr{G}_{\mf{B}}.
\end{equation*}
\end{definition}
The previous definition as well the next one are well set 
since \eqref{08161036} and Lemma \ref{08080010}\eqref{08080010st2}.
\begin{definition}
\label{09101557}
Let $D$ be a category, $\mr{a},\mr{b}\in\mr{Fct}(D,\mr{tsa})$ and 
$T\in\prod_{d\in D}\mr{Mor}_{\mr{tsa}}(\mr{a}(d),\mr{b}(d))$, then
define $T^{\dagger}\in\prod_{d\in D}\mr{Mor}_{\mr{ptls}}(\mr{b}(d)^{\ast},\mr{a}(d)^{\ast})$ such that 
$T^{\dagger}(e)\coloneqq(T(e))^{\dagger}$, for all $e\in D$.
\end{definition}
\begin{theorem}
\label{09101517}
Let $\mf{A},\mf{B},\mf{C}\in\mf{dp}$, $\mf{D}\in\mf{dp}_{\star}$, $(f,T)\in \mr{Mor}_{\mf{dp}}(\mf{A},\mf{B})$ and 
$(g,S)\in \mr{Mor}_{\mf{dp}}(\mf{B},\mf{C})$. 
Thus, $\mf{A}^{\dagger}\in\mf{pdp}$ and it is of $\mc{U}-$type if it is so $\mf{A}$, 
and $\mf{D}^{\ast}$ is an object of $\mf{pdp}_{\star}$. 
Furthermore, $(f,T^{\dagger})\in \mr{Mor}_{\mf{pdp}}(\mf{B}^{\dagger},\mf{A}^{\dagger})$
and if we set $(f,T)^{\dagger}\coloneqq(f,T^{\dagger})$, then $((g,S)\circ(f,T))^{\dagger}=(f,T)^{\dagger}\circ(g,S)^{\dagger}$.
\end{theorem}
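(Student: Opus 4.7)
The plan is to verify the four assertions in sequence, each being a bookkeeping consequence of Lemma \ref{08080010} (with its extension in Rmk. \ref{08161027}) once the definitions of Def. \ref{09081303} and Def. \ref{09081302} are unpacked on both sides.

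First I would check that $\mf{A}^{\dagger}=\lr{\mr{G}_{\mf{A}}^{op}}{\upsigma_{\mf{A}}^{\dagger}}$ is a $\mf{pdp}$. Since $\mr{G}_{\mf{A}}$ is $\ms{top}-$quasi enriched, so is $\mr{G}_{\mf{A}}^{op}$ (as observed right after the definition of quasi-enrichment in the preliminaries). For $\upsigma_{\mf{A}}^{\dagger}\in\ms{Fct}_{\ms{top}}(\mr{G}_{\mf{A}}^{op},\ms{ptls})$, the functoriality reduces, for any composable $g,h\in Mor_{\mr{G}_{\mf{A}}^{op}}$ (i.e. $h\circ g$ composable in $\mr{G}_{\mf{A}}$), to
\begin{equation*}
\uptau_{\mf{A}}^{\dagger}(g\circ^{op}h)=(\uptau_{\mf{A}}(h\circ g))^{\dagger}=(\uptau_{\mf{A}}(h)\circ\uptau_{\mf{A}}(g))^{\dagger}=\uptau_{\mf{A}}^{\dagger}(g)\circ\uptau_{\mf{A}}^{\dagger}(h),
\end{equation*}
by Lemma \ref{08080010}\eqref{08080010st2} (with Rmk. \ref{08161027}), and unit preservation is immediate. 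Continuity of each $\uptau_{\mf{A}}^{\dagger,y,z}$ follows by composing the continuous $\uptau_{\mf{A}}^{y,z}$ with the continuous $\dagger$ map supplied by Lemma \ref{08080010}\eqref{08080010st1}. The $\mc{U}$-type property transfers since $Obj(\mr{G}_{\mf{A}}^{op})=Obj(\mr{G}_{\mf{A}})$. The verification that $\mf{D}^{\ast}\in\mf{pdp}_{\star}$ is parallel: one writes $\uptau_{\mf{D}}^{\ast}(g\circ h)=(\uptau_{\mf{D}}((g\circ h)^{-1}))^{\dagger}=(\uptau_{\mf{D}}(g^{-1}))^{\dagger}\circ(\uptau_{\mf{D}}(h^{-1}))^{\dagger}=\uptau_{\mf{D}}^{\ast}(g)\circ\uptau_{\mf{D}}^{\ast}(h)$, using that $\mr{G}_{\mf{D}}$ is a groupoid and again Lemma \ref{08080010}\eqref{08080010st2}; continuity uses that $\mr{G}_{\mf{D}}$ has continuous inversion plus Lemma \ref{08080010}\eqref{08080010st1}.

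Next I would handle $(f,T^{\dagger})\in Mor_{\mf{pdp}}(\mf{B}^{\dagger},\mf{A}^{\dagger})$. The functor component is $f$ viewed as $\mr{G}_{\mf{B}}^{op}\to\mr{G}_{\mf{A}}^{op}$ (with the same object and morphism maps), which is still $\ms{top}-$continuous. For the natural transformation square, take the square expressing $T\in Mor_{\ms{Fct}(\mr{G}_{\mf{B}},\ms{tsa})}(\upsigma_{\mf{A}}\circ f,\upsigma_{\mf{B}})$ at a morphism $g:x\to y$ in $\mr{G}_{\mf{B}}$:
\begin{equation*}
T(y)\circ\uptau_{\mf{A}}(f(g))=\uptau_{\mf{B}}(g)\circ T(x),
\end{equation*}
apply $\dagger$ and invoke Lemma \ref{08080010}\eqref{08080010st2} to get
\begin{equation*}
\uptau_{\mf{A}}^{\dagger}(f(g))\circ T^{\dagger}(y)=T^{\dagger}(x)\circ\uptau_{\mf{B}}^{\dagger}(g),
\end{equation*}
which is precisely the naturality square for $T^{\dagger}:\upsigma_{\mf{B}}^{\dagger}\to\upsigma_{\mf{A}}^{\dagger}\circ f$ on the morphism $g\in Mor_{\mr{G}_{\mf{B}}^{op}}(y,x)$. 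That $T^{\dagger}(x)\in Mor_{\ms{ptls}}(\mc{A}_{\mf{B}}^{\ast}(x),\mc{A}_{\mf{A}}^{\ast}(f(x)))$ is Lemma \ref{08080010}\eqref{08080010st3} (with Rmk. \ref{08161027}).

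Finally for the compatibility with composition, unfolding both sides is direct. By \eqref{09091136}, $((g,S)\circ(f,T))^{\dagger}=(f\circ g,(S\circ(T\ast\un_{g}))^{\dagger})$, while by \eqref{09081302b}, $(f,T)^{\dagger}\circ(g,S)^{\dagger}=(f\circ g,(T^{\dagger}\ast\un_{g})\circ S^{\dagger})$. Evaluating at any $d\in\mr{G}_{\mf{C}}$ and using \eqref{20061403} together with Def. \ref{09101557}, the left component reads $(S(d)\circ T(g(d)))^{\dagger}=T(g(d))^{\dagger}\circ S(d)^{\dagger}=T^{\dagger}(g(d))\circ S^{\dagger}(d)$ by Lemma \ref{08080010}\eqref{08080010st2}, which matches the right component. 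The only potential pitfall, and the place I would write out the indices most carefully, is the orientation bookkeeping in this last step: the contravariance of $\dagger$ forces the factors $T^{\dagger}\ast\un_{g}$ and $S^{\dagger}$ to appear in the order dictated by \eqref{09081302b} rather than \eqref{09091136}, and one must check once that the Godement product $T^{\dagger}\ast\un_{g}$ is indeed a natural transformation $\upsigma_{\mf{B}^{\dagger}}\circ g\to\upsigma_{\mf{A}^{\dagger}}\circ f\circ g$ in $\ms{ptls}$, which is automatic from the first part of the proof and the identification of $f$ and $g$ with their opposite counterparts.
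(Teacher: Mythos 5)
Your proposal is correct and follows essentially the same route as the paper's proof: continuity of $\uptau_{\mf{A}}^{\dagger}$ and $\uptau_{\mf{D}}^{\ast}$ from Lemma \ref{08080010}\eqref{08080010st1} together with Rmk. \ref{08161027}, functoriality and the naturality square for $T^{\dagger}$ from the order-reversal in Lemma \ref{08080010}\eqref{08080010st2} applied to the diagram of Rmk. \ref{09090657}, and the composition identity by unwinding \eqref{09091136}, \eqref{09081302b} and \eqref{20061403} pointwise. The only addition worth keeping from the paper is the explicit citation of \eqref{08161036}, which guarantees that $\ms{tsa}$-morphisms are positive so that Lemma \ref{08080010}\eqref{08080010st3} applies and the daggered arrows really land in $Mor_{\ms{ptls}}$.
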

\begin{proof}
Let $\mf{A},\mf{B}\in\mf{dp}$, $\mf{D}\in\mf{dp}_{\star}$, and $(f,T)\in \mr{Mor}_{\mf{dp}}(\mf{A},\mf{B})$. 
$\uptau_{\mf{A}}^{\dagger}$ and $\uptau_{\mf{D}}^{\ast}$ are continuous maps since Lemma \ref{08080010}\eqref{08080010st1}.
Next $\uptau_{\mf{A}}^{\dagger}(h\circ^{op}g)=\uptau_{\mf{A}}^{\dagger}(h)\circ\uptau_{\mf{A}}^{\dagger}(g)$, 
and $\uptau_{\mf{D}}^{\ast}(l\circ m)=\uptau_{\mf{D}}^{\ast}(l)\circ\uptau_{\mf{D}}^{\ast}(m)$, 
since \eqref{08161036} and Lemma \ref{08080010}\eqref{08080010st2}, where $(\circ)^{op}$ is the composition 
of morphisms of $\mr{G}_{\mf{A}}^{op}$; hence $\mf{A}^{\dagger},\mf{D}^{\ast}\in\mf{pdp}$.
Next clearly $f\in\mr{Fct}(\mr{G_{\mf{B}}}^{op},\mr{G_{\mf{A}}}^{op})$, moreover since the diagram in Rmk. \ref{09090657} and
\eqref{08161036} and Lemma \ref{08080010}\eqref{08080010st2}, we deduce that the following diagram in $\mr{ptls}$
is commutative for all $g\in\mr{G}_{\mf{B}}^{op}(z,y)$ 
\begin{equation*}
\xymatrix{
\mc{A}_{\mf{A}}^{\ast}(f_{o}(y))
&&
\mc{A}_{\mf{B}}^{\ast}(y)
\ar[ll]_{T^{\dagger}(y)}
\\
&&
\\
\mc{A}_{\mf{A}}^{\ast}(f_{o}(z))
\ar[uu]^{\uptau_{\mf{A}}^{\dagger}(f_{m}(g))}
&&
\mc{A}_{\mf{B}}^{\ast}(z)
\ar[uu]_{\uptau_{\mf{B}}^{\dagger}(g)}
\ar[ll]^{T^{\dagger}(z)}}
\end{equation*}
Therefore $(f,T^{\dagger})\in Mr_{\mf{pdp}}(\mf{B}^{\dagger},\mf{A}^{\dagger})$ since the diagram in 
Rmk. \ref{09081318}. The equality in the statement follows since Lemma \ref{08080010}\eqref{08080010st2},
\eqref{09091136} and \eqref{09081302b}.
\end{proof}
\begin{corollary}
\label{10111742}
The maps $\mf{A}\mapsto\mf{A}^{\dagger}$, $(f,T)\mapsto(f,T^{\dagger})$ and 
respectively $\mf{A}\mapsto\mf{A}^{\ast}$ and $(f,T)\mapsto(f,T^{\dagger})$,  
determine uniquely an element in $\mr{Fct}(\mf{dp}^{op},\mf{pdp})$ 
and respectively in $\mr{Fct}(\mf{dp}_{\star}^{op},\mf{pdp}_{\star})$. 
\end{corollary}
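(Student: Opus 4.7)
The plan is to reduce both claims to Theorem \ref{09101517}, which already does essentially all the substantive work; what remains is verifying identity preservation and transcribing the $\dagger$-calculation to the $\ast$-case in parallel. Uniqueness is automatic in both cases, since a functor is determined by its object and morphism maps.

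First I would handle the functor $F^{\dagger}\colon\mf{dp}^{op}\to\mf{pdp}$. Theorem \ref{09101517} directly gives: (i) $\mf{A}^{\dagger}\in Obj(\mf{pdp})$ for all $\mf{A}\in Obj(\mf{dp})$; (ii) $(f,T)^{\dagger}=(f,T^{\dagger})\in Mor_{\mf{pdp}}(\mf{B}^{\dagger},\mf{A}^{\dagger})$ for all $(f,T)\in Mor_{\mf{dp}}(\mf{A},\mf{B})$; and (iii) the contravariant composition law $((g,S)\circ(f,T))^{\dagger}=(f,T)^{\dagger}\circ(g,S)^{\dagger}$. The only remaining axiom is $(\un_{\mf{A}})^{\dagger}=\un_{\mf{A}^{\dagger}}$. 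Since $\un_{\mf{A}}=(\un_{\mr{G}_{\mf{A}}},\un_{\upsigma_{\mf{A}}})$ and $\un_{\mr{G}_{\mf{A}}}$ is also the identity functor of $\mr{G}_{\mf{A}}^{op}$, the identity check boils down to $(\un_{\upsigma_{\mf{A}}})^{\dagger}(x)=(\un_{\mc{A}_{\mf{A}}(x)})^{\dagger}=\un_{\mc{A}_{\mf{A}}(x)^{\ast}}=\un_{\upsigma_{\mf{A}}^{\dagger}}(x)$ for each $x\in\mr{G}_{\mf{A}}$, which is immediate from Lemma \ref{08080010}(\ref{08080010st2}).

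For the functor $F^{\ast}\colon\mf{dp}_{\star}^{op}\to\mf{pdp}_{\star}$, the object-level assertion $\mf{A}^{\ast}\in Obj(\mf{pdp}_{\star})$ is again part of Theorem \ref{09101517}. For the morphism map I would verify that $(f,T^{\dagger})\in Mor_{\mf{pdp}_{\star}}(\mf{B}^{\ast},\mf{A}^{\ast})$. The type matches because $\mf{pdp}$-morphisms are covariant in the functor component and $f\in\ms{Fct}_{\ms{top}}(\mr{G}_{\mf{B}},\mr{G}_{\mf{A}})=\ms{Fct}_{\ms{top}}(\mr{G}_{\mf{B}^{\ast}},\mr{G}_{\mf{A}^{\ast}})$, while $T^{\dagger}(y)\in Mor_{\ms{ptls}}(\mc{A}_{\mf{B}}(y)^{\ast},\mc{A}_{\mf{A}}(f_{o}(y))^{\ast})$ is the correct signature in view of $\mr{X}_{\mf{B}^{\ast}}=\mc{A}_{\mf{B}}^{\ast}$ and $\mr{X}_{\mf{A}^{\ast}}=\mc{A}_{\mf{A}}^{\ast}$. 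The naturality square in Remark \ref{09081318} to be checked is $T^{\dagger}(z)\circ\uptau_{\mf{B}}^{\ast}(g)=\uptau_{\mf{A}}^{\ast}(f_{m}(g))\circ T^{\dagger}(y)$; substituting $\uptau_{\mf{B}}^{\ast}(g)=\uptau_{\mf{B}}(g^{-1})^{\dagger}$ and $\uptau_{\mf{A}}^{\ast}(f_{m}(g))=\uptau_{\mf{A}}(f_{m}(g^{-1}))^{\dagger}$, then applying Lemma \ref{08080010}(\ref{08080010st2}) to pull $\dagger$ outside, reduces the identity to the naturality diagram for $T$ of Remark \ref{09090657} instantiated at the inverse morphism $g^{-1}\in\mr{G}_{\mf{B}}(z,y)$. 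This is the precise point at which the groupoid hypothesis on $\mf{dp}_{\star}$ and $\mf{pdp}_{\star}$ is used.

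Composition preservation and identity preservation for $F^{\ast}$ then follow by the same formal manipulations as in Theorem \ref{09101517}, using \eqref{08161111} together with Lemma \ref{08080010}(\ref{08080010st2}); no new ingredient is required. The main obstacle, such as it is, is purely bookkeeping: the two functors differ in that $\dagger$ passes to $\mr{G}^{op}$ while $\ast$ retains $\mr{G}$ but conjugates by inversion inside $\uptau$, and one must keep the conventions straight when composing morphisms in $\mf{pdp}$ versus $\mf{pdp}_{\star}$. Once this is done, uniqueness of each functor is immediate because $\ms{Fct}(A,B)$ is determined by prescribing the object and morphism maps of its elements.
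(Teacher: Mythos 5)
Your proposal is correct and follows the same route as the paper, whose entire proof is the citation ``By Thm.~\ref{09101517} and Cor.~\ref{09101202}.'' You merely make explicit the details the paper leaves implicit: identity preservation (covered in the paper by the unit statement of Cor.~\ref{09101202}) and the naturality square for the $\ast$-case at $g^{-1}$, both of which you verify correctly.
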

\begin{proof}
By Thm. \ref{09101517} and Cor. \ref{09101202}.
\end{proof}
\begin{definition}
\label{01011605}
Let $\mr{ptsa}$ be the category constructed in Prp.\ref{08090901}\eqref{08090901st3},
and let $\mf{p}\in\mr{Fct}(\mr{tsa},\mr{ptsa})$ the forgetful functor. 
\end{definition}
\begin{definition}
\label{12312135}
Let $\mf{A},\mf{B},\mf{C}\in\mf{dp}$, define 
\begin{equation}
\label{12312137}
\begin{split}
\mr{Mor}_{\mf{Chdv}}&(\mf{A},\mf{B})\coloneqq
\\
&\coprod_{f\in\mr{Fct}_{\mr{top}}(\mr{G}_{\mf{B}},\mr{G}_{\mf{A}})}
\mr{Mor}_{\mr{Fct}(\mr{G}_{\mf{B}}^{op},\mr{ptls})}
(\upsigma_{\mf{B}}^{\dagger},\upsigma_{\mf{A}}^{\dagger}\circ f)
\times
\mr{Mor}_{\mr{Fct}(\mr{G}_{\mf{B}},\mr{ptsa})}
(\mf{p}\circ\upsigma_{\mf{A}}\circ f,\mf{p}\circ\upsigma_{\mf{B}})
\end{split}
\end{equation}
and 
\begin{equation}
\label{12312113}
\begin{aligned}
(\circ):\mr{Mor}_{\mf{Chdv}}(\mf{B},\mf{C})&\times \mr{Mor}_{\mf{Chdv}}(\mf{A},\mf{B})\to \mr{Mor}_{\mf{Chdv}}(\mf{A},\mf{C}),
\\
(g,L,S)&\circ(f,H,T)
\coloneqq
(f\circ g,(H\ast\un_{g})\circ L,S\circ(T\ast \un_{g})).
\end{aligned}
\end{equation}
\end{definition}
\begin{corollary}
\label{31122146}
There exists a unique $\mc{U}_{0}-$type category $\mf{Chdv}$ such that $\mr{Obj}(\mf{Chdv})=\mr{Obj}(\mf{dp})$, 
for any $\mf{A},\mf{B}\in\mr{Obj}(\mf{Chdv})$ the set of morphisms
of $\mf{Chdv}$ from $\mf{A}$ to $\mf{B}$ is the set in \eqref{12312137}
and the law of composition of morphisms of $\mf{Chdv}$ is the map
in \eqref{12312113},
moreover 
$(\un_{\mr{G}_{\mf{A}}},\un_{\upsigma_{\mf{A}}^{\dagger}},\un_{\mf{p}\circ\upsigma_{\mf{A}}})$
is the unit morphism in $\mf{Chdv}$ relative to $\mf{A}$.
In particular $\mf{Chdv}$ is an object of $\mr{Cat}$.
\end{corollary}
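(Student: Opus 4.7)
The plan is to parallel the proofs of Cor. \ref{09100952} and Cor. \ref{09101202}: the morphism set defined in \eqref{12312137} is a disjoint union of Cartesian products whose two natural-transformation slots mimic respectively the morphism sets of $\mf{pdp}$ (via the contravariant $\dagger$-construction, with values landing in $\ms{tsa}^{\dagger}\subset\ms{ptls}$) and of $\mf{dp}$ (via $\mf{p}$ into $\ms{ptsa}$). Once the composition \eqref{12312113} is seen to be well-defined on each slot, associativity, the unit laws and the cardinality bounds reduce to running Lemmas \ref{09101200} and \ref{09101139} in parallel and then applying Prp. \ref{12311604}.

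Well-definedness amounts to three checks for $(f,H,T)\in Mor_{\mf{Chdv}}(\mf{A},\mf{B})$ and $(g,L,S)\in Mor_{\mf{Chdv}}(\mf{B},\mf{C})$. The first component $f\circ g$ lies in $\ms{Fct}_{\ms{top}}(\mr{G}_{\mf{C}},\mr{G}_{\mf{A}})$ by closure of $\ms{top}-$quasi-enriched functor composition. For the middle slot, the Godement product $H\ast\un_{g}$ is a natural transformation from $\upsigma_{\mf{B}}^{\dagger}\circ g$ to $\upsigma_{\mf{A}}^{\dagger}\circ f\circ g$ valued in $\ms{tsa}^{\dagger}$, which is closed under composition by Prp. \ref{01011725}, so its vertical composite with $L$ has source $\upsigma_{\mf{C}}^{\dagger}$ and target $\upsigma_{\mf{A}}^{\dagger}\circ(f\circ g)$. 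For the third slot the computation is the same as for the $\mf{dp}$-composition \eqref{09091136}, hence already handled by Lemma \ref{09101139}.

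Associativity and the unit laws split into three independent checks. The first slot reduces to the associativity and unit laws of functor composition. The middle slot is a verbatim copy of the calculation in Lemma \ref{09101200}, invoking the interchange law \eqref{bor135}, associativity of $\ast$, the identity $\un_{g\circ h}=\un_{g}\ast\un_{h}$, and \eqref{20061403}. The third slot is covered directly by Lemma \ref{09101139}. Uniqueness of the category structure follows automatically from uniqueness of the defining data.

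Finally, for the $\mc{U}_{0}-$type claim one imitates the conclusion of Cor. \ref{09100952}. Since $\mf{A},\mf{B}$ are $\mc{U}-$type, $\mr{G}_{\mf{A}},\mr{G}_{\mf{B}}$ are $\mc{U}_{0}-$small; $\ms{tsa}^{\dagger}\subset\ms{ptls}$ and $\ms{ptsa}$ are $\mc{U}_{0}-$categories; hence by Prp. \ref{12311604}\eqref{12311604st2} both functor categories appearing in \eqref{12312137} are $\mc{U}_{0}-$categories. Together with the universe axioms applied to the indexing set $\ms{Fct}_{\ms{top}}(\mr{G}_{\mf{B}},\mr{G}_{\mf{A}})\in\mc{U}_{0}$, this gives $Mor_{\mf{Chdv}}(\mf{A},\mf{B})\in\mc{U}_{0}$. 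The only step that requires vigilance rather than real work is bookkeeping the source/target functors in the middle slot under the contravariant $\dagger$-construction, which is entirely handled by Thm. \ref{09101517} and Prp. \ref{01011725}; so no genuine obstacle remains.
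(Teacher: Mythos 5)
Your proposal is correct and takes essentially the same route as the paper: the paper's own proof is a one-line citation of Cor. \ref{09100952} and Cor. \ref{09101202} together with the identification $Obj(\ms{Fct}(A,B))=Obj(\ms{Fct}(A^{op},B^{op}))$, which is precisely your slot-by-slot reduction of the middle and third components to the $\mf{pdp}$ and $\mf{dp}$ composition laws (Lemmas \ref{09101200} and \ref{09101139}) plus the $\dagger$-bookkeeping. The only small correction is in the universe argument: to conclude $Mor_{\mf{Chdv}}(\mf{A},\mf{B})\in\mc{U}_{0}$ you should invoke Prp. \ref{12311604}\eqref{12311604st3} (as the proof of Cor. \ref{09100952} does), since Prp. \ref{12311604}\eqref{12311604st2} only yields that the functor categories are $\mc{U}_{0}$-categories, i.e.\ that their hom-sets are $\mc{U}_{0}$-small, not that they are elements of $\mc{U}_{0}$.
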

\begin{proof}
By Cor. \ref{09100952} and Cor. \ref{09101202}, and since 
$\mr{Obj}(\mr{Fct}(A,B))=\mr{Obj}(\mr{Fct}(A^{op},B^{op}))$ for all categories $A,B$.
\end{proof}
\begin{theorem}
\label{12312222}
The maps $\mf{A}\mapsto\mf{A}$ and $(f,T)\mapsto(f,T^{\dagger},\un_{\mf{p}}\ast T)$ 
determine uniquely an element $\ps{\Uppsi}\in\mr{Fct}(\mf{dp},\mf{Chdv})$.
\end{theorem}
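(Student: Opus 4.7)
The plan is to verify in sequence: (a) the image of any morphism lies in $Mor_{\mf{Chdv}}$, (b) identities go to identities, and (c) composition is preserved, so that the two maps described indeed assemble into a functor, uniqueness being immediate since the data $(\mf{A}\mapsto\mf{A},\,(f,T)\mapsto(f,T^{\dagger},T))$ already fix $\ps{\Uppsi}$ on all objects and morphisms.

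For step (a), fix $(f,T)\in Mor_{\mf{dp}}(\mf{A},\mf{B})$. The first component $f\in\ms{Fct}_{\ms{top}}(\mr{G}_{\mf{B}},\mr{G}_{\mf{A}})$ is given. For the second component I would invoke Thm.~\ref{09101517} to obtain that $(f,T^{\dagger})\in Mor_{\mf{pdp}}(\mf{B}^{\dagger},\mf{A}^{\dagger})$, which by Rmk.~\ref{09081318} gives $T^{\dagger}$ a natural transformation from $\upsigma_{\mf{B}}^{\dagger}$ to $\upsigma_{\mf{A}}^{\dagger}\circ f$ in $\ms{ptls}$; the observation recorded just before Def.~\ref{12312135} then promotes this to a natural transformation in $\ms{tsa}^{\dagger}$, since every $T(x)\in Mor_{\ms{tsa}}$ is unital and so lies in $\ms{Q}$, whence $T^{\dagger}(x)\in\mf{Z}$ by Prp.~\ref{08091056}\eqref{08091056st3}, i.e.\ in $Mor_{\ms{tsa}^{\dagger}}$ by Prp.~\ref{01011725}. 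For the third component I would note that $T$ has components in $Mor_{\ms{tsa}}\subset\ms{Q}=Mor_{\ms{ptsa}}$ (using \eqref{08161036} and Prp.~\ref{08090901}\eqref{08090901st3}), and that post-composition with the forgetful $\mf{p}$ of Def.~\ref{01011605} preserves the commuting squares of Rmk.~\ref{09090657}, so $T$ is a natural transformation $\mf{p}\circ\upsigma_{\mf{A}}\circ f\to\mf{p}\circ\upsigma_{\mf{B}}$ in $\ms{ptsa}$.

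For step (b), $\ps{\Uppsi}(\un_{\mf{A}})=(\un_{\mr{G}_{\mf{A}}},\un_{\upsigma_{\mf{A}}}^{\dagger},\un_{\upsigma_{\mf{A}}})$; applying $\dagger$ pointwise to the identity natural transformation gives $\un_{\upsigma_{\mf{A}}}^{\dagger}(x)=(\un_{\mc{A}_{\mf{A}}(x)})^{\dagger}=\un_{\mc{A}_{\mf{A}}^{\ast}(x)}=\un_{\upsigma_{\mf{A}}^{\dagger}}(x)$, and similarly $\un_{\upsigma_{\mf{A}}}=\un_{\mf{p}\circ\upsigma_{\mf{A}}}$ componentwise, so this matches the unit of $\mf{Chdv}$ at $\mf{A}$ described in Cor.~\ref{31122146}.

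For step (c), take $(f,T)\in Mor_{\mf{dp}}(\mf{A},\mf{B})$ and $(g,S)\in Mor_{\mf{dp}}(\mf{B},\mf{C})$. By \eqref{09091136} the composition on the left is $(f\circ g,\,S\circ(T\ast\un_{g}))$, so
\[
\ps{\Uppsi}((g,S)\circ(f,T))=\bigl(f\circ g,\,(S\circ(T\ast\un_{g}))^{\dagger},\,S\circ(T\ast\un_{g})\bigr),
\]
while by \eqref{12312113} the right-hand side reads $\bigl(f\circ g,\,(T^{\dagger}\ast\un_{g})\circ S^{\dagger},\,S\circ(T\ast\un_{g})\bigr)$. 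The first and third components coincide trivially, and the only genuine identity to verify is $(S\circ(T\ast\un_{g}))^{\dagger}=(T^{\dagger}\ast\un_{g})\circ S^{\dagger}$. Applying $\dagger$ pointwise and using Lemma~\ref{08080010}\eqref{08080010st2} (extended to $\ms{tsa}$ via Rmk.~\ref{08161027}) gives $(S\circ(T\ast\un_{g}))^{\dagger}=(T\ast\un_{g})^{\dagger}\circ S^{\dagger}$; then \eqref{20061403} yields $T\ast\un_{g}=T\circ g_{o}$ pointwise, so $(T\ast\un_{g})^{\dagger}(x)=(T(g_{o}(x)))^{\dagger}=T^{\dagger}(g_{o}(x))=(T^{\dagger}\ast\un_{g})(x)$. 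The main obstacle is really just this bookkeeping: keeping straight the contravariance introduced by $\dagger$ against the covariant and horizontal compositions of natural transformations; once both sides are unpacked componentwise the verification is mechanical.
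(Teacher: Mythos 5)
Your proposal is correct and follows essentially the same route as the paper, whose entire proof is a citation of Cor.~\ref{10111742}: the identity $(S\circ(T\ast\un_{g}))^{\dagger}=(T^{\dagger}\ast\un_{g})\circ S^{\dagger}$ you verify componentwise in step (c) is exactly the final equality of Thm.~\ref{09101517} underlying that corollary, and your step (a) reproduces the observation recorded just before Def.~\ref{12312135}. So you have simply unpacked the cited results explicitly rather than invoking them wholesale; the mathematics is the same.
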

\begin{proof}
Since Cor. \ref{10111742}.
\end{proof}
\begin{convention}
\label{11211940}
For any map $U$ on a set $X$ and at values in $\mr{Mor}_{\mf{Chdv}}$ 
let $U_{j}$ and $U_{1}^{o},U_{1}^{m}$, 
$j\in\{1,2,3\}$ be maps on $X$ such that 
$U(M)=(U_{1}(M),U_{2}(M),U_{3}(M))$ and $U_{1}(M)=(U_{1}^{o}(M),U_{1}^{m}(M))$, 
namely $U_{1}^{o}(M)=(U_{1}(M))_{o}$ is the object map 
and $U_{1}^{m}(M)=(U_{1}(M))_{m}$ is the morphism map 
respectively of the functor $U_{1}(M)$ for all $M\in X$.
Moreover $U_{3}^{\dagger}(M)(\cdot)\coloneqq U_{3}(M)(\cdot)^{\dagger}$,
where we recall that $U_{3}(M)$ and $U_{2}(M)$ are natural transformations.
If $U=\mr{a}_{m}$ with $\mr{a}$ any functor at values in $\mf{Chdv}$, then we convein to denote
$(\mr{a}_{m})_{1}^{o}$ and $(\mr{a}_{m})_{1}^{m}$ 
simply by
$\mr{a}_{1}^{o}$ and $\mr{a}_{1}^{m}$,
while
$(\mr{a}_{m})_{2}$ and $(\mr{a}_{m})_{3}$
by 
$\mr{a}_{2}$ and $\mr{a}_{3}$
respectively.
We apply to objects of $\mf{Chdv}$ the notation in Def. \ref{09081303} for objects of $\mf{dp}$.
\end{convention}
Next we shall introduce the concept of trajectory associated with any 
species of dynamical patterns, namely a
functor from a category $\mf{D}$ said context category and $\mf{Chdv}$,
and any context. 
\begin{definition}
[\textbf{Trajectories}]
\label{01162119}
Let $\mf{D}$ be a category, $\mr{a}\in\mr{Fct}(\mf{D},\mf{Chdv})$ and $M\in\mf{D}$
define
\begin{equation*}
\mf{f}^{\mr{a},M}
\in
\prod_{(x,y)\in\mr{G}_{\mr{a}(M)}\times\mr{G}_{\mr{a}(M)}}
\mr{Mor}_{\mr{set}}(
\mf{P}_{\mc{A}_{\mr{a}(M)}(y)}\times\mc{A}_{\mr{a}(M)}(x)_{ob},
\R^{\mr{G}_{\mr{a}(M)}(x,y)}),
\end{equation*}
such that for all $x,y\in\mr{G}_{\mr{a}(M)}$, 
$(\uppsi,A)\in
\mf{P}_{\mc{A}_{\mr{a}(M)}(y)}\times\mc{A}_{\mr{a}(M)}(x)_{ob}$
and $g\in\mr{G}_{\mr{a}(M)}(x,y)$
\begin{equation*}
\mf{f}_{(\uppsi,A)}^{\mr{a},M,x,y}
(g)
\doteqdot
\uppsi(\uptau_{\mr{a}(M)}(g)A),
\end{equation*}
set
\begin{equation*}
\mr{Tr}^{\mr{a}}(M,x,y)\coloneqq\{\mf{f}_{(\uppsi,A)}^{\mr{a},M,x,y}\,\vert\,
(\uppsi,A)\in\mf{P}_{\mc{A}_{\mr{a}(M)}(y)}\times\mc{A}_{\mr{a}(M)}(x)_{ob}\}.
\end{equation*}
Let $\mr{b}\in\mr{Fct}(\mf{D},\mf{Chdv})$, $N\in\mf{D}$,
$t\in\mr{G}_{\mr{a}(M)}$, and $u,v\in\mr{G}_{\mr{b}(N)}$, define 
\begin{equation*}
\begin{aligned}
\mf{d}^{\mr{a},M;\mr{b},N;t,u,v}:\,
\mf{Z}(\mc{A}_{\mr{a}(M)}(t),\mc{A}_{\mr{b}(N)}(v))&
\times\mf{P}_{\mc{A}_{\mr{a}(M)}(t)}^{\natural}\times\mr{Ef}(\mc{A}_{\mr{b}(N)}(u))
\to[0,1]^{\mr{G}_{\mr{b}(N)}(u,v)},
\\
(J,\upomega,e)&\mapsto\left(l\mapsto\frac{J(\upomega)(\uptau_{\mr{b}(N)}(l)e)}{\upomega(\un)}\right).
\end{aligned}
\end{equation*}
We let $\mf{d}^{\mr{b},N;u,v}$ denote $\mf{d}^{\mr{a},M;\mr{b},N;t,u,v}$ whenever 
it will be clear by the context the functor $\mr{a}$, and $M$ and $t$ involved.
\end{definition}
Notice that 
$\upomega(\un)\mf{d}_{(J,\upomega,e)}^{\mr{b},N;u,v}(l)=\mf{f}_{(J(\upomega),e)}^{\mr{b},N;u,v}(l)$.
Although the precise physical interpretation will follow 
after Def. \ref{01200837}, we can say that 
$\mf{f}_{(\uppsi,A)}^{\mr{a},M,x,y}$ is roughly
the trajectory 
mapping any morphism $g$ of the dynamical category of the 
dynamical pattern $\mr{a}(M)$
- 
associated with the context $M$
and implemented by the species $\mr{a}$
-
into the measure in the statistical ensemble $\uppsi$
of the observable 
$\uptau_{\mr{a}(M)}(g)A$.
Next we give a variant of the above definition suitable to 
better reveal the observable dependency.
\begin{definition}
[Observable trajectories]
\label{10201408}
Let $\mf{D}$ be a category, $\mr{a}\in\mr{Fct}(\mf{D},\mf{Chdv})$ and $M\in\mf{D}$
define
\begin{equation*}
\begin{aligned}
\mf{t}^{\mr{a},M}
\in
\prod_{(x,y)\in\mr{G}_{\mr{a}(M)}\times\mr{G}_{\mr{a}(M)}}
\mr{Mor}_{\mr{set}}(\mf{P}_{\mc{A}_{\mr{a}(M)}(y)},
&\mr{Mor}_{\mr{set}}(\mc{A}_{\mr{a}(M)}(x)_{ob},\R^{\mr{G}_{\mr{a}(M)}(x,y)})),
\\
\mf{t}^{\mr{a},M,x,y,\uppsi}(A)
&\coloneqq
\mf{f}_{(\uppsi,A)}^{\mr{a},M,x,y}.
\end{aligned}
\end{equation*}
Set
\begin{equation*}
\mc{O}^{\mr{a}}(M)\coloneqq\{\mf{t}^{\mr{a},M,x,y,\uppsi}
\,\vert\,
x,y\in\mr{G}_{\mr{a}(M)},
\uppsi\in\mf{P}_{\mc{A}_{\mr{a}(M)}(y)}\}.
\end{equation*}
\end{definition}
$\mf{f}^{\mr{a},M,x,y}$ is defined over all the couples of statistical ensembles 
and observables, however when dealing with a precise experimental setting one 
encounters subsets of those couples which are in general equivariant 
with respect to geometrical and dynamical transformations implemented by the species $\mr{a}$.
Thus we introduce the following 
\begin{definition}
[Experimental settings]
\label{01161844}
Let $\mf{D}$ be a category and $\mr{a}\in\mr{Fct}(\mf{D},\mf{Chdv})$, 
let $Exp(\mr{a})$ be called the set of experimental settings 
associated with $\mr{a}$, and defined to be the 
set of the $\mf{E}=(\mf{S},\mr{R})$ such that for all $M\in\mf{D}$ the following holds
\begin{enumerate}
\item 
$\mr{R}_{M}$ is a subcategory of $\mr{G}_{\mr{a}(M)}$;
\label{01161844st1}
\item 
$\mf{S}_{M}\in\prod_{t\in\mr{R}_{M}}\mc{P}(\mf{P}_{\mc{A}_{\mr{a}(M)}(t)})$; 
\label{01161844st2}
\item
for all $N\in\mf{D}$ and $\phi\in \mr{Mor}_{\mf{D}}(M,N)$
\begin{enumerate}
\item
$\mr{a}_{1}(\phi)\in\mr{Fct}_{\mr{top}}(\mr{R}_{N},\mr{R}_{M})$,
\label{01161844st3a}
\item
$\mr{a}_{3}^{\dagger}(\phi)(u)\mf{S}_{N}(u)
\subseteq
\mf{S}_{M}(\mr{a}_{1}^{o}(\phi)u)$, for all $u\in\mr{R}_{N}$;
\label{01161844st3b}
\end{enumerate}
\label{01161844st3}
\item
$\uptau_{\mr{a}(M)}^{\dagger}(g)\mf{S}_{M}(z)
\subseteq
\mf{S}_{M}(y)$,
for all $y,z\in\mr{R}_{M}$ and $g\in\mr{R}_{M}(y,z)$.
\label{01161844st4}
\end{enumerate}
If in addition \eqref{01161844st3b} holds by replacing $\mr{a}_{3}^{\dagger}$ by $\mr{a}_{2}$,
then $(\mf{S},\mr{R})$ is said to be complete.
Moreover let $Exp^{\ast}(\mr{a})$ be the subset of the $(\mf{S},\mr{R})$ in $Exp(\mr{a})$ s.t. 
for all $M\in\mf{D}$, $z\in\mr{R}_{M}$ and $a\in\mc{A}_{\mr{a}(M)}(z)$
\begin{equation}
\label{01161908}
\upzeta_{\mc{A}_{\mr{a}(M)}(z)}^{\dagger}(a)\mf{S}_{M}(z)\subseteq\mf{S}_{M}(z).
\end{equation}
\end{definition}
\begin{definition}
\label{01161912}
Let $\mf{D}$ be a category and $\mr{a}\in\mr{Fct}(\mf{D},\mf{Chdv})$, 
define $\mf{P}^{\mr{a}}$ and $\mr{G}^{\mr{a}}$ be maps on 
$\mr{Obj}(\mf{D})$ such that 
$\mr{G}_{M}^{\mr{a}}\coloneqq\mr{G}_{\mr{a}(M)}$ for all $M\in\mf{D}$ 
and 
$\mf{P}_{M}^{\mr{a}}(t)\coloneqq\mf{P}_{\mc{A}_{\mr{a}(M)}(t)}$ 
for all $t\in\mr{G}_{\mr{a}(M)}$.
It is easy to see that $(\mf{P}^{\mr{a}},\mr{G}^{\mr{a}})\in Exp^{\ast}(\mr{a})$ and it is complete, we call 
$(\mf{P}^{\mr{a}},\mr{G}^{\mr{a}})$ the standard experimental setting associated with $\mr{a}$. 
\end{definition}
\begin{remark}
\label{11211156}
Let $\mr{S},\mr{T}\in \mr{Mor}_{\mr{Fct}(\mf{D},\mf{Chdv})}$
such that $d(\mr{T})=c(\mr{S})$, $M\in\mf{D}$
and let $\mr{c}=c(\mr{T})$.
Thus by Conv. \ref{11211940}
\begin{equation*}
\begin{aligned}
(\mr{T}\circ\mr{S})(M)
&=
((\mr{T}\circ\mr{S})_{1}(M),
(\mr{T}\circ\mr{S})_{2}(M),
(\mr{T}\circ\mr{S})_{3}(M))
\\
&=
(\mr{S}_{1}(M)\circ\mr{T}_{1}(M),
(\mr{S}_{2}(M)\ast\un_{\mr{T}_{1}(M)})
\circ
\mr{T}_{2}(M),
\mr{T}_{3}(M)\circ
(\mr{S}_{3}(M)\ast\un_{\mr{T}_{1}(M)})),
\end{aligned}
\end{equation*}
where the second equality arises 
since $(\mr{T}\circ\mr{S})(M)=\mr{T}(M)\circ\mr{S}(M)$
and by the definition of morphism composition in $\mf{Chdv}$.
In particular by \eqref{20061403} 
we obtain for all $y\in \mr{Obj}(\mr{G}_{M}^{\mr{c}})$
\begin{equation*}
(\mr{T}\circ\mr{S})_{3}(M)(y)
=
\mr{T}_{3}(M)(y)\circ\mr{S}_{3}(M)(\mr{T}_{1}^{o}(M)y),
\end{equation*}
and
$(\mr{T}\circ\mr{S})_{3}^{\dagger}(M)(y)
=
((\mr{T}\circ\mr{S})_{3}(M)(y))^{\dagger}$.
\end{remark}
\begin{remark}[\textbf{Equiformity Principle and Natural Transformations}]
\label{12140558}
It is worthwhile remarking that we introduce links (Def. \ref{01161922})
\emph{only} with the intent of enlightening those specific properties possessed 
by any \textbf{natural transformation} $\mr{T}$ between functors valued in $\mf{Chdv}$ 
(Thm. \ref{01181342}\eqref{01181342st1} and Thm. \ref{10081910}\eqref{10081910st2})
that guarantee that the Equiformity Principle (Prp. \ref{01162038}) holds true for $\mr{T}$
(Thm. \ref{01181342}\eqref{01181342st2} and Thm. \ref{10081910}\eqref{10081910st5}).
The Equiformity Principle for $\mr{T}$ communicates 
in the physical terms of statistical ensembles, observables and devices 
what the first diagram in Lemma \ref{01011521} encodes in the mathematical terms of category theory.
\end{remark}
\begin{definition}
[Auxiliary concept of link]
\label{01161922}
Let $\mf{D}$ be a category, $\mr{a},\mr{b}\in\mr{Fct}(\mf{D},\mf{Chdv})$, 
and $(\mf{S}^{\mr{a}},\mr{R}^{\mr{a}})\in Exp(\mr{a})$,
$(\mf{S}^{\mr{b}},\mr{R}^{\mr{b}})\in Exp(\mr{b})$. 
We define $\mr{T}$ to be a link from $(\mf{S}^{\mr{b}},\mr{R}^{\mr{b}})$ 
to $(\mf{S}^{\mr{a}},\mr{R}^{\mr{a}})$ 
if $\mr{T}\in\prod_{O\in\mf{D}}\mr{Mor}_{\mf{Chdv}}(\mr{a}(O),\mr{b}(O))$ such that 
for all $M,N\in\mf{D}$, $\phi\in \mr{Mor}_{\mf{D}}(M,N)$, $y,z\in\mr{R}_{N}^{\mr{b}}$ 
and $g\in\mr{R}_{N}^{\mr{b}}(y,z)$ we have
\begin{enumerate}
\item
$\mr{T}_{1}(N)\in\mr{Fct}_{\mr{top}}(\mr{R}_{N}^{\mr{b}},\mr{R}_{N}^{\mr{a}})$;
\label{01161922st1}
\item
$\mr{T}_{3}^{\dagger}(N)(z)\mf{S}_{N}^{\mr{b}}(z)\subseteq\mf{S}_{N}^{\mr{a}}(\mr{T}_{1}^{o}(N)z)$;
\label{01161922st2}
\item
$\mr{T}_{3}(N)(z)\circ\uptau_{\mr{a}(N)}(\mr{T}_{1}^{m}(N)g)\circ\mr{a}_{3}(\phi)(\mr{T}_{1}^{o}(N)y)
=
\mr{b}_{3}(\phi)(z)\circ\uptau_{\mr{b}(M)}(\mr{b}_{1}^{m}(\phi)g)\circ\mr{T}_{3}(M)(\mr{b}_{1}^{o}(\phi)y)$;
\label{01161922st3}
\item
$\mr{T}_{2}(M)(\mr{b}_{1}^{o}(\phi)y)\circ\uptau_{\mr{b}(M)}^{\dagger}(\mr{b}_{1}^{m}(\phi)g)\circ\mr{b}_{2}(\phi)(z)
=
\mr{a}_{2}(\phi)(\mr{T}_{1}^{o}(N)y)\circ\uptau_{\mr{a}(N)}^{\dagger}(\mr{T}_{1}^{m}(N)g)\circ\mr{T}_{2}(N)(z)$.
\label{01161922st4}
\end{enumerate}
If in addition \eqref{01161922st2} holds by replacing $\mr{T}_{3}^{\dagger}$ by $\mr{T}_{2}$, then $\mr{T}$
is said to be complete.
\end{definition}
\begin{definition}
[Reduction]
Let $\mf{D}$ be a category, 
$\mr{a},\mr{b}\in\mr{Fct}(\mf{D},\mf{Chdv})$, and 
$\mf{E}^{\mr{a}}=(\mf{S}^{\mr{a}},\mr{R}^{\mr{a}})\in Exp(\mr{a})$ and 
$\mf{E}^{\mr{b}}=(\mf{S}^{\mr{b}},\mr{R}^{\mr{b}})\in Exp(\mr{b})$ and $N\in\mf{D}$.
We call $\mr{T}$ be a \emph{reduction from $\mf{E}^{\mr{a}}$ to $\mf{E}^{\mr{b}}$ in $N$} iff
$\mr{T}$ is a link from $\mf{E}^{\mr{b}}$ to $\mf{E}^{\mr{a}}$, $\mr{R}_{N}^{\mr{b}}$ is a $\mr{top}$-quasi enriched subcategory
of $\mr{R}_{N}^{\mr{a}}$ and $\mr{T}_{1}(N)$ is the injection functor.
\end{definition}
\begin{convention}
For any category $\mf{D}$, 
$\mr{a},\mr{b}\in\mr{Fct}(\mf{D},\mf{Chdv})$, 
and $\mr{T}\in\prod_{O\in\mf{D}}\mr{Mor}_{\mf{Chdv}}(\mr{a}(O),\mr{b}(O))$ 
by abuse of language we generalize the notation for natural transformations
so that $d(\mr{T})=\mr{a}$ and $c(\mr{T})=\mr{b}$ and call them
domain and codomain of $\mr{T}$ respectively.
\end{convention}
Let us introduce the interpretation of the data above defined.
\begin{definition}
\label{01200837}
We call $(\mf{M},\mf{s},\mf{u})$ a semantics for $\mf{Chdv}$, 
if $\mf{M}=(\mf{R},\mf{r},\mf{D},\mf{d},\mf{C},\mf{c},\mf{T},\mf{t},\mf{E},\mf{e},\mf{O},\mf{o})$ 
is a semantics, see Def. \ref{08081839}, and for any category $\mf{D}$, any
$\mr{a},\mr{b},\mr{c}\in\mr{Fct}(\mf{D},\mf{Chdv})$, 
$\mr{T}\in\prod_{O\in\mf{D}}\mr{Mor}_{\mf{Chdv}}(\mr{a}(O),\mr{b}(O))$,
$\mr{S}\in\prod_{O\in\mf{D}}\mr{Mor}_{\mf{Chdv}}(\mr{b}(O),\mr{c}(O))$ 
and 
$\mf{E}=(\mf{S},\mr{R})\in Exp(\mr{b})$,
we have for all $M,N,O\in \mr{Obj}(\mf{D})$, $\phi\in \mr{Mor}_{\mf{D}}(M,N)$, 
$\psi\in \mr{Mor}_{\mf{D}}(N,O)$, $x,y\in\mr{G}_{\mr{a}}^{N}$,
$h,g\in\mr{G}_{\mr{a}}^{N}(x,y)$, $u\in \mr{Mor}_{\mr{G}_{\mr{a}}^{N}}$,
$n\in\mr{G}_{\mr{b}}^{M}$, $z\in\mr{G}_{\mr{b}}^{M}$, 
$w\in \mr{Mor}_{\mr{G}_{\mr{b}}^{M}}$
and $r\in\mr{R}_{M}$
\begin{enumerate}
\item
$\mf{u}(\un_{N})\equiv$ producing no variations; 
\item
$\mf{s}(x)\equiv$ the region $\mf{u}(x)$;
\item
$\mf{s}(g)\equiv$ the action $\mf{u}(g)$ mapping $\mf{s}(x)$ into $\mf{s}(y)$;
\item
$\mf{s}(hg)\equiv$ $\mf{s}(h)$ after $\mf{s}(g)$;
\item
$\mf{s}(\phi)\equiv$ the geometric action $\mf{u}(\phi)$;
\item
$\mf{s}(\psi\circ\phi)\equiv$ $\mf{s}(\psi)$ after $\mf{s}(\phi)$;
\item
$\mf{s}(\mr{a})\equiv$ the species $\mf{u}(\mr{a})$;
\item
$\mf{s}(d(\mr{a}))\equiv$ the enviroment domain of $\mf{s}(\mr{a})$;
\item
$\mf{s}(M)\equiv$ the context $\mf{u}(M)$;
\item
$\mf{u}(c(\phi))\equiv$ projection of $\mf{s}(d(\phi))$ through $\mf{s}(\phi)$;
\item
$\mf{u}(c(\phi))\equiv$ obtained by transforming $\mf{s}(d(\phi))$ through $\mf{s}(\phi)$;
\item
$\mf{s}(\mr{a}(M))\equiv$ dynamical pattern associated with $\mf{s}(M)$ and implemented by $\mf{s}(\mr{a})$;
\item
$\mf{s}(\mr{a}_{1}(\phi))\equiv$ the probe $\mf{u}(\mr{a}_{1}(\phi))$; 
\item
$\mf{u}(\mr{a}_{1}(\phi))\equiv$ assembled by $\mf{s}(\mr{a})$ and implementing $\mf{s}(\phi)$;
\item
$\mf{u}(\mr{a}_{1}^{o}(\phi)x)\equiv$ 
resulting next $\mf{s}(\mr{a}_{1}(\phi))$ applies to $\mf{s}(x)$
\item
$\mf{u}(\mr{a}_{1}^{m}(\phi)(g))\equiv$ that operates in any reference frame $\bullet$ as $\mf{s}(g)$ 
operates in the reference frame obtained by applying $\mf{D}(\mr{a}_{3}(\phi)(x))$ to $\bullet$;
\label{03141328}
\item
$\mf{u}(\mr{a}_{1}^{m}(\phi)(u))\equiv$ resulting next $\mf{s}(\mr{a}_{1}(\phi))$ applies to $\mf{s}(u)$;
\item
$\mf{u}(\mr{a}_{2}(\phi))\equiv\mf{u}(\mr{a}_{1}(\phi))$;
\item
$\mf{c}(\mr{a}_{2}(\phi)(x))\equiv$ placed in $\mf{s}(x)$, $\mf{u}(\mr{a}_{2}(\phi))$;
\item
$\mf{u}(\mr{a}_{3}(\phi))\equiv\mf{u}(\mr{a}_{1}(\phi))$; 
\item
$\mf{d}(\mr{a}_{3}(\phi)(x))\equiv$ placed in $\mf{s}(x)$, $\mf{u}(\mr{a}_{3}(\phi))$;
\item
$\mf{s}(\mr{T})\equiv$ the connector $\mf{u}(\mr{T})$;
\item
$\mf{u}(\mr{T})\equiv$ adding the charge $\mr{T}$\footnote{The reason of 
such a choice is related to Thm. \ref{10081910}\eqref{10081910st1}.};
\item
$\mf{u}(\mr{T})\equiv$ from $\mf{s}(d(\mr{T}))$ to $\mf{s}(c(\mr{T}))$;
\item
$\mf{s}(\mr{T})\equiv$ the sector $\mf{u}(\mr{T})$,
if $d(\mr{T})=c(\mr{T})$;
\item
$\mf{u}(\mr{T})\equiv$ of $\mf{s}(d(\mr{T}))$,
if $d(\mr{T})=c(\mr{T})$;
\item
$\mf{u}(\un_{\mr{a}})\equiv$ canonically associated with $\mf{s}(\mr{a})$;
\item
$\mf{s}(\mr{T}\circ\mr{S})\equiv$ $\mf{s}(\mr{T})$ following $\mf{s}(\mr{S})$;
\item
$\mf{s}(\mr{T}_{1}(M))\equiv$ the probe $\mf{u}(\mr{T}_{1}(M))$;
\item
$\mf{u}(\mr{T}_{1}(M))\equiv$ assembled by $\mf{s}(\mr{T})$ and transporting into $\mf{s}(\mr{a}(M))$; 
\item
$\mf{u}(\mr{T}_{1}(M)^{o}(z))\equiv$ resulting next $\mf{s}(\mr{T}_{1}(M))$ applies to $\mf{s}(z)$; 
\item
$\mf{u}(\mr{T}_{1}(M)^{m}(w))\equiv$ that, via $\mf{s}(\mr{T}_{1}(M))$, 
operates in $\mf{s}(\mr{a}(M))$ as $\mf{s}(w)$ operates in $\mf{s}(\mr{b}(M))$;
\item
$\mf{u}(\mr{T}_{1}(M)^{m}(w))\equiv$ resulting next $\mf{s}(\mr{T}_{1}(M))$ applies to $\mf{s}(w)$; 
\item
$\mf{u}(\mr{T}_{2}(M))\equiv\mf{u}(\mr{T}_{1}(M))$;
\item
$\mf{c}(\mr{T}_{2}(M)(z))\equiv$ placed in $\mf{s}(z)$, $\mf{s}(\mr{T}_{2}(M))$;
\item
$\mf{u}(\mr{T}_{3}^{\dagger}(M))\equiv\mf{u}(\mr{T}_{1}(M))$; 
\item
$\mf{c}(\mr{T}_{3}^{\dagger}(M)(z))\equiv$ placed in $\mf{s}(z)$, $\mf{u}(\mr{T}_{3}^{\dagger}(M))$;
\item
$\mf{u}(\mr{T}_{3}(M))\equiv$ assembled by $\mf{s}(\mr{T})$ and transporting into $\mf{s}(\mr{b}(M))$; 
\item
$\mf{d}(\mr{T}_{3}(M)(n))\equiv$ placed in $\mf{s}(n)$, $\mf{u}(\mr{T}_{3}(M))$;
\item
$\mf{d}(\uptau_{\mr{a}(N)}(g))\equiv$ assembled by $\mf{s}(\mr{a})$ and implementing $\mf{s}(g)$ in $\mf{s}(N)$;
\label{01200837g}
\item
$\mf{d}(\uptau_{\mr{a}(N)}(g))\equiv$ assembled by $\mf{s}(\mr{a})$ and implementing in $\mf{s}(N)$ $\mf{s}(g)$;
\item
$\mf{s}(\mf{S})\equiv$ the bundle $\mf{S}$ of empirical sectors of $\mf{s}(\mr{b})$;
\item
$\mf{s}(\mr{R})\equiv$ the bundle $\mr{R}$ of dynamical categories of $\mf{s}(\mr{b})$;
\item
$\mf{s}(\mf{S}_{M}(r))\equiv$ the fiber $\mf{u}(\mf{S}_{M}(r))$;
\item
$\mf{u}(\mf{S}_{M}(r))\equiv$ in $\mf{s}(r)$ of $\mf{s}(M)$ of $\mf{s}(\mf{S})$;
\item
$\mf{s}(\mr{R}_{M})\equiv$ the fiber $\mf{u}(\mr{R}_{M})$;
\item
$\mf{u}(\mr{R}_{M})\equiv$ in $\mf{s}(M)$ of $\mf{s}(\mr{R})$;
\item
$\mf{s}(\mf{E})\equiv$ the experimental setting $\mf{u}(\mf{E})$;
\item
$\mf{u}(\mf{E})\equiv$ determined by $\mf{s}(\mf{S}_{Q}(a))$ and $\mf{s}(\mr{R}_{Q})$,
for all $Q\in\mf{D}$ and $a\in\mr{R}_{Q}$.
\end{enumerate}
\end{definition}
\begin{remark}
\label{03141425}
Def.\ref{01200837}\eqref{03141328} is consistent with the first equality in Prp.\ref{01162038}, 
in the special case where $\mr{a}=\mr{b}$ and $\mr{T}$ is the identity, 
see also the right commutative subdiagram in the first diagram of Lemma \ref{01011521}.
In particular if $\phi$ is an isomorphism then for any $h\in\mr{G}_{\mr{a}(N)}(\mr{a}_{1}^{o}(\phi^{-1})x,\mr{a}_{1}^{o}(\phi^{-1})y)$
we have by applying Def.\ref{01200837}\eqref{03141328} to $g=\mr{a}_{1}^{m}(\phi^{-1})(h)$
that $\mf{u}(h)\equiv$ that operates in any reference frame $\bullet$ as $\mf{s}(\mr{a}_{1}^{m}(\phi^{-1})(h))$ operates in the 
reference frame obtained by applying $\mf{D}(\mr{a}_{3}(\phi)x)$ 
to $\bullet$.
\end{remark}
\begin{convention}
\label{01201542}
For the remaining of the paper let $(\mf{M},\mf{s},\mf{u})$ be a fixed semantics for $\mf{Chdv}$ where
$\mf{M}=(\mf{R},\mf{r},\mf{D},\mf{d},\mf{C},\mf{c},\mf{T},\mf{t},\mf{E},\mf{e},\mf{O},\mf{o})$. 
\end{convention}
\begin{definition}
[Collections of trajectories]
\label{03032202}
Let $\mf{D}$ be a category, $\mr{a},\mr{b}\in\mr{Fct}(\mf{D},\mf{Chdv})$, 
$\mf{E}^{\mr{a}}=(\mf{S}^{\mr{a}},\mr{R}^{\mr{a}})\in Exp(\mr{a})$,
$\mf{E}^{\mr{b}}=(\mf{S}^{\mr{b}},\mr{R}^{\mr{b}})\in Exp(\mr{b})$, 
$\mr{T}$ be a link from $\mf{E}^{\mr{b}}$ to $\mf{E}^{\mr{a}}$,
$M,N\in\mf{D}$, $\phi\in \mr{Mor}_{\mf{D}}(M,N)$ and $y,z\in\mr{R}_{N}^{\mr{b}}$. 
Define  
\begin{equation*}
\begin{aligned}
\mr{Tr}(\mr{b};M;\phi,\mr{T},y,z\,\vert\,\mf{S}^{\mr{b}})
&\coloneqq
\{
\mf{f}_{(\upomega,B)}^{\mr{b},M,\mr{b}_{1}^{o}(\phi)y,\mr{b}_{1}^{o}(\phi)z}
\,
\vert
\,
\upomega\in\mr{b}_{3}^{\dagger}(\phi)(z)\mf{S}_{N}^{\mr{b}}(z),
B\in\mr{T}_{3}(M)(\mr{b}_{1}^{o}(\phi)y)\mc{A}_{\mr{a}(M)}((\mr{T}_{1}^{o}(M)\circ\mr{b}_{1}^{o}(\phi))y)_{ob}
\};
\\
\mr{Tr}(\mr{a};N;\mr{T},\phi,y,z\,\vert\,\mf{S}^{\mr{b}})
&\coloneqq
\{
\mf{f}_{(\upomega,B)}^{\mr{a},N,\mr{T}_{1}^{o}(N)y,\mr{T}_{1}^{o}(N)z}
\,\vert\,
\upomega\in\mr{T}_{3}^{\dagger}(N)(z)\mf{S}_{N}^{\mr{b}}(z),
B\in\mr{a}_{3}(\phi)(\mr{T}_{1}^{o}(N)y)\mc{A}_{\mr{a}(M)}((\mr{T}_{1}^{o}(M)\circ\mr{b}_{1}^{o}(\phi))y)_{ob}
\}.
\end{aligned}
\end{equation*}
Furthermore define 
\begin{equation*}
\begin{aligned}
\mc{Tr}(\mr{b};M;\phi,\mr{T},y,z\,\vert\,\mf{E}^{\mr{b}})
&\coloneqq
(\mr{b}_{1}^{m}(\phi)\up\mr{R}_{N}^{\mr{b}}(y,z))^{\dagger}
\mr{Tr}(\mr{b};M;\phi,\mr{T},y,z\,\vert\,\mf{S}^{\mr{b}});
\\
\mc{Tr}(\mr{a};N;\mr{T},\phi,y,z\,\vert\,\mf{E}^{\mr{b}})
&\coloneqq
(\mr{T}_{1}^{m}(N)\up\mr{R}_{N}^{\mr{b}}(y,z))^{\dagger}
\mr{Tr}(\mr{a};N;\mr{T},\phi,y,z\,\vert\,\mf{S}^{\mr{b}}).
\end{aligned}
\end{equation*}
\end{definition}
Let us call
\begin{itemize}
\item
$\mc{Tr}(\mr{b};M;\phi,\mr{T},y,z\,\vert\,\mf{E}^{\mr{b}})$ the collection of trajectories associated 
with the species $\mr{b}$, the context $M$, the argument transformed by $\phi$, 
the state initial conditions transformed by $\phi$ and determined by $z$,
the observable initial conditions transformed by $\mr{T}$ and determined by $y$; 
\item
$\mc{Tr}(\mr{a};N;\mr{T},\phi,y,z\,\vert\,\mf{E}^{\mr{b}})$ the collection of trajectories associated 
with the species $\mr{a}$, the context $N$, the argument transformed by $\mr{T}$, 
the state initial conditions transformed by $\mr{T}$ and determined by $z$, 
the observable initial conditions transformed by $\phi$ and determined by $y$. 
\end{itemize}
Recall that according to Def.\ref{08081839}\eqref{03041351} observable initial conditions transformations can be interpreted as
reference frame transformations.
\par
With in mind Rmk. \ref{12140558} we can now state the following
\begin{proposition}
[\textbf{Equiformity Principle Invariant Form}]
\label{01162038}
Let $\mf{D}$ be a category, 
$\mr{a},\mr{b}\in\mr{Fct}(\mf{D},\mf{Chdv})$, and 
$\mf{E}^{\mr{a}}=(\mf{S}^{\mr{a}},\mr{R}^{\mr{a}})\in Exp(\mr{a})$,
$\mf{E}^{\mr{b}}=(\mf{S}^{\mr{b}},\mr{R}^{\mr{b}})\in Exp(\mr{b})$, 
and $\mr{T}$ be a link from $\mf{E}^{\mr{b}}$ to $\mf{E}^{\mr{a}}$.
Thus for all $M,N\in\mf{D}$, $\phi\in \mr{Mor}_{\mf{D}}(M,N)$, 
$y,z\in\mr{R}_{N}^{\mr{b}}$ 
and $g\in\mr{R}_{N}^{\mr{b}}(y,z)$ we have
for all 
$\uppsi\in\mf{S}_{N}^{\mr{b}}(z)$ and 
$A\in\mc{A}_{\mr{a}(M)}((\mr{T}_{1}^{o}(M)\circ\mr{b}_{1}^{o}(\phi))y)_{ob}$ 
\begin{equation*}
\boxed{
\boxed{
\mf{f}_{(\mr{b}_{3}^{\dagger}(\phi)(z)\uppsi,\mr{T}_{3}(M)(\mr{b}_{1}^{o}(\phi)y)A)}^{\mr{b},M,\mr{b}_{1}^{o}(\phi)y,\mr{b}_{1}^{o}(\phi)z}
(\mr{b}_{1}^{m}(\phi)g)
=
\mf{f}_{(\mr{T}_{3}^{\dagger}(N)(z)\uppsi,\mr{a}_{3}(\phi)(\mr{T}_{1}^{o}(N)y)A)}^{\mr{a},N,\mr{T}_{1}^{o}(N)y,\mr{T}_{1}^{o}(N)z}
(\mr{T}_{1}^{m}(N)g).}}
\end{equation*}
In particular 
\begin{equation*}
\boxed{
\mc{Tr}(\mr{b};M;\phi,\mr{T},y,z\,\vert\,\mf{E}^{\mr{b}})
=
\mc{Tr}(\mr{a};N;\mr{T},\phi,y,z\,\vert\,\mf{E}^{\mr{b}}).}
\end{equation*}
\end{proposition}
\begin{proof}
Straightforward consequence of Def. \ref{01161922}\eqref{01161922st3}.
\end{proof} 
The above equality reads as follows
\begin{remark}
[Equiformity principle for collections of trajectories]
The collection of trajectories associated with the species $\mr{b}$, the context $M$, the argument transformed by $\phi$,  
the state initial conditions transformed by $\phi$ and determined by $z$, 
the observable initial conditions transformed by $\mr{T}$ and determined by $y$ 
equals
the collection of trajectories associated with the species $\mr{a}$, the context $N$, the argument transformed by $\mr{T}$, 
the state initial conditions transformed by $\mr{T}$ and determined by $z$, 
the observable initial conditions transformed by $\phi$ and determined by $y$.
\end{remark}
\begin{remark}
Notice that the second equality in Prp.\ref{01162038} can be put in the following form
\begin{equation*}
(\mr{b}_{1}^{m}(\phi)\up\mr{R}_{N}^{\mr{b}}(y,z))^{\dagger}
\mr{Tr}(\mr{b};M;\phi,\mr{T},y,z\,\vert\,\mf{S}^{\mr{b}})
=
\mc{Tr}(\mr{a};N;\mr{T},\phi,y,z\,\vert\,\mf{E}^{\mr{b}}).
\end{equation*}
\end{remark}
By employing Def. \ref{08081839} and Def. \ref{01200837} and by multiplying the above equality by $\uppsi(\un)^{-1}$
we obtain
\begin{proposition}
[\textbf{Interpretation of the equiformity principle}]
\label{12011304}
Under the hypothesis of Prp. \ref{01162038} and by assuming that $\uppsi(\un)\neq 0$ we have that:
\par
The expectation value in the statistical ensemble 
$\mf{t}(\mr{T}_{3}^{\dagger}(N)(z)\uppsi)$
of the observable resulting next the device 
$\mf{d}(\uptau_{\mr{a}(N)}(\mr{T}_{1}^{m}(N)g))$
applies to the observable
$\mf{o}(\mr{a}_{3}(\phi)(\mr{T}_{1}^{o}(N)y)A)$,
equals
the expectation value in the statistical ensemble 
$\mf{t}(\mr{b}_{3}^{\dagger}(\phi)(z)\uppsi)$
of the observable resulting next the device 
$\mf{d}(\uptau_{\mr{b}(M)}(\mr{b}_{1}^{m}(\phi)g))$
applies to the observable
$\mf{o}(\mr{T}_{3}(M)(\mr{b}_{1}^{o}(\phi)y)A)$.
Here
\begin{itemize}
\item
$\mf{t}(\mr{T}_{3}^{\dagger}(N)(z)\uppsi)\equiv$
resulting next the channel placed in the region $\mf{u}(z)$, 
assembled by the connector adding the charge $\mr{T}$
and transporting into $\mf{s}(\mr{a}(N))$,
applies to the statistical ensemble $\mf{t}(\uppsi)$; 
\item
$\mf{t}(\mr{T}_{3}^{\dagger}(N)(z)\uppsi)\equiv$
resulting next the channel induced by the device 
placed in the region $\mf{u}(z)$, 
assembled by the connector adding the charge $\mr{T}$
and transporting into $\mf{s}(\mr{b}(N))$,
applies to the statistical ensemble $\mf{t}(\uppsi)$; 
\item
$\mf{d}(\uptau_{\mr{a}(N)}(\mr{T}_{1}^{m}(N)g))\equiv$
assembled by the species $\mf{u}(\mr{a})$ and implementing
in the context $\mf{u}(N)$ 
the action $\mf{u}(\mr{T}_{1}^{m}(N)g)$
mapping the region $\mf{u}(\mr{T}_{1}^{o}(N)y)$
into the region $\mf{u}(\mr{T}_{1}^{o}(N)z)$;
\item
$\mf{u}(\mr{T}_{1}^{m}(N)g)\equiv$
resulting next the probe,
assembled by the connector adding the charge $\mr{T}$
and transporting into $\mf{s}(\mr{a}(N))$,
applies to the action $\mf{u}(g)$
mapping the region $\mf{u}(y)$
into the region $\mf{u}(z)$;
\item
$\mf{o}(\mr{a}_{3}(\phi)(\mr{T}_{1}^{o}(N)y)A)\equiv$
resulting next 
the device placed in the region $\mf{u}(\mr{T}_{1}^{o}(N)y)$
assembled by the species $\mf{u}(\mr{a})$ and implementing the 
geometric action $\mf{u}(\phi)$,
applies to the observable $\mf{o}(A)$;
\item
$\mf{u}(\mr{T}_{1}^{o}(N)y)\equiv$
resulting next the probe, 
assembled by the connector adding the charge $\mr{T}$
and transporting into $\mf{s}(\mr{a}(N))$,
applies to the region $\mf{u}(y)$;
\item
$\mf{t}(\mr{b}_{3}^{\dagger}(\phi)(z)\uppsi)\equiv$
resulting next the channel induced by 
the device placed in the region $\mf{u}(z)$,
assembled by the species $\mf{u}(\mr{b})$
and implementing the geometric action $\mf{u}(\phi)$,
applies to the statistical ensemble $\mf{t}(\uppsi)$; 
\item
$\mf{d}(\uptau_{\mr{b}(M)}(\mr{b}_{1}^{m}(\phi)g))\equiv$
assembled by the species $\mf{u}(\mr{b})$ and implementing
in the context $\mf{u}(M)$ 
the action $\mf{u}(\mr{b}_{1}^{m}(\phi)g)$
mapping the region $\mf{u}(\mr{b}_{1}^{o}(\phi)y)$
into the region $\mf{u}(\mr{b}_{1}^{o}(\phi)z)$;
\item
$\mf{u}(\mr{b}_{1}^{m}(\phi)g)\equiv$
resulting next the probe, 
assembled by the species $\mf{u}(\mr{b})$
and implementing the geometric action $\mf{u}(\phi)$,
applies to the action $\mf{u}(g)$
mapping the region $\mf{u}(y)$
into the region $\mf{u}(z)$;
\item
$\mf{o}(\mr{T}_{3}(M)(\mr{b}_{1}^{o}(\phi)y)A)\equiv$
resulting next the device placed in the region $\mf{u}(\mr{b}_{1}^{o}(\phi)y)$,
assembled by the connector adding the charge $\mr{T}$
and transporting into $\mf{s}(\mr{b}(M))$,
applies to the observable $\mf{o}(A)$;
\item
$\mf{s}(\mr{a}(N))\equiv$
dynamical pattern associated with the context $\mf{u}(N)$ and implemented by the species $\mf{u}(\mr{a})$;
\item
$\mf{s}(\mr{b}(M))\equiv$
dynamical pattern associated with the context $\mf{u}(M)$ and implemented by the species $\mf{u}(\mr{b})$.
\end{itemize}
\end{proposition}
\begin{definition}
$\intercal$ is the conjugate over the second place of variability
of a map of maps.
\end{definition}
Next Prp. \ref{10201503} and Prp. \ref{01162038pr} follow by Prp. \ref{01162038}. 
\begin{proposition}
[Equiformity Principle Equivariant Form]
\label{10201503}
Under the hypothesis of Prp. \ref{01162038} we have 
\begin{multline*}
(\mr{a}_{3}^{\dagger}(\phi)(\mr{T}_{1}^{o}(N)y)
\circ
(\mr{T}_{1}^{m}(N)\up\mr{R}_{N}^{\mr{b}}(y,z))^{\intercal})\,
\mf{t}^{\mr{a},N,\mr{T}_{1}^{o}(N)y,\mr{T}_{1}^{o}(N)z,\mr{T}_{3}^{\dagger}(N)(z)\uppsi}
=\\
(\mr{T}_{3}^{\dagger}(M)(\mr{b}_{1}^{o}(\phi)y)
\circ
(\mr{b}_{1}^{m}(\phi)
\up\mr{R}_{N}^{\mr{b}}(y,z))^{\intercal})\,
\mf{t}^{\mr{b},M,\mr{b}_{1}^{o}(\phi)y,\mr{b}_{1}^{o}(\phi)z,\mr{b}_{3}^{\dagger}(\phi)(z)\uppsi}.
\end{multline*}
\end{proposition}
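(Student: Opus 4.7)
The plan is to evaluate both sides at an arbitrary observable and an arbitrary morphism in $\ms{R}_{N}^{\ms{b}}(y,z)$, and reduce the resulting pointwise equality to Prp.~\ref{01162038}.

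First I would fix $A\in\mc{A}_{\ms{a}(M)}((\ms{T}_{1}^{o}(M)\circ\ms{b}_{1}^{o}(\phi))y)_{ob}$ and $g\in\ms{R}_{N}^{\ms{b}}(y,z)$, and unfold the left-hand side. By Def.~\ref{10201408}, $\mf{t}^{\ms{a},N,\ms{T}_{1}^{o}(N)y,\ms{T}_{1}^{o}(N)z,\ms{T}_{3}^{\dagger}(N)(z)\uppsi}$ sends an observable $A'$ to the function $\mf{f}^{\ms{a},N,\ms{T}_{1}^{o}(N)y,\ms{T}_{1}^{o}(N)z}_{(\ms{T}_{3}^{\dagger}(N)(z)\uppsi,A')}$ on $\mr{G}_{\ms{a}(N)}(\ms{T}_{1}^{o}(N)y,\ms{T}_{1}^{o}(N)z)$. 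The operator $(\ms{T}_{1}^{m}(N)\up\ms{R}_{N}^{\ms{b}}(y,z))^{\intercal}$ acts by conjugation on the second place of variability of this map of maps, i.e.\ it substitutes $\ms{T}_{1}^{m}(N)g$ for the generic morphism argument; the operator $\ms{a}_{3}^{\dagger}(\phi)(\ms{T}_{1}^{o}(N)y)$, read through Conv.~\ref{11211940} as precomposition along the observable pullback $\ms{a}_{3}(\phi)(\ms{T}_{1}^{o}(N)y)$, substitutes $\ms{a}_{3}(\phi)(\ms{T}_{1}^{o}(N)y)A$ for $A'$. The naturality identity $\ms{a}_{1}^{o}(\phi)\circ\ms{T}_{1}^{o}(N)=\ms{T}_{1}^{o}(M)\circ\ms{b}_{1}^{o}(\phi)$ guarantees that the resulting domains match. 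Hence the left-hand side evaluated at $(A,g)$ equals
\[
\mf{f}^{\ms{a},N,\ms{T}_{1}^{o}(N)y,\ms{T}_{1}^{o}(N)z}_{(\ms{T}_{3}^{\dagger}(N)(z)\uppsi,\,\ms{a}_{3}(\phi)(\ms{T}_{1}^{o}(N)y)A)}(\ms{T}_{1}^{m}(N)g).
\]

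By the same unfolding on the other side, the right-hand side at $(A,g)$ equals
\[
\mf{f}^{\ms{b},M,\ms{b}_{1}^{o}(\phi)y,\ms{b}_{1}^{o}(\phi)z}_{(\ms{b}_{3}^{\dagger}(\phi)(z)\uppsi,\,\ms{T}_{3}(M)(\ms{b}_{1}^{o}(\phi)y)A)}(\ms{b}_{1}^{m}(\phi)g).
\]
Now Prp.~\ref{01162038}, applied to the same data $(\phi,y,z,g,A,\uppsi)$, is exactly the assertion that these two scalars coincide. Since $A$ and $g$ were arbitrary, the two maps of maps in the statement agree, which is the desired identity.

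The only non-routine step is the identification of the two operator compositions with the appropriate substitutions inside $\mf{f}$, and in particular reading $\ms{a}_{3}^{\dagger}$ and $\ms{b}_{3}^{\dagger}$ as the transposes that implement observable precomposition on the first place of variability of the trajectory rather than as genuine duals acting on statistical ensembles; once this dictionary is set up and the naturality of $\ms{T}_{1}^{o}$ is used for domain compatibility, the proposition reduces to a rewriting of Prp.~\ref{01162038}, and I anticipate no deeper obstacle beyond this bookkeeping.
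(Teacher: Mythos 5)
Your proposal is correct and matches the paper's own (unwritten-out) argument: the paper simply asserts that Prp.~\ref{10201503} follows from Prp.~\ref{01162038}, and your unfolding of $\intercal$ and of the dagger operators, reducing both sides evaluated at $(A,g)$ to the two sides of the invariant form, is exactly that reduction. The only blemish is a slip in your closing remark, where the outer transpose on the right-hand side is $\ms{T}_{3}^{\dagger}(M)(\ms{b}_{1}^{o}(\phi)y)$ rather than $\ms{b}_{3}^{\dagger}$ (which acts genuinely on $\uppsi$ in the superscript); your displayed formulas already have this right.
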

\begin{proposition}
[\textbf{Equiformity Principle for Propensities}]
\label{01162038pr}
In addition to the hypothesis of Prp. \ref{01162038} let $\uppsi(\un)\neq 0$ and
$e\in\mr{Ef}(\mc{A}_{\mr{a}(M)}((\mr{T}_{1}^{o}(M)\circ\mr{b}_{1}^{o}(\phi))y))$ thus,
\begin{equation*}
\mf{d}_{(\mr{T}_{3}^{\dagger}(N)(z),\uppsi,
\mr{a}_{3}(\phi)(\mr{T}_{1}^{o}(N)y)e)}^{\mr{a},N,\mr{T}_{1}^{o}(N)y,\mr{T}_{1}^{o}(N)z}
(\mr{T}_{1}^{m}(N)g)
=
\mf{d}_{(\mr{b}_{3}^{\dagger}(\phi)(z),\uppsi,\mr{T}_{3}(M)(\mr{b}_{1}^{o}(\phi)y)e)}^{\mr{b},M,
\mr{b}_{1}^{o}(\phi)y,\mr{b}_{1}^{o}(\phi)z}
(\mr{b}_{1}^{m}(\phi)g).
\end{equation*}
\end{proposition}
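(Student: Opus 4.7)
The plan is to reduce the claimed equality of $\mf{d}$-values directly to the equality of $\mf{f}$-values already established in Proposition \ref{01162038}. The key observation is the rescaling relation recorded immediately after Definition \ref{01162119},
\[
\upomega(\un)\,\mf{d}_{(J,\upomega,e)}^{\ms{b},N;u,v}(l)
=\mf{f}_{(J(\upomega),e)}^{\ms{b},N;u,v}(l),
\]
which says that whenever $\upomega(\un)\neq 0$ the value of $\mf{d}$ coincides with the value of $\mf{f}$ at the transformed statistical ensemble $J(\upomega)$, rescaled by $\upomega(\un)^{-1}$.

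Applied with $\upomega=\uppsi$ to both sides of the target identity, the left-hand side becomes
\[
\uppsi(\un)^{-1}\,
\mf{f}_{(\ms{T}_{3}^{\dagger}(N)(z)\uppsi,\,\ms{a}_{3}(\phi)(\ms{T}_{1}^{o}(N)y)e)}^{\ms{a},N,\ms{T}_{1}^{o}(N)y,\ms{T}_{1}^{o}(N)z}(\ms{T}_{1}^{m}(N)g),
\]
and the right-hand side becomes
\[
\uppsi(\un)^{-1}\,
\mf{f}_{(\ms{b}_{3}^{\dagger}(\phi)(z)\uppsi,\,\ms{T}_{3}(M)(\ms{b}_{1}^{o}(\phi)y)e)}^{\ms{b},M,\ms{b}_{1}^{o}(\phi)y,\ms{b}_{1}^{o}(\phi)z}(\ms{b}_{1}^{m}(\phi)g).
\]

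To conclude, I would invoke Proposition \ref{01162038} with the choice $A\coloneqq e$. This is legitimate because the effect $e$ is a positive, and in particular self-adjoint, element of $\mc{A}_{\ms{a}(M)}((\ms{T}_{1}^{o}(M)\circ\ms{b}_{1}^{o}(\phi))y)$, so $e\in\mc{A}_{\ms{a}(M)}((\ms{T}_{1}^{o}(M)\circ\ms{b}_{1}^{o}(\phi))y)_{ob}$ as required there. Proposition \ref{01162038} then yields the equality of the two $\mf{f}$-values just displayed, and dividing through by the nonzero scalar $\uppsi(\un)$ produces the desired identity of $\mf{d}$-values. The argument is essentially routine: Proposition \ref{01162038} does all the nontrivial work, and there is no significant obstacle beyond recording the inclusion $Ef(\mc{A})\subset\mc{A}_{ob}$ and using the hypothesis $\uppsi(\un)\neq 0$ to justify the division.
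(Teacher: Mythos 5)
Your proof is correct and follows exactly the route the paper intends: the paper's own justification is simply that Prp. \ref{01162038pr} follows from Prp. \ref{01162038}, and you have supplied the two routine ingredients making this precise, namely the rescaling identity $\upomega(\un)\,\mf{d}_{(J,\upomega,e)}^{\ms{b},N;u,v}(l)=\mf{f}_{(J(\upomega),e)}^{\ms{b},N;u,v}(l)$ together with the inclusion $Ef(\mc{A})\subset\mc{A}_{ob}$ that legitimizes taking $A=e$ in Prp. \ref{01162038}. Nothing further is needed.
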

The next is a preparatory result to Thm. \ref{10081910}
\begin{lemma}
\label{01011521}
Let $\mf{D}$ be a category, $\mr{a},\mr{b}\in\mr{Fct}(\mf{D},\mf{Chdv})$, 
and $\mr{T}\in \mr{Mor}_{\mr{Fct}(\mf{D},\mf{Chdv})}(\mr{a},\mr{b})$
then for all $M,N\in\mc{D}$, $\phi\in \mr{Mor}_{\mf{D}}(M,N)$,
$y,z\in\mr{G}_{\mr{b}(N)}$ and $g\in\mr{G}_{\mr{b}(N)}(y,z)$ 
the following is a commutative diagram in $\mr{ptsa}$ 
\begin{equation*}
\xymatrix{
\mc{A}_{\mr{a}(M)}((\mr{T}_{1}^{o}(M)\circ\mr{b}_{1}^{o}(\phi))y)
\ar[rrrr]^{\mr{T}_{3}(M)(\mr{b}_{1}^{o}(\phi)y)}
\ar[rdd]^{\uptau_{\mr{a}(M)}((\mr{T}_{1}^{m}(M)\circ\mr{b}_{1}^{m}(\phi))g)}
\ar[dddddd]^{\mr{a}_{3}(\phi)(\mr{T}_{1}^{o}(N)y)}
& & & &
\mc{A}_{\mr{b}(M)}(\mr{b}_{1}^{o}(\phi)y)
\ar[ldd]_{\uptau_{\mr{b}(M)}(\mr{b}_{1}^{m}(\phi)g)}
\ar[dddddd]^{\mr{b}_{3}(\phi)y}
\\
&&&&
\\
&
\mc{A}_{\mr{a}(M)}((\mr{T}_{1}^{o}(M)\circ\mr{b}_{1}^{o}(\phi))z)
\ar[rr]^{\mr{T}_{3}(M)(\mr{b}_{1}^{o}(\phi)z)}
\ar[dd]^{\mr{a}_{3}(\phi)(\mr{T}_{1}^{o}(N)z)}
& &
\mc{A}_{\mr{b}(M)}(\mr{b}_{1}^{o}(\phi)z)
\ar[dd]^{\mr{b}_{3}(\phi)z}
&
\\
&&&&
\\
&
\mc{A}_{\mr{a}(N)}(\mr{T}_{1}^{o}(N)z)
\ar[rr]^{\mr{T}_{3}(N)z}
& &
\mc{A}_{\mr{b}(N)}(z)
&
\\
&&&&
\\
\mc{A}_{\mr{a}(N)}(\mr{T}_{1}^{o}(N)y)
\ar[rrrr]_{\mr{T}_{3}(N)y}
\ar[ruu]_{\uptau_{\mr{a}(N)}(\mr{T}_{1}^{m}(N)g)}
& & & &
\mc{A}_{\mr{b}(N)}(y)
\ar[luu]^{\uptau_{\mr{b}(N)}(g)}
}
\end{equation*}
and the following is a commutative diagram in $\mr{ptls}$ 
\begin{equation*}
\xymatrix{
\mc{A}^{\ast}_{\mr{a}(M)}((\mr{T}_{1}^{o}(M)\circ\mr{b}_{1}^{o}(\phi))y)
& & & &
\mc{A}^{\ast}_{\mr{b}(M)}(\mr{b}_{1}^{o}(\phi)y)
\ar[llll]_{\mr{T}_{2}(M)(\mr{b}_{1}^{o}(\phi)y)}
\\
&&&&
\\
&
\mc{A}^{\ast}_{\mr{a}(M)}((\mr{T}_{1}^{o}(M)\circ\mr{b}_{1}^{o}(\phi))z)
\ar[luu]_{\uptau^{\dagger}_{\mr{a}(M)}((\mr{T}_{1}^{m}(M)\circ\mr{b}_{1}^{m}(\phi))g)}
& &
\mc{A}^{\ast}_{\mr{b}(M)}(\mr{b}_{1}^{o}(\phi)z)
\ar[ll]_{\mr{T}_{2}(M)(\mr{b}_{1}^{o}(\phi)z)}
\ar[ruu]^{\uptau^{\dagger}_{\mr{b}(M)}(\mr{b}_{1}^{m}(\phi)g)}
&
\\
&&&&
\\
&
\mc{A}^{\ast}_{\mr{a}(N)}(\mr{T}_{1}^{o}(N)z)
\ar[uu]_{\mr{a}_{2}(\phi)(\mr{T}_{1}^{o}(N)z)}
\ar[ldd]^{\uptau^{\dagger}_{\mr{a}(N)}(\mr{T}_{1}^{m}(N)g)}
& &
\mc{A}^{\ast}_{\mr{b}(N)}(z)
\ar[ll]_{\mr{T}_{2}(N)z}
\ar[uu]_{\mr{b}_{2}(\phi)z}
\ar[rdd]_{\uptau^{\dagger}_{\mr{b}(N)}(g)}
&
\\
&&&&
\\
\mc{A}^{\ast}_{\mr{a}(N)}(\mr{T}_{1}^{o}(N)y)
\ar[uuuuuu]_{\mr{a}_{2}(\phi)(\mr{T}_{1}^{o}(N)y)}
& & & &
\mc{A}^{\ast}_{\mr{b}(N)}(y)
\ar[llll]^{\mr{T}_{2}(N)y} 
\ar[uuuuuu]_{\mr{b}_{2}(\phi)y}
}
\end{equation*}
\end{lemma}
\begin{proof}
Since $\mr{T}$ is a natural transformation from $\mr{a}$ to $\mr{b}$, 
the following is a commutative diagram in $\mf{Chdv}$ for all 
$M,N\in\mf{D}$ and $\phi\in \mr{Mor}_{\mf{D}}(M,N)$
\begin{equation}
\label{01011237}
\xymatrix{
\mr{a}(M)\ar[dd]^{\mr{a}(\phi)}\ar[rr]^{\mr{T}(M)}&&\mr{b}(M)\ar[dd]^{\mr{b}(\phi)}
\\
&&
\\
\mr{a}(N)\ar[rr]_{\mr{T}(N)}&&\mr{b}(N).
}
\end{equation}
Next let us call square $c$ the central subsquare in the first diagram of the statement, 
and square $u$, $d$, $l$ and $r$ the up, down, left and right subsquare of the first 
diagram in the statement respectively; moreover let $\star$ denote the commutativity 
of the diagram in Rmk. \ref{09090657}.
The squares $u$ and $d$ are commutative since $\star$ and the arrows 
$\mr{a}(M)\xrightarrow{\mr{T}(M)}\mr{b}(M)$ and $\mr{a}(N)\xrightarrow{\mr{T}(N)}\mr{b}(N)$. 
The square $l$ is commutative since $\star$, the arrow $\mr{a}(M)\xrightarrow{\mr{a}(\phi)}\mr{a}(N)$ 
and the $1^{th}$ component of the commutativity of the diagram \eqref{01011237}, i.e.
\begin{equation}
\label{11210858}
\mr{T}_{1}(M)\circ\mr{b}_{1}(\phi)=\mr{a}_{1}(\phi)\circ\mr{T}_{1}(N).
\end{equation}
The square $r$ is commutative since $\star$ and the arrow $\mr{b}(M)\xrightarrow{\mr{b}(\phi)}\mr{b}(N)$,
finally the square $c$ is commutative since 
the $3^{th}$ component of the commutativity of the diagram
in \eqref{01011237}, i.e. 
$\mr{b}_{3}(\phi)\circ(\mr{T}_{3}(M)\ast\un_{\mr{b}_{1}(\phi)}) 
=
\mr{T}_{3}(N)\circ(\mr{a}_{3}(\phi)\ast\un_{\mr{T}_{1}(N)})$.
By denoting with $\natural$ the commutativity of the diagram in Rmk. \ref{09081318}
we obtain the commutativity of the second diagram in the statement just by following the same 
line of reasoning used for the first one, by replacing $\star$ by $\natural$ and 
by taking the $2^{th}$ component of the commutativity of the diagram in \eqref{01011237}
instead of the $3^{th}$ one.
\end{proof}
In what follows see also Rmk.\ref{03141425} about an interpretation of $\mr{b}_{1}^{m}(\phi)g$ with the obvious replacements.
\begin{proposition}
[Invariance under reference frame transformations]
\label{03151821}
Under the hypothesis of Lemma \ref{01011521} assume in addition that $\phi$ is invertible thus, 
for every $\omega\in\mf{P}_{\mc{A}_{\mr{b}(N)}(z)}$ and $a\in\mc{A}_{\mr{b}(N)}(y)_{ob}$ we have 
\begin{equation*}
\mf{f}_{(\omega,a)}^{\mr{b},N,y,z}(g)=
\mf{f}_{(\mr{b}_{3}^{\dagger}(\phi)(z)\omega,\mr{b}_{3}(\phi^{-1})(\mr{b}_{1}^{o}(\phi)y)a)}^{\mr{b},M,\mr{b}_{1}^{o}(\phi)y,\mr{b}_{1}^{o}(\phi)z)}
(\mr{b}_{1}^{m}(\phi)g).
\end{equation*}
In particular if $\mf{E}^{\mr{b}}$ is the standard experimental setting associated with $\mr{b}$, then 
\begin{equation*}
\begin{aligned}
\mc{Tr}(\mr{b};N;\un_{\mr{b}\rightarrow\mr{b}},\un_{N\rightarrow N},y,z\,\vert\,\mf{E}^{\mr{b}})
&=
\mr{b}_{1}^{m}(\phi)^{\dagger}
\mc{Tr}(\mr{b};M;\un_{\mr{b}\rightarrow\mr{b}},\phi^{-1},\mr{b}_{1}^{o}(\phi)y,\mr{b}_{1}^{o}(\phi)z\,\vert\,\mf{E}^{\mr{b}})
\\
&=
\mc{Tr}(\mr{b};M;\phi,\un_{\mr{b}\rightarrow\mr{b}},y,z\,\vert\,\mf{E}^{\mr{b}}).
\end{aligned}
\end{equation*}
\end{proposition}
\begin{proof}
$\mr{b}(\phi^{-1})=\mr{b}(\phi)^{-1}$ in the category $\mf{Chdv}$ since $\mr{b}$ is a functor, 
thus, $(\mr{b}_{3}(\phi)y)\circ(\mr{b}_{3}(\phi^{-1})(\mr{b}_{1}^{o}(\phi)y))=\un_{\mc{A}_{\mr{b}(N)}(y)}$ then the first equality in
the statement follows since the right subdiagram of the first diagram in Lemma \ref{01011521}. The second equality follows
since the first one.
\end{proof}
\begin{remark}
The above result states that given the reference frame transformation $\mr{b}_{3}(\phi^{-1})(\mr{b}_{1}^{o}(\phi)y)$
we can rearrange in the transformed reference frame the same experiments performed in the untransformed reference frame.
\end{remark}
In what follows see also Rmk.\ref{03141425} about an interpretation of $\mr{b}_{1}^{m}(\phi^{-1})h$.
\begin{proposition}
\label{11171604}
Under the hypothesis of Lemma \ref{01011521} we have that
\begin{enumerate}
\item
if $\phi$ is invertible, then for all $h\in\mr{G}_{\mr{b}(M)}(\mr{b}_{1}^{o}(\phi^{-1})y,\mr{b}_{1}^{o}(\phi^{-1})z)$
\begin{equation*}
\uptau_{\mr{b}(N)}(\mr{b}_{1}^{m}(\phi^{-1})h) 
\circ
(\mr{b}_{3}(\phi)y)
=
(\mr{b}_{3}(\phi)z)
\circ
\uptau_{\mr{b}(M)}(h);
\end{equation*}
\label{11171604st1}
\item
if 
$\mr{G}_{\mr{b}(N)}(y,z)=\mr{G}_{\mr{a}(N)}(\mr{T}_{1}^{o}(N)y,\mr{T}_{1}^{o}(N)z)$
and 
$\mr{T}_{1}^{m}(N)\up\mr{G}_{\mr{b}(N)}(y,z)=\un_{\mr{G}_{\mr{b}(N)}(y,z)}$,
then
\begin{equation*}
\uptau_{\mr{b}(N)}(g)
\circ
(\mr{T}_{3}(N)y)
=
(\mr{T}_{3}(N)z)
\circ
\uptau_{\mr{a}(N)}(g).
\end{equation*}
\label{11171604st2}
\end{enumerate}
\end{proposition}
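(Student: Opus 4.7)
The plan is to deduce both statements by isolating appropriate sub-squares of the first (covariant) commutative diagram of Lemma \ref{01011521} and then exploiting the extra hypotheses of each item with a single algebraic substitution.

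For \eqref{11171604st1}, I would extract the right subsquare (``$r$'' in the notation of the proof of Lemma \ref{01011521}), which for every $g\in\mr{G}_{\ms{b}(N)}(y,z)$ reads
\begin{equation*}
\ms{b}_{3}(\phi)z\circ\uptau_{\ms{b}(M)}(\ms{b}_{1}^{m}(\phi)g)
=\uptau_{\ms{b}(N)}(g)\circ\ms{b}_{3}(\phi)y.
\end{equation*}
Since $\phi$ is an isomorphism in $\mf{D}$ and $\ms{b}$ is a functor into $\mf{Chdv}$, $\ms{b}(\phi)$ is invertible in $\mf{Chdv}$ with inverse $\ms{b}(\phi^{-1})$. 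Reading off the first component and using the contravariant composition rule on the functor parts encoded in \eqref{12312113}, $\ms{b}_{1}(\phi^{-1})$ is a two-sided inverse to $\ms{b}_{1}(\phi)$, so $\ms{b}_{1}^{m}(\phi)$ restricts to a bijection $\mr{G}_{\ms{b}(N)}(y,z)\to\mr{G}_{\ms{b}(M)}(\ms{b}_{1}^{o}(\phi)y,\ms{b}_{1}^{o}(\phi)z)$ with inverse $\ms{b}_{1}^{m}(\phi^{-1})$. Taking $g\coloneqq\ms{b}_{1}^{m}(\phi^{-1})h$ and substituting produces the claimed identity.

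For \eqref{11171604st2}, I would read off the lower subsquare (``$d$'') of the same diagram, which for every $g\in\mr{G}_{\ms{b}(N)}(y,z)$ gives
\begin{equation*}
\ms{T}_{3}(N)z\circ\uptau_{\ms{a}(N)}(\ms{T}_{1}^{m}(N)g)
=\uptau_{\ms{b}(N)}(g)\circ\ms{T}_{3}(N)y.
\end{equation*}
The assumption $\mr{G}_{\ms{b}(N)}(y,z)=\mr{G}_{\ms{a}(N)}(\ms{T}_{1}^{o}(N)y,\ms{T}_{1}^{o}(N)z)$ guarantees that $\uptau_{\ms{a}(N)}$ can be evaluated on $g$, while $\ms{T}_{1}^{m}(N)\up\mr{G}_{\ms{b}(N)}(y,z)=\un_{\mr{G}_{\ms{b}(N)}(y,z)}$ collapses $\ms{T}_{1}^{m}(N)g$ to $g$, immediately yielding the required equation.

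Neither step presents a genuine obstacle, since each reduces to reading off one sub-diagram of Lemma \ref{01011521} followed by a one-line substitution. The only point that deserves a brief check is, in \eqref{11171604st1}, that invertibility of $\phi$ in $\mf{D}$ transports through the functor $\ms{b}$ to invertibility of the functor component $\ms{b}_{1}(\phi)$, which is purely formal: it follows from the functoriality of $\ms{b}$ combined with the composition rule for morphisms of $\mf{Chdv}$ in \eqref{12312113}.
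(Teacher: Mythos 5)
Your proposal is correct and is essentially the paper's own argument: the paper likewise observes that functoriality of $\ms{b}$ gives $\ms{b}(\phi^{-1})=\ms{b}(\phi)^{-1}$ in $\mf{Chdv}$, hence $\ms{b}_{1}^{o}(\phi^{-1})=\ms{b}_{1}^{o}(\phi)^{-1}$ and $\ms{b}_{1}^{m}(\phi^{-1})=\ms{b}_{1}^{m}(\phi)^{-1}$, and then reads off statements \eqref{11171604st1} and \eqref{11171604st2} from the right and bottom subsquares, respectively, of the first commutative diagram in Lemma \ref{01011521}. The substitution $g\coloneqq\ms{b}_{1}^{m}(\phi^{-1})h$ in the first item and the collapse $\ms{T}_{1}^{m}(N)g=g$ in the second are exactly the intended one-line steps.
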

\begin{proof}
$\mr{b}(\phi^{-1})=\mr{b}(\phi)^{-1}$ in the category $\mf{Chdv}$
since $\mr{b}$ is a functor,
in particular
$\mr{b}_{1}^{o}(\phi^{-1})=\mr{b}_{1}^{o}(\phi)^{-1}$ 
and 
$\mr{b}_{1}^{m}(\phi^{-1})=\mr{b}_{1}^{m}(\phi)^{-1}$. 
Then sts. \eqref{11171604st1} and \eqref{11171604st2} 
follow since the right and the bottom squares
of the first commutative diagram in Lemma \ref{01011521}
respectively. 
\end{proof}
Let us comment the above result.
In \cite{28fv} given a theory namely a functor say $\mc{B}$
from the category of globally hyperbolic spacetimes $\mr{Loc}$ 
to $\mr{Phys}$ one obtains that: 
\begin{enumerate}
\item
The relative Cauchy evolution $rce_{M}^{\mc{B}}$ for any object $M$ of $\mr{Loc}$
is a geometric object:
\emph{It depends uniquely by the morphism map of $\mc{B}$},
being a composition of four its factorizations see \cite[p.16]{28fv}.
\item
$rce^{\mc{B}}$ is equivariant under action of the morphisms of $\mr{Loc}$ implemented by $\mc{B}$
\cite[Prp. 3.7(eq. 3.6)]{28fv}.
\item
Given a theory $\mc{A}$ and a suitable natural transformation $\zeta$ from $\mc{A}$ to $\mc{B}$,
$rce^{\mc{A}}$ and $rce^{\mc{B}}$ are $\zeta-$related 
\cite[Prp. 3.8]{28fv}.
\end{enumerate}
Instead in our approach the concept of dynamics is structured in a wholly general sense
in the object set of $\mf{dp}$, hence in that of the target category $\mf{Chdv}$,
rather than constructed via the morphism set of the source category.
Thus for a fixed but general functor $\mr{b}$ from a category $\mf{D}$ to the category $\mf{Chdv}$
one obtains that:
\begin{enumerate}
\item 
The dynamics $\uptau_{\mr{b}(M)}$ for any object $M$ of $\mf{D}$ is \emph{not} a geometric object.
$\uptau_{\mr{b}(M)}$ does \textbf{not} present any other relationship with $\mr{b}_{3}$
other than the symmetries appearing in the first diagram in Lemma \ref{01011521}.
In particular $\uptau_{\mr{b}(M)}$ does \textbf{not} factorizes through $\mr{b}_{3}$.
\label{04251333a}
\item
$\uptau_{\mr{b}(\cdot)}$ is equivariant under action of the morphisms of $\mf{D}$
implemented by $\mr{b}_{3}$ as shown in the right square of the first diagram in Lemma \ref{01011521}.
As a particular result Prp. \ref{11171604}\eqref{11171604st1} follows 
which can be considered a generalization of \cite[Prp. 3.7 (eq. 3.6)]{28fv}.
\label{04251333b}
\item
Given a functor $\mr{a}$ from $\mf{D}$ to $\mf{Chdv}$
and a natural transformation $\mr{T}$ from $\mr{a}$ to $\mr{b}$,
$\uptau_{\mr{a}(\cdot)}$ and $\uptau_{\mr{b}(\cdot)}$ 
are $\mr{T}-$related as shown in the botton square of the first diagram in Lemma \ref{01011521}.
As a particular result Prp. \ref{11171604}\eqref{11171604st2} follows
which can be considered a generalization of \cite[Prp. 3.8]{28fv}.
\label{04251333c}
\item
But the fundamental point is that Prp. \ref{11171604}\eqref{11171604st1} 
and Prp. \ref{11171604}\eqref{11171604st2} are not uncorrelated symmetries, 
rather they emerge as very specific outcomes of the unique general \textbf{equiformity principle of $\mr{T}$}
Thm. \ref{10081910}\eqref{10081910st5}.
This principle follows by the \emph{entire first diagram in Lemma \ref{01011521}}
which has not correspondence in \cite{28fv}.
\label{04251333d}
\end{enumerate}
In the next definition we introduce the component required 
in order to state Thm. \ref{10081910}.
\begin{definition}
\label{10081523}
Let $\mf{D}$ be a category, $\mr{a},\mr{b}\in\mr{Fct}(\mf{D},\mf{Chdv})$,
$\mf{E}=(\mc{S},\mr{R})\in Exp(\mr{b})$ and 
$\mr{T}\in \mr{Mor}_{\mr{Fct}(\mf{D},\mf{Chdv})}(\mr{a},\mr{b})$.
Let 
\begin{equation*}
\Upgamma(\mf{E},\mr{T})
\coloneqq
\{s\in\prod_{M\in\mf{D}}
\mr{Mor}_{\mr{set}}(\mr{T}_{1}^{o}(M)(\mr{Obj}(\mr{R}_{M})),\mr{Obj}(\mr{R}_{M}))
\,\vert\,
(\ref{10111706},\ref{10111708})
\},
\end{equation*}
where
\begin{enumerate}
\item
$s_{M}\circ\mr{T}_{1}^{o}(M)=\un_{c(s_{M})}$
for all $M\in\mf{D}$;
\label{10111706}
\item
$\mr{b}_{1}^{o}(\phi)\circ s_{N}
=
s_{M}\circ\mr{a}_{1}^{o}(\phi)$,
for all $M,N\in\mf{D}$ and $\phi\in \mr{Mor}_{\mf{D}}(M,N)$.
\label{10111708}
\end{enumerate}
Next let $s\in\Upgamma(\mf{E},\mr{T})$ define
$\mr{T}[\mr{R},s]$ to be the map on $\mr{Obj}(\mf{D})$ mapping any $M\in\mf{D}$ to the 
unique subcategory $\mr{T}[\mr{R},s]_{M}$ of $\mr{G}_{M}^{\mr{a}}$ such that 
\begin{equation*}
\begin{aligned}
\mr{Obj}(\mr{T}[\mr{R},s]_{M})&=\mr{T}_{1}^{o}(M)(\mr{Obj}(\mr{R}_{M})),
\\
\mr{Mor}_{\mr{T}[\mr{R},s]_{M}}(u,t)&=\mr{T}_{1}^{m}(M)(\mr{Mor}_{\mr{R}_{M}}(s_{M}(u),s_{M}(t))),
\\
\forall u,t&\in \mr{Obj}(\mr{T}[\mr{R},s]_{M}).
\end{aligned}
\end{equation*}
Moreover define 
\begin{equation*}
\mr{T}[\mc{S},s]
\in
\prod_{M\in\mf{D}}
\prod_{t\in\mr{T}[\mr{R},s]_{M}}
\mc{P}(\mf{P}_{M}^{\mr{a}}(t)),
\end{equation*}
such that for all $M\in\mf{D}$ and $t\in\mr{T}[\mr{R},s]_{M}$
\begin{equation*}
\mr{T}[\mc{S},s]_{M}(t)
\coloneqq
\mr{T}_{3}^{\dagger}(M)(s_{M}(t))(\mc{S}_{M}(s_{M}(t))).
\end{equation*}
Finally set 
$\mr{T}[\mf{E},s]\coloneqq(\mr{T}[\mc{S},s],\mr{T}[\mr{R},s])$.
\end{definition}
\begin{remark}
Assume the notation of Def. \ref{10081523}, 
since \eqref{11210858} and $\mf{E}\in Exp(\mr{b})$ we deduce that
$\mr{a}_{1}^{o}(\phi)\mr{T}_{1}^{o}(N)(\mr{Obj}(\mr{R}_{N}))
\subseteq
\mr{T}_{1}^{o}(M)(\mr{Obj}(\mr{R}_{M}))$
rendering Def. \ref{10081523}\eqref{10111708} consistent.
\end{remark}
\begin{remark}
\label{10201649}
Assume the notation of Def. \ref{10081523}, 
and that $\mr{T}_{1}^{o}(M)$ is injective for all $M\in\mf{D}$,
then since \eqref{11210858} we have that
$M\mapsto\mr{T}_{1}^{o}(M)^{-1}\up\mr{T}_{1}^{o}(M)(\mr{Obj}(\mr{R}_{M}))\in\Upgamma(\mf{E},\mr{T})$,
called the projection associated with $\mf{E}$ and $\mr{T}$. 
\end{remark}
The next is the first main result of this paper.
For any connector $\mr{T}$, namely a natural transformation 
between two species contextualized on the same category,
any experimental setting $\mf{Q}$ of the target species of $\mr{T}$ 
and any $s\in\Upgamma(\mf{Q},\mr{T})$
we prove that the above defined $\mr{T}[\mf{Q},s]$
is an experimental setting of the source species of $\mr{T}$ 
and that $\mr{T}$ is a link from $\mf{Q}$ to $\mr{T}[\mf{Q},s]$,
consequently with $\mr{T}$ remains associated a collection of 
equiformity principles.
\begin{theorem}
[\textbf{The Equiformity Principle for a Natural Transformation $1$}]
\label{10081910}
Let $\mf{D}$ be a category, $\mr{a},\mr{b}\in\mr{Fct}(\mf{D},\mf{Chdv})$,
$\mf{E}\in Exp(\mr{b})$ 
and 
$\mr{T}\in \mr{Mor}_{\mr{Fct}(\mf{D},\mf{Chdv})}(\mr{a},\mr{b})$
such that $\Upgamma(\mf{E},\mr{T})\neq\emptyset$\footnote{for instance 
whenever $\mr{T}_{1}^{o}(M)$ is injective for all $M\in\mf{D}$
see Rmk. \ref{10201649}.}.
Thus for all $s\in\Upgamma(\mf{E},\mr{T})$
\begin{enumerate}
\item
$\mr{T}[\mf{E},s]\in Exp(\mr{a})$;
\label{10081910st1}
\item
$\mr{T}$ is a link from $\mf{E}$ to $\mr{T}[\mf{E},s]$;
\label{10081910st2}
\item
if either $\mr{T}_{3}^{\dagger}=\mr{T}_{2}$ or $\mr{a}$ factorizes through $\ps{\Uppsi}$
then $\mr{T}[\mf{E},s]$ is complete;
\label{10081910st3}
\item
if $\mf{E}\in Exp^{\ast}(\mr{b})$, then $\mr{T}[\mf{E},s]\in Exp^{\ast}(\mr{a})$;
\label{10081910st4}
\item
The Equiformity Principle holds true for the natural transformation $\mr{T}$:
The statements of Prp. \ref{01162038}, Prp. \ref{10201503} and Prp. \ref{01162038pr}
hold true by replacing 
$\mf{E}^{\mr{a}}$ by $\mr{T}[\mf{E},s]$ 
and $\mf{E}^{\mr{b}}$ by $\mf{E}$.
\label{10081910st5}
\end{enumerate}
\end{theorem}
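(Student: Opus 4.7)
The plan is to verify the five assertions in order, extracting them from Lemma \ref{01011521} together with the defining conditions of $\Upgamma(\mf{E},\ms{T})$ and the assumption that $\mf{E}\in Exp(\ms{b})$. The central fact that makes the construction work is the retraction identity $s_M\circ\ms{T}_1^o(M)=\un$, which shows that $s_M$ is a two-sided inverse of $\ms{T}_1^o(M)$ restricted to $\ms{T}_1^o(M)(Obj(\ms{R}_M))$. This retraction together with the compatibility $\ms{b}_1^o(\phi)\circ s_N=s_M\circ\ms{a}_1^o(\phi)$ (paired with \eqref{11210858}) is what allows every statement about $\ms{T}[\ms{R},s]$-objects or morphisms to be translated back to a statement about $\ms{R}$-objects or morphisms, where the hypothesis on $\mf{E}$ applies.

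For statement \eqref{10081910st1}, I would verify the four axioms of Def.\ \ref{01161844}. The subcategory axiom follows from functoriality of $\ms{T}_1(M)$ plus the retraction ($\un_u=\ms{T}_1^m(M)(\un_{s_M(u)})$ and $\ms{T}_1^m$ preserves composition), and Lemma \ref{08080010}\eqref{08080010st3} yields the valuedness of $\ms{T}[\mc{S},s]$ in $\mc{P}(\mf{P}_{\mc{A}_{\ms{a}(M)}(t)})$. The covariance axiom \eqref{01161844st3a} follows from \eqref{11210858} together with condition Def.\ \ref{10081523}\eqref{10111708}, using that $\ms{b}_1(\phi)$ restricts to a functor $\ms{R}_N\to\ms{R}_M$. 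The covariance axiom \eqref{01161844st3b} is deduced from the left vertical face of the first diagram in Lemma \ref{01011521}: taking $\dagger$ converts its commutativity into $\ms{a}_3^\dagger(\phi)(\ms{T}_1^o(N)x)\circ\ms{T}_3^\dagger(N)x=\ms{T}_3^\dagger(M)(\ms{b}_1^o(\phi)x)\circ\ms{b}_3^\dagger(\phi)x$, and applying both sides to $\mc{S}_N(x)$ and invoking axiom \eqref{01161844st3b} for $\mf{E}$ gives the inclusion. Axiom \eqref{01161844st4} is treated analogously using the naturality of $\ms{T}_3(M)$ at a single context $M$, which yields $\uptau_{\ms{a}(M)}^\dagger(\ms{T}_1^m(M)g')\circ\ms{T}_3^\dagger(M)(s_M(z))=\ms{T}_3^\dagger(M)(s_M(y))\circ\uptau_{\ms{b}(M)}^\dagger(g')$, and then applying axiom \eqref{01161844st4} of $\mf{E}$ to $\mc{S}_M(s_M(z))$.

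Statement \eqref{10081910st2} is essentially Lemma \ref{01011521} read in reverse: link axiom \eqref{01161922st1} holds by construction of $\ms{T}[\ms{R},s]$, axiom \eqref{01161922st2} becomes an equality $\ms{T}_3^\dagger(N)(z)\mf{S}_N(z)=\ms{T}[\mc{S},s]_N(\ms{T}_1^o(N)z)$ since $s_N\circ\ms{T}_1^o(N)=\un$, and axioms \eqref{01161922st3} and \eqref{01161922st4} are exactly the full outer commutativity of the first and second diagrams, obtained by pasting three constituent squares along the $\ms{T}_3$-middle edge. Statement \eqref{10081910st3} then follows: if $\ms{T}_3^\dagger=\ms{T}_2$, completeness of the link is automatic from \eqref{10081910st2} and the $\ms{a}_2$-form of Def.\ \ref{01161844}\eqref{01161844st3b} follows from the corresponding face of the second diagram; if $\ms{a}$ factorizes through $\ps{\Uppsi}$, then $\ms{a}_2=\ms{a}_3^\dagger$ and the $\ms{a}_2$-form collapses to the $\ms{a}_3^\dagger$-form already established. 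Statement \eqref{10081910st4} requires showing that $\upgamma^\dagger(a)\circ\ms{T}_3^\dagger(M)(s_M(z))(\varphi)$ lies in the image of $\ms{T}_3^\dagger(M)(s_M(z))$ applied to an element of $\mc{S}_M(s_M(z))$; this exploits the positivity of $\ms{T}_3$ together with the defining axiom \eqref{01161908} for $\mf{E}$. Statement \eqref{10081910st5} is immediate from \eqref{10081910st2}, since Prp.\ \ref{01162038}, Prp.\ \ref{10201503}, and Prp.\ \ref{01162038pr} are derived purely from the link structure.

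The main obstacle is notational bookkeeping: four categories ($\mf{D}$, $\mr{G}_{\ms{a}(\cdot)}$, $\mr{G}_{\ms{b}(\cdot)}$, $\mf{Chdv}$) interact with both the $\phi$-covariance and the $g$-dynamical directions simultaneously, and each axiom requires repeatedly unpacking the retraction $\ms{T}_1^o(M)(s_M(u))=u$ to rephrase set-theoretic statements about $\ms{T}[\ms{R},s]_M$ into statements about $\ms{R}_M$ where the hypothesis on $\mf{E}$ can be used. A finer difficulty appears in \eqref{10081910st4}, where one must argue that the interplay between $\upgamma^\dagger$ (which is inherently $\ast$-algebraic) and $\ms{T}_3^\dagger$ (which is only positive and linear, since $\ms{T}_3$ lives in $\ms{ptsa}$) still produces the required closure of $\ms{T}[\mc{S},s]_M(z)$; the non-multiplicativity of $\ms{T}_3$ forbids a naive transport of $\upgamma$ across and forces a careful use of the concrete form of $\mc{S}_M(s_M(z))$ supplied by $\mf{E}\in Exp^{\ast}(\ms{b})$.
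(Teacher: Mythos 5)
Your treatment of statements \eqref{10081910st1}, \eqref{10081910st2}, \eqref{10081910st3} and \eqref{10081910st5} follows essentially the same route as the paper: the retraction $s_{M}\circ\ms{T}_{1}^{o}(M)=\un$ together with \eqref{11210858} translates every axiom required of $\ms{T}[\mf{E},s]$ into the corresponding axiom for $\mf{E}$; the two covariance inclusions come from daggering the outer square of the first diagram of Lemma \ref{01011521} and the single--context naturality square of $\ms{T}_{3}(M)$ respectively; and the link axioms are read directly off the two diagrams of Lemma \ref{01011521}. (One small mislabel: the identity you quote for Def. \ref{01161844}\eqref{01161844st3b} is the dagger of the \emph{outer} square of the first diagram, not of its left subsquare, which involves only $\uptau_{\ms{a}}$ and $\ms{a}_{3}$.)

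The one place where your proposal is not yet a proof is statement \eqref{10081910st4}. You correctly isolate the difficulty --- $\ms{T}_{3}(M)(s_{M}(t))$ is only a morphism of $\ms{ptsa}$, hence positive and linear but not multiplicative, so $\upgamma^{\dagger}$ cannot be transported across it for free --- but you then stop at the announcement that a ``careful use of the concrete form of $\mc{S}_{M}(s_{M}(t))$'' closes the argument, without saying what that use is. That is the entire content of the statement, so as written there is a gap. For comparison, the paper's own proof of \eqref{10081910st4} consists precisely of the commutation identity
\begin{equation*}
\upgamma_{a}^{\dagger}\circ\ms{T}_{3}^{\dagger}(M)(s_{M}(t))
=
\ms{T}_{3}^{\dagger}(M)(s_{M}(t))\circ\upgamma_{\ms{T}_{3}(M)(s_{M}(t))a}^{\dagger},
\end{equation*}
i.e.\ exactly the ``naive transport'' you declare forbidden; unwinding it on a functional $\varphi$, it amounts to $\ms{T}_{3}(M)(s_{M}(t))(a^{\ast}ba)=(\ms{T}_{3}(M)(s_{M}(t))a)^{\ast}\,(\ms{T}_{3}(M)(s_{M}(t))b)\,(\ms{T}_{3}(M)(s_{M}(t))a)$, which holds when $\ms{T}_{3}(M)(s_{M}(t))$ is a $\ast$-morphism (e.g.\ when $\ms{T}$ arises from $\mf{dp}$ via $\ps{\Uppsi}$) but not for a general $\ms{ptsa}$-morphism, and it moreover presupposes $\ms{T}_{3}(M)(s_{M}(t))a\in\Uptheta$. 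So your instinct that something beyond positivity is needed is sound, but you must actually supply the missing step: either prove the inclusion $\upgamma^{\dagger}(a)\ms{T}[\mc{S},s]_{M}(t)\subseteq\ms{T}[\mc{S},s]_{M}(t)$ directly from \eqref{01161908} for $\mf{E}$, or restrict to connectors whose third component is multiplicative.
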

\begin{proof}
In what follows whenever we mention to show items of Def. \ref{01161844} 
and Def. \ref{01161922} 
we mean to prove the corresponding statements for $\mr{T}[\mf{E},s]$ and $\mr{T}$
respectively.
Def. \ref{01161844}\eqref{01161844st3a} follows since \eqref{11210858}
and Def. \ref{10081523}\eqref{10111708}.
By the external square in the first diagram in Lemma \ref{01011521} and
since Lemma \ref{08080010}\eqref{08080010st2} we deduce for all 
$M,N\in\mf{D}$, $\phi\in \mr{Mor}_{\mf{D}}(M,N)$ and $y\in\mr{G}_{N}^{\mr{b}}$
that
\begin{equation*}
\mr{a}_{3}^{\dagger}(\phi)(\mr{T}_{1}^{o}(N)y)\circ\mr{T}_{3}^{\dagger}(N)(y)
=
\mr{T}_{3}^{\dagger}(M)(\mr{b}_{1}^{o}(\phi)y)\circ\mr{b}_{3}^{\dagger}(\phi)(y).
\end{equation*}
Hence by letting $\mf{E}=(\mc{S},\mr{R})$ 
and using Def. \ref{10081523}\eqref{10111706}
we have for all $t\in\mr{T}[\mr{R},s]_{N}$
\begin{equation*}
\mr{a}_{3}^{\dagger}(\phi)(t)\circ\mr{T}_{3}^{\dagger}(N)(s_{N}(t))
=
\mr{T}_{3}^{\dagger}(M)(\mr{b}_{1}^{o}(\phi)s_{N}(t))\circ\mr{b}_{3}^{\dagger}(\phi)(s_{N}(t)),
\end{equation*}
therefore
\begin{equation*}
\begin{aligned}
\mr{a}_{3}^{\dagger}(\phi)(t)
\mr{T}[\mc{S},s]_{N}(t)
&=
(\mr{a}_{3}^{\dagger}(\phi)(t)\circ\mr{T}_{3}^{\dagger}(N)(s_{N}(t)))
\mc{S}_{N}(s_{N}(t))
\\
&=
(\mr{T}_{3}^{\dagger}(M)(\mr{b}_{1}^{o}(\phi)s_{N}(t))\circ\mr{b}_{3}^{\dagger}(\phi)(s_{N}(t)))
\mc{S}_{N}(s_{N}(t))
\\
&\subseteq
\mr{T}_{3}^{\dagger}(M)(\mr{b}_{1}^{o}(\phi)s_{N}(t))
\mc{S}_{M}(\mr{b}_{1}^{o}(\phi)s_{N}(t))
\\
&=
\mr{T}_{3}^{\dagger}(M)(s_{M}(\mr{a}_{1}^{o}(\phi)t))
\mc{S}_{M}(s_{M}(\mr{a}_{1}^{o}(\phi)t))
=
\mr{T}[\mc{S},s]_{M}(\mr{a}_{1}^{o}(\phi)t),
\end{aligned}
\end{equation*}
which proves Def. \ref{01161844}\eqref{01161844st3b}.
Since Rmk. \ref{09090657} and Lemma \ref{08080010}\eqref{08080010st2}
we have for any $y,z\in\mr{G}_{M}^{\mr{b}}$ and $g\in\mr{G}_{M}^{\mr{b}}(y,z)$
that 
\begin{equation*}
\uptau_{\mr{a}(M)}^{\dagger}(\mr{T}_{1}^{m}(M)g)\circ\mr{T}_{3}^{\dagger}(M)(z)
=
\mr{T}_{3}^{\dagger}(M)(y)\circ\uptau_{\mr{b}(M)}^{\dagger}(g).
\end{equation*}
Let $u,t\in\mr{T}[\mr{R},s]_{M}$ and $h\in\mr{T}[\mr{R},s]_{M}(u,t)$,
hence there exists $g\in\mr{R}_{M}(s_{M}(u),s_{M}(t))$ such that 
$h=\mr{T}_{1}^{m}(M)g$, therefore
\begin{equation*}
\begin{aligned}
\uptau_{\mr{a}(M)}^{\dagger}(h)\mr{T}[\mc{S},s]_{M}(t)
&=
(\uptau_{\mr{a}(M)}^{\dagger}(\mr{T}_{1}^{m}(M)g)\circ\mr{T}_{3}^{\dagger}(M)(s_{M}(t)))
\mc{S}_{M}(s_{M}(t))
\\
&=
(\mr{T}_{3}^{\dagger}(M)(s_{M}(u))
\circ\uptau_{\mr{b}(M)}^{\dagger}(g))
\mc{S}_{M}(s_{M}(t))
\\
&\subseteq
\mr{T}_{3}^{\dagger}(M)(s_{M}(u))
\mc{S}_{M}(s_{M}(u))
=
\mr{T}[\mc{S},s]_{M}(u),
\end{aligned}
\end{equation*}
so Def. \ref{01161844}\eqref{01161844st4} and st.\eqref{10081910st1} follows.
Next Def. \ref{01161922} follows since construction of $\mr{T}[\mf{E}]$ and Lemma \ref{01011521},
and st.\eqref{10081910st2} follows.
If $\mr{a}$ factorizes through the functor $\ps{\Uppsi}$ then $\mr{a}_{2}=\mr{a}_{3}^{\dagger}$ 
and the second option in st.\eqref{10081910st3} follows,
while the first one can be proven in the same way of st.\eqref{10081910st1} 
by using instead the second diagram in Lemma \ref{01011521}.  
For all $M\in\mf{D}$, $t\in\mr{T}[\mr{R},s]_{M}$ and $a\in\mc{A}_{\mr{a}(M)}(t)$ we have by recalling \eqref{11181224}
\begin{equation*}
\updelta^{\dagger}(a)\circ\mr{T}_{3}^{\dagger}(M)(s_{M}(t)) 
=
\mr{T}_{3}^{\dagger}(M)(s_{M}(t)) 
\circ
\updelta^{\dagger}(\mr{T}_{3}(M)(s_{M}(t))a)
\end{equation*}
hence st.\eqref{10081910st4} follows since st.\eqref{10081910st1}.
St.\eqref{10081910st5} follows since st.\eqref{10081910st1}
and Prps. \ref{01162038}, Prp. \ref{10201503} and \ref{01162038pr}.
\end{proof}
If we call charge from $\mr{b}$ to $\mr{a}$ any map 
from a possibly subset of $Exp(\mr{b})$ to $Exp(\mr{a})$, we obtain that
Thm. \ref{10081910}\eqref{10081910st1}
establishes that $\mr{T}[\cdot,s]$ is a charge from $\mr{b}$ to $\mr{a}$.
In case of species of dynamical systems we shall see in \cite{28sil2}
that the subset extends to the whole $Exp(\mr{b})$. 
We can avoid the use of the map $\Upgamma$ in the following case.
\begin{theorem}
[The Equiformity Principle for a Natural Transformation $2$]
\label{01181342}
Let $\mf{D}$ be a category, $\mr{a},\mr{b}\in\mr{Fct}(\mf{D},\mf{Chdv})$, 
$\mf{E}\in Exp(\mr{b})$
and $\mr{T}\in \mr{Mor}_{\mr{Fct}(\mf{D},\mf{Chdv})}(\mr{a},\mr{b})$.
Thus 
\begin{enumerate}
\item
$\mr{T}$ is a link from $\mf{E}$ to $(\mf{P}^{\mr{a}},\mr{G}^{\mr{a}})$;
\label{01181342st1}
\item
The Equiformity Principle holds true for the natural transformation $\mr{T}$:
The statements of Prp. \ref{01162038}, Prp. \ref{10201503} and Prp. \ref{01162038pr}.
hold true by replacing 
$\mf{E}^{\mr{a}}$ by $(\mf{P}^{\mr{a}},\mr{G}^{\mr{a}})$ 
and $\mf{E}^{\mr{b}}$ by $\mf{E}$.
\label{01181342st2}
\end{enumerate}
\end{theorem}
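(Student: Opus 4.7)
My plan is to verify directly the four conditions of Def. \ref{01161922} with $\mf{E}^{\ms{b}} = \mf{E} = (\mc{S}, \ms{R})$ and $\mf{E}^{\ms{a}} = (\mf{P}^{\ms{a}}, \mr{G}^{\ms{a}})$; statement \eqref{01181342st2} will then be an immediate consequence of \eqref{01181342st1} and Prp. \ref{01162038}, Prp. \ref{10201503}, Prp. \ref{01162038pr}. The essential point is that, since $\mr{G}^{\ms{a}}_N$ is the full dynamical category and $\mf{P}^{\ms{a}}_N(t)$ is the entire $\mf{P}_{\mc{A}_{\ms{a}(N)}(t)}$, the inclusion and functoriality requirements for a link become automatic, so we do not need the auxiliary device of $\Upgamma(\mf{E},\ms{T})$ used in Thm. \ref{10081910}.

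For Def. \ref{01161922}\eqref{01161922st1}, note that $\ms{T}(N)\in Mor_{\mf{Chdv}}(\ms{a}(N),\ms{b}(N))$ has $\ms{T}_{1}(N)\in\ms{Fct}_{\ms{top}}(\mr{G}_{\ms{b}(N)},\mr{G}_{\ms{a}(N)})$ by Def. \ref{12312135}; restricting to the subcategory $\ms{R}_{N}\subseteq\mr{G}_{\ms{b}(N)}$ yields a $\ms{top}$-functor into $\mr{G}_{\ms{a}(N)}=\mr{G}^{\ms{a}}_{N}$. For condition \eqref{01161922st2}, I invoke that $\ms{T}_{3}(N)(z)\in Mor_{\ms{ptsa}}$ so in particular $\ms{T}_{3}(N)(z)\in\ms{P}(\mc{A}_{\ms{a}(N)}(\ms{T}_{1}^{o}(N)z),\mc{A}_{\ms{b}(N)}(z))$, whence by Lemma \ref{08080010}\eqref{08080010st3} combined with Rmk. \ref{08161027} its transpose $\ms{T}_{3}^{\dagger}(N)(z)$ maps $\mf{P}_{\mc{A}_{\ms{b}(N)}(z)}$ into $\mf{P}_{\mc{A}_{\ms{a}(N)}(\ms{T}_{1}^{o}(N)z)}=\mf{P}^{\ms{a}}_{N}(\ms{T}_{1}^{o}(N)z)$. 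Since $\mc{S}_{N}(z)\subseteq\mf{P}_{\mc{A}_{\ms{b}(N)}(z)}$ by Def. \ref{01161844}\eqref{01161844st2}, the required inclusion follows.

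The remaining two conditions are the heart of the matter, but they have already been furnished by Lemma \ref{01011521}: condition \eqref{01161922st3} is precisely the commutativity of the first diagram there read from the southwest to the northeast path through the left-bottom-central squares, and condition \eqref{01161922st4} is precisely the commutativity of the second diagram. Thus the four conditions are met and $\ms{T}$ is a link from $\mf{E}$ to $(\mf{P}^{\ms{a}},\mr{G}^{\ms{a}})$, proving \eqref{01181342st1}. Statement \eqref{01181342st2} is then formal: since \eqref{01181342st1} places us in the hypothesis of Prp. \ref{01162038}, Prp. \ref{10201503} and Prp. \ref{01162038pr} with the stated substitutions, the corresponding equiformity identities for trajectories, equivariant observables and propensities follow verbatim.

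There is no genuine obstacle here; the only thing to be careful about is to recognise that the standard experimental setting is \emph{designed} to be maximal on both the categorical and the functional side, so that the only nontrivial content of being a link, i.e. Def. \ref{01161922}\eqref{01161922st3}--\eqref{01161922st4}, is exactly what Lemma \ref{01011521} delivers from the naturality of $\ms{T}$. This is why the degeneration in the choice of $s\in\Upgamma(\mf{E},\ms{T})$ disappears: replacing $\ms{T}[\mf{E},s]$ by the standard setting effects a coarsening that absorbs the $s$-dependence.
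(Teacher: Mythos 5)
Your proof is correct and follows essentially the same route as the paper: the paper's own proof of \eqref{01181342st1} simply invokes Lemma \ref{01011521}, exactly as you do for conditions \eqref{01161922st3}--\eqref{01161922st4}, and derives \eqref{01181342st2} formally from \eqref{01181342st1} together with Prp. \ref{01162038}, Prp. \ref{10201503} and Prp. \ref{01162038pr}. Your explicit verification of Def. \ref{01161922}(\ref{01161922st1},\ref{01161922st2}) via the maximality of the standard experimental setting and Lemma \ref{08080010}\eqref{08080010st3} with Rmk. \ref{08161027} just spells out what the paper leaves implicit.
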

\begin{proof}
St.\eqref{01181342st1} follows since Lemma \ref{01011521},
st.\eqref{01181342st2} follows since st.\eqref{01181342st1}, 
Prp. \ref{01162038}, Prp. \ref{10201503} and Prp. \ref{01162038pr}.
\end{proof}
Thm. \ref{10081910}\eqref{10081910st1} establishes 
that $\mr{T}[\cdot,s]$ is a charge from the target to the source of $\mr{T}$,
therefore it is natural to expect that under suitable conditions, 
the vertical composition of connectors 
is contravariantly represented as composition of charges.
This is proven in the following second main result.
\begin{corollary}
[\textbf{Charge composition of connectors}]
\label{10151636}
Let $\mf{D}$ be a category, $\mr{a},\mr{b},\mr{c}\in\mr{Fct}(\mf{D},\mf{Chdv})$,
$\mf{Q}\in Exp(\mr{c})$,
$\mr{T}\in \mr{Mor}_{\mr{Fct}(\mf{D},\mf{Chdv})}(\mr{b},\mr{c})$,
$\mr{S}\in \mr{Mor}_{\mr{Fct}(\mf{D},\mf{Chdv})}(\mr{a},\mr{b})$
and
$r\in\Upgamma(\mf{Q},\mr{T})$.
Thus 
\begin{equation*}
r
\circ
\Upgamma(\mr{T}[\mf{Q},r],\mr{S})
\subseteq
\Upgamma(\mf{Q},\mr{T}\circ\mr{S}),
\end{equation*}
where the composition is defined pointwise.
Moreover if $\mr{T}_{1}^{o}(M)$ is injective for all $M\in\mf{D}$,
and $r$ is the projection associated with $\mf{Q}$ and $\mr{T}$,
Rmk. \ref{10201649},
then for any $q\in\Upgamma(\mr{T}[\mf{Q},r],\mr{S})$ we obtain 
\begin{equation*}
(\mr{T}\circ\mr{S})[\mf{Q},r\circ q]
=
\mr{S}[\mr{T}[\mf{Q},r],q].
\end{equation*}
\end{corollary}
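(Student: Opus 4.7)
The plan is to first verify that $r\circ q$ satisfies the two defining conditions of $\Upgamma(\mf{Q},\ms{T}\circ\ms{S})$ stated in Def.~\ref{10081523}, and then, under the projection hypothesis, to unfold both sides of the equality componentwise (the subcategory part $\ms{R}$ and the positive-functional part $\mc{S}$) and match them. The computational inputs I will draw on are Rmk.~\ref{11211156}, which supplies $(\ms{T}\circ\ms{S})_{1}^{o}(M) = \ms{S}_{1}^{o}(M)\circ\ms{T}_{1}^{o}(M)$, $(\ms{T}\circ\ms{S})_{1}^{m}(M) = \ms{S}_{1}^{m}(M)\circ\ms{T}_{1}^{m}(M)$, and $(\ms{T}\circ\ms{S})_{3}(M)(y) = \ms{T}_{3}(M)(y)\circ\ms{S}_{3}(M)(\ms{T}_{1}^{o}(M)y)$, together with the anti-multiplicativity $(X\circ Y)^{\dagger} = Y^{\dagger}\circ X^{\dagger}$ from Lemma~\ref{08080010}\eqref{08080010st2}.

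For the inclusion $r\circ\Upgamma(\ms{T}[\mf{Q},r],\ms{S})\subseteq\Upgamma(\mf{Q},\ms{T}\circ\ms{S})$, I will first observe that the types match, since $q_M$ lands in $Obj(\ms{T}[\ms{R},r]_M) = \ms{T}_{1}^{o}(M)(Obj(\ms{R}_M))$, which is exactly the domain of $r_M$, whence $(r\circ q)_M$ has domain $\ms{S}_{1}^{o}(M)(\ms{T}_{1}^{o}(M)(Obj(\ms{R}_M))) = (\ms{T}\circ\ms{S})_{1}^{o}(M)(Obj(\ms{R}_M))$ and codomain $Obj(\ms{R}_M)$. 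Def.~\ref{10081523}\eqref{10111706} will then follow by associatively regrouping $(r_M\circ q_M)\circ\ms{S}_{1}^{o}(M)\circ\ms{T}_{1}^{o}(M)$ and applying the retractions $q_M\circ\ms{S}_{1}^{o}(M) = \un$ and $r_M\circ\ms{T}_{1}^{o}(M) = \un$ in sequence. Def.~\ref{10081523}\eqref{10111708} will follow analogously by chaining the naturality square of $r$ for $\ms{T}\colon\ms{b}\to\ms{c}$ and that of $q$ for $\ms{S}\colon\ms{a}\to\ms{b}$ through the intermediate map $\ms{b}_{1}^{o}(\phi)$.

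For the equality, the crucial identity under the projection hypothesis will be
\begin{equation*}
\ms{T}_{1}^{o}(M)(r_{M}(q_{M}(t))) = q_{M}(t),
\end{equation*}
which holds because $q_M(t)\in Obj(\ms{T}[\ms{R},r]_M) = \ms{T}_{1}^{o}(M)(Obj(\ms{R}_M))$ and $r_M$ is the pointwise inverse of $\ms{T}_{1}^{o}(M)$ on that image. With this in hand, the subcategory component reads off at once: the morphism set of $(\ms{T}\circ\ms{S})[\ms{R},r\circ q]_M$ between $u$ and $t$ rewrites as $(\ms{S}_{1}^{m}(M)\circ\ms{T}_{1}^{m}(M))(Mor_{\ms{R}_M}(r_M q_M(u),r_M q_M(t)))$, which is precisely $\ms{S}_{1}^{m}(M)$ applied to the morphism set of $\ms{T}[\ms{R},r]_M$ between $q_M(u)$ and $q_M(t)$, i.e.\ the morphism set of $\ms{S}[\ms{T}[\ms{R},r],q]_M$. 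For the positive-functional component, daggering the composition formula produces
\begin{equation*}
(\ms{T}\circ\ms{S})_{3}^{\dagger}(M)(y) = \ms{S}_{3}^{\dagger}(M)(\ms{T}_{1}^{o}(M)y)\circ\ms{T}_{3}^{\dagger}(M)(y);
\end{equation*}
substituting $y = r_M q_M(t)$ and collapsing $\ms{T}_{1}^{o}(M)(r_M q_M(t))$ to $q_M(t)$ via the crucial identity reduces the left-hand side, evaluated on $\mc{S}_M(r_M q_M(t))$, to $\ms{S}_{3}^{\dagger}(M)(q_M(t))\bigl(\ms{T}_{3}^{\dagger}(M)(r_M q_M(t))(\mc{S}_M(r_M q_M(t)))\bigr)$, which is $\ms{S}[\ms{T}[\mc{S},r],q]_M(t)$ by definition.

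The hard part will not be any conceptual step but the bookkeeping of variances: the first component of $\mf{Chdv}$ morphisms composes contravariantly, $\ms{T}_{3}$ composes covariantly via Rmk.~\ref{11211156}, and daggering then reverses the order of $\ms{T}_{3}$ and $\ms{S}_{3}$ again. Once these orderings are organized, the projection hypothesis collapses $\ms{T}_{1}^{o}(M)(r_M q_M(t))$ to $q_M(t)$ wherever needed, and the matching of both sides becomes mechanical.
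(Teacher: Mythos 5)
Your proposal is correct and follows essentially the same route as the paper: the paper likewise disposes of the inclusion and of the object/morphism-set part of the equality as direct unfoldings of the definitions, and for the statistical-ensemble part performs exactly your computation — daggering $(\ms{T}\circ\ms{S})_{3}(M)$ via Rmk.~\ref{11211156} and Lemma~\ref{08080010}\eqref{08080010st2} and collapsing $\ms{T}_{1}^{o}(M)(r_{M}q_{M}(t))$ to $q_{M}(t)$ using that $r$ is the projection. Your write-up is in fact slightly more explicit than the paper's on the inclusion (chaining the two retraction identities and the two naturality squares through $\ms{b}_{1}^{o}(\phi)$), but there is no substantive difference.
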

\begin{proof}
The inclusion in the statement is well-set since Thm. \ref{10081910}, 
then it follows by a simple application of the definitions involved.
The equality in the statement is well-set since the inclusion.
The part concerning the object sets in the equality
follows by direct application of the involved 
definitions, let us prove the part concerning the statistical ensemble spaces.
Let $\mf{Q}=(\mf{S},\mr{R})$, by using Rmk. \ref{11211156} 
we obtain for all $M\in\mf{D}$
\begin{equation*}
\begin{aligned}
(\mr{T}\circ\mr{S})[\mf{S},r\circ q]_{M}(t)
&=
(\mr{T}\circ\mr{S})_{3}^{\dagger}(M)((rq)_{M}(t))
\mf{S}_{M}((rq)_{M}(t))
\\
&=
(\mr{T}_{3}(M)((rq)_{M}(t))
\circ
\mr{S}_{3}(M)(\mr{T}_{1}^{o}(rq)_{M}(t)))^{\dagger}
\mf{S}_{M}((rq)_{M}(t))
\\
&=
(\mr{S}_{3}^{\dagger}(M)(q_{M}(t))
\circ
\mr{T}_{3}^{\dagger}(M)((rq)_{M}(t)))
\mf{S}_{M}((rq)_{M}(t)).
\end{aligned}
\end{equation*}
Next 
\begin{equation*}
\begin{aligned}
\mr{S}[\mr{T}[\mf{S},r],q]_{M}(t) 
&=
\mr{S}_{3}^{\dagger}(M)(q_{M}(t))
\mr{T}[\mf{S},r]_{M}(q_{M}(t))
\\
&=
(\mr{S}_{3}^{\dagger}(M)(q_{M}(t))
\circ
\mr{T}_{3}^{\dagger}(M)((rq)_{M}(t)))
\mf{S}_{M}((rq)_{M}(t)).
\end{aligned}
\end{equation*}
\end{proof}
Next we prepare for another important consequence of Thm. \ref{10081910}
stated in Cor. \ref{11200910} until which we let 
$\mf{G},\mf{D}$ be categories,
$\mr{x},\mr{y}\in\mr{Fct}(\mf{G},\mf{D})$,
$\mr{a},\mr{b}\in\mr{Fct}(\mf{D},\mf{Chdv})$,
$\mr{L}\in \mr{Mor}_{\mr{Fct}(\mf{G},\mf{D})}(\mr{x},\mr{y})$
and
$\mr{T}\in \mr{Mor}_{\mr{Fct}(\mf{D},\mf{Chdv})}(\mr{a},\mr{b})$.
\begin{definition}
\label{11200906}
For any $\mf{Q}=(\mf{S},\mr{R})\in Exp(\mr{b})$ and 
$s\in\Upgamma(\mf{Q},\mr{T})$ 
define
\begin{equation*}
\Upxi(\mr{T},\mr{L},s)
\coloneqq
\{
r\in\prod_{G\in\mf{G}}\mr{Mor}_{\mr{set}}
\bigl((\mr{a}_{1}^{o}(\mr{L}(G))
\circ
\mr{T}_{1}^{o}(\mr{y}(G)))
\mr{Obj}(\mr{R}_{\mr{y}(G)}),
\mr{Obj}(\mr{R}_{\mr{y}(G)})\bigr)
\,\vert\,
(\ref{1123105a},\ref{1123105b})
\text{ hold true}\},
\end{equation*}
where 
\begin{enumerate}
\item
$r_{G}\circ\mr{a}_{1}^{o}(\mr{L}(G))
\circ
(\mr{T}_{1}^{o}(\mr{y}(G))(\mr{Obj}(\mr{R}_{\mr{y}(G)}))
\hookrightarrow
\mr{G}_{\mr{y}(G)}^{\mr{a}})
=s_{\mr{y}(G)}$,
for all $G\in\mf{G}$;
\label{1123105a}
\item
$\mr{b}_{1}^{o}(\mr{L}(G))\circ r_{G}=s_{\mr{x}(G)}$,
for all $G\in\mf{G}$.
\label{1123105b}
\end{enumerate}
Moreover set 
$\mr{y}[\mf{Q}]\coloneqq(\mf{S}\circ\mr{y}^{o},\mr{R}\circ\mr{y}^{o})$.
\end{definition}
\begin{remark}
\label{11201224}
For all $G\in\mf{G}$ by applying the general 
definition of Godement product (horizontal composition)
we have
$(\mr{T}\ast\mr{L})(G)
=
\mr{T}(\mr{y}(G))\circ\mr{a}(\mr{L}(G))
=
\mr{b}(\mr{L}(G))\circ\mr{T}(\mr{x}(G))
$.
Hence by the definition of morphism composition in $\mf{Chdv}$
we obatin
$(\mr{T}\ast\mr{L})_{1}(G)
=
\mr{a}_{1}(\mr{L}(G))
\circ
\mr{T}_{1}(\mr{y}(G))
=
\mr{T}_{1}(\mr{x}(G))
\circ
\mr{b}_{1}(\mr{L}(G))$,
and 
$(\mr{T}\ast\mr{L})_{3}(G)
=
\mr{T}_{3}(\mr{y}(G))
\circ
(\mr{a}_{3}(\mr{L}(G))\ast\un_{\mr{T}_{1}(\mr{y}(G))})$,
where we recall \eqref{20061403}.
\end{remark}
\begin{lemma}
\label{11200907}
For any $\mf{Q}\in Exp(\mr{b})$ and 
$s\in\Upgamma(\mf{Q},\mr{T})$ we have that
\begin{enumerate}
\item
$\mr{y}[\mf{Q}]\in Exp(\mr{b}\circ\mr{y})$;
\label{11200907st1}
\item
$\Upxi(\mr{T},\mr{L},s)\subset\Upgamma(y[\mf{Q}],\mr{T}\ast\mr{L})$.
\label{11200907st2}
\end{enumerate}
\end{lemma}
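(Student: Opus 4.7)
The plan is to verify the two statements by direct application of the defining clauses, using the explicit formulas for the horizontal composition given in Rmk.~\ref{11201224}.

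For statement \eqref{11200907st1}, I would observe that the composite functor $\ms{b}\circ\ms{y}:\mf{G}\to\mf{Chdv}$ satisfies $(\ms{b}\circ\ms{y})_i(\phi)=\ms{b}_i(\ms{y}(\phi))$ for $i\in\{1,2,3\}$ and $\uptau_{(\ms{b}\circ\ms{y})(G)}=\uptau_{\ms{b}(\ms{y}(G))}$, so the equivariance data at a morphism $\phi:G\to G'$ in $\mf{G}$ for $\ms{y}[\mf{Q}]$ is exactly the equivariance data at $\ms{y}(\phi)$ for $\mf{Q}$. Each of the four clauses of Def.~\ref{01161844} for $(\mf{S}\circ\ms{y}^{o},\ms{R}\circ\ms{y}^{o})$ at the object $G$ then reduces mechanically to the same clause for $\mf{Q}=(\mf{S},\ms{R})$ at $\ms{y}(G)$, which holds by hypothesis.

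For statement \eqref{11200907st2}, I fix $r\in\Upxi(\ms{T},\ms{L},s)$ and verify the two defining conditions of Def.~\ref{10081523} for $r\in\Upgamma(\ms{y}[\mf{Q}],\ms{T}\ast\ms{L})$. Using Rmk.~\ref{11201224} the required domain $(\ms{T}\ast\ms{L})_{1}^{o}(G)(Obj(\ms{R}_{\ms{y}(G)}))$ coincides with $(\ms{a}_{1}^{o}(\ms{L}(G))\circ\ms{T}_{1}^{o}(\ms{y}(G)))(Obj(\ms{R}_{\ms{y}(G)}))$, matching the domain of $r_{G}$ from Def.~\ref{11200906}. Condition~\eqref{10111706} then follows by composing clause~\eqref{1123105a} of $\Upxi$ with the identity $s_{\ms{y}(G)}\circ\ms{T}_{1}^{o}(\ms{y}(G))=\un$ on $Obj(\ms{R}_{\ms{y}(G)})$, which is clause~\eqref{10111706} for $s$ itself.

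The substantial point is condition~\eqref{10111708}. Given $\phi:G\to G'$ in $\mf{G}$ I would evaluate both $\ms{b}_{1}^{o}(\ms{y}(\phi))\circ r_{G'}$ and $r_{G}\circ\ms{a}_{1}^{o}(\ms{x}(\phi))$ at an arbitrary element $v=\ms{a}_{1}^{o}(\ms{L}(G'))(\ms{T}_{1}^{o}(\ms{y}(G'))(t))$ with $t\in Obj(\ms{R}_{\ms{y}(G')})$. The left side reduces, via clause~\eqref{1123105a} of $\Upxi$ at $G'$ followed by clause~\eqref{10111706} for $s$, to $\ms{b}_{1}^{o}(\ms{y}(\phi))(t)$. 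For the right side I combine three naturality relations: functoriality of $\ms{a}$ applied to $\ms{L}(G')\circ\ms{x}(\phi)=\ms{y}(\phi)\circ\ms{L}(G)$ yields $\ms{a}_{1}^{o}(\ms{x}(\phi))\circ\ms{a}_{1}^{o}(\ms{L}(G'))=\ms{a}_{1}^{o}(\ms{L}(G))\circ\ms{a}_{1}^{o}(\ms{y}(\phi))$; naturality of $\ms{T}$ at $\ms{y}(\phi)$ gives $\ms{a}_{1}^{o}(\ms{y}(\phi))\circ\ms{T}_{1}^{o}(\ms{y}(G'))=\ms{T}_{1}^{o}(\ms{y}(G))\circ\ms{b}_{1}^{o}(\ms{y}(\phi))$; rewriting $\ms{a}_{1}^{o}(\ms{x}(\phi))(v)=\ms{a}_{1}^{o}(\ms{L}(G))(\ms{T}_{1}^{o}(\ms{y}(G))(\ms{b}_{1}^{o}(\ms{y}(\phi))(t)))$ and noting that $\ms{b}_{1}^{o}(\ms{y}(\phi))(t)\in Obj(\ms{R}_{\ms{y}(G)})$ by Def.~\ref{01161844}\eqref{01161844st3a} applied to $\mf{Q}$, I then invoke clause~\eqref{1123105a} of $\Upxi$ at $G$ and clause~\eqref{10111706} for $s$ to collapse the right side to $\ms{b}_{1}^{o}(\ms{y}(\phi))(t)$ as well.

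The main obstacle is the naturality equality in condition~\eqref{10111708}: since $\ms{b}_{1}^{o}(\ms{L}(G))$ need not be injective, one cannot simply cancel it from a pre-composition identity deduced from clause~\eqref{1123105b} of $\Upxi$. The argument must therefore route through clause~\eqref{1123105a} of $\Upxi$ and the pointwise unfolding above, which is why the three separate naturalities (of $\ms{L}$ through $\ms{a}$, of $\ms{T}$ at $\ms{y}(\phi)$, and of $s$) have to be assembled in the right order.
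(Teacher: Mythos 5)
Your proposal is correct and follows essentially the same route as the paper's (very terse) proof: statement \eqref{11200907st1} by reducing each clause of Def.~\ref{01161844} for $\ms{y}[\mf{Q}]$ at $\phi$ to the same clause for $\mf{Q}$ at $\ms{y}(\phi)$, and statement \eqref{11200907st2} via Rmk.~\ref{11201224} together with clause~\eqref{1123105a} of Def.~\ref{11200906}, combined with the naturality of $\ms{L}$ and $\ms{T}$. The only cosmetic slip is that the third ingredient you call the ``naturality of $s$'' is in fact the retraction identity~\eqref{10111706} for $s$, which is exactly what your displayed computation uses.
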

\begin{proof}
St. \eqref{11200907st1} 
follows by the functoriality of $\mr{y}$ and the definition
of experimental settings,
st. \eqref{11200907st2} follows since Rmk. \ref{11201224}
and Def. \ref{11200906}\eqref{1123105a}.
\end{proof}
Often we call $\mr{y}[\mf{Q}]$ the pullback of $\mf{Q}$ through $\mr{y}$.
Next we define an order relation in the collection of the experimental settings.
\begin{definition}
\label{11201700}
Let $\leq_{\mr{a}}$ or simply $\leq$ be the relation on $Exp(\mr{a})$
defined such that for all 
$(\mf{S},\mr{R}),(\mf{S}',\mr{R}')\in Exp(\mr{a})$
we have
$(\mf{S},\mr{R})\leq(\mf{S}',\mr{R}')$ 
iff 
$\mr{R}_{M}$ is a subcategory of $\mr{R}_{M}'$
and
$\mf{S}_{M}(t)\subseteq\mf{S}_{M}'(t)$
for all $t\in\mr{R}_{M}$
and $M\in\mf{D}$.
\end{definition}
In addition to the contravariant representation of the vertical 
composition of connectors in terms of composition of charges, 
if $\mr{T}_{1}^{o}(M)$ is injective for all $M\in\mf{D}$,
we have also a representation of the horizontal composition
of a connector $\mr{T}$ with a natural transformation $\mr{L}$
in terms of what we call charge transfer. 
Said $\mr{x}$ and $\mr{y}$ the source and target of $\mr{L}$ respectively,
the charge transfer roughly asserts that
the charge $(\mr{T}\ast\mr{L})[\cdot,r]$ for $r\in\Upxi(\mr{T},\mr{L},s)$ 
maps the pullback through $\mr{y}$ of any experimental setting 
$\mf{Q}$ of the target species of $\mr{T}$ into an experimental setting 
which is included in the pullback through $\mr{x}$ of the experimental setting
obtained by mapping $\mf{Q}$ through the charge $\mr{T}[\cdot,s]$,
where $s$ is the projection associated with $\mf{Q}$ and $\mr{T}$.
The following is the third main result of this paper.
\begin{corollary}
[\textbf{Charge transfer}]
\label{11200910}
If $\mr{T}_{1}^{o}(M)$ is injective for all $M\in\mf{D}$,
then for any $\mf{Q}\in Exp(\mr{b})$
said $s$ the projection associated with $\mf{Q}$ and $\mr{T}$
Rmk. \ref{10201649},
we obtain for all $r\in\Upxi(\mr{T},\mr{L},s)$ that
\begin{equation*}
(\mr{T}\ast\mr{L})[\mr{y}[\mf{Q}],r]
\leq
\mr{x}[\mr{T}[\mf{Q},s]].
\end{equation*}
\end{corollary}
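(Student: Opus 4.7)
The plan is to verify the inclusion $(\ms{T}\ast\ms{L})[\ms{y}[\mf{Q}],r]\leq\ms{x}[\ms{T}[\mf{Q},s]]$ componentwise according to Def. \ref{11201700}, unpacking both sides via Def. \ref{10081523} and Def. \ref{11200906}, and exploiting the two alternative factorizations of the horizontal composition collected in Rmk. \ref{11201224}. By Thm. \ref{10081910}\eqref{10081910st1} the expression $\ms{T}[\mf{Q},s]$ is an experimental setting of $\ms{a}$, and its pullback through $\ms{x}$ is then an experimental setting of $\ms{a}\circ\ms{x}$; by Lemma \ref{11200907} the left-hand side is also a well-defined experimental setting of $\ms{a}\circ\ms{x}$, so only a fiberwise comparison remains.

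First I would treat the subcategory inclusion using the second factorization $(\ms{T}\ast\ms{L})_{1}(G)=\ms{T}_{1}(\ms{x}(G))\circ\ms{b}_{1}(\ms{L}(G))$ from Rmk. \ref{11201224}. Because $\ms{R}\in Exp(\ms{b})$, Def. \ref{01161844}\eqref{01161844st3a} guarantees that $\ms{b}_{1}(\ms{L}(G))$ restricts to a functor $\ms{R}_{\ms{y}(G)}\to\ms{R}_{\ms{x}(G)}$; post-composing with $\ms{T}_{1}(\ms{x}(G))$ therefore lands in $\ms{T}[\ms{R},s]_{\ms{x}(G)}$ at the object level. At the morphism level the same factorization combined with condition \eqref{1123105b} of Def. \ref{11200906}, namely $\ms{b}_{1}^{o}(\ms{L}(G))\circ r_{G}=s_{\ms{x}(G)}$, pushes each $Mor_{\ms{R}_{\ms{y}(G)}}(r_{G}(u),r_{G}(t))$ into $Mor_{\ms{R}_{\ms{x}(G)}}(s_{\ms{x}(G)}(u),s_{\ms{x}(G)}(t))$, so that applying $\ms{T}_{1}^{m}(\ms{x}(G))$ reproduces exactly the morphism set of $\ms{T}[\ms{R},s]_{\ms{x}(G)}$.

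For the statistical-ensemble fibers I would instead use the companion factorization $(\ms{T}\ast\ms{L})_{3}(G)=\ms{T}_{3}(\ms{y}(G))\circ(\ms{a}_{3}(\ms{L}(G))\ast\un_{\ms{T}_{1}(\ms{y}(G))})$, reduce the Godement product via \eqref{20061403}, and take the dagger using Lemma \ref{08080010}\eqref{08080010st2}, obtaining $(\ms{T}\ast\ms{L})_{3}^{\dagger}(G)(v)=\ms{a}_{3}^{\dagger}(\ms{L}(G))(\ms{T}_{1}^{o}(\ms{y}(G))v)\circ\ms{T}_{3}^{\dagger}(\ms{y}(G))(v)$. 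Condition \eqref{1123105a} of Def. \ref{11200906}, together with the assumption that $s$ is the projection of Rmk. \ref{10201649} (which requires the injectivity of $\ms{T}_{1}^{o}$), yields $\ms{T}_{3}^{\dagger}(\ms{y}(G))(r_{G}(t))\mf{S}_{\ms{y}(G)}(r_{G}(t))=\ms{T}[\mf{S},s]_{\ms{y}(G)}(\ms{T}_{1}^{o}(\ms{y}(G))r_{G}(t))$. I then invoke the equivariance clause Def. \ref{01161844}\eqref{01161844st3b} for $\ms{T}[\mf{Q},s]\in Exp(\ms{a})$ applied to the morphism $\ms{L}(G):\ms{x}(G)\to\ms{y}(G)$ of $\mf{D}$, which delivers the inclusion into $\ms{T}[\mf{S},s]_{\ms{x}(G)}(t)$, i.e.\ into $\ms{x}[\ms{T}[\mf{Q},s]]_{G}$ evaluated at $t$.

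The main technical obstacle I anticipate is the coherent bookkeeping between the two factorizations of $\ms{T}\ast\ms{L}$ in Rmk. \ref{11201224}: one is needed to transport categorical data through $\ms{x}(G)$, while the other is needed to transport fiber data through $\ms{y}(G)$ so that the equivariance of the target experimental setting under $\ms{L}(G)$ can be invoked directly. The injectivity of $\ms{T}_{1}^{o}$ is precisely what makes $s$ a genuine projection, which in turn lets condition \eqref{1123105a} determine $r_{G}$ unambiguously on its domain and allows the fiber identification on the left to close against the definition of $\ms{T}[\mf{S},s]$ on the right; absent this, the natural transformation $\ms{T}\ast\ms{L}$ would not reassemble into the expected charge on pullbacks.
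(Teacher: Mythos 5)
Your proposal is correct and follows essentially the same route as the paper's proof: a componentwise comparison of the two experimental settings, using the factorization $(\ms{T}\ast\ms{L})_{1}=\ms{T}_{1}(\ms{x}(\cdot))\circ\ms{b}_{1}(\ms{L}(\cdot))$ for the object and morphism data, the companion factorization of $(\ms{T}\ast\ms{L})_{3}$ together with its dagger for the fibers, and then Thm. \ref{10081910} with Def. \ref{01161844}\eqref{01161844st3b} applied to $\ms{L}(G)$, closed off by the compatibility conditions on $r$ and the projection $s$. The only cosmetic difference is that you record the first fiber identification as an equality (which does hold by Def. \ref{10081523} when $s$ is the projection) where the paper only uses the inclusion.
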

\begin{proof}
Let $\mr{K}$ denote 
$(\mr{T}\ast\mr{L})[\mr{R}\circ\mr{y}^{o},r]$,
$\mf{X}$ denote
$(\mr{T}\ast\mr{L})[\mf{S}\circ\mr{y}^{o},r]$,
and let $\mf{Q}=(\mf{S},\mr{R})$. 
Thus
\begin{equation}
\begin{aligned}
\label{11231714}
(\mr{T}\ast\mr{L})[\mr{y}[\mf{Q}],r]
&=
(\mf{X},\mr{K}),
\\
\mr{x}[\mr{T}[\mf{Q},s]]
&=
(\mr{T}[\mf{S},s]\circ\mr{x}^{o},
\mr{T}[\mr{R},s]\circ\mr{x}^{o}).
\end{aligned}
\end{equation}
Moreover for any $G\in\mf{G}$
\begin{equation*}
\begin{aligned}
\mr{Obj}(\mr{K}_{G})
&=
(\mr{T}\ast\mr{L})_{1}^{o}(G)
(\mr{Obj}(\mr{R}_{\mr{y}(G)}))
\\
&=
(\mr{T}_{1}^{o}(\mr{x}(G))
\circ\mr{b}_{1}^{o}(\mr{L}(G)))
\mr{Obj}(\mr{R}_{\mr{y}(G)})
\\
&\subseteq
\mr{T}_{1}^{o}(\mr{x}(G))
\mr{Obj}(\mr{R}_{\mr{x}(G)})
\\
&=
\mr{Obj}((\mr{T}[\mr{R},s]\circ\mr{x}^{o})(G)),
\end{aligned}
\end{equation*}
where in the second equality we use Rmk. \ref{11201224},
the inclusion arises by $\mf{Q}\in Exp(\mr{b})$ 
and $\mr{L}(G)\in \mr{Mor}_{\mf{D}}(\mr{x}(G),\mr{y}(G))$.
Let $u,t\in \mr{Obj}(\mr{K}_{G})$ thus by Rmk. \ref{11201224}
\begin{equation*}
\begin{aligned}
\mr{Mor}_{\mr{K}_{G}}(u,t)
&=
(\mr{T}\ast\mr{L})_{1}^{m}(G)
\mr{Mor}_{\mr{R}_{\mr{y}(G)}}(r_{G}u,r_{G}t)
\\
&=
(\mr{T}_{1}^{m}(\mr{x}(G))
\circ\mr{b}_{1}^{m}(\mr{L}(G)))
\mr{Mor}_{\mr{R}_{\mr{y}(G)}}(r_{G}u,r_{G}t)
\\
&\subseteq
\mr{T}_{1}^{m}(\mr{x}(G))
\mr{Mor}_{\mr{R}_{\mr{x}(G)}}
((\mr{b}_{1}^{o}(\mr{L}(G))\circ r_{G})u,
(\mr{b}_{1}^{o}(\mr{L}(G))\circ r_{G})t)
\\
&=
\mr{T}_{1}^{m}(\mr{x}(G))
\mr{Mor}_{\mr{R}_{\mr{x}(G)}}
(s_{\mr{x}(G)}u,s_{\mr{x}(G)}t)
\\
&=
\mr{Mor}_{(\mr{T}[\mr{R},s]\circ\mr{x}^{o})(G)}(u,t),
\end{aligned}
\end{equation*}
where in the inclusion we used $\mf{Q}\in Exp(\mr{b})$
and in the subsequent equality the hypothesis on $r$.
Thus 
\begin{equation}
\label{11201805} 
\mr{K}_{G} 
\text{ is a subcategory of } 
(\mr{T}[\mr{R},s]\circ\mr{x}^{o})(G).
\end{equation}
Next 
\begin{equation*}
\begin{aligned}
\mf{X}_{G}(t)
&=
(\mr{T}\ast\mr{L})_{3}^{\dagger}(G)(r_{G}t)\mf{S}_{\mr{y}(G)}(r_{G}t)
\\
&=
\bigl((\mr{T}\ast\mr{L})_{3}(G)(r_{G}t)\bigr)^{\dagger}\mf{S}_{\mr{y}(G)}(r_{G}t)
\\
&=
\bigl(
\mr{T}_{3}(\mr{y}(G))(r_{G}t)
\circ
\mr{a}_{3}(\mr{L}(G))(\mr{T}_{1}^{o}(\mr{y}(G))r_{G}t)
\bigr)^{\dagger}
\mf{S}_{\mr{y}(G)}(r_{G}t)
\\
&=
\bigl(
\mr{a}_{3}^{\dagger}(\mr{L}(G))(\mr{T}_{1}^{o}(\mr{y}(G))r_{G}t)
\circ
\mr{T}_{3}^{\dagger}(\mr{y}(G))(r_{G}t)
\bigr)
\mf{S}_{\mr{y}(G)}(r_{G}t)
\\
&\subseteq
\mr{a}_{3}^{\dagger}(\mr{L}(G))(\mr{T}_{1}^{o}(\mr{y}(G))r_{G}(t))
\mr{T}[\mf{S},s]_{\mr{y}(G)}
(\mr{T}_{1}^{o}(\mr{y}(G))r_{G}(t))
\\
&\subseteq
\mr{T}[\mf{S},s]_{\mr{x}(G)}
((\mr{a}_{1}^{o}(\mr{L}(G))
\circ
\mr{T}_{1}^{o}(\mr{y}(G)))
r_{G}(t))
\\
&=
\mr{T}[\mf{S},s]_{\mr{x}(G)}
((\mr{T}_{1}^{o}(\mr{x}(G))
\circ
\mr{b}_{1}^{o}(\mr{L}(G)))
r_{G}(t))
\\
&=
\mr{T}[\mf{S},s]_{\mr{x}(G)}
((\mr{T}_{1}^{o}(\mr{x}(G))
\circ
s_{\mr{x}(G)})t)
\\
&=
\mr{T}[\mf{S},s]_{\mr{x}(G)}(t)
\\
&=
(\mr{T}[\mf{S},s]\circ\mr{x}^{o})(G)(t),
\end{aligned}
\end{equation*}
where we used Rmk. \ref{11201224} in the $3$th and $5$th equalities,
the $1$th inclusion arises by 
Thm. \ref{10081910}\eqref{10081910st2} and 
Def. \ref{01161922}\eqref{01161922st2},
the $2$th inclusion by Thm. \ref{10081910}\eqref{10081910st1} and 
Def. \ref{01161844}\eqref{01161844st3b},
the $6$th equality by hypothesis on $r$
and the $7$th equality by hypothesis on $s$.
Hence
\begin{equation}
\label{11212016}
\mf{X}_{G}(t)\subseteq(\mr{T}[\mf{S},s]\circ\mr{x}^{o})(G)(t).
\end{equation}
Finally \eqref{11231714},
\eqref{11201805} and \eqref{11212016} yield the statement.
\end{proof}
\section{The $2-$category $2-\mf{dp}$}
\label{11301655}
In section \ref{11301655a} we introduce the general 
structures utilized in section \ref{11301655b}
in order to construct the $2-$category $2-\mf{dp}$
and the fibered category of connectors.
Finally in section \ref{11301655c} we physically intepret the
introduced structures. 
\subsection{General definitions}
\label{11301655a}
For any category $\mr{C}$ we will define the $2-$category $2-\mr{C}$ 
Def. \ref{01181755} as 
a subcategory of $\mr{Cat}$ with the same object set and 
any morphism category $2-\mr{C}(\mr{A},\mr{B})$ 
roughly formed by factorizable functors through $\mr{C}$
as objects, with morphisms  
horizontal composition of natural transformations 
between the corresponding factors
\textbf{Def. \ref{09011927}}.
In section \ref{11301655b} we shall apply the construction to the case $\mr{C}=\mf{dp}$.
A standard construction applied to our category 
$2-\mr{C}(\mr{A},\mr{B})$ allows to provide
$\mr{Mor}_{2-\mr{C}(\mr{A},\mr{B})}(f,g)$
with the structure of a category 
for $f,g\in 2-\mr{C}(\mr{A},\mr{B})$
Prp. \ref{09021341}.
These categories are employed to introduce the fibered category 
of connectors \textbf{Def. \ref{10031410}}.
\begin{definition}
\label{09011927}
Let $\mr{A},\mr{B},\mr{C}$ be categories, 
define $2-\mr{C}(\mr{A},\mr{B})$ the subcategory of $\mr{Fct}(\mr{A},\mr{B})$ 
such that 
\begin{equation*}
\mr{Obj}(2-\mr{C}(\mr{A},\mr{B}))
\coloneqq
\{\Upphi\circ\Uppsi\mid\Uppsi\in\mr{Fct}(\mr{A},\mr{C}),\Upphi\in\mr{Fct}(\mr{C},\mr{B})\}, 
\end{equation*}
where $\circ$ is the standard composition of functors, while for any $f,g\in 2-\mr{C}(\mr{A},\mr{B})$, 
\begin{multline*}
\mr{Mor}_{2-\mr{C}(\mr{A},\mr{B})}(f,g)\coloneqq
\{\upbeta\ast\upalpha\mid\upbeta\in \mr{Mor}_{\mr{Fct}(\mr{C},\mr{B})}(\Upphi_{1},\Upphi_{2}),
\upalpha\in \mr{Mor}_{\mr{Fct}(\mr{A},\mr{C})}(\Uppsi_{1},\Uppsi_{2});\\
\Uppsi_{i}\in\mr{Fct}(\mr{A},\mr{C}),\Upphi_{i}\in\mr{Fct}(\mr{C},\mr{B}), i\in\{1,2\};
\Upphi_{1}\circ\Uppsi_{1}=f, \Upphi_{2}\circ\Uppsi_{2}=g
\}.
\end{multline*}
\end{definition}
The definition of $\mr{Mor}_{2-\mr{C}}(\mr{A},\mr{B})$ is well-done indeed the morphism composition law in 
$\mr{Mor}_{2-\mr{C}}(\mr{A},\mr{B})$ inherited by the one in $\mr{Fct}(\mr{A},\mr{B})$ is an internal operation by using
\cite[Prp. $1.3.5$]{28bor}. 
Since \cite[Def. 1.2.5]{28ks} 
the set of morphisms of any category can be provided with the structure
of category, then in the special case of the category $2-\mr{C}(\mr{A},\mr{B})$
we have that
$\mr{Mor}_{2-\mr{C}(\mr{A},\mr{B})}(f,g)$ is a subcategory 
of the category 
$\mr{Mor}_{2-\mr{C}(\mr{A},\mr{B})}$
that does not provide 
$2-\mr{C}(\mr{A},\mr{B})$ the structure of $2-$category. 
However we will prove in Prp. \ref{09021339} 
the existence of a partial product between the 
morphism sets of these categories. 
Next \cite[Def. 1.2.5]{28ks} applied to our category $2-\mr{C}(\mr{A},\mr{B})$
reads as follows
\begin{definition}
\label{09011944}
Let $\mr{A},\mr{B},\mr{C}$ be categories, $f,g\in 2-\mr{C}(\mr{A},\mr{B})$ 
and $\delta,\ep,\eta\in \mr{Mor}_{2-\mr{C}(\mr{A},\mr{B})}(f,g)$,
define
\begin{multline*}
(f,g)\lr{\delta}{\ep}
\coloneqq\{(u,v)\in
\mr{Mor}_{2-\mr{C}(\mr{A},\mr{B})}(f,f)
\times
\mr{Mor}_{2-\mr{C}(\mr{A},\mr{B})}(g,g)
\,\vert\,
\ep\circ u=v\circ\delta\},
\end{multline*}
moreover define
\begin{equation*}
(\circ):(f,g)\lr{\ep}{\eta}\times(f,g)\lr{\delta}{\ep}\to(f,g)\lr{\delta}{\eta}
\end{equation*}
such that 
$(w,s)\circ(u,v)\coloneqq(w\circ u,s\circ v)$.
\end{definition}
Easy to prove is then the following
\begin{proposition}
\label{09021341}
Let $\mr{A},\mr{B},\mr{C}$ be categories and $f,g\in 2-\mr{C}(\mr{A},\mr{B})$. 
There exists a unique category whose object set is $\mr{Mor}_{2-\mr{C}(\mr{A},\mr{B})}(f,g)$, 
while $(f,g)\lr{\delta}{\ep}$ is the class
of the morphisms from $\delta$ to $\ep$, for all $\delta,\ep\in \mr{Mor}_{2-\mr{C}(\mr{A},\mr{B})}(f,g)$, 
and the morphism composition is the one defined in Def. \ref{09011944}.
\end{proposition}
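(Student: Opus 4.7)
The plan is standard: since the proposition specifies all the data (object set, hom-classes, composition law), uniqueness of the category is automatic, so the content reduces to verifying the three category axioms for the triple described in Def.~\ref{09011944} — well-definedness of the composition into the claimed target hom-class, associativity, and existence of two-sided identities.

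First I would check well-definedness. Given $(u,v)\in(f,g)\lr{\delta}{\ep}$ and $(w,s)\in(f,g)\lr{\ep}{\eta}$, the target relation $\eta\circ(w\circ u)=(s\circ v)\circ\delta$ is a one-line chain using associativity of composition in $\ms{Fct}(\ms{A},\ms{B})$ together with the defining relations $\eta\circ w=s\circ\ep$ and $\ep\circ u=v\circ\delta$. Associativity of $\circ$ in Def.~\ref{09011944} is componentwise, so it reduces at once to associativity of morphism composition in the hom-sets $Mor_{2-\ms{C}(\ms{A},\ms{B})}(f,f)$ and $Mor_{2-\ms{C}(\ms{A},\ms{B})}(g,g)$, both of which are inherited from vertical composition of natural transformations in $\ms{Fct}(\ms{A},\ms{B})$.

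For the identity at $\delta$ I would propose $(\un_f,\un_g)$, the pair of identity natural transformations in $\ms{Fct}(\ms{A},\ms{B})$. The only point deserving attention — and the mildest potential obstacle — is confirming that $\un_f$ and $\un_g$ actually lie in the restricted hom-sets of Def.~\ref{09011927}: writing $f=\Upphi\circ\Uppsi$ with $\Uppsi\in\ms{Fct}(\ms{A},\ms{C})$ and $\Upphi\in\ms{Fct}(\ms{C},\ms{B})$, one has $\un_f=\un_\Upphi\ast\un_\Uppsi$ via the identity $\un_G\ast\un_F=\un_{G\circ F}$ recorded just before \eqref{bor135}, exhibiting $\un_f$ as a Godement product of natural transformations between admissible factors of $f$, and analogously for $\un_g$. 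The relation defining membership in $(f,g)\lr{\delta}{\delta}$ then holds trivially since $\delta\circ\un_f=\delta=\un_g\circ\delta$, and the left and right identity laws $(u,v)\circ(\un_f,\un_g)=(u,v)=(\un_f,\un_g)\circ(u,v)$ reduce to the corresponding laws in each component. This completes the verification; no real obstacle arises, as every step unfolds from the ambient structure of $\ms{Fct}(\ms{A},\ms{B})$ combined with the factorization requirement built into Def.~\ref{09011927}.
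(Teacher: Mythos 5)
Your verification is correct and is exactly the routine check the paper omits (it states the proposition with only the remark ``Easy to prove''), following the standard construction of \cite[Def. 1.2.5]{ks} that the text invokes. The one point genuinely worth spelling out — that $\un_{f}$ and $\un_{g}$ belong to the restricted hom-sets of Def. \ref{09011927} via $\un_{\Upphi\circ\Uppsi}=\un_{\Upphi}\ast\un_{\Uppsi}$ — is handled correctly in your argument.
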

\begin{definition}
\label{09021122}
Let $\mr{A},\mr{B},\mr{C}$ be categories, 
$f,g,h,k\in 2-\mr{C}(\mr{A},\mr{B})$, $\alpha,\alpha'\in \mr{Mor}_{2-\mr{C}(\mr{A},\mr{B})}(f,g)$
and $\beta,\beta'\in \mr{Mor}_{2-\mr{C}(\mr{A},\mr{B})}(h,k)$, define 
\begin{equation*}
D_{h,k;f,g}^{\beta,\beta';\alpha,\alpha'}
\coloneqq
\{((w,z);(u,v))\in(h,k)\lr{\beta}{\beta'}\times(f,g)\lr{\alpha}{\alpha'}\,\vert\, h\circ v=w\circ g\}.
\end{equation*}
Moreover define the map 
\begin{equation*}
\star:D_{h,k;f,g}^{\beta,\beta';\alpha,\alpha'}\ni((w,z),(u,v))\mapsto(h\circ u,z\circ g).
\end{equation*}
\end{definition}
\begin{proposition}
\label{09021339}
Let $\mr{A},\mr{B},\mr{C}$ be categories, 
$f,g,h,k\in 2-\mr{C}(\mr{A},\mr{B})$, $\alpha,\alpha'\in \mr{Mor}_{2-\mr{C}(\mr{A},\mr{B})}(f,g)$
and $\beta,\beta'\in \mr{Mor}_{2-\mr{C}(\mr{A},\mr{B})}(h,k)$, then
$\star:D_{h,k;f,g}^{\beta,\beta';\alpha,\alpha'}\to(h\circ f,k\circ g)\lr{\beta\ast\alpha}{\beta'\ast\alpha'}$.
\end{proposition}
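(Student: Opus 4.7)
The plan is to verify the three defining properties of membership in $(h\circ f, k\circ g)\lr{\beta\ast\alpha}{\beta'\ast\alpha'}$ for the pair $(h\circ u,\, z\circ g)$ produced by $\star$: namely, that $h\circ u\in Mor_{2-\ms{C}(\ms{A},\ms{B})}(h\circ f, h\circ f)$, that $z\circ g\in Mor_{2-\ms{C}(\ms{A},\ms{B})}(k\circ g, k\circ g)$, and that the interchange equation $(\beta'\ast\alpha')\circ(h\circ u) = (z\circ g)\circ(\beta\ast\alpha)$ holds in $\ms{Fct}(\ms{A},\ms{B})$.

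First I would interpret the whiskerings $h\circ u$ and $z\circ g$ as the Godement products $\un_h\ast u$ and $z\ast\un_g$ respectively. For the membership assertions, since $u\in Mor_{2-\ms{C}(\ms{A},\ms{B})}(f,f)$ admits by Def.~\ref{09011927} a factorization $u=\upbeta_u\ast\upalpha_u$ with $\upalpha_u\in Mor_{\ms{Fct}(\ms{A},\ms{C})}$ and $\upbeta_u\in Mor_{\ms{Fct}(\ms{C},\ms{B})}$, the associativity of $\ast$ gives $\un_h\ast u = (\un_h\ast\upbeta_u)\ast\upalpha_u$, which is again a Godement product whose right factor lives in $Mor_{\ms{Fct}(\ms{A},\ms{C})}$ and whose left factor lives in $Mor_{\ms{Fct}(\ms{C},\ms{B})}$; hence it belongs to the required hom-set. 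The argument for $z\ast\un_g$ is symmetric, starting from a factorization of $z$.

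The substantive part is the interchange equation, which I would derive by two applications of \eqref{bor135}. On the left, $(\beta'\ast\alpha')\circ(\un_h\ast u) = (\beta'\circ\un_h)\ast(\alpha'\circ u) = \beta'\ast(\alpha'\circ u)$, and substituting the relation $\alpha'\circ u = v\circ\alpha$ coming from $(u,v)\in(f,g)\lr{\alpha}{\alpha'}$ produces $\beta'\ast(v\circ\alpha)$. On the right, $(z\ast\un_g)\circ(\beta\ast\alpha) = (z\circ\beta)\ast(\un_g\circ\alpha) = (z\circ\beta)\ast\alpha$, and substituting $z\circ\beta = \beta'\circ w$ coming from $(w,z)\in(h,k)\lr{\beta}{\beta'}$ produces $(\beta'\circ w)\ast\alpha$. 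Evaluating both expressions pointwise at an arbitrary object $M$ via the explicit formula for the Godement product, the desired equality reduces to the identity $h(v_M) = w_{g_o(M)}$ for every $M$, and this is exactly the pointwise form of the remaining constraint $h\circ v = w\circ g$ that singles $D_{h,k;f,g}^{\beta,\beta';\alpha,\alpha'}$ out of the Cartesian product $(h,k)\lr{\beta}{\beta'}\times(f,g)\lr{\alpha}{\alpha'}$.

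The main obstacle is purely bookkeeping: three separate commutation conditions must conspire, via two applications of the interchange law, to yield the single equation demanded by the codomain of $\star$. Once arranged, the computation also makes transparent why the constraint defining $D$ is the right one: it is precisely the middle-four-interchange that ensures the horizontal composite $\beta\ast\alpha$ is compatible with the hom-set formalism $(-,-)\lr{-}{-}$, preparing the ground for $\star$ to realize the horizontal composition of $2$-cells in $2-\ms{C}$.
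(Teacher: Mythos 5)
Your proof is correct and follows essentially the same route as the paper's: the paper establishes the interchange equation by a single pointwise chain of equalities at each object $M$, namely $(\beta'\ast\alpha')_{M}\circ h(u_{M})=\beta'_{g(M)}\circ h(\alpha'_{M}\circ u_{M})=\beta'_{g(M)}\circ h(v_{M}\circ\alpha_{M})=\beta'_{g(M)}\circ w_{g(M)}\circ h(\alpha_{M})=z_{g(M)}\circ(\beta\ast\alpha)_{M}$, which is exactly your two applications of \eqref{bor135} unfolded componentwise, using the three commutation conditions in the same order. The only genuine addition on your side is the explicit verification, via a factorization of $u$ and $z$ and the associativity of $\ast$, that $\un_{h}\ast u$ and $z\ast\un_{g}$ lie in the required hom-sets of $2-\ms{C}(\ms{A},\ms{B})$ — a point the paper's proof leaves implicit.
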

\begin{proof}
Let $((w,z);(u,v))\in D_{h,k;f,g}^{\beta,\beta';\alpha,\alpha'}$ and $M\in\mr{A}$ then
\begin{equation*}
\begin{aligned}
(\beta'\ast\alpha')_{M}\circ h(u_{M})&=\beta'_{g(M)}\circ h(\alpha_{M}'\circ u_{M})\\
&=\beta'_{g(M)}\circ h(v_{M}\circ\alpha_{M})\\
&=\beta'_{g(M)}\circ w_{g(M)}\circ h(\alpha_{M})\\
&=z_{g(M)}\circ\beta_{g(M)}\circ h(\alpha_{M})=z_{g(M)}\circ(\beta\ast\alpha)_{M},
\end{aligned}
\end{equation*}
where the second equality comes since $(u,v)\in(f,g)\lr{\alpha}{\alpha'}$, 
the third one by $h(v_{M})=w_{g(M)}$, the fourth one by 
$(w,z)\in(h,k)\lr{\beta}{\beta'}$.
\end{proof}
\begin{definition}
\label{17270730}
Given any object $\mr{C}$ of $\mr{Cat}$, 
let $2-\mr{C}$ be the subcategory of $\mr{Cat}$, whose object set is $\mr{Obj}(\mr{Cat})$, and 
$\mr{Mor}_{2-\mr{C}}(\mr{A},\mr{B})\coloneqq 2-\mr{C}(\mr{A},\mr{B})$, for any $\mr{A},\mr{B}\in\mr{Cat}$.
\end{definition}
\begin{proposition}
\label{01181755}
$2-\mr{C}$ is a $2-$category such that 
$\mr{Mor}_{\mr{Mor}_{2-\mr{C}}(\mr{A},\mr{B})}(f,g)$
is a category for all
$f,g\in \mr{Mor}_{2-\mr{C}}(\mr{A},\mr{B})$
and $\mr{A},\mr{B}\in2-\mr{C}$.
\end{proposition}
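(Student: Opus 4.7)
The plan is to reduce the proposition to facts already established in the excerpt, principally the interchange law \eqref{bor135}, the identity $\un_{G \circ F} = \un_{G} \ast \un_{F}$, the associativity of the Godement product $\ast$, and Prp.~\ref{09021341}. So the task is mainly bookkeeping about factorizations through $\ms{C}$, together with checking that the definitions assemble into the axioms of a $2$-category.

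First, I would verify the two pieces of local data. For any $\ms{A},\ms{B}\in\ms{Cat}$, Def.~\ref{09011927} together with \cite[Prp.~1.3.5]{bor} shows that $2-\ms{C}(\ms{A},\ms{B})$ is closed under the vertical composition inherited from $\ms{Fct}(\ms{A},\ms{B})$, so it is a category (with identities $\un_{\Upphi\circ\Uppsi}=\un_{\Upphi}\ast\un_{\Uppsi}$, which belong to $Mor_{2-\ms{C}(\ms{A},\ms{B})}(\Upphi\circ\Uppsi,\Upphi\circ\Uppsi)$ by construction). Next, for any three objects $\ms{A},\ms{B},\ms{D}$ and any objects $f=\Upphi_{1}\circ\Uppsi_{1}$, $g=\Upphi_{2}\circ\Uppsi_{2}$ of $2-\ms{C}(\ms{A},\ms{B})$ and $h=\Xi_{1}\circ\Theta_{1}$, $k=\Xi_{2}\circ\Theta_{2}$ of $2-\ms{C}(\ms{B},\ms{D})$, associativity of functor composition gives $h\circ f=(\Xi_{1}\circ\Theta_{1}\circ\Upphi_{1})\circ\Uppsi_{1}$ and $k\circ g=(\Xi_{2}\circ\Theta_{2}\circ\Upphi_{2})\circ\Uppsi_{2}$, so $h\circ f,k\circ g\in 2-\ms{C}(\ms{A},\ms{D})$. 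Given $\upbeta\ast\upalpha\in Mor_{2-\ms{C}(\ms{A},\ms{B})}(f,g)$ and $\updelta\ast\upgamma\in Mor_{2-\ms{C}(\ms{B},\ms{D})}(h,k)$, the associativity of $\ast$ yields
\begin{equation*}
(\updelta\ast\upgamma)\ast(\upbeta\ast\upalpha)=(\updelta\ast\upgamma\ast\upbeta)\ast\upalpha,
\end{equation*}
where $\updelta\ast\upgamma\ast\upbeta$ is a natural transformation in $\ms{Fct}(\ms{C},\ms{D})$ between $\Xi_{1}\circ\Theta_{1}\circ\Upphi_{1}$ and $\Xi_{2}\circ\Theta_{2}\circ\Upphi_{2}$, and $\upalpha\in Mor_{\ms{Fct}(\ms{A},\ms{C})}(\Uppsi_{1},\Uppsi_{2})$, so this belongs to $Mor_{2-\ms{C}(\ms{A},\ms{D})}(h\circ f,k\circ g)$. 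Hence horizontal composition of $2$-cells restricts to a well-defined assignment
\begin{equation*}
\ast: Mor_{2-\ms{C}(\ms{B},\ms{D})}(h,k)\times Mor_{2-\ms{C}(\ms{A},\ms{B})}(f,g)\to Mor_{2-\ms{C}(\ms{A},\ms{D})}(h\circ f,k\circ g).
\end{equation*}

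Second, I would promote the pair (composition of $1$-cells, horizontal composition of $2$-cells) to a bifunctor $2-\ms{C}(\ms{B},\ms{D})\times 2-\ms{C}(\ms{A},\ms{B})\to 2-\ms{C}(\ms{A},\ms{D})$. Preservation of identities follows from $\un_{h\circ f}=\un_{h}\ast\un_{f}$, and preservation of composition is precisely the interchange law \eqref{bor135} applied inside $\ms{Fct}(\ms{A},\ms{D})$. Associativity of $1$-cell composition holds already in $\ms{Cat}$; the corresponding coherence at the $2$-cell level (associativity of $\ast$) and the two unit laws likewise reduce to the identities in $\ms{Fct}$ recalled in section \ref{not1}. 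This gives $2-\ms{C}$ the structure of a $2$-category.

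Third, for the additional statement I would invoke Prp.~\ref{09021341} directly: for any $\ms{A},\ms{B}\in\ms{Cat}$ and any $f,g\in Mor_{2-\ms{C}}(\ms{A},\ms{B})=2-\ms{C}(\ms{A},\ms{B})$, the set $Mor_{2-\ms{C}(\ms{A},\ms{B})}(f,g)$ is the object set of a category whose morphisms are the classes $(f,g)\lr{\delta}{\ep}$ of Def.~\ref{09011944}, hence $Mor_{Mor_{2-\ms{C}}(\ms{A},\ms{B})}(f,g)$ is a category, as desired.

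The only potential obstacle is a notational one, namely keeping the two uses of $\circ$ (vertical composition of $2$-cells, and composition of $1$-cells/functors) disentangled while applying \eqref{bor135}; once that bookkeeping is done there is no new content beyond what is already recorded in section \ref{not1} and in Prp.~\ref{09021341}.
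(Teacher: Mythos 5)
Your proposal is correct and follows essentially the same route as the paper: the paper's proof simply notes that $\ast$ is an internal law (citing Borceux), that the $2$-category axioms are inherited from $\ms{Cat}$, and that the final claim is Prp.~\ref{09021341}. Your explicit re-bracketing $(\updelta\ast\upgamma)\ast(\upbeta\ast\upalpha)=(\updelta\ast\upgamma\ast\upbeta)\ast\upalpha$ is exactly the bookkeeping the paper leaves implicit when asserting that horizontal composition stays inside the restricted morphism classes.
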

\begin{proof}
The second morphism composition law $\ast$ in it is an 
internal law since \cite[Prp. $1.3.4$]{28bor}, then the 
first sentence of the statement follows
since $\mr{Cat}$ is a $2-$category,
the second one follows since
Prp. \ref{09021341}. 
\end{proof}
\subsection{Fibered category of connectors}
\label{11301655b}
Here we apply the results of section \ref{11301655a}
to the case $\mr{C}=\mf{dp}$.
Since Cor. \ref{09100952} $\mf{dp}$ 
is an object of $\mr{Cat}$.
Hence $2-\mf{dp}$ is well-set according Def. \ref{17270730}, 
and it is a $2-$category since Prp. \ref{01181755},
thus we can set the following
\begin{definition}
[Fibered category of species]
\label{10031349}
Define $\mf{Sp}$ the fibered category over $2-\mf{dp}$ such that 
for all $\mf{D}\in 2-\mf{dp}$
\begin{equation*}
\mf{Sp}(\mf{D})=2-\mf{dp}(\mf{D},\mf{Chdv}),
\end{equation*}
moreover set
\begin{equation*}
\mf{Sp}_{\ast}\coloneqq\{(\mr{a},\mr{b})\in\mf{Sp}\times\mf{Sp}
\,\vert\, 
d(\mr{a})=d(\mr{b})
\}.
\end{equation*}
\end{definition}
Notice that since Thm. \ref{12312222} we have that 
$\ps{\Uppsi}\circ\mr{Fct}(\mf{D},\mf{dp})\subset
\mf{Sp}(\mf{D})$.
\begin{definition}
[Fibered category of connectors]
\label{10031410}
Let $\mf{Cnt}$ be the fibered category over $\mf{Sp}_{\ast}$ such that for all 
$(\mr{a},\mr{b})\in\mf{Sp}_{\ast}$ 
\begin{equation*}
\mf{Cnt}(\mr{a},\mr{b})\coloneqq
\mr{Mor}_{\mf{Sp}(d(\mr{a}))}(\mr{a},\mr{b}),
\end{equation*}
with the category structure described in Prp. \ref{09021341},
called the category of connectors from $\mf{s}(\mr{a})$ to $\mf{s}(\mr{b})$.
Define for all $\mr{a}\in\mf{Sp}$ the full subcategory 
$\mf{Cnt}_{\mr{a}}$ of $\mf{Cnt}$ such that 
$\mr{Obj}(\mf{Cnt}_{\mr{a}})=\{\mr{T}\in\mf{Cnt}\,\vert\,d(\mr{T})=\mr{a}\}$,
and let $\mf{Sct}$ be the fibered category over $\mf{Sp}$ such that for all $\mr{a}\in\mf{Sp}$
we have that 
$\mf{Sct}(\mr{a})=\mf{Cnt}(\mr{a},\mr{a})$,
called the category of sectors of $\mf{s}(\mr{a})$.
\end{definition}
\begin{definition}
[Experimental settings generated by $\mf{Cnt}$]
\label{10051613}
Let $\mf{ex}$ be the fibered set over $\mf{Cnt}$ such that for all $\mr{T}\in\mf{Cnt}$
\begin{equation*}
\mf{ex}(\mr{T})=\{\mr{T}[\mf{Q},t]\,\vert\,\mf{Q}\in Exp(c(\mr{T})),
t\in\Upgamma(\mf{Q},\mr{T})\},
\end{equation*}
called the collection of experimental settings generated by $\mf{s}(\mr{T})$,
well-set since $\mf{ex}(\mr{T})\subseteq Exp(d(\mr{T}))$ by Thm. \ref{10081910}.
\end{definition}
\begin{remark}
\label{10151948}
Since for any $\mr{a}\in\mf{Sp}$ and any $\mf{Q}=(\mf{S},\mr{R})\in Exp(\mr{a})$
the map on $\mr{Obj}(d(\mr{a}))$, 
assigning to any object $M$ the identity map 
on $\mr{R}_{M}$, belongs to $\Upgamma(\mf{Q},\un_{\mr{a}})$ we have that that
$\mf{ex}(\un_{\mr{a}})=Exp(\mr{a})$, i.e. the collection of experimental settings
generated by the identity morphism of any species is the maximal one, namely
equals the whole set of experimental settings of that species.
\end{remark}
Next we use Cor. \ref{10151636} 
in order to relate any morphism between connectors
to the experimental settings they generate.
\begin{proposition}
\label{10151140}
Let $(\mr{a},\mr{b})\in\mf{Sp}_{\ast}$, $\mr{S},\mr{T}\in\mf{Cnt}(\mr{a},\mr{b})$,
then
\begin{enumerate}
\item
$\mr{Mor}_{\mf{Cnt}(\mr{a},\mr{b})}(\mr{S},\mr{T})=
\{(u,v)\in\mf{Sct}(\mr{a})\times\mf{Sct}(\mr{b})
\,\vert\,
\mr{T}\circ u=v\circ\mr{S}\}$.
\label{10151140st1}
\item
Let $\mf{Q}\in Exp(\mr{b})$ and 
$(u,v)\in \mr{Mor}_{\mf{Cnt}(\mr{a},\mr{b})}(\mr{S},\mr{T})$.
If there exist 
$r\in\Upgamma(\mf{Q},\mr{T})$,
$q\in\Upgamma(\mr{T}[\mf{Q},r],u)$
and
$l\in\Upgamma(\mf{Q},v)$,
$n\in\Upgamma(v[\mf{Q},l],\mr{S})$
such that 
$r\circ q=l\circ n$,
then 
\begin{equation*}
u[\mr{T}[\mf{Q},r],q]
=
\mr{S}[v[\mf{Q},l],n].
\end{equation*}
\label{10151140st2}
\end{enumerate}
\end{proposition}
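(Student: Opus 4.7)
Statement (1) is direct. By Def.~\ref{10031410} the category $\mf{Cnt}(\ms{a},\ms{b})$ is $Mor_{\mf{Sp}(d(\ms{a}))}(\ms{a},\ms{b})$ with the structure of Prp.~\ref{09021341}, i.e.\ the specialization of Def.~\ref{09011944} to $f=\ms{a}$, $g=\ms{b}$, $\delta=\ms{S}$, $\ep=\ms{T}$; the resulting $(\ms{a},\ms{b})\lr{\ms{S}}{\ms{T}}$ is exactly the displayed set, once one identifies $Mor_{2-\mf{dp}(d(\ms{a}),\mf{Chdv})}(\ms{a},\ms{a})$ with $\mf{Sct}(\ms{a})$ and analogously for $\ms{b}$.

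For statement (2) my plan is a direct fibrewise verification over $\mf{D}$. Fix $M\in\mf{D}$ and $t\in Obj(u[\ms{T}[\ms{R},r],q]_M)$. From $\ms{T}\circ u=v\circ\ms{S}$ (which is statement (1)) together with Rmk.~\ref{11211156} one obtains $u_1^o(M)\circ\ms{T}_1^o(M)=\ms{S}_1^o(M)\circ v_1^o(M)$ and $u_1^m(M)\circ\ms{T}_1^m(M)=\ms{S}_1^m(M)\circ v_1^m(M)$; combined with $r\circ q=l\circ n$ these force the two subcategories of $\mr{G}_M^{\ms{a}}$ underlying the two sides to coincide. For the statistical-ensemble fibres, unfolding Def.~\ref{10081523} yields
\begin{equation*}
u[\ms{T}[\mf{S},r],q]_M(t)=\bigl(\ms{T}_3(M)((rq)_M t)\circ u_3(M)(q_M t)\bigr)^{\dagger}\mf{S}_M((rq)_M t),
\end{equation*}
and the analogue with $(v,\ms{S})$ and $(l,n)$ in place of $(\ms{T},u)$ and $(r,q)$. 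The $\mf{S}_M$ factors coincide because $r\circ q=l\circ n$; for the operator factors, applying Rmk.~\ref{11211156} to $\ms{T}\circ u$ and to $v\circ\ms{S}$ at the common point $y=(rq)_M t=(ln)_M t$ and exploiting $\ms{T}\circ u=v\circ\ms{S}$ yields
\begin{equation*}
\ms{T}_3(M)(y)\circ u_3(M)(\ms{T}_1^o(M)y)=v_3(M)(y)\circ\ms{S}_3(M)(v_1^o(M)y).
\end{equation*}
Matching this with the two displayed expressions reduces the equality to the identities $\ms{T}_1^o(M)(r_M(q_M t))=q_M t$ and $v_1^o(M)(l_M(n_M t))=n_M t$.

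The main obstacle is precisely these last two reductions, since here---unlike in Cor.~\ref{10151636}---one does not assume global injectivity of $\ms{T}_1^o(M)$ or $v_1^o(M)$. Both follow automatically from Def.~\ref{10081523}\eqref{10111706}: the condition $r_M\circ\ms{T}_1^o(M)=\un_{Obj(\ms{R}_M)}$ makes $\ms{T}_1^o(M)\up Obj(\ms{R}_M)$ injective (a left-invertible restriction is injective on the domain of the left inverse), whence $\ms{T}_1^o(M)\circ r_M=\un$ holds on the image $\ms{T}_1^o(M)(Obj(\ms{R}_M))$, to which $q_M t$ belongs by the very definition of $q\in\Upgamma(\ms{T}[\mf{Q},r],u)$; the symmetric argument disposes of $l$ and $n$. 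The subtle point of the proof is keeping every manipulation inside the relevant images, so that the Rmk.~\ref{11211156} formula and its dualisation through Lemma~\ref{08080010}\eqref{08080010st2} apply in the expected way.
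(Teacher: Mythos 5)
Your proof is correct and, at bottom, takes the same route as the paper: statement \eqref{10151140st1} is read off from Def.~\ref{10031410}, Prp.~\ref{09021341} and Def.~\ref{09011944}, and statement \eqref{10151140st2} is obtained from the charge-composition identity of Cor.~\ref{10151636} applied to $\ms{T}\circ u$ and to $v\circ\ms{S}$, using $\ms{T}\circ u=v\circ\ms{S}$ and $r\circ q=l\circ n$ to equate $(\ms{T}\circ u)[\mf{Q},r\circ q]$ with $(v\circ\ms{S})[\mf{Q},l\circ n]$. The difference is that you inline the computation instead of citing the corollary, and this is actually valuable: the equality in Cor.~\ref{10151636} is stated only under the hypotheses that $\ms{T}_{1}^{o}(M)$ be injective for every $M$ and that $r$ be the projection of Rmk.~\ref{10201649}, neither of which is assumed in the present proposition, so the paper's one-line citation leaves a small gap. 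Your observation that Def.~\ref{10081523}\eqref{10111706} already makes $r_{M}$ a two-sided inverse of $\ms{T}_{1}^{o}(M)\up Obj(\ms{R}_{M})$ onto its image --- so that $\ms{T}_{1}^{o}(M)(r_{M}(q_{M}t))=q_{M}t$ and $v_{1}^{o}(M)(l_{M}(n_{M}t))=n_{M}t$ hold for arbitrary $r\in\Upgamma(\mf{Q},\ms{T})$ and $l\in\Upgamma(\mf{Q},v)$ --- closes that gap and shows that the relevant consequences of those extra hypotheses are automatic for any element of $\Upgamma$. The rest of your fibrewise verification (object and morphism sets via $u_{1}(M)\circ\ms{T}_{1}(M)=\ms{S}_{1}(M)\circ v_{1}(M)$ from Rmk.~\ref{11211156}, ensemble fibres via Lemma~\ref{08080010}\eqref{08080010st2}) reproduces the computation in the paper's proof of Cor.~\ref{10151636} on each side of the equality.
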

\begin{proof}
St. \eqref{10151140st1} 
is trivial, st. \eqref{10151140st2} follows 
by st. \eqref{10151140st1} and Cor. \ref{10151636}. 
\end{proof}
\begin{remark}
Notice that $u[\mr{T}[\mf{Q},r],q]\in\mf{ex}(\mr{S})$,
namely any morphism from the connector 
$\mr{S}$ to the connector $\mr{T}$, induces a map
from a subset of $\mf{ex}(\mr{T})$ to $\mf{ex}(\mr{S})$,
representing the direct empirical meaning of morphisms between connectors.
Even the existence of an isomorphism between connectors does not imply
the equality of the collections of experimental settings they generate.
This consideration makes unfeasible the attempt to generalize in our context the 
procedure of passing to quotient with respect to unitary sectors
performed in the DHR analysis of algebraic covariant sectors.
Only by choosing suitable species $\mr{a}$ 
and restricting to a specific subcategory say $\Updelta(\mr{a})$ of $\mf{Sct}(\mr{a})$
we obtain $\mf{ex}(\mr{S})=\mf{ex}(\mr{T})$
for any $\mr{S},\mr{T}\in\Updelta(\mr{a})$ such that 
$\mr{Inv}_{\Updelta(\mr{a})}(\mr{S},\mr{T})\neq\varnothing$.
The reason inherits by the fact that in our framework different connectors, or even sectors,
in general generate quite different experimental settings. 
Because of this rather than seeking for an equivalence of 
the category of some known mathematical structure
with a distinct subcategory of equivalence classes of sectors
of some suitable species;
it is preferable to establish an equivalence between 
a suitable $2-$category and $2-\mf{dp}$.
This in line with our original purpose of showing that 
there are properties of invariance concerning \emph{diverse} species, 
an emblematic case is the equiformity principle between 
a classical and a quantum species,
we shall return to this point.
\end{remark}
Next we present some easy to prove algebraic properties of $\mf{Cnt}$.
\begin{proposition}
\label{10052123}
\begin{enumerate}
\item
$(\forall\mr{T},\mr{S}\in\mf{Cnt})
((\mr{T},\mr{S})\in Dom(\circ)\Rightarrow\mr{T}\circ\mr{S}\in\mf{Cnt})$;
\item
$(\forall\mr{T}\in\mf{Cnt})
(\forall\mr{L}\in 2-cell(2-\mf{dp}))
((\mr{T},\mr{L})\in Dom(\ast)\Rightarrow\mr{T}\ast\mr{L}\in\mf{Cnt})$;
\item
$\mr{Mor}_{\mf{Sp}(\mf{Chdv})}\ast\mf{Cnt}\subset\mf{Cnt}$.
\end{enumerate}
\end{proposition}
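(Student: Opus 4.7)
The plan is to reduce all three closure statements to the algebra of vertical and horizontal composition of natural transformations, combined with the explicit shape of the morphism sets of $\mf{Sp}(\mf{D})=2-\mf{dp}(\mf{D},\mf{Chdv})$ furnished by Def. \ref{09011927}. The main tools are associativity of the Godement product $\ast$, the interchange rule \eqref{bor135}, and the $2$-category structure of $2-\mf{dp}$ recorded in Prp. \ref{01181755}.

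Statement $(1)$ is the shortest. If $\ms{T},\ms{S}\in\mf{Cnt}$ are vertically composable, then as natural transformations they share an intermediate species, so $\ms{S}\in\mf{Cnt}(\ms{a},\ms{b})$ and $\ms{T}\in\mf{Cnt}(\ms{b},\ms{c})$ for some common context $\mf{D}$; the composite $\ms{T}\circ\ms{S}$ is then the internal composition of the category $\mf{Sp}(\mf{D})$, which is closed under $\circ$ by the remark following Def. \ref{09011927}, and hence lies in $Mor_{\mf{Sp}(\mf{D})}(\ms{a},\ms{c})=\mf{Cnt}(\ms{a},\ms{c})\subset\mf{Cnt}$.

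Statement $(2)$ is the main content. Writing $\ms{T}\colon\ms{a}\to\ms{b}$ with $\ms{a},\ms{b}\in\mf{Sp}(\mf{D})$ and $\ms{L}\colon\ms{x}\to\ms{y}$ with $\ms{x},\ms{y}\in 2-\mf{dp}(\mf{G},\mf{D})$, the composition $\ms{T}\ast\ms{L}$ is a natural transformation $\ms{a}\circ\ms{x}\to\ms{b}\circ\ms{y}$. I must verify both that its source and target are species in $\mf{Sp}(\mf{G})$, and that the transformation itself is a morphism of that category. For the source, if $\ms{a}=\Upphi_{a}\circ\Uppsi_{a}$ with $\Uppsi_{a}\colon\mf{D}\to\mf{dp}$ and $\Upphi_{a}\colon\mf{dp}\to\mf{Chdv}$, and $\ms{x}=\Upphi_{x}\circ\Uppsi_{x}$ with $\Uppsi_{x}\colon\mf{G}\to\mf{dp}$ and $\Upphi_{x}\colon\mf{dp}\to\mf{D}$, then
\begin{equation*}
\ms{a}\circ\ms{x}=(\Upphi_{a}\circ\Uppsi_{a}\circ\Upphi_{x})\circ\Uppsi_{x}
\end{equation*}
exhibits $\ms{a}\circ\ms{x}$ as an object of $2-\mf{dp}(\mf{G},\mf{Chdv})=\mf{Sp}(\mf{G})$, and analogously for $\ms{b}\circ\ms{y}$. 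For the transformation itself, write $\ms{T}=\upbeta\ast\upalpha$ and $\ms{L}=\updelta\ast\upgamma$ in the form dictated by Def. \ref{09011927}, with $\upalpha$ a natural transformation between functors $\mf{D}\to\mf{dp}$, $\upbeta$ between functors $\mf{dp}\to\mf{Chdv}$, $\upgamma$ between functors $\mf{G}\to\mf{dp}$, and $\updelta$ between functors $\mf{dp}\to\mf{D}$. Associativity of $\ast$ then gives
\begin{equation*}
\ms{T}\ast\ms{L}=\upbeta\ast(\upalpha\ast\updelta)\ast\upgamma,
\end{equation*}
where the middle factor $\upalpha\ast\updelta$ is a natural transformation between functors $\mf{dp}\to\mf{dp}$. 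Setting $\tilde\upbeta\coloneqq\upbeta\ast(\upalpha\ast\updelta)$ and $\tilde\upalpha\coloneqq\upgamma$ yields $\ms{T}\ast\ms{L}=\tilde\upbeta\ast\tilde\upalpha$, with $\tilde\upbeta$ between functors $\mf{dp}\to\mf{Chdv}$ and $\tilde\upalpha$ between functors $\mf{G}\to\mf{dp}$ as required; thus $\ms{T}\ast\ms{L}\in Mor_{\mf{Sp}(\mf{G})}(\ms{a}\circ\ms{x},\ms{b}\circ\ms{y})\subset\mf{Cnt}$.

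Statement $(3)$ is entirely analogous, with the roles of outer and inner factors interchanged. Writing $\upxi=\updelta\ast\upgamma\in Mor_{\mf{Sp}(\mf{Chdv})}$, with $\upgamma$ between functors $\mf{Chdv}\to\mf{dp}$ and $\updelta$ between functors $\mf{dp}\to\mf{Chdv}$, and $\ms{T}=\upbeta\ast\upalpha\in\mf{Cnt}$ with domain in $\mf{Sp}(\mf{D})$, associativity produces
\begin{equation*}
\upxi\ast\ms{T}=\updelta\ast(\upgamma\ast\upbeta)\ast\upalpha,
\end{equation*}
where $\upgamma\ast\upbeta$ is between functors $\mf{dp}\to\mf{dp}$; absorbing it into the left factor gives the $\tilde\upbeta\ast\tilde\upalpha$ shape demanded by Def. \ref{09011927}, exhibiting $\upxi\ast\ms{T}$ as a morphism of $\mf{Sp}(\mf{D})$ and hence as a connector. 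The main obstacle across all three parts is purely bookkeeping: ensuring at each reassembly that the inner/outer split of the horizontal composition matches precisely the shape demanded by Def. \ref{09011927}, and that associativity of $\ast$ is invoked only when the intermediate natural transformations are genuinely horizontally composable, a condition guaranteed by the $2$-category structure of $2-\mf{dp}$.
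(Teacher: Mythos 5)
Your argument is correct, and it supplies exactly the routine verification the paper omits (Prp.~\ref{10052123} is stated there without proof, as an ``easy to prove'' consequence of the definitions). Part (1) is indeed just the closure of $Mor_{2-\mf{dp}(\mf{D},\mf{Chdv})}$ under $\circ$ via the interchange law \eqref{bor135}, and parts (2) and (3) follow from associativity of $\ast$ by reassociating the four-fold Godement product so that the middle factor, a natural transformation between endofunctors of $\mf{dp}$, is absorbed into one of the two outer factors, recovering the $\tilde\upbeta\ast\tilde\upalpha$ shape required by Def.~\ref{09011927}; your factorizations of the source and target species are likewise correct.
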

\subsection{Physical posit}
\label{11301655c}
Here we describe the physical interpretation of $2-\mf{dp}$,
and in particular that of $\mf{Cnt}$. 
Thm. \ref{10081910}\eqref{10081910st5} permits the following
\begin{definition}
[Equiformity principle of $\mf{s}(\mr{T})$
and interpretations]
\label{10041214}
Let 
\begin{equation*}
\mc{E},
\mc{E}_{\mf{s}}
\in
\prod_{\mr{T}\in\mf{Cnt}}
\prod_{\mf{Q}\in Exp(c(\mr{T}))}
\prod_{t\in\Upgamma(\mf{Q},\mr{T})}
\mr{set},
\end{equation*}
such that 
\begin{equation*}
\begin{aligned}
\mc{E}(\mr{T})(\mf{Q})(t)
&\coloneqq
\{
\text{equalities in Prp. \ref{01162038} for }
\mf{E}^{\mr{a}}=\mr{T}[\mf{Q},t],
\mf{E}^{\mr{b}}=\mf{Q}
\},
\\
\mc{E}_{\mf{s}}(\mr{T})(\mf{Q})(t)
&\coloneqq
\{\mf{s}(x)
\,\vert\,
x\in\mc{E}(\mr{T})(\mf{Q})(t) 
\}.
\end{aligned}
\end{equation*}
Let 
\begin{equation*}
\begin{aligned}
\mc{X}_{\mf{s}}
&\in
\prod_{\mr{T}\in\mf{Cnt}}
\prod_{\mf{Q}\in\mf{ex}(\mr{T})}
\mr{set},
\\
\mc{X}_{\mf{s}}(\mr{T})(\mf{Q})
&\coloneqq
\mf{s}(\mf{Q}).
\end{aligned}
\end{equation*}
Finally for any $\mr{a}\in\mf{Sp}$ set 
$\mc{Z}(\mr{a})\coloneqq\mc{E}\up\mf{Sct}(\mr{a})$,
$\mc{Z}_{\mf{s}}(\mr{a})\coloneqq\mc{E}_{\mf{s}}\up\mf{Sct}(\mr{a})$.
\end{definition}
Essentially $\mc{E}_{\mf{s}}(\mr{T})(\mf{Q})(t)$ 
is a collection of sentences of the sort stated in 
Prp. \ref{12011304}.
The next reflects our physical interpretation of $2-\mf{dp}$
which by matter-of-fact represents a paradigm.
\begin{posit}
[Physical reality is $2-\mf{dp}$]
\label{10031521}
Our posit consists in what follows
\begin{enumerate}
\item
$2-\mf{dp}$ is the structure of the physical reality;
\item
$\mf{Cnt}$ as a structure 
and as a collection is the only empirically testable part of $2-\mf{dp}$;
\label{07141358}
\item
each connector of $\mf{Cnt}$ admits an empirical representation 
and a physical interpretation;
\item
for any $\mr{T}\in\mf{Cnt}$ 
\begin{enumerate}
\item
$(\mc{E}(\mr{T}),\mf{ex}(\mr{T}))$
is the empirical representation of $\mf{s}(\mr{T})$,
\label{10031521rps}
\item
$(\mc{E}_{\mf{s}}(\mr{T}),\mc{X}_{\mf{s}}(\mr{T}))$
is the physical interpretation of $\mf{s}(\mr{T})$;
\end{enumerate}
\item
The couple of results:
charge composition of connectors Cor. \ref{10151636}, 
applied to elements of $\mf{Cnt}$,  
and 
charge transfer Cor. \ref{11200910}, 
applied to elements of $\mf{Cnt}$ and to $2-$cells of $2-\mf{dp}$,
is 
the empirical representation of 
the algebraic structure of $\mf{Cnt}$ described in Prp. \ref{10052123},
whose physical interpretation is in terms of the semantics $(\mf{M},\mf{s},\mf{u})$. 
\end{enumerate}
\end{posit}
\begin{remark}
[Empirical representation and physical interpretation of species]
Since Posit \ref{10031521} it follows that for each $\mr{a}\in\mf{Sp}$ 
\begin{enumerate}
\item
$(\mc{Z}(\mr{a}),\mf{ex}(\un_{\mr{a}}))$
is the empirical representation of $\mf{s}(\mr{a})$;
\item
$(\mc{Z}_{\mf{s}}(\mr{a}),\mc{X}_{\mf{s}}(\un_{\mr{a}}))$
is the physical interpretation of $\mf{s}(\mr{a})$.
\end{enumerate}
We recall that 
$\mf{ex}(\un_{\mr{a}}))=Exp(\mr{a})$.
\end{remark}
\begin{convention}
Let $\mr{T}\in\mf{Cnt}$ and $\mr{a}\in\mf{Sp}$,
we convein to call 
\begin{enumerate}
\item
$\mc{E}(\mr{T})$ the equiformity principle of $\mf{s}(\mr{T})$;
\item
$\mc{E}_{\mf{s}}(\mr{T})$ the interpretation of $\mc{E}(\mr{T})$; 
\item
$\mc{E}(\mr{T})(\mf{Q})(s)$ the equiformity principle of $\mf{s}(\mr{T})$
evaluated on $\mf{s}(\mr{T}[\mf{Q},s])$ and $\mf{s}(\mf{Q})$;
\item
$\mc{E}_{\mf{s}}(\mr{T})(\mf{Q})(s)$ the interpretation of $\mc{E}(\mr{T})(\mf{Q})(s)$ 
\item
$\mc{X}_{\mf{s}}(\mr{T})$ the interpretation of $\mf{ex}(\mr{T})$;
\item
$\mc{Z}(\mr{a})$ the equiformity principle of $\mf{s}(\mr{a})$;
\item
$\mc{Z}_{\mf{s}}(\mr{a})$ the interpretation of $\mc{Z}(\mr{a})$.
\end{enumerate}
\end{convention}
\begin{remark}
We have that
\begin{enumerate}
\item
the empirical representation of $\mf{s}(\mr{T})$
consists in the couple of 
the equiformity principle of $\mf{s}(\mr{T})$
and the collection of the experimental settings of $\mf{s}(\mr{T})$;
\item
the equiformity principle of $\mf{s}(\mr{a})$
is the map mapping any sector $\mr{T}$ of $\mf{s}(\mr{a})$
to the equiformity principle of $\mf{s}(\mr{T})$;
\item
the interpretation of the equiformity principle of $\mf{s}(\mr{a})$
is the map mapping any sector $\mr{T}$ of $\mf{s}(\mr{a})$
to the interpretation of the equiformity principle of $\mf{s}(\mr{T})$;
\item
$\mc{X}_{\mf{s}}(\un_{\mr{a}})$ is the interpretation of $Exp(\mr{a})$. 
\end{enumerate}
\end{remark}
Cor. \ref{10151636} permits the following
\begin{definition}
[Experimental settings generated by connectors and a vacuum connector]
\label{10052028}
Let $\uppi\in\mf{Cnt}$, $\mr{a}\in\mf{Sp}$,
$\mf{Q}\in Exp(c(\uppi))$ 
and $r\in\Upgamma(\mf{Q},\uppi)$
set
\begin{equation*}
\mf{Ex}(\mr{a},\uppi,\mf{Q},r)
\coloneqq
\{
(\uppi\circ\mr{S})
[\mf{Q},r\circ q]
\,\vert\,
\mr{S}\in\mf{Cnt}_{\mr{a}},c(\mr{S})=d(\uppi),
q\in\Upgamma(\uppi[\mf{Q},r],\mr{S})
\}.
\end{equation*}
\end{definition}
Notice that 
$\mf{Ex}(\mr{a},\uppi,\mf{Q},r)\subseteq Exp(\mr{a})$.
\begin{corollary}
\label{10052040}
Let $\mr{a}\in\mf{Sp}$, 
$\uppi\in\mf{Cnt}$, 
$\mf{Q}\in Exp(c(\uppi))$
and $r\in\Upgamma(\mf{Q},\uppi)$.
Thus
\begin{equation*}
\mf{Ex}(\mr{a},\uppi,\mf{Q},r)
=
\{
\mr{S}[\uppi[\mf{Q},r],q]
\,\vert\,
\mr{S}\in\mf{Cnt}_{\mr{a}},c(\mr{S})=d(\uppi),
q\in\Upgamma(\uppi[\mf{Q},r],\mr{S})
\},
\end{equation*}
in particular if $\uppi\in\mf{Cnt}_{\mr{a}}$, then 
\begin{equation*}
\mf{Ex}(\mr{a},\uppi,\mf{Q},r)
=
\{
\mr{S}[\uppi[\mf{Q},r],q]
\,\vert\,
\mr{S}\in\mf{Sct}(\mr{a}),
q\in\Upgamma(\uppi[\mf{Q},r],\mr{S})
\}.
\end{equation*}
\end{corollary}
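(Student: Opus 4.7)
The plan is to unfold Def. \ref{10052028} of $\mf{Ex}(\ms{a},\uppi,\mf{Q},r)$ and to rewrite its generic element $(\uppi\circ\ms{S})[\mf{Q},r\circ q]$ in the form $\ms{S}[\uppi[\mf{Q},r],q]$ by invoking Cor. \ref{10151636}. First I would observe that for $\ms{S}\in\mf{Cnt}_{\ms{a}}$ with $c(\ms{S})=d(\uppi)$ and $q\in\Upgamma(\uppi[\mf{Q},r],\ms{S})$, the inclusion part of Cor. \ref{10151636} (which requires no injectivity hypothesis on $\uppi_{1}^{o}$) gives $r\circ q\in\Upgamma(\mf{Q},\uppi\circ\ms{S})$, so that $(\uppi\circ\ms{S})[\mf{Q},r\circ q]$ is a well-defined element of $Exp(\ms{a})$.

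Next I would establish the identity $(\uppi\circ\ms{S})[\mf{Q},r\circ q]=\ms{S}[\uppi[\mf{Q},r],q]$ by the same formal computation carried out in the proof of Cor. \ref{10151636}, splitting it into the object, morphism, and statistical ensemble components. On objects and morphisms one uses $(\uppi\circ\ms{S})_{1}^{o}(M)=\ms{S}_{1}^{o}(M)\circ\uppi_{1}^{o}(M)$ together with the analogous identity for $(\uppi\circ\ms{S})_{1}^{m}(M)$ from Rmk. \ref{11211156}; on the statistical ensemble component one uses the decomposition $(\uppi\circ\ms{S})_{3}^{\dagger}(M)(t)=\ms{S}_{3}^{\dagger}(M)(q_{M}(t))\circ\uppi_{3}^{\dagger}(M)((r\circ q)_{M}(t))$ also from Rmk. \ref{11211156}. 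Substituting the resulting equality into the definition of $\mf{Ex}(\ms{a},\uppi,\mf{Q},r)$ yields the first claimed equality.

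For the specialization $\uppi\in\mf{Cnt}_{\ms{a}}$, the hypothesis $d(\uppi)=\ms{a}$ coupled with $\ms{S}\in\mf{Cnt}_{\ms{a}}$ (i.e.\ $d(\ms{S})=\ms{a}$) and $c(\ms{S})=d(\uppi)=\ms{a}$ forces $\ms{S}\in\mf{Cnt}(\ms{a},\ms{a})=\mf{Sct}(\ms{a})$. Hence the pair of conditions $c(\ms{S})=d(\uppi)$ and $\ms{S}\in\mf{Cnt}_{\ms{a}}$ collapses to $\ms{S}\in\mf{Sct}(\ms{a})$, which delivers the second claimed equality.

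The main obstacle I expect is that Cor. \ref{10151636} states its equality clause under the auxiliary hypotheses that $\uppi_{1}^{o}(M)$ be injective for all $M$ and that $r$ be the projection associated with $\mf{Q}$ and $\uppi$ in the sense of Rmk. \ref{10201649}, conditions which are \emph{not} assumed in Cor. \ref{10052040}. I would resolve this by verifying that the chain of equalities in the proof of Cor. \ref{10151636} invokes only Rmk. \ref{11211156} and Def. \ref{10081523}, neither of which uses injectivity; the injectivity hypothesis there serves only to guarantee that a canonical projection $r$ exists, whereas in our setting $r$ is provided by assumption and $q$ is an arbitrary element of $\Upgamma(\uppi[\mf{Q},r],\ms{S})$.
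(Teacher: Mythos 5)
Your proposal is correct and follows the same route as the paper, whose entire proof of this corollary is the citation of Cor. \ref{10151636}. Your treatment of the hypothesis mismatch is in fact more careful than the paper's: the equality clause of Cor. \ref{10151636} is stated for $r$ the projection of an injective $\uppi_{1}^{o}(M)$, but the only place this enters the computation is the substitution of $\ms{S}_{3}(M)\bigl(\uppi_{1}^{o}(M)(r\circ q)_{M}(t)\bigr)$ by $\ms{S}_{3}(M)(q_{M}(t))$, which needs $\uppi_{1}^{o}(M)\circ r_{M}=\un$ on $\uppi_{1}^{o}(M)(Obj(\ms{R}_{M}))$; this holds for an arbitrary $r\in\Upgamma(\mf{Q},\uppi)$ because condition \eqref{10111706} of Def. \ref{10081523}, namely $r_{M}\circ\uppi_{1}^{o}(M)=\un_{c(r_{M})}$, already forces $\uppi_{1}^{o}(M)$ to be injective on $Obj(\ms{R}_{M})$ and $r_{M}$ to be its two-sided inverse onto the image. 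So your claim that the chain of equalities needs nothing beyond Rmk. \ref{11211156} and Def. \ref{10081523} is justified, and the collapse of the conditions $\ms{S}\in\mf{Cnt}_{\ms{a}}$, $c(\ms{S})=d(\uppi)$ to $\ms{S}\in\mf{Sct}(\ms{a})$ when $\uppi\in\mf{Cnt}_{\ms{a}}$ is immediate as you say.
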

\begin{proof}
Since Cor. \ref{10151636}.
\end{proof}
\section{Gravity Species}
\label{09291055}
In section \ref{10222056} we provide, in the fourth main result of the paper,
the construction of the $n-$dimensional classical gravity species $\mr{a}^{n}$ 
and then we establish that
the equivalence principle of general relativity
emerges as equiformity principle of the connector 
canonically associated with $\mr{a}^{n}$.
By using $\mr{a}^{n}$ as a model in section \ref{10181450}
we define the collection of $n-$dimensional gravity species 
in particular diverse subcollections of quantum gravity species.
The equiformity principle of a connector from $\mr{a}^{n}$ to a strict quantum gravity species
will provide a quantum realization of the velocity of maximal integral curves of 
complete vector fields on spacetimes.
Applied to Robertson-Walker spacetimes 
we establish that the Hubble parameter, the acceleration 
of the scale function and new constraints for its positivity
over a subset of the range of the galactic time of a geodesic $\alpha$, 
are expressed in terms of a quantum realization of the velocity of $\alpha$.
As a remarkable result we obtain that 
the existence of a connector from $\mr{a}^{4}$ 
to a $4-$dimensional strict quantum gravity species 
satisfying these constraints implies the positivity of the acceleration
and renders the dark energy hypothesis inessential.
\begin{notation}
\label{11041637}
Here manifold means second countable finite dimensional smooth manifold.
For any manifold $M$ let $\mc{A}(M)\coloneqq\mc{C}^{\infty}(M,\C)$ the complex $\ast-$algebra of 
smooth maps on $M$ at values in $\C$ with $\C$ provided by the 
pullback of the standard smooth structure on $\R^{2}$.
Any open set $X$ of $M$ is tacitly considered as a submanifold of $M$, so $\mc{A}(X)$ makes sense. 
We identify $\mc{A}(M)_{ob}$ with the real linear space underlying $\mf{F}(M)\coloneqq\mc{C}^{\infty}(M)$, 
the space of real valued smooth maps on $M$. For any $p\in M$ let 
$\updelta_{p}^{M}$ also denoted by $\omega_{p}$ 
be the Dirac distribution on $p$, namely the map on $\mc{A}(M)$ such that $\updelta_{p}^{M}(f)=f(p)$.
Any open set of $M$ is tacitly considered as an open submanifold of $M$.
Let $\mf{T}_{s}^{r}(M)$ denote the (real) linear space of $(r,s)$ tensor fields on $M$ with 
$(r,s)\in\{0,\dots,n\}^{2}$, $n=dim(M)$, 
Let $\xi$ be any coordinate system on $U\subset M$,
$A\in\mf{T}_{s}^{r}(M)$, $i\in\{0,\dots,n-1\}^{r}$ and $j\in\{0,\dots,n-1\}^{s}$.
Then let 
$x_{j}^{\xi}$ and $\partial_{\xi}^{j}$ 
denote the $j$th coordinate function and vector field of $\xi$ 
respectively, 
and $A_{j_{1},\dots,j_{s}}^{i_{1},\dots,i_{r};\xi}\in\mc{C}^{\infty}(U)$ 
denote the $(i_{1},\dots,i_{r};j_{1},\dots,j_{s})$ component of $A$ relative to $\xi$, 
where the case $(r,s)=(0,0)$ has to be understood as $A\up U$;
only when the coordinate system $\xi$ involved is known, we often remove the symbol $\xi$.
Let $\mf{X}(M)$ denote the linear space of vector fields on $M$, often we use the
standard identification 
$\mf{X}(M)$ with the derivations on $M$, i.e. 
$W'(h)\in\mf{F}(M)$ such that $W'(h)(q)\coloneqq W(q)(h)$, 
for all $W\in\mf{X}(M)$, $h\in\mf{F}(M)$ and $q\in M$, \cite[p. 12]{28one}. 
Often we use also the identification $\mf{X}(M)=\mf{T}_{0}^{1}(M)$. 
Let $M,N$ be manifolds, $\phi:M\to N$ smooth, then
$d\phi:TM\to TN$ such that $d\phi(v)(h)\coloneqq v(h\circ\phi)$, for all $v\in TM$
and $h\in\mf{F}(N)$, \cite[p. 9]{28one}.
If $A\in\mf{T}_{s}^{0}(N)$ with $s\in\N$, then $\phi^{\ast}A\in\mf{T}_{s}^{0}(M)$ 
is the pullback of $A$ by $\phi$ defined in \cite[p. 42, Def. 8]{28one}.
For any manifold $M$ and $V,W\in\mf{X}(M)$ we have that
$[V,W]\in\mf{X}(M)$ such that $[V,W](p)(f)=V(p)(W(f))-W(p)(V(f))$
for all $f\in\mf{F}(M)$ and $p\in M$, \cite[p. 13]{28one}. 
If $\phi:M\to N$ is a smooth map, $A\in\mf{X}(M)$ and $B\in\mf{X}(N)$, 
then by following the definition given in 
\cite[p.182]{28lee} or \cite[Def.1.20]{28one} we say that $A$ and $B$ 
are $\phi-$related if $d(\phi)\circ A=B\circ\phi$,
if $\phi$ is a diffeomorphism, then for any vector field $A$ on $M$ 
we let $d\phi A$ denote the unique vector field on $N$ $\phi-$related
to $A$, namely $d\phi A\coloneqq (d\phi)\circ A\circ\phi^{-1}$,
\cite[pg. 14]{28one}. 
For any complete smooth vector field $V$ on $M$ let $\vartheta^{V}:\R\to Diff(M)$ the flow of $V$, 
hence $\vartheta^{V}(\tau)(p)=\alpha_{p}^{V}(\tau)$, with $p\in M$ and $\tau\in\R$ 
and $\alpha_{p}^{V}$ the unique inextendible integral curve of $V$ such that $\alpha(0)=p$.
For any smooth map $\phi:M\to N$ where $N$ is a manifold, we let $\phi^{\ast}$ denote the usual pullback,
so in particular
$\phi^{\ast}:\mc{C}^{\infty}(N)\to\mc{C}^{\infty}(M)$ such that $f\mapsto f\circ\phi$.
In order to avoid conflict with the notation used in section \ref{not1} for general invertible
morphisms in a category, we convein that $\phi^{\ast}$ 
always refers to the pullback of $\phi$ whenever $\phi$ is a smooth map between manifolds. 
If the contrary is not asserted, let $\pi$ denote the projection map from $TM$ to $M$.
We adopt for semi-Riemannian geometry the conventions in \cite{28one},
in particular let $\mc{M}=(M,g)$ be a semi-Riemannian manifold then we 
let $\lr{\cdot}{\cdot}_{\mc{M}}$ or simply $\lr{\cdot}{\cdot}$ denote $g$ and
whenever it does not cause confusion let $D^{\mc{M}}$ be 
the Levi-Civita connection of $(M,g)$ \cite[Thm. 3.11]{28one}.
We let $\mf{X}(\mc{M})$ denote $\mf{X}(M)$, let $g_{\mc{M}}$ denote $g$
and let $\phi:\mc{M}\to\mc{N}$ denote $\phi:M\to N$
for any semi-Riemannian manifold $\mc{N}=(N,h)$.
We say $U\in\mf{X}(\mc{M})$ to be geodesic if $D_{U}^{\mc{M}}U=0$. 
Notice that the integral curves of a geodesic vector field are geodesic curves.
$v\in T_{p}M$ is timelike if $g_{p}(v,v)<0$, while $U$ is timelike if 
$\lr{U}{U}_{\mc{M}}<\ze$.
If $X,Y\in\mf{X}(\mc{M})$, then
$X\perp Y$ means $\lr{X}{Y}_{\mc{M}}=\ze$, 
for any $U\in\mf{X}(\mc{M})$
let $U^{\perp}$ the set of all $Z\in\mf{X}(\mc{M})$
such that $X\perp U$.
Let $R^{\mc{M}}$ and $Ric^{\mc{M}}$ denote the Riemannian curvature and Ricci tensor of $\mc{M}$
respectively \cite[Def. 3.35, Def. 3.53]{28one}, 
Assume $\mc{M}$ and $\mc{N}$ be spacetimes see 
\cite[p. 163]{28one} except the restriction of dimension equal to $4$.
If $\mf{T}^{\mc{M}}$ and $\mf{T}^{\mc{N}}$ are the timelike
smooth vector fields of $\mc{M}$ and $\mc{N}$ respectively
determining the orientations of $\mc{M}$ and $\mc{N}$ according
to \cite[Lemma 5.32]{28one}, then we say that 
$\phi$ preserves the orientation, iff $\mf{T}^{\mc{M}}$ and $\mf{T}^{\mc{N}}$ 
are $\phi-$related.
$S^{\mc{M}}$ is the scalar curvature of $\mc{M}$, \cite[Def. 3.53]{28one},
and $G^{\mc{M}}$ be the Einstein gravitational tensor \cite[Def. 12.1]{28one}.
Let $T^{\mc{M}}$ be the stress-energy tensor associated with $\mc{M}$
namely such that the Einstein equation $G^{\mc{M}}=8\uppi T^{\mc{M}}$ holds.
According \cite[Def. 12.4]{28one} we have that $(U,\uprho,p)$ is a perfect 
fluid on $\mc{N}$ if $U\in\mf{X}(N)$ timelike future-pointing unit,
$\uprho,p\in\mf{F}(N)$, and for all $X,Y\in U^{\perp}$ we have 
$T^{\mc{N}}(U,U)=\uprho$,
$T^{\mc{N}}(X,U)=T^{\mc{N}}(U,X)=\ze$,
and 
$T^{\mc{N}}(X,Y)=p\lr{X}{Y}_{\mc{N}}$.
Let $n\in\N$ and $\{\ep_{\mu}\}_{\mu=0}^{n}$ be the standard basis in $\R^{n+1}$,
we sometime let $x^{\mu}=x(\mu)$ for $x\in\R^{n+1}$,
let $\Pr_{\mu}:\R^{n+1}\to\R$ be $x\mapsto x(\mu)$ and
$\imath_{\mu}:\R\to\R^{n+1}$ be such that 
$\Pr_{\nu}\circ\imath_{\mu}=\un_{\R}\delta_{\mu,\nu}$,
for all $\mu,\nu\in\{0,\dots,n\}$.
For any linear operator $L$ on $\R^{n+1}$ let $\mr{m}(L)$ be the matrix of $L$
w.r.t. the canonical basis of $\R^{n+1}$, i.e.  
$\sum_{\nu=0}^{n}\mr{m}(L)_{\mu\nu}\ep_{\mu}=L(\ep_{\nu})$,
for all $\mu,\nu\in\{0,\dots,n\}$.
Let $\R_{1}^{4}$ be the Minkowski $4-$space, while for any Minkowski 
spacetime $M$ let $Lor(M)$ be the set of the Lorentz coordinate systems in $M$,
see \cite[p.163-164, Def. 8 and 11]{28one}.
Let $\mc{P}^{+}=\R^{4}\rtimes_{\eta}SL(2,\C)$,
where $\eta$ is the standard action of $SL(2,\C)$ on $\R^{4}$, 
and let $j_{1}$ and $j_{2}$ be the canonical injection of $\R^{4}$ and $SL(2,\C)$ into $\mc{P}^{+}$
respectively, see \cite{28sil} and reference therein.
Let $\uprho$ be the action of $\mc{P}^{+}$ on $\R^{4}$ such that 
$\uprho(x,\Lambda)y\coloneqq x+\eta(\Lambda)y$, for any $\Lambda\in SL(2,\C)$ and $x,y\in\R^{4}$.
Let $\mr{LS}$ denote the category of real linear spaces.
\end{notation}
\subsection{$n-$dimensional classical gravity species}
\label{10222056}
After preparatory lemmata,
the first result of this section is 
\textbf{Thm. \ref{11151519}}
where we prove that suitable $\phi$ preserve some preperties
of $\phi-$related vector fields of 
semi-Riemannian manifolds, such as the property of being geodesic
and, in case of spacetimes, of being the support field of a perfect fluid.
In Def. \ref{09251618} we introduce the category $\mr{vf}_{0}$
whose object set is the collection of couples of 
manifolds and complete smooth vector fields on them,
with smooth open maps relating the vector fields as morphisms.
In Def. \ref{11051643} we provide the definition of the category 
$\mr{St}_{n}$ of $n-$dimensional spacetimes and observer fields,
naturally embedded into $\mr{vf}_{0}$.
In Def. \ref{12051029} we define the collection of the relevant topologies 
on the algebras $\mc{A}(M)$ for all manifolds $M$, employed 
in Prp. \ref{09201647} to state that 
with any object of $\mr{vf}_{0}$ remains associated 
a dynamical pattern, namely an object of the category $\mf{dp}$,
whose dynamical category we construct in Prp. \ref{09201646}.
Prp. \ref{09201647} is the essential step toward \textbf{Thm. \ref{09201707}},
the main result of this section, 
where we construct a species on $\mr{vf}_{0}$,
namely a functor from $\mr{vf}_{0}$ to the category $\mf{Chdv}$.
As a consequence we obtain in Cor. \ref{01191431}
the $n-$dimensional classical gravity species 
$\mr{a}^{n}\in\mf{Sp}(\mr{St}_{n})$. 
Then we state in \textbf{Cor. \ref{01201634}}
that the equivalence principle of general relativity
is the equiformity principle of the connector 
canonically associated with $\mr{a}^{n}$.
\begin{lemma}
\label{11141553}
Let $M_{1},M_{2}$ be manifolds, $\phi:M_{1}\to M_{2}$ smooth,
and $V_{i},W_{i}\in\mf{X}(M_{i})$ for any $i\in\{1,2\}$.
If $V_{1}$ and $V_{2}$ as well $W_{1}$ and $W_{2}$ are $\phi-$related 
then 
$[V_{1},W_{1}]$ and $[V_{2},W_{2}]$ are $\phi-$related. 
\end{lemma}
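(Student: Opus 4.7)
The plan is to verify the $\phi$-relatedness of $[V_1,W_1]$ and $[V_2,W_2]$ directly from the definition using the derivation formulation of vector fields mentioned in Notation \ref{11041637}. Recall from there that $A\in\mathfrak{X}(M_1)$ and $B\in\mathfrak{X}(M_2)$ are $\phi$-related iff $d\phi\circ A = B\circ\phi$, which by the definition $d\phi(v)(h)=v(h\circ\phi)$ is equivalent to the identity $A(h\circ\phi) = (Bh)\circ\phi$ for every $h\in\mathfrak{F}(M_2)$.

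First I would fix $f\in\mathfrak{F}(M_2)$ and $p\in M_1$, and rewrite the $\phi$-relatedness condition for $[V_1,W_1]$ and $[V_2,W_2]$ as the equality
\begin{equation*}
[V_1,W_1](p)(f\circ\phi) = [V_2,W_2](\phi(p))(f).
\end{equation*}
Then I would expand the left-hand side using the bracket formula recalled in Notation \ref{11041637}, obtaining
\begin{equation*}
[V_1,W_1](p)(f\circ\phi) = V_1(p)\bigl(W_1(f\circ\phi)\bigr) - W_1(p)\bigl(V_1(f\circ\phi)\bigr).
\end{equation*}
The key step is then the double application of $\phi$-relatedness: applying $W_1 \text{ and } W_2$ are $\phi$-related to the function $f$ gives $W_1(f\circ\phi) = (W_2 f)\circ\phi$, and then applying $V_1 \text{ and } V_2$ are $\phi$-related to the function $W_2 f$ gives $V_1(p)\bigl((W_2f)\circ\phi\bigr) = V_2(\phi(p))(W_2f)$. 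The symmetric computation handles the second summand.

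Assembling these identities yields
\begin{equation*}
[V_1,W_1](p)(f\circ\phi) = V_2(\phi(p))(W_2f) - W_2(\phi(p))(V_2f) = [V_2,W_2](\phi(p))(f),
\end{equation*}
which is exactly the $\phi$-relatedness of $[V_1,W_1]$ and $[V_2,W_2]$. There is no genuine obstacle here; the statement is the classical naturality of the Lie bracket under pushforward along smooth maps, and the only care needed is bookkeeping the double substitution of $\phi$-relatedness, first with the inner field and its pushforward's action on $f$, then with the outer field and its pushforward's action on the already-obtained function on $M_2$.
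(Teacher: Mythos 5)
Your proposal is correct and is essentially the paper's own proof: both expand the bracket as a derivation, apply the $\phi$-relatedness identity $X_1(h\circ\phi)=(X_2h)\circ\phi$ twice (once for the inner field acting on $f$, once for the outer field acting on the resulting function on $M_2$), and reassemble; you merely run the chain of equalities in the opposite direction from $[V_1,W_1](p)(f\circ\phi)$ to $[V_2,W_2](\phi(p))(f)$.
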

\begin{proof}
Let $p\in M_{1}$ and $f\in\mf{F}(M_{2})$, then
\begin{equation*}
\begin{aligned}
([V_{2},W_{2}]\circ\phi)(p)(f)
&=(V_{2}\circ\phi)(p)(W_{2}(f))-(W_{2}\circ\phi)(p)(V_{2}(f))\\
&=(d\phi\circ V_{1})(p)(W_{2}(f))-(d\phi\circ W_{1})(p)(V_{2}(f))\\
&=V_{1}(p)(W_{2}(f)\circ\phi)-W_{1}(p)(V_{2}(f)\circ\phi)\\
&=V_{1}(p)((W_{2}\circ\phi)(\cdot)(f))-W_{1}(p)((V_{2}\circ\phi)(\cdot)(f))\\
&=V_{1}(p)((d\phi\circ W_{1})(\cdot)(f))-W_{1}(p)((d\phi\circ V_{1})(\cdot)(f))\\
&=V_{1}(p)(W_{1}(f\circ\phi))-W_{1}(p)(V_{1}(f\circ\phi))\\
&=[V_{1},W_{1}](p)(f\circ\phi)=(d\phi\circ[V_{1},W_{1}])(p)(f).
\end{aligned}
\end{equation*}
\end{proof}
\begin{proposition}
\label{11151106}
Let $M,N$ be manifolds, $\phi:M\to N$ smooth, $A\in\mf{T}_{s}^{0}(N)$ 
with $s\in\N$,
$X_{r}\in\mf{X}(M)$, $Y_{r}\in\mf{X}(N)$ such that $X_{r}$ and $Y_{r}$
are $\phi-$related for any $r\in\{1,\dots,s\}$.
Then 
$(\phi^{\ast}A)(X_{1},\dots,X_{s})=A(Y_{1},\dots,Y_{s})\circ\phi$.
Let $B\in\mf{T}_{s}^{0}(M)$ and assume that 
$\phi$ has dense range. 
If for any $(X_{1},\dots,X_{s})\in\mf{X}(M)^{s}$
there exists $(Y_{1},\dots,Y_{s})\in\mf{X}(N)^{s}$, 
such that 
$B(X_{1},\dots,X_{s})=A(Y_{1},\dots,Y_{s})\circ\phi$
and
$X_{r}$ and $Y_{r}$ are $\phi-$related for any $r\in\{1,\dots,s\}$,
then $B=\phi^{\ast}A$.
\end{proposition}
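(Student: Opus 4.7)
The plan for the first assertion is to unwind the definition of the tensor pullback. By \cite[p. 42, Def. 8]{one} (as cited in the notation paragraph), $(\phi^{\ast}A)_{p}(v_{1},\dots,v_{s})=A_{\phi(p)}(d\phi\,v_{1},\dots,d\phi\,v_{s})$ for every $p\in M$ and $v_{r}\in T_{p}M$. Taking $v_{r}=X_{r}(p)$ and invoking the $\phi-$relation $d\phi\circ X_{r}=Y_{r}\circ\phi$, the right-hand side becomes $A_{\phi(p)}(Y_{1}(\phi(p)),\dots,Y_{s}(\phi(p)))=\bigl(A(Y_{1},\dots,Y_{s})\circ\phi\bigr)(p)$, which is the desired identity evaluated at $p$. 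Since this holds for every $p$, the first claim follows as an equality in $\mf{F}(M)$.

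For the second assertion the plan is to combine the first part with the hypothesis. Fix an arbitrary tuple $(X_{1},\dots,X_{s})\in\mf{X}(M)^{s}$ and choose $(Y_{1},\dots,Y_{s})\in\mf{X}(N)^{s}$ as supplied by the hypothesis, so that each $X_{r}$ is $\phi-$related to $Y_{r}$ and
\begin{equation*}
B(X_{1},\dots,X_{s})=A(Y_{1},\dots,Y_{s})\circ\phi.
\end{equation*}
Applying the first part to this pair of tuples yields $A(Y_{1},\dots,Y_{s})\circ\phi=(\phi^{\ast}A)(X_{1},\dots,X_{s})$, whence $B(X_{1},\dots,X_{s})=(\phi^{\ast}A)(X_{1},\dots,X_{s})$ in $\mf{F}(M)$ for every tuple of vector fields on $M$. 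By the pointwise tensoriality of $(0,s)-$tensor fields (every tangent vector at a point extends locally to a global vector field, and tensorial maps are determined by their values on vector-field tuples), this forces $B_{p}=(\phi^{\ast}A)_{p}$ for every $p\in M$, hence $B=\phi^{\ast}A$.

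The role I anticipate for the dense-range hypothesis is technical cleanup rather than a deep obstruction: it is needed precisely when one wants to read the pointwise equality of $B$ and $\phi^{\ast}A$ back through $\phi$, i.e.\ to recognize that the values of $A$ on $\phi(M)$ (which is where the hypothesis really constrains $A$) determine $A$ on all of $N$ via continuity of smooth functions on a dense set. This plays no role for $B=\phi^{\ast}A$ itself, since both sides already live on $M$, but it ensures consistency of the characterization (e.g.\ that the $Y_{r}$'s in the hypothesis really pin down the relevant germs of $A$). The main routine point to watch is the bookkeeping of $\phi-$relatedness: one must check that the $Y_{r}$'s supplied for a given $(X_{1},\dots,X_{s})$ are compatible with the definition $d\phi\circ X_{r}=Y_{r}\circ\phi$ so that the first part applies verbatim; this is exactly what the hypothesis grants. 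No genuine obstacle is expected — both halves reduce to direct manipulation of the pullback formula once the $\phi-$relation is substituted.
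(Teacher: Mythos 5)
Your proof is correct and follows essentially the same route as the paper: the first identity is the pullback definition combined with $d\phi\circ X_{r}=Y_{r}\circ\phi$, and the second follows by applying the first identity to the tuples supplied by the hypothesis and then using that a $(0,s)-$tensor field is determined by its values on tuples of global vector fields. Your observation that the dense-range hypothesis plays no role in the conclusion $B=\phi^{\ast}A$ (both sides living on $M$) is accurate; the paper retains that hypothesis because it is what is actually exploited in the subsequent lemmas (e.g. Lemma \ref{11031637}), where equalities must be propagated from the dense image $\phi(M)$ to all of $N$ via continuity, and its own one-line justification of the second claim is no more detailed than yours.
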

\begin{proof}
Let $p\in M$ then
\begin{equation}
\begin{aligned}
(\phi^{\ast}A)(X_{1},\dots,X_{s})(p)
&=
(\phi^{\ast}A)(p)(X_{1}(p),\dots,X_{s}(p))
\\
&=
A(\phi(p))((d\phi\circ X_{1})(p),\dots,(d\phi\circ X_{s})(p))
\\
&=
A(\phi(p))((Y_{1}\circ\phi)(p),\dots,(Y_{s}\circ\phi)(p))\\
&=
(A(Y_{1},\dots,Y_{s})\circ\phi)(p).
\end{aligned}
\end{equation}
The second sentence of the statement follows by the first one and since 
$\mf{F}(M)\subset\mc{C}(M)$. 
\end{proof}
Till Thm. \ref{11151519} let
$\mc{M},\mc{N}$ be semi-Riemannian manifolds,
$m=dim(\mc{M})$, $n=dim(\mc{N})$
and $\phi:\mc{M}\to\mc{N}$ smooth such that $g_{\mc{M}}=\phi^{\ast}g_{\mc{N}}$.
\begin{lemma}
\label{11151232}
Let $U,V,Y\in\mf{X}(\mc{M})$ 
and 
$\mc{U},\mc{V},\mc{Y}\in\mf{X}(\mc{N})$, 
such that 
$U$ and $\mc{U}$, 
$V$ and $\mc{V}$, 
and 
$Y$ and $\mc{Y}$ are $\phi-$related
respectively.
Thus $\lr{D_{U}^{\mc{M}}V}{Y}_{\mc{M}}=
\lr{D_{\mc{U}}^{\mc{N}}\mc{V}}{\mc{Y}}_{\mc{N}}\circ\phi$.
\end{lemma}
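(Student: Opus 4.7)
The plan is to invoke the Koszul formula on both sides and transfer each of its six terms from $\mc{N}$ to $\mc{M}$ via $\phi$, using the three available tools: the metric compatibility $g_{\mc{M}} = \phi^{\ast}g_{\mc{N}}$ (together with Prp. \ref{11151106} that rewrites $\lr{\cdot}{\cdot}_{\mc{M}}$ of $\phi$-related fields as $\lr{\cdot}{\cdot}_{\mc{N}}\circ\phi$), the preservation of Lie brackets under $\phi$-relation (Lemma \ref{11141553}), and the chain-rule identity $X(h\circ\phi) = \mc{X}(h)\circ\phi$ for any $h\in\mf{F}(\mc{N})$ and $\phi$-related $X,\mc{X}$ (which is just the definition of $\phi$-relatedness combined with the identification of vector fields with derivations adopted in Notation \ref{11041637}).

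Concretely, I would first recall the Koszul identity
\begin{equation*}
2\lr{D_{X}^{\mc{M}}Y}{Z}_{\mc{M}}
= X\lr{Y}{Z}_{\mc{M}} + Y\lr{X}{Z}_{\mc{M}} - Z\lr{X}{Y}_{\mc{M}}
- \lr{X}{[Y,Z]}_{\mc{M}} + \lr{Y}{[Z,X]}_{\mc{M}} + \lr{Z}{[X,Y]}_{\mc{M}},
\end{equation*}
and its analogue on $\mc{N}$ applied to $(\mc{U},\mc{V},\mc{Y})$. Then I would handle the three types of terms that appear on the right-hand side. For the three bracket terms, Lemma \ref{11141553} gives that $[V,Y]$ and $[\mc{V},\mc{Y}]$ are $\phi$-related (and symmetrically for the other pairs), and then Prp. \ref{11151106} applied to $g_{\mc{N}}\in\mf{T}_{2}^{0}(\mc{N})$ yields for instance $\lr{U}{[V,Y]}_{\mc{M}}=\lr{\mc{U}}{[\mc{V},\mc{Y}]}_{\mc{N}}\circ\phi$. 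For the three directional-derivative terms, Prp. \ref{11151106} first rewrites $\lr{V}{Y}_{\mc{M}} = \lr{\mc{V}}{\mc{Y}}_{\mc{N}}\circ\phi$, and then the $\phi$-relatedness of $U$ and $\mc{U}$ gives $U(\lr{\mc{V}}{\mc{Y}}_{\mc{N}}\circ\phi) = \mc{U}(\lr{\mc{V}}{\mc{Y}}_{\mc{N}})\circ\phi$ through the derivation identity $U(h\circ\phi)(p)=U(p)(h\circ\phi)=(d\phi\circ U)(p)(h)=(\mc{U}\circ\phi)(p)(h)=\mc{U}(h)(\phi(p))$ valid for every $h\in\mf{F}(\mc{N})$.

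Summing the six transferred identities and dividing by two yields
\begin{equation*}
\lr{D_{U}^{\mc{M}}V}{Y}_{\mc{M}}=\lr{D_{\mc{U}}^{\mc{N}}\mc{V}}{\mc{Y}}_{\mc{N}}\circ\phi,
\end{equation*}
which is the required equality. I expect the argument to present no significant obstacle: the only delicate point is the bookkeeping for the directional-derivative terms, where one must explicitly use both the metric hypothesis (to bring $\lr{\cdot}{\cdot}_{\mc{N}}\circ\phi$ inside the action of $U$) and the $\phi$-relatedness (to bring the result back out through $\mc{U}$). Everything else is a straightforward application of the two preceding results; in particular no assumption on $\phi$ being an immersion, submersion, or diffeomorphism is required, since the Koszul formula argument only invokes the $\phi$-related fields given in the hypothesis.
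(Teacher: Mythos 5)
Your proposal is correct and follows essentially the same route as the paper: apply Prp.~\ref{11151106} to $g_{\mc{N}}$ for the metric terms, Lemma~\ref{11141553} for the bracket terms, and conclude via the Koszul formula. You are in fact slightly more explicit than the paper on the directional-derivative terms, spelling out the identity $U(h\circ\phi)=\mc{U}(h)\circ\phi$ that the paper leaves implicit in its appeal to the Koszul formula.
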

\begin{proof}
Since Prp. \ref{11151106} applied to $A=g_{\mc{N}}$ we have 
$\lr{V}{Y}_{\mc{M}}=\lr{\mc{V}}{\mc{Y}}_{\mc{N}}\circ\phi$,
and 
$\lr{U}{V}_{\mc{M}}=\lr{\mc{U}}{\mc{V}}_{\mc{N}}\circ\phi$,
while since Prp. \ref{11151106} and Lemma \ref{11141553}
we obtain
$\lr{U}{[V,Y]}_{\mc{M}}=\lr{\mc{U}}{[\mc{V},\mc{Y}]}_{\mc{N}}\circ\phi$,
and the statement follows since the Koszul formula \cite[p. 61 Thm. 11]{28one}
applied to $\lr{\cdot}{\cdot}_{\mc{M}}$ and $\lr{\cdot}{\cdot}_{\mc{N}}$.
\end{proof}
\begin{lemma}
\label{11031600}
Let $X,Y,Z,K\in\mf{X}(\mc{M})$ 
and 
$\mc{X},\mc{Y},\mc{Z},\mc{K}\in\mf{X}(\mc{N})$, 
such that 
$X$ and $\mc{X}$, 
$Y$ and $\mc{Y}$, 
$Z$ and $\mc{Z}$ 
and
$K$ and $\mc{K}$ 
are $\phi-$related
respectively.
Then 
$\lr{R^{\mc{M}}_{XY}Z}{K}_{\mc{M}}
=
\lr{R^{\mc{N}}_{\mc{XY}}\mc{Z}}{\mc{K}}_{\mc{N}}\circ\phi$.
\end{lemma}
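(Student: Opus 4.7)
The plan is to expand $R^{\mc{M}}_{XY}Z$ via the definition $R^{\mc{M}}_{XY}Z = D^{\mc{M}}_{X}D^{\mc{M}}_{Y}Z - D^{\mc{M}}_{Y}D^{\mc{M}}_{X}Z - D^{\mc{M}}_{[X,Y]}Z$ and likewise on $\mc{N}$, then to reduce the claim by linearity of both inner products to the three separate identities
\[
\lr{D^{\mc{M}}_{X}D^{\mc{M}}_{Y}Z}{K}_{\mc{M}} = \lr{D^{\mc{N}}_{\mc{X}}D^{\mc{N}}_{\mc{Y}}\mc{Z}}{\mc{K}}_{\mc{N}}\circ\phi,
\]
the analogous identity with $X$ and $Y$ swapped, and
\[
\lr{D^{\mc{M}}_{[X,Y]}Z}{K}_{\mc{M}} = \lr{D^{\mc{N}}_{[\mc{X},\mc{Y}]}\mc{Z}}{\mc{K}}_{\mc{N}}\circ\phi,
\]
and finally to recombine them with signs $(+,-,-)$ to obtain the assertion of the lemma.

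The bracket identity is immediate: Lemma \ref{11141553} asserts that $[X,Y]$ and $[\mc{X},\mc{Y}]$ are $\phi-$related, hence Lemma \ref{11151232} applied to the $\phi-$related triples $([X,Y],Z,K)$ and $([\mc{X},\mc{Y}],\mc{Z},\mc{K})$ delivers the third identity at once.

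For the two iterated-derivative identities I would first establish the auxiliary claim that whenever $V_{1},V_{2}$ and $W_{1},W_{2}$ are $\phi-$related pairs of vector fields, the Levi-Civita images $D^{\mc{M}}_{V_{1}}W_{1}$ and $D^{\mc{N}}_{V_{2}}W_{2}$ are themselves $\phi-$related. Granted this, a single application of Lemma \ref{11151232} to the $\phi-$related triples $(X,D^{\mc{M}}_{Y}Z,K)$ and $(\mc{X},D^{\mc{N}}_{\mc{Y}}\mc{Z},\mc{K})$ supplies the first iterated-derivative identity, and the swap $X\leftrightarrow Y$ supplies the second. To prove the auxiliary claim, I would use that the standing hypothesis $g_{\mc{M}}=\phi^{\ast}g_{\mc{N}}$ together with non-degeneracy of $g_{\mc{M}}$ forces $d\phi_{p}$ to be injective at every $p\in M$, and in the equidimensional setting of $\ms{St}_{n}$ and of the open-map category $\ms{vf}_{0}$ relevant to the surrounding applications this promotes $\phi$ to a local isometric diffeomorphism; the Koszul characterisation of the Levi-Civita connection then transports verbatim along $\phi$, yielding the $\phi-$relatedness of the covariant derivatives.

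The main obstacle is precisely this auxiliary claim. Without the local-diffeomorphism reduction, a genuine isometric immersion of a lower-dimensional $\mc{M}$ into $\mc{N}$ would give rise to a nonzero second fundamental form and the identity would pick up Gauss-equation correction terms that admit no $\phi-$related counterpart on $\mc{M}$; so care is needed to confirm that within the categories in which the lemma is intended to be applied the morphism $\phi$ is indeed a local isometry rather than a bare isometric immersion.
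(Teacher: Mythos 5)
Your decomposition is the same one the paper's one-line proof intends (expand $R^{\mc{M}}_{XY}Z$ by its definition and feed the pieces to Lemma \ref{11151232}), and your treatment of the bracket term via Lemma \ref{11141553} is exactly right. The value of your write-up is that it isolates the step the paper silently skips: Lemma \ref{11151232} is only a scalar identity, so to handle $\lr{D^{\mc{M}}_{X}D^{\mc{M}}_{Y}Z}{K}_{\mc{M}}$ one genuinely needs your auxiliary claim that $D^{\mc{M}}_{Y}Z$ and $D^{\mc{N}}_{\mc{Y}}\mc{Z}$ are themselves $\phi$-related, and this does not follow from the standing hypotheses of the section. Your worry can in fact be sharpened into a counterexample: $g_{\mc{M}}=\phi^{\ast}g_{\mc{N}}$ with $g_{\mc{M}}$ non-degenerate only makes $\phi$ an isometric immersion, and for the inclusion of the round sphere into Euclidean $3$-space (with $\phi$-related extensions of tangent fields) the left-hand side of the lemma is nonzero while the right-hand side vanishes --- the discrepancy being precisely the second-fundamental-form terms of the Gauss equation that you mention. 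No rearrangement via metric compatibility avoids this, because the cross term $\lr{D^{\mc{M}}_{Y}Z}{D^{\mc{M}}_{X}K}_{\mc{M}}$ is exactly where those corrections live. So the lemma as literally stated is false, and the paper's proof has a real gap at this point.

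Your repair --- using the injectivity of $d\phi_{p}$ (forced by non-degeneracy of $g_{\mc{M}}=\phi^{\ast}g_{\mc{N}}$) together with $m=n$ to promote $\phi$ to a local isometry, after which naturality of the Levi-Civita connection yields the $\phi$-relatedness of the covariant derivatives --- is sound, and the missing hypothesis is available everywhere the paper actually uses this lemma: Lemma \ref{11031638} assumes $m=n$ explicitly, and the subsequent lemmata assume $\phi$ is a diffeomorphism. The only change to make is to state $m=n$ (or ``$\phi$ is a local diffeomorphism'') as a hypothesis of the lemma itself rather than importing it informally from the surrounding applications, since without it the conclusion is genuinely false.
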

\begin{proof}
Since Lemma \ref{11151232} and \cite[Lemma 3.35]{28one}.
\end{proof}
\begin{lemma}
\label{11031637}
Let $\{E_{k}\}_{k=1}^{m}$ be a frame field on $\mc{M}$. 
If the range of $\phi$ is dense and
$\{\mc{E}_{k}\}_{k=1}^{m}$ is a $m-$tupla 
of smooth vector fields on $\mc{N}$ such that $E_{k}$ and $\mc{E}_{k}$ are $\phi-$related
for all $k\in\{1,\dots,m\}$,
then 
$\{\mc{E}_{k}\}_{k=1}^{m}$ is an orthogonal system of unit vector fields on $\mc{N}$ such that 
$\lr{E_{i}}{E_{k}}_{\mc{M}}=\lr{\mc{E}_{i}}{\mc{E}_{k}}_{\mc{N}}\circ\phi$.
\end{lemma}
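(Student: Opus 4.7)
The plan is to reduce everything to Proposition \ref{11151106} followed by a density argument. First I would apply Proposition \ref{11151106} to the covariant $2$-tensor $A = g_{\mathcal{N}} \in \mathfrak{T}^{0}_{2}(\mathcal{N})$ with the $\phi$-related pairs $(E_i, \mathcal{E}_i)$ and $(E_k, \mathcal{E}_k)$; this yields
\begin{equation*}
(\phi^{\ast} g_{\mathcal{N}})(E_i, E_k) = g_{\mathcal{N}}(\mathcal{E}_i, \mathcal{E}_k) \circ \phi.
\end{equation*}
Combining this with the hypothesis $g_{\mathcal{M}} = \phi^{\ast} g_{\mathcal{N}}$ gives directly the equality $\langle E_i, E_k\rangle_{\mathcal{M}} = \langle \mathcal{E}_i, \mathcal{E}_k\rangle_{\mathcal{N}} \circ \phi$, which is the second assertion of the statement.

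Next I would use the hypothesis that $\{E_k\}_{k=1}^{m}$ is a frame field on $\mathcal{M}$: by definition this means $\langle E_i, E_k\rangle_{\mathcal{M}} = \varepsilon_i \delta_{ik}$ with $\varepsilon_i \in \{-1, +1\}$, i.e.\ it is a constant function on $\mathcal{M}$. Substituting in the identity just derived we obtain that $\langle \mathcal{E}_i, \mathcal{E}_k\rangle_{\mathcal{N}} \circ \phi$ is the constant $\varepsilon_i \delta_{ik}$, that is, $\langle \mathcal{E}_i, \mathcal{E}_k\rangle_{\mathcal{N}}$ equals $\varepsilon_i \delta_{ik}$ on the subset $\phi(\mathcal{M}) \subseteq \mathcal{N}$.

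The final step, and the only genuinely nontrivial point, is to promote this equality from $\phi(\mathcal{M})$ to all of $\mathcal{N}$. Since $\mathcal{E}_i, \mathcal{E}_k$ are smooth vector fields and $g_{\mathcal{N}}$ is smooth, the scalar function $\langle \mathcal{E}_i, \mathcal{E}_k\rangle_{\mathcal{N}}$ is smooth, hence continuous; and since $\phi$ has dense range in $\mathcal{N}$, a continuous function that coincides with the constant $\varepsilon_i \delta_{ik}$ on $\phi(\mathcal{M})$ must coincide with it on $\overline{\phi(\mathcal{M})} = \mathcal{N}$. This shows $\langle \mathcal{E}_i, \mathcal{E}_k\rangle_{\mathcal{N}} = \varepsilon_i \delta_{ik}$ pointwise on $\mathcal{N}$, so in particular $\mathcal{E}_i \perp \mathcal{E}_k$ for $i \neq k$ and $\lvert\langle \mathcal{E}_i, \mathcal{E}_i\rangle_{\mathcal{N}}\rvert = 1$, establishing that $\{\mathcal{E}_k\}_{k=1}^{m}$ is an orthogonal system of unit vector fields on $\mathcal{N}$. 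No real obstacle is anticipated: the density hypothesis on $\phi$ is used precisely to upgrade the pointwise equality on $\phi(\mathcal{M})$ to a global one, and Proposition \ref{11151106} supplies the pullback identity essentially for free.
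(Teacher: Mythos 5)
Your proposal is correct and follows exactly the paper's route: Prp.~\ref{11151106} applied to $A=g_{\mc{N}}$ together with the standing hypothesis $g_{\mc{M}}=\phi^{\ast}g_{\mc{N}}$ gives the displayed equality, and the orthonormality of $\{\mc{E}_{k}\}$ is then obtained by continuity of $\lr{\mc{E}_{i}}{\mc{E}_{k}}_{\mc{N}}$ and density of the range of $\phi$. The paper's own proof is just a one-line version of this same argument; your write-up merely supplies the details it leaves implicit.
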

\begin{proof}
Since Prp. \ref{11151106} we obtain the equality, the first sentence of the statement 
then follows by this equality and by the fact that the range of $\phi$ is dense.
\end{proof}
\begin{definition}
Let the property $\star(\phi)$ be the following proposition
$(\forall X\in\mf{X}(\mc{M}))(\exists\,\mc{X}\in\mf{X}(\mc{N}))(X,\mc{X}\text{ are $\phi-$related})$.
While let the property $\ast(\phi)$ be the following proposition
$(\forall\mc{X}\in\mf{X}(\mc{N}))(\exists\,X\in\mf{X}(\mc{M}))(X,\mc{X}\text{ are $\phi-$related})$.
\end{definition}
Clearly the properties $\star(\phi)$ and $\ast(\phi)$ 
are satisfied in case $\phi$ is a diffeomorphism. 
\begin{lemma}
\label{11031638}
Let the property $\star(\phi)$ be satisfied, let the range of $\phi$ be dense,
and let $m=n$.
Thus for all $X,Y\in\mf{X}(\mc{M})$
and
$\mc{X},\mc{Y}\in\mf{X}(\mc{N})$ such that 
$X,\mc{X}$ and $Y,\mc{Y}$ are $\phi-$related respectively,
we have that 
$Ric^{\mc{M}}(X,Y)=Ric^{\mc{N}}(\mc{X},\mc{Y})\circ\phi$.
\end{lemma}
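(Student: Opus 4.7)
The plan is to reduce the identity to a pointwise statement at each $q\in\mc{M}$ and to compute both Ricci tensors relative to matched orthonormal bases of $T_{q}\mc{M}$ and $T_{\phi(q)}\mc{N}$. The crucial preliminary observation is that the pullback condition $g_{\mc{M}}=\phi^{\ast}g_{\mc{N}}$ makes the tangent map $d\phi_{q}:T_{q}\mc{M}\to T_{\phi(q)}\mc{N}$ an isometric injection at every $q$: if $d\phi_{q}v=0$ then $g_{\mc{M}}(q)(v,\cdot)\equiv\ze$, forcing $v=\ze$ by nondegeneracy, so since $m=n$ the map $d\phi_{q}$ is a linear isomorphism.

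First I would fix $q\in\mc{M}$, choose a local orthonormal frame $\{E_{k}\}_{k=1}^{m}$ on an open neighborhood $U$ of $q$ with signs $\epsilon_{k}\in\{-1,+1\}$, and extend each $E_{k}$ by a bump function to a global vector field $\tilde{E}_{k}\in\mf{X}(\mc{M})$ agreeing with $E_{k}$ on some smaller open $V\ni q$. Invoking $\star(\phi)$ one produces $\tilde{\mc{E}}_{k}\in\mf{X}(\mc{N})$ that is $\phi$-related to $\tilde{E}_{k}$ for each $k$. Prp. \ref{11151106} applied to $A=g_{\mc{N}}$ and the pairs $(\tilde{E}_{i},\tilde{\mc{E}}_{i})$, $(\tilde{E}_{j},\tilde{\mc{E}}_{j})$ yields $\lr{\tilde{E}_{i}}{\tilde{E}_{j}}_{\mc{M}}=\lr{\tilde{\mc{E}}_{i}}{\tilde{\mc{E}}_{j}}_{\mc{N}}\circ\phi$ on $\mc{M}$; evaluated at $q$ this gives $\lr{\tilde{\mc{E}}_{i}(\phi(q))}{\tilde{\mc{E}}_{j}(\phi(q))}_{\mc{N}}=\delta_{ij}\epsilon_{i}$. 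Combined with the isomorphism property of $d\phi_{q}$ and the identity $\tilde{\mc{E}}_{k}(\phi(q))=d\phi_{q}(E_{k}(q))$, the vectors $\{\tilde{\mc{E}}_{k}(\phi(q))\}_{k=1}^{n}$ form an orthonormal basis of $T_{\phi(q)}\mc{N}$ with the \emph{same} signs $\epsilon_{k}$.

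Applying the standard semi-Riemannian formula for the Ricci tensor, valid for any orthonormal basis of a single tangent space (see \cite[Def. 3.53]{one}), one writes
\begin{equation*}
Ric^{\mc{M}}(X,Y)(q)=\sum_{k=1}^{m}\epsilon_{k}\lr{R^{\mc{M}}_{\tilde{E}_{k}X}Y}{\tilde{E}_{k}}_{\mc{M}}(q),
\end{equation*}
and the analogous expression for $Ric^{\mc{N}}(\mc{X},\mc{Y})(\phi(q))$ in terms of $\{\tilde{\mc{E}}_{k}(\phi(q))\}$. Lemma \ref{11031600} applied termwise to the $\phi$-related tuples $X,\mc{X}$; $Y,\mc{Y}$; $\tilde{E}_{k},\tilde{\mc{E}}_{k}$ gives $\lr{R^{\mc{M}}_{\tilde{E}_{k}X}Y}{\tilde{E}_{k}}_{\mc{M}}(q)=\lr{R^{\mc{N}}_{\tilde{\mc{E}}_{k}\mc{X}}\mc{Y}}{\tilde{\mc{E}}_{k}}_{\mc{N}}(\phi(q))$ for every $k$, and the proof concludes by summing against the shared signs $\epsilon_{k}$; since $q$ was arbitrary, this is the required equality of functions on $\mc{M}$.

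The main obstacle is that general semi-Riemannian manifolds do not admit global orthonormal frames, while $\star(\phi)$ is formulated for globally defined smooth vector fields. The bump-function extension circumvents this: although $\tilde{E}_{k}$ is no longer orthonormal outside $V$, only the orthonormality of the images $\tilde{\mc{E}}_{k}(\phi(q))$ at the single point $\phi(q)$ is needed, and that is already secured by $g_{\mc{M}}=\phi^{\ast}g_{\mc{N}}$ at $q$; the assumption $m=n$, together with injectivity of $d\phi_{q}$, then promotes this pointwise orthonormality to the full orthonormal-basis property required by the Ricci formula. The density hypothesis on the range of $\phi$ is not used directly here, but would be the natural condition to apply the converse part of Prp. \ref{11151106} if one wished to characterize $Ric^{\mc{M}}$ as $\phi^{\ast}Ric^{\mc{N}}$ globally.
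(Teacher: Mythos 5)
Your proof is correct and follows the same core strategy as the paper's: express the Ricci tensor through an orthonormal frame, push the frame forward to $\mc{N}$ via $\star(\phi)$, verify orthonormality of the images using $g_{\mc{M}}=\phi^{\ast}g_{\mc{N}}$ together with $m=n$, and apply Lemma \ref{11031600} termwise. The one substantive difference is that the paper starts from a \emph{global} frame field $\{E_{k}\}_{k=1}^{m}$ on $\mc{M}$ and invokes Lemma \ref{11031637} (which uses density of the range) to conclude that the $\phi$-related fields form a global frame field on $\mc{N}$; your version localizes at an arbitrary $q$ with a bump-function extension and only needs orthonormality of $\{\tilde{\mc{E}}_{k}(\phi(q))\}$ at the single point $\phi(q)$, which you obtain directly from the pullback condition and the injectivity of $d\phi_{q}$. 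This buys two things: it does not presuppose the existence of a global orthonormal frame on $\mc{M}$ (which a general semi-Riemannian manifold need not admit, so your treatment is actually the more careful one), and it makes transparent your correct observation that the density hypothesis is not needed for the identity $Ric^{\mc{M}}(X,Y)=Ric^{\mc{N}}(\mc{X},\mc{Y})\circ\phi$ itself, since only points in the range of $\phi$ are ever interrogated on the $\mc{N}$ side. Since the curvature term $\lr{R^{\mc{M}}_{\tilde{E}_{k}X}Y}{\tilde{E}_{k}}_{\mc{M}}$ is tensorial in $\tilde{E}_{k}$, replacing $E_{k}$ by its bump-function extension changes nothing at $q$, so the summation step is sound.
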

\begin{proof}
Let $\{E_{k}\}_{k=1}^{m}$ be a frame field on $\mc{M}$
and let $\{\mc{E}_{k}\}_{k=1}^{m}$ be
any $m-$tupla of smooth vector fields on $N$ such that $E_{k},\mc{E}_{k}$ are $\phi-$related,
existing by the property $\star(\phi)$, thus since Lemma \ref{11031637} and the hypothesis 
$m=n$ we obtain that 
\begin{equation}
\label{11031705}
\{\mc{E}_{k}\}_{k=1}^{m}
\text{ is a frame field on $\mc{N}$.} 
\end{equation}
Next
\begin{equation*}
\begin{aligned}
Ric^{\mc{M}}(X,Y)
&=
\sum_{k=1}^{m}
\lr{E_{k}}{E_{k}}_{\mc{M}}
\lr{R^{\mc{M}}_{XE_{i}}Y}{E_{i}}_{\mc{M}}
\\
&=
\sum_{k=1}^{m}
(\lr{\mc{E}_{k}}{\mc{E}_{k}}_{\mc{N}}\circ\phi)
\lr{R^{\mc{N}}_{\mc{X}\mc{E}_{i}}\mc{Y}}{\mc{E}_{i}}_{\mc{N}}\circ\phi
\\
&=
Ric^{\mc{N}}(\mc{X},\mc{Y})\circ\phi,
\end{aligned}
\end{equation*}
where the first equality follows since \cite[Lemma 3.52]{28one}, 
the second one by Lemma \ref{11031600} and the equality in Lemma \ref{11031637},
the third equality follows by \cite[Lemma 3.52]{28one} and \eqref{11031705}.
\end{proof}
\begin{lemma}
\label{11031721}
Let $\upxi$ be a coordinate system on $U\subseteq N$ 
and assume that $\phi$ is a diffeomorphism, then 
$\partial_{j}^{\upxi},\partial_{j}^{\upxi\circ\phi}$
and 
$(dx_{\upxi}^{j})_{\ast},(dx_{\upxi\circ\phi}^{j})_{\ast}$
are $\phi\up\phi^{-1}(U)-$related for any $j\in\{1,\dots n\}$
respectively,
where $(\cdot)_{\ast}$ means metrically equivalent vector field.
\end{lemma}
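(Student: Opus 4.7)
The plan has two independent parts, corresponding to the two assertions in the statement.

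First I would verify that $\partial_{j}^{\upxi\circ\phi}$ and $\partial_{j}^{\upxi}$ are $\phi\up\phi^{-1}(U)$-related. Since $\phi$ is a diffeomorphism, $\upxi\circ\phi$ is a bona fide coordinate system on $\phi^{-1}(U)$, and the assertion is essentially the chain rule unrolled. Concretely, for $q\in\phi^{-1}(U)$ and $f\in\mf{F}(U)$, the identity $(\upxi\circ\phi)^{-1}=\phi^{-1}\circ\upxi^{-1}$ gives
\begin{equation*}
(d\phi\circ\partial_{j}^{\upxi\circ\phi})(q)(f)
=\partial_{j}^{\upxi\circ\phi}(q)(f\circ\phi)
=\partial_{j}(f\circ\phi\circ\phi^{-1}\circ\upxi^{-1})(\upxi(\phi(q)))
=\partial_{j}^{\upxi}(\phi(q))(f),
\end{equation*}
which is exactly the condition $d\phi\circ\partial_{j}^{\upxi\circ\phi}=\partial_{j}^{\upxi}\circ\phi$.

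For the second part I would exploit the isometry hypothesis $g_{\mc{M}}=\phi^{\ast}g_{\mc{N}}$ through Prp. \ref{11151106}. Define the candidate vector field on $\phi^{-1}(U)$ by $Z\coloneqq d(\phi^{-1})\circ(dx_{\upxi}^{j})_{\ast}\circ\phi$, so that $Z$ is by construction $\phi$-related to $(dx_{\upxi}^{j})_{\ast}$. It suffices to prove $Z=(dx_{\upxi\circ\phi}^{j})_{\ast}$ and invoke nondegeneracy of $g_{\mc{M}}$: for an arbitrary test field $V\in\mf{X}(\phi^{-1}(U))$, setting $\mc{V}\coloneqq d\phi\circ V\circ\phi^{-1}$ makes $V,\mc{V}$ a pair of $\phi$-related fields, hence by Prp. \ref{11151106} applied to $A=g_{\mc{N}}$,
\begin{equation*}
\lr{Z}{V}_{\mc{M}}
=\lr{(dx_{\upxi}^{j})_{\ast}}{\mc{V}}_{\mc{N}}\circ\phi
=\mc{V}(x_{\upxi}^{j})\circ\phi,
\end{equation*}
while the $\phi$-relatedness of $V$ and $\mc{V}$ yields $V(x_{\upxi}^{j}\circ\phi)=(d\phi\circ V)(x_{\upxi}^{j})=\mc{V}(x_{\upxi}^{j})\circ\phi$, so
\begin{equation*}
\lr{(dx_{\upxi\circ\phi}^{j})_{\ast}}{V}_{\mc{M}}
=V(x_{\upxi\circ\phi}^{j})
=V(x_{\upxi}^{j}\circ\phi)
=\mc{V}(x_{\upxi}^{j})\circ\phi
=\lr{Z}{V}_{\mc{M}}.
\end{equation*}
Since $V$ is arbitrary on $\phi^{-1}(U)$, nondegeneracy of $g_{\mc{M}}$ forces $Z=(dx_{\upxi\circ\phi}^{j})_{\ast}$ and the desired $\phi$-relatedness follows.

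I do not foresee any real obstacle: both parts are computational and the only care needed is bookkeeping of domains (replacing $\mc{M}$ by the open submanifold $\phi^{-1}(U)$ and $\mc{N}$ by $U$, on which $\phi$ restricts to a diffeomorphism) and the proper use of Prp. \ref{11151106} to convert $g_{\mc{M}}$-inner products into $g_{\mc{N}}$-inner products composed with $\phi$. All other machinery (chain rule, definition of the musical isomorphism via $\lr{(d\alpha)_{\ast}}{V}=V(\alpha)$ for a coordinate function $\alpha$) is standard.
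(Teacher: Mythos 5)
Your proof is correct. The first half coincides with the paper's argument: the $\phi$-relatedness of $\partial_{j}^{\upxi\circ\phi}$ and $\partial_{j}^{\upxi}$ is the chain rule unrolled from the definition of coordinate vector fields, which is exactly what the paper invokes via \cite[Lemma 1.21]{one}. For the second half you take a mildly different route. The paper writes $(dx_{\upxi}^{j})_{\ast}$ through the coordinate formula from the proof of \cite[Prp. 3.10]{one} (the sharp of a one-form as $\sum_{i,k}g^{ik}\theta(\partial_{i})\partial_{k}$), uses Prp. \ref{11151106} with $A=g_{\mc{N}}$ to identify the metric components of $\upxi\circ\phi$ with the pullbacks of those of $\upxi$ (hence also the inverse-matrix coefficients), and then transports the resulting linear combination of the already-related coordinate fields. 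You instead characterize $(dx_{\upxi\circ\phi}^{j})_{\ast}$ by its pairings $\lr{\cdot}{V}_{\mc{M}}$ against arbitrary test fields $V$, convert these to $\mc{N}$-pairings via the same Prp. \ref{11151106}, and conclude by nondegeneracy of $g_{\mc{M}}$; this is coordinate-free and never mentions the inverse metric, at the cost of the routine observation that a vector field is determined pointwise by its inner products with all vector fields. Both arguments are sound, and the domain bookkeeping you flag — working with the isometric diffeomorphism $\phi\up\phi^{-1}(U):\phi^{-1}(U)\to U$ and the identity $x_{\upxi\circ\phi}^{j}=x_{\upxi}^{j}\circ\phi$ — is indeed the only care required.
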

\begin{proof}
For the coordinate vector fields follows by the definition and 
\cite[Lemma 1.21]{28one}. 
$(dx_{\upxi}^{j})_{\ast},(dx_{\upxi\circ\phi}^{j})_{\ast}$
are $\phi\up\phi^{-1}(U)-$related since the formula in 
the proof of \cite[Prp. 3.10]{28one} applied to $\theta=dx_{\upxi}^{j}$,
Lemma \ref{11151106} applied for $A=g_{N}$ and to the coordinate vector 
fields just now proved to be $\phi-$related.
\end{proof}
\begin{lemma}
\label{11031907}
Let $\upxi$ be a coordinate system on $U\subseteq N$ 
and assume that $\phi$ is a diffeomorphism, then 
$S^{\mc{M}}=S^{\mc{N}}\circ\phi$.
\end{lemma}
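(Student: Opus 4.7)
The plan is to work in local coordinates adapted to the map $\phi$, exploiting that $\phi$ being a diffeomorphism automatically satisfies the property $\star(\phi)$ needed for Lemma \ref{11031638} and gives us $m = n$.

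First I fix a point $p \in M$ and choose a coordinate system $\upxi$ on an open neighborhood $U$ of $\phi(p)$ in $N$; then $\upxi \circ \phi$ is a coordinate system on the open set $V \coloneqq \phi^{-1}(U)$ in $M$. By Lemma \ref{11031721} the coordinate vector fields $\partial_j^{\upxi \circ \phi}$ on $V$ and $\partial_j^{\upxi}$ on $U$ are $(\phi \up V)$-related for every $j \in \{1,\dots,n\}$.

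Next I would derive the two ingredients needed for the computation. Applying Prp. \ref{11151106} to $A = g_{\mc{N}}$ (with $s=2$) together with the hypothesis $g_{\mc{M}} = \phi^{\ast} g_{\mc{N}}$ and the $\phi$-relatedness of coordinate vector fields yields
\begin{equation*}
g^{\mc{M}}_{ij} = g^{\mc{N}}_{ij} \circ \phi \text{ on } V,
\end{equation*}
where $g^{\mc{M}}_{ij}$ and $g^{\mc{N}}_{ij}$ are the components in the respective coordinates. Since $\phi$ is a diffeomorphism the matrix $(g^{\mc{N}}_{ij})$ is invertible on $U$ and pulls back pointwise to the invertible matrix $(g^{\mc{M}}_{ij})$ on $V$, hence the inverse metric components satisfy $g_{\mc{M}}^{ij} = g_{\mc{N}}^{ij} \circ \phi$ on $V$. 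Independently, since $\phi$ is a diffeomorphism we have $m = n$, the range of $\phi$ is dense (in fact equal to $N$), and $\star(\phi)$ holds trivially; hence Lemma \ref{11031638} applies to the pairs $(\partial_i^{\upxi \circ \phi},\partial_i^{\upxi})$ and $(\partial_j^{\upxi \circ \phi},\partial_j^{\upxi})$, giving
\begin{equation*}
Ric^{\mc{M}}(\partial_i^{\upxi \circ \phi},\partial_j^{\upxi \circ \phi}) = Ric^{\mc{N}}(\partial_i^{\upxi},\partial_j^{\upxi}) \circ \phi \text{ on } V.
\end{equation*}

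Finally, using the coordinate expression $S = g^{ij} Ric_{ij}$ for the scalar curvature in both $(\upxi \circ \phi)$ on $V$ and $\upxi$ on $U$, I combine the two displayed identities:
\begin{equation*}
S^{\mc{M}} \up V = \sum_{i,j} g_{\mc{M}}^{ij} \, Ric^{\mc{M}}(\partial_i^{\upxi \circ \phi},\partial_j^{\upxi \circ \phi}) = \sum_{i,j} \bigl(g_{\mc{N}}^{ij} \, Ric^{\mc{N}}(\partial_i^{\upxi},\partial_j^{\upxi})\bigr) \circ \phi = (S^{\mc{N}} \circ \phi) \up V.
\end{equation*}
Since $p \in M$ was arbitrary and the family of such $V$'s covers $M$ (because $\phi$ is a diffeomorphism and $N$ admits a coordinate atlas), the equality $S^{\mc{M}} = S^{\mc{N}} \circ \phi$ holds globally on $M$.

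There is no real obstacle here beyond careful bookkeeping; the only mild subtlety is verifying that the hypothesis of Lemma \ref{11031638} ($\star(\phi)$, dense range, $m=n$) is met, which is automatic for a diffeomorphism, and that inverting the local metric matrices commutes with pullback, which is just the statement that matrix inversion is pointwise.
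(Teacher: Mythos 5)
Your proof is correct and follows essentially the same route as the paper's: both reduce to the local-coordinate contraction of the Ricci tensor and invoke Lemma \ref{11031721} (for the $\phi$-relatedness of the coordinate data) together with Lemma \ref{11031638} (for the Ricci components), the hypotheses of the latter being automatic for a diffeomorphism. The only cosmetic difference is that the paper packages the inverse metric into the metrically equivalent vector fields $(dx_{\upxi}^{i})_{\ast}$, whose $\phi$-relatedness is already part of Lemma \ref{11031721}, whereas you invert the component matrices explicitly; your additional remark that the charts $\phi^{-1}(U)$ cover $M$ is a welcome clarification the paper leaves implicit.
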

\begin{proof}
We have 
\begin{equation*}
\begin{aligned}
C(\uparrow_{1}^{1}Ric^{\mc{M}})
&=
\sum_{i=1}^{n}
\uparrow_{1}^{1}Ric^{\mc{M}}
(dx_{\upxi\circ\phi}^{i},\partial_{i}^{\upxi\circ\phi})
\\
&=
\sum_{i=1}^{n}
Ric^{\mc{M}}
((dx_{\upxi\circ\phi}^{i})_{\ast},\partial_{i}^{\upxi\circ\phi})
\\
&=
\sum_{i=1}^{n}
Ric^{\mc{N}}
((dx_{\upxi}^{i})_{\ast},\partial_{i}^{\upxi})
\circ\phi
\\
&=
C(\uparrow_{1}^{1}Ric^{\mc{N}})
\circ\phi.
\end{aligned}
\end{equation*}
where the first and the fourth equalities follow since the formula 
in the proof of \cite[Lemma 2.6]{28one} applied to $\upxi\circ\phi$, $S^{\mc{M}}$ 
and $\upxi$, $S^{\mc{N}}$ respectively,
the second equality follows since the definition of the metric contraction,
while the third one follows by Lemma \ref{11031721} and Lemma \ref{11031638}. 
\end{proof}
\begin{lemma}
\label{11040652}
If $\phi$ is a diffeomorphism, then
for any $X,Y\in\mf{X}(M)$ we have  
$G^{\mc{M}}(X,Y)=G^{\mc{N}}(d\phi X,d\phi Y)\circ\phi$.
\end{lemma}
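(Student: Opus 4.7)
The plan is to reduce the claim to three ingredients already at hand: Lemma \ref{11031638} for the Ricci contribution, Lemma \ref{11031907} for the scalar curvature contribution, and Prp. \ref{11151106} applied to the metric tensor. Since $\phi$ is a diffeomorphism it is in particular surjective (so the range is dense) and forces $m=n$; moreover by the very definition of $d\phi X$, the vector fields $X$ and $d\phi X$ are $\phi$-related for every $X\in\mf{X}(\mc{M})$, and similarly for $Y$ and $d\phi Y$. Hence property $\star(\phi)$ is automatic and all the hypotheses needed below are satisfied.

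First I would unfold the definition of the Einstein tensor \cite[Def.~12.1]{one} on both manifolds:
\begin{equation*}
G^{\mc{M}}(X,Y)=Ric^{\mc{M}}(X,Y)-\tfrac{1}{2}\,S^{\mc{M}}\,g_{\mc{M}}(X,Y),
\qquad
G^{\mc{N}}(d\phi X,d\phi Y)=Ric^{\mc{N}}(d\phi X,d\phi Y)-\tfrac{1}{2}\,S^{\mc{N}}\,g_{\mc{N}}(d\phi X,d\phi Y).
\end{equation*}
Applying Lemma \ref{11031638} to the $\phi$-related pairs $(X,d\phi X)$ and $(Y,d\phi Y)$ yields
$Ric^{\mc{M}}(X,Y)=Ric^{\mc{N}}(d\phi X,d\phi Y)\circ\phi$.
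The standing hypothesis $g_{\mc{M}}=\phi^{\ast}g_{\mc{N}}$ combined with Prp. \ref{11151106} applied to $A=g_{\mc{N}}$ with $s=2$ on the same pairs delivers
$g_{\mc{M}}(X,Y)=g_{\mc{N}}(d\phi X,d\phi Y)\circ\phi$.
Finally Lemma \ref{11031907} gives $S^{\mc{M}}=S^{\mc{N}}\circ\phi$.

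Substituting these three equalities into the first display and using that precomposition with $\phi$ distributes over sums and over pointwise products of scalar functions, one obtains
\begin{equation*}
G^{\mc{M}}(X,Y)=\bigl(Ric^{\mc{N}}(d\phi X,d\phi Y)-\tfrac{1}{2}\,S^{\mc{N}}\,g_{\mc{N}}(d\phi X,d\phi Y)\bigr)\circ\phi=G^{\mc{N}}(d\phi X,d\phi Y)\circ\phi,
\end{equation*}
which is the desired identity. No genuine obstacle is expected: the substantive work was already done in Lemmata \ref{11031638}, \ref{11031907} and Prp. \ref{11151106}, and the only point to verify explicitly is that the dense range plus equal dimension assumptions needed by Lemma \ref{11031638} hold trivially when $\phi$ is a diffeomorphism.
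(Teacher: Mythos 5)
Your proof is correct and follows essentially the same route as the paper, which likewise derives the identity by combining Lemma \ref{11031638} for the Ricci term, Lemma \ref{11031907} for the scalar curvature, and Prp. \ref{11151106} applied to $A=g_{\mc{N}}$. Your explicit verification that a diffeomorphism yields dense range, $m=n$, the property $\star(\phi)$, and the $\phi$-relatedness of $X$ and $d\phi X$ merely spells out what the paper leaves implicit.
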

\begin{proof}
Since Lemma \ref{11031638}, Lemma \ref{11031907} and Prp. \ref{11151106}
applied to $A=g_{\mc{N}}$.
\end{proof}
Now we can state the following
\begin{theorem}
\label{11151519}
Let $\mc{M},\mc{N}$ be semi-Riemannian manifolds, 
$\phi:\mc{M}\to\mc{N}$ smooth, 
$U\in\mf{X}(\mc{M})$ and $\mc{U}\in\mf{X}(\mc{N})$
such that $g_{\mc{M}}=\phi^{\ast}g_{\mc{N}}$ 
and $U$ and $\mc{U}$ are $\phi-$related.
Then 
\begin{enumerate}
\item
If the property $\star(\phi)$ is satisfied,
then
\eqref{11151519st1a}
$\Rightarrow$
\eqref{11151519st1b}
where
\begin{enumerate}
\item
$\mc{U}$ geodesic 
\label{11151519st1a}
\item
$U$ geodesic.
\label{11151519st1b}
\end{enumerate}
\label{11151519st1}
\item
If the property $\ast(\phi)$ is satisfied
and
$\phi$ has a dense range then 
\eqref{11151519st1b}
$\Rightarrow$
\eqref{11151519st1a},
in particular if $\phi$ is a diffeomorphism, then
\eqref{11151519st1a}
$\Leftrightarrow$
\eqref{11151519st1b}.
\label{11151519st2}
\item
If $\mc{M}$ and $\mc{N}$ are spacetimes and 
$\phi$ is a diffeomorphism preserving the orientation, 
then 
\eqref{11151519st2a}
$\Leftrightarrow$
\eqref{11151519st2b}
where
\begin{enumerate}
\item
$(\mc{U},\uprho,p)$ perfect fluid on $\mc{N}$,
\label{11151519st2a}
\item
$(U,\phi_{\ast}\uprho,\phi_{\ast}p)$ perfect fluid on $\mc{M}$.
\label{11151519st2b}
\end{enumerate}
\label{11151519st3}
\end{enumerate}
\end{theorem}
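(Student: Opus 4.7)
The plan is to treat the three statements in turn, each time exploiting Lemma \ref{11151232}, Prp. \ref{11151106}, or Lemma \ref{11040652} to transfer data between $\mc{M}$ and $\mc{N}$ via $\phi$-related pairs, and then invoking non-degeneracy of the metric to recover an equation of vector fields from its pairings against arbitrary test fields.

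For \eqref{11151519st1} I would fix $Y\in\mf{X}(\mc{M})$ arbitrary, produce via $\star(\phi)$ a $\phi$-related $\mc{Y}\in\mf{X}(\mc{N})$, and apply Lemma \ref{11151232} with $V=U,\mc{V}=\mc{U}$ to obtain $\lr{D^{\mc{M}}_U U}{Y}_{\mc{M}} = \lr{D^{\mc{N}}_{\mc{U}}\mc{U}}{\mc{Y}}_{\mc{N}}\circ\phi = 0$; as $Y$ ranges over $\mf{X}(\mc{M})$, non-degeneracy of $g_{\mc{M}}$ forces $D^{\mc{M}}_U U = 0$. For \eqref{11151519st2} the argument reverses: given any $\mc{Y}\in\mf{X}(\mc{N})$, use $\ast(\phi)$ to lift it to a $\phi$-related $Y\in\mf{X}(\mc{M})$, apply the same Lemma to get $\lr{D^{\mc{N}}_{\mc{U}}\mc{U}}{\mc{Y}}_{\mc{N}}\circ\phi = 0$, then use dense range of $\phi$ together with continuity to extend the vanishing to all of $\mc{N}$; non-degeneracy of $g_{\mc{N}}$ closes the argument. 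The diffeomorphism case is free since a diffeomorphism has dense range and satisfies both $\star(\phi)$ and $\ast(\phi)$.

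For \eqref{11151519st3}, the key step is to upgrade Lemma \ref{11040652} to a stress--energy identity through Einstein's equation $G = 8\pi T$, yielding
\begin{equation*}
T^{\mc{M}}(X,Y) = T^{\mc{N}}(d\phi X, d\phi Y)\circ\phi,\quad\forall X,Y\in\mf{X}(\mc{M}).
\end{equation*}
Since $\phi$ is a diffeomorphism, $d\phi\colon\mf{X}(\mc{M})\to\mf{X}(\mc{N})$ is a bijection sending $U$ to $\mc{U}$; Prp. \ref{11151106} applied to $A=g_{\mc{N}}$ gives $\lr{X}{Y}_{\mc{M}} = \lr{d\phi X}{d\phi Y}_{\mc{N}}\circ\phi$, so $U$ is unit timelike iff $\mc{U}$ is, and $d\phi$ restricts to a bijection $U^{\perp}\to\mc{U}^{\perp}$. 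Reading $\phi_{\ast}\uprho=\uprho\circ\phi$ and $\phi_{\ast}p=p\circ\phi$, the three defining identities of a perfect fluid transfer literally via the $T^{\mc{M}}$--$T^{\mc{N}}$ relation, in both directions, yielding \eqref{11151519st2a}$\Leftrightarrow$\eqref{11151519st2b}. The remaining hypothesis--orientation preservation--is used to match future-pointedness: the timelike fields $\mf{T}^{\mc{M}},\mf{T}^{\mc{N}}$ singling out the orientations are $\phi$-related, so pointwise their causal cones correspond.

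I expect the main obstacle to be bookkeeping rather than substantive, and concentrated in \eqref{11151519st3}: propagating future-pointedness cleanly through the orientation-preservation hypothesis, and explicitly verifying that $d\phi$ gives the advertised bijection $U^{\perp}\leftrightarrow\mc{U}^{\perp}$, so that quantifying over $X,Y\in\mc{U}^{\perp}$ in the definition of perfect fluid on $\mc{N}$ is truly equivalent, via the transfer formula, to quantifying over $X,Y\in U^{\perp}$ on $\mc{M}$.
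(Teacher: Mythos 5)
Your proposal is correct and follows essentially the same route as the paper: statement (1) and (2) via Lemma \ref{11151232} applied to $V=U$, $\mc{V}=\mc{U}$ together with $\star(\phi)$ (resp. $\ast(\phi)$ plus density and continuity of $\lr{\cdot}{\cdot}_{\mc{N}}$) and non-degeneracy of the metric, and statement (3) via Lemma \ref{11040652}, the Einstein equation, the identification $d\phi^{-1}\mc{U}^{\perp}=U^{\perp}$ from Prp. \ref{11151106}, and orientation preservation for future-pointedness. No gaps.
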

\begin{proof}
St. \eqref{11151519st1} follows 
since Lemma \ref{11151232} and \cite[p. 60, Prp. 10]{28one},
st. \eqref{11151519st2} follows 
since Lemma \ref{11151232}, \cite[p. 60, Prp. 10]{28one},
and since $\lr{\cdot}{\cdot}_{\mc{N}}$ 
takes values in $\mf{F}(N)\subset\mc{C}(\mc{N})$. 
Assume that $(\mc{U},\uprho,p)$ perfect fluid on $\mc{N}$, then since 
Lemma \ref{11040652} we obtain 
\begin{equation}
\label{11041002}
\begin{aligned}
T^{\mc{M}}(U,U)
&=
T^{\mc{N}}(\mc{U},\mc{U})\circ\phi
\\
&=
\uprho\circ\phi=
\phi_{\ast}\uprho.
\end{aligned}
\end{equation}
Next let $\mc{X}\in\mc{U}^{\perp}$, 
thus $d\phi^{-1}\mc{X}\in U^{\perp}$
since Prp. \ref{11151106} applied to $A=g_{\mc{N}}$,
hence $d\phi^{-1}\mc{U}^{\perp}\subseteq U^{\perp}$,
similarly 
$d\phi U^{\perp}\subseteq\mc{U}^{\perp}$,
thus
\begin{equation}
\label{11041000}
d\phi^{-1}\mc{U}^{\perp}=U^{\perp}.
\end{equation}
Next let $\mc{X},\mc{Y}\in\mc{U}^{\perp}$, then
since Lemma \ref{11040652} 
and Prp. \ref{11151106} applied to $g_{\mc{N}}$.
we obtain 
\begin{equation}
\label{11041001}
\begin{aligned}
T^{\mc{M}}(d\phi^{-1}\mc{X},U)
&=
T^{\mc{N}}(\mc{X},\mc{U})\circ\phi
=\ze,
\\
T^{\mc{M}}(U,d\phi^{-1}\mc{X})
&=
T^{\mc{N}}(\mc{U},\mc{X})\circ\phi
=\ze,
\\
T^{\mc{M}}(d\phi^{-1}\mc{X},d\phi^{-1}\mc{Y})
&=
T^{\mc{N}}(\mc{X},\mc{Y})\circ\phi
\\
&=(p\lr{\mc{X}}{\mc{Y}}_{\mc{N}})\circ\phi
\\
&=(p\circ\phi)\lr{\mc{X}}{\mc{Y}}_{\mc{N}}\circ\phi
\\
&=\phi_{\ast}p\lr{d\phi^{-1}\mc{X}}{d\phi^{-1}\mc{Y}}_{\mc{M}}.
\end{aligned}
\end{equation}
Since \eqref{11041002}, \eqref{11041000} and \eqref{11041001}
to prove st. \eqref{11151519st3}
it remains to prove that 
$U$ is timelike future-pointing and unit.
$U$ is timelike and unit since Prp. \ref{11151106} applied to $g_{\mc{N}}$.
By hypothesis $\phi$ preserves the orientation, 
which implies since $\phi$ is a diffeomorphism that
if $\mf{T}^{\mc{M}}$ and $\mf{T}^{\mc{N}}$ are the timelike
smooth vector fields of $\mc{M}$ and $\mc{N}$ respectively
determining the orientations of $\mc{M}$ and $\mc{N}$ according
\cite[Lemma 5.32]{28one}, then we require that 
$\mf{T}^{\mc{N}}=d\phi\mf{T}^{\mc{M}}$.
Therefore $U$ is future-pointing since Prp. \ref{11151106} applied to $g_{\mc{N}}$.
\end{proof}
\begin{proposition}
\label{09201700}
There exists a unique category $\mr{vf}$ such that 
\begin{enumerate}
\item
$\mr{Obj}(\mr{vf})=\{(M,U)\,\vert\, M\text{ is a manifold }, U\in\mf{X}(M)\text{ complete}\}$;
\item
for all $(M,U),(N,V),(L,W)\in \mr{Obj}(\mr{vf})$, we have 
\begin{enumerate}
\item
$\mr{Mor}_{\mr{vf}}((M,U),(N,V))
=
\{\phi:N\to M\text{ smooth}\,\vert\, V\text{ and }U\text{ are $\phi-$related}\}$;
\item
$\psi\circ_{\mr{vf}}\phi=\phi\circ\psi$, for any
$\phi\in \mr{Mor}_{\mr{vf}}((M,U),(N,V))$
and
$\psi\in \mr{Mor}_{\mr{vf}}((N,V),(L,W))$;
\item
$\un_{(M,U)}=\un_{M}$.
\end{enumerate}
\end{enumerate}
\end{proposition}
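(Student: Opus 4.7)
The plan is to construct the category $\ms{vf}$ directly by specifying its data and verifying the category axioms, since the object set, hom-sets, composition, and identities are all given explicitly by the statement and so any such category is unique.

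First I would verify that composition is well-defined, i.e.\ that $\phi\circ\psi\in Mor_{\ms{vf}}((M,U),(L,W))$ whenever $\phi\in Mor_{\ms{vf}}((M,U),(N,V))$ and $\psi\in Mor_{\ms{vf}}((N,V),(L,W))$. Smoothness of $\phi\circ\psi:L\to M$ is clear. For the $\phi$-relation, by hypothesis $d\phi\circ V=U\circ\phi$ and $d\psi\circ W=V\circ\psi$, and the chain rule $d(\phi\circ\psi)=d\phi\circ d\psi$ gives
\begin{equation*}
d(\phi\circ\psi)\circ W=d\phi\circ d\psi\circ W=d\phi\circ V\circ\psi=U\circ\phi\circ\psi,
\end{equation*}
so $W$ and $U$ are $(\phi\circ\psi)$-related, as needed. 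This is the single key computation, and it is the only place where the geometric content (beyond ordinary function composition) enters; I expect no other real obstacle.

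Next I would record that $\un_{(M,U)}\coloneqq\un_{M}$ belongs to $Mor_{\ms{vf}}((M,U),(M,U))$: indeed $d\un_{M}=\un_{TM}$, hence $d\un_{M}\circ U=U=U\circ\un_{M}$, so $U$ is $\un_{M}$-related to itself. The identity laws $\phi\circ_{\ms{vf}}\un_{(M,U)}=\un_{M}\circ\phi=\phi$ and $\un_{(N,V)}\circ_{\ms{vf}}\phi=\phi\circ\un_{N}=\phi$ follow immediately from the corresponding identities for ordinary map composition (noting the reversed order in the definition of $\circ_{\ms{vf}}$). Associativity
\begin{equation*}
(\chi\circ_{\ms{vf}}\psi)\circ_{\ms{vf}}\phi=\phi\circ(\psi\circ\chi)=(\phi\circ\psi)\circ\chi=\chi\circ_{\ms{vf}}(\psi\circ_{\ms{vf}}\phi)
\end{equation*}
follows from associativity of ordinary map composition.

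Finally, uniqueness is automatic: a category is determined by its object set, its hom-sets, its composition law, and its chosen identities, all of which are fixed by the statement; hence any category satisfying the listed conditions is literally equal to the one just constructed. (One could alternatively observe that $\ms{vf}$ is the opposite of the evident category $\ms{vf}^{op}$ of pairs $(M,U)$ with smooth maps $\phi:M\to N$ satisfying $d\phi\circ U=V\circ\phi$ as morphisms $(M,U)\to(N,V)$ in the usual direction; either presentation yields the same conclusion.)
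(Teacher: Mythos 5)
Your proposal is correct and matches the paper's own argument: the paper likewise observes that the chain rule makes composition a well-defined associative partial operation (your computation $d(\phi\circ\psi)\circ W=d\phi\circ d\psi\circ W=d\phi\circ V\circ\psi=U\circ\phi\circ\psi$ is exactly the point being invoked) and dismisses the identity laws and uniqueness as easy. You have simply written out in full the details the paper leaves implicit.
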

\begin{proof}
Since the chain rule, see for example \cite[p. 10, Lemma 15]{28one},
the morphism composition is a well-set internal associative partial operation.
The remaining of the statement is easy to prove.
\end{proof}
\begin{remark}
\label{12131411}
Since Thm. \ref{11151519} we obtain that for any $(M,U),(N,V)\in\mr{vf}$ 
and $\phi\in \mr{Mor}_{\mr{vf}}((M,U),(N,V))$ 
such that $M$ and $N$ are 
supports of semi-Riemannian manifolds 
$(M,g_{M})$ and $(N,g_{N})$,
$g_{N}=\phi^{\ast}g_{M}$
and $\phi$ has dense range,
then $V$ geodesic implies $U$ geodesic, 
if in addition $\phi$ is a diffeomorphism then $V$ geodesic is equivalent to $U$ geodesic.
\end{remark}
\begin{definition}
\label{09251618}
Let $\mr{vf}_{0}$ be the subcategory of $\mr{vf}$ with the same object set and 
$\mr{Mor}_{\mr{vf}_{0}}((M,U),(N,V))$ the subset of the $\phi\in \mr{Mor}_{\mr{vf}}((M,U),(N,V))$
such that $\phi$ is an open map.
\end{definition}
Since \cite[Cor. 2.3.]{28mic} any submersion is an open map, in particular any
diffeomorphism is an open map.
\begin{definition}
\label{11051643}
For any $n\in\Z_{0}^{+}$, let $\mr{St}_{n}$ be the category
whose object set is the set of the couples $(\mc{M},U)$ where
$\mc{M}$ is an $n-$dimensional connected time-oriented Lorentz manifold
and $U\in\mf{X}(\mc{M})$ complete, timelike and unit future-pointing,
while $\mr{Mor}_{\mr{St}_{n}}((\mc{M},U),(\mc{N},V))$ 
is the set of the submersions $\phi:\mc{N}\to\mc{M}$ preserving the orientation,
such that $\phi^{\ast}g_{\mc{M}}=g_{\mc{N}}$ and $V$ and $U$ are $\phi-$related; 
while the morphism composition is the opposite of map composition.
Let $\imath_{n}$ be the couple of maps defined the first on 
$\mr{Obj}(\mr{St}_{n})$ and the second on $Mor(\mr{St}_{n})$
such that the first one maps any $((M,g),U)$ into $(M,U)$ 
and the second one any morphism $\phi$ into itself.
\end{definition}
Clearly $\imath_{n}$ is a functor from $\mr{St}_{n}$ to $\mr{vf}_{0}$, 
In Prp. \ref{09201647} we will show that 
with any object of $\mr{vf}_{0}$ remains associated 
a dynamical pattern.
\begin{proposition}
\label{09201646}
Let $M$ be a manifold and $U\in\mf{X}(M)$ be complete. 
There exists a unique $\mr{top}-$quasi enriched category $[M,U]$ such that 
\begin{enumerate}
\item
$\mr{Obj}([M,U])=\{W\,\vert\,W\in Op(M)\wedge(\exists\,C\in\mr{Cl}(M))(C\subseteq W)\}$;
\item
for all $X,Y,Z\in[M,U]$
\begin{enumerate}
\item
$\mr{Mor}_{[M,U]}(X,Y)=\{(X,Y)\}\times \mr{mor}_{[M,U]}(X,Y)$,
where 
\begin{equation*}
\mr{mor}_{[M,U]}(X,Y)=\{t\in\R\,\vert\, 
\vartheta^{U}(t)Y\subseteq X\}
\end{equation*}
\item
for any 
$t\in \mr{mor}_{[M,U]}(X,Y)$
and
$s\in \mr{mor}_{[M,U]}(Y,Z)$
\begin{equation*}
((Y,Z),s)\circ((X,Y),t)=((X,Z),s+t),
\end{equation*}
and 
$((X,X),0)$ is the identity morphism of $X$;
\item
$\mr{Mor}_{[M,U]}(X,Y)$ is provided by the topology inherited by that in $\R$.
\end{enumerate}
\end{enumerate}
\end{proposition}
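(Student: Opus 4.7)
The plan is to construct the category explicitly, verify the category axioms, and then check the $\mathsf{top}$-quasi enrichment condition. Uniqueness is immediate once existence is established, since all object, morphism, composition, identity, and topology data are prescribed by the statement. The heart of the matter is to exploit the one-parameter group law $\vartheta^{U}(s+t)=\vartheta^{U}(s)\circ\vartheta^{U}(t)$ enjoyed by the flow of a complete vector field (with $\vartheta^{U}(0)=\un_{M}$) to reduce every structural check to an arithmetic fact in $\R$.

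First I would verify that the composition in \eqref{09201646} is internal to the morphism sets. Given $t\in mor_{[M,U]}(X,Y)$ and $s\in mor_{[M,U]}(Y,Z)$, i.e.\ $\vartheta^{U}(t)Y\subseteq X$ and $\vartheta^{U}(s)Z\subseteq Y$, I apply the group law to obtain $\vartheta^{U}(s+t)Z=\vartheta^{U}(t)(\vartheta^{U}(s)Z)\subseteq\vartheta^{U}(t)Y\subseteq X$, so $s+t\in mor_{[M,U]}(X,Z)$. Associativity of the composition then follows at once from the associativity of addition in $\R$, while $((X,X),0)$ acts as a two-sided identity because $\vartheta^{U}(0)$ is the identity map on $M$, hence $0\in mor_{[M,U]}(X,X)$ for every $X\in Op(M)$, and $s+0=0+s=s$. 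The source and target conventions are built into the tags $(X,Y)$ prepended to each morphism set, which ensures that morphisms with distinct pairs of endpoints are disjoint even when they share the same real parameter. This gives a well-defined $\mc{U}$-category structure on $[M,U]$, unique by construction.

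Next I would check the $\mathsf{top}$-quasi enrichment. Each $Mor_{[M,U]}(X,Y)$, being $\{(X,Y)\}\times mor_{[M,U]}(X,Y)$ with the topology inherited from $\R$, is naturally an object of $\mathsf{top}$. Fix $T=((X,Y),t)\in Mor_{[M,U]}(X,Y)$ and $Z\in[M,U]$. Then $T^{\dagger}\up Mor_{[M,U]}(Y,Z)$ sends $((Y,Z),s)\mapsto((X,Z),s+t)$, so it factorizes as the restriction of the translation $s\mapsto s+t$ on $\R$; translations being continuous for the standard topology, this restriction is continuous into $mor_{[M,U]}(X,Z)$ with its subspace topology. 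The same argument applied to $T_{\star}\up Mor_{[M,U]}(Z,X)$, which sends $((Z,X),r)\mapsto((Z,Y),t+r)$, shows it is likewise continuous. Therefore both quasi-enrichment conditions of section~\ref{not1} hold.

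There is no substantial obstacle in this proof: the only nontrivial input is the flow identity for complete vector fields, everything else being a direct translation of the axioms. The mild bookkeeping subtlety is simply to carry the tags $(X,Y)$ through composition so that morphism sets with the same underlying real parameter remain formally disjoint when the endpoints differ, which is exactly what the coproduct-style definition of $Mor_{[M,U]}(X,Y)$ achieves. Uniqueness follows because all the data of $[M,U]$ are fixed by the statement.
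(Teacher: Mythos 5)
Your proof is correct and follows exactly the route of the paper's own (one-sentence) proof: well-definedness and associativity of the composition reduce to the group-morphism property of the flow $\vartheta^{U}$, and the $\ms{top}$-quasi enrichment reduces to continuity of addition in $\R$. Your write-up merely makes explicit the computations the paper leaves implicit, including the correct verification of the two separate-continuity conditions for $T^{\dagger}$ and $T_{\star}$.
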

\begin{proof}
The composition is a well-defined associative partial operation 
since $\vartheta^{U}$ is a group morphism, and the category is 
$\mr{top}-$enriched since the sum in $\R$ is continuous.
\end{proof}
We shall prove in Prp. \ref{09201647} 
that a dynamical pattern remains associated with any object of $\mr{vf}$,
then determining the object map of a functor from $\mr{vf}_{0}$ to $\mf{dp}$
as established in Thm. \ref{09201707}.
Both these results reside in the pertinent choice of a class of  
topologies of the algebra $\mc{A}(M)$. 
This class of topologies results intrinsically related to the category $\mr{vf}$
as we can see in the following
\begin{definition}
\label{12051029}
We call $\mr{vf}-$topology any map $\upxi$ defined on $\mr{Obj}(\mr{vf})$ such that 
for all $[M,U]\in\mr{vf}$ we have 
\begin{enumerate}
\item
$\upxi_{[M,U]}$ is a Hausdorff topology on $\mc{A}(M)$
providing it with the structure of topological $\ast-$algebra;
\label{12051029a}
\item
$t\mapsto f\circ\uptheta^{U}(t)$
is a continuous map from $\R$ to $\lr{\mc{A}(M)}{\upxi_{[M,U]}}$,
for all $f\in\mc{A}(M)$; 
\label{12051029b}
\item
$f\mapsto f\circ\phi$
is a continuous map from 
$\lr{\mc{A}(M)}{\upxi_{[M,U]}}$
to
$\lr{\mc{A}(N)}{\upxi_{[N,V]}}$,
for all 
$[N,V]\in\mr{vf}$ 
and all
$\phi\in \mr{Mor}_{\mr{vf}}([M,U],[N,V])$.
\label{12051029c}
\end{enumerate}
We call $\upxi$ locally convex if all its values are locally convex topologies.
For any $[M,U]\in\mr{vf}$ and any submanifold $X$ of $M$ we let 
$\lr{\mc{A}(X)}{\upxi_{[M,U]}}$ denote the algebra $\mc{A}(X)$ endowed with the \emph{final} topology 
for the map $r_{X}:\lr{\mc{A}(M)}{\upxi_{[M,U]}}\to\mc{A}(X)$ $f\mapsto f\up X$.
\end{definition}
Note that Def. \ref{12051029}\eqref{12051029b}
is well-set since the composition of smooth maps is smooth.
\begin{proposition}
\label{12051106}
The map assigning to any object $[M,U]\in\mr{vf}$
the pointwise topology on $\mc{A}(M)$ is a locally convex $\mr{vf}-$topology,
in particular the collection of $\mr{vf}-$topologies is nonempty.
\end{proposition}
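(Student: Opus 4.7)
The plan is to verify the three defining conditions of a $\ms{vf}$-topology for the assignment $\upxi_{[M,U]} \coloneqq $ the topology of pointwise convergence on $\mc{A}(M)$, which is generated by the separating family of seminorms $\{p_{x}\}_{x \in M}$ with $p_{x}(f) \coloneqq |f(x)|$. This will immediately exhibit $\upxi_{[M,U]}$ as a locally convex Hausdorff topology, so the locally convex character and Def. \ref{12051029}\eqref{12051029a} will be addressed together.

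First I would establish Def. \ref{12051029}\eqref{12051029a}. The topology is Hausdorff since $M$ separates points of $\mc{A}(M)$, and it is locally convex by construction. Scalar multiplication and addition are continuous componentwise. The involution $f \mapsto \ov f$ is continuous because $p_{x}(\ov f) = p_{x}(f)$. For separate continuity of the product, fix $g \in \mc{A}(M)$ and $x \in M$; then $p_{x}(fg) = |g(x)|\, p_{x}(f)$, showing $f \mapsto fg$ is continuous at the origin, hence everywhere by linearity (and symmetrically in $g$).

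Next I would handle Def. \ref{12051029}\eqref{12051029b}. Fix $f \in \mc{A}(M)$; I must show $t \mapsto f \circ \vartheta^{U}(t)$ is continuous into $(\mc{A}(M), \upxi_{[M,U]})$. By the universal property of the initial topology generated by the evaluations, it suffices to check that $t \mapsto p_{x}(f \circ \vartheta^{U}(t) - f \circ \vartheta^{U}(t_{0})) = |f(\alpha^{U}_{x}(t)) - f(\alpha^{U}_{x}(t_{0}))|$ is continuous at each $t_{0}$ for every $x \in M$. This follows because $\alpha^{U}_{x}$ is smooth (hence continuous) and $f$ is smooth.

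Finally for Def. \ref{12051029}\eqref{12051029c}, given $\phi \in Mor_{\ms{vf}}([M,U],[N,V])$, the map $\phi^{\ast}: \mc{A}(M) \to \mc{A}(N)$, $f \mapsto f \circ \phi$, is linear, and for every $y \in N$ we have $p_{y}^{N}(\phi^{\ast} f) = |f(\phi(y))| = p_{\phi(y)}^{M}(f)$, so each seminorm generating $\upxi_{[N,V]}$ pulls back to one of the generating seminorms of $\upxi_{[M,U]}$, giving continuity at once. No obstacle of substance arises here; the whole result is a direct unfolding of the pointwise convergence topology, the chief merit being to confirm that the class of $\ms{vf}$-topologies is nonempty and so provides a legitimate starting point for the subsequent construction of dynamical patterns in Prp. \ref{09201647} and Thm. \ref{09201707}.
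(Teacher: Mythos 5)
Your proof is correct and follows essentially the same route as the paper: the only step the paper spells out is Def. \ref{12051029}\eqref{12051029b}, proved exactly as you do via the identity $(f\circ\vartheta^{U}(t))(p)=(f\circ\alpha_{p}^{U})(t)$ and continuity of the integral curves, with the remaining conditions declared easy. Your explicit verification of \eqref{12051029a} and \eqref{12051029c} through the generating seminorms $p_{x}$ simply fills in what the paper leaves to the reader, and is accurate.
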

\begin{proof}
Let $f\in\mc{A}(M)$ and $p\in M$ we have that
$(f\circ\vartheta^{U}(t))(p)=(f\circ\alpha_{p}^{U})(t)$,
next $\alpha_{p}^{U}$ is continuous so
$t\mapsto(f\circ\vartheta^{U}(t))(p)$ is continuous,
hence  
$t\mapsto(f\circ\vartheta^{U}(t))$ is continuous in the pointwise topology.
The remaining of the statement is easy to prove. 
\end{proof}
\begin{remark}
\label{12051119}
The map assigning to any object $[M,U]\in\mr{vf}$ 
the Schwartz topology \cite[Ch. 4, n$^{\circ}$ 4.2]{28mal}
on $\mc{A}(M)$ satisfies 
Def. \ref{12051029}(\ref{12051029a},\ref{12051029c}),
presently it is not clear if it satisfies 
also Def. \ref{12051029}\eqref{12051029b}.
\end{remark}
\begin{convention}
For the remaining of the paper we consider fixed a 
$\mr{vf}-$topology $\upxi$.
\end{convention}
\begin{proposition}
\label{09201647}
Let $M$ be a manifold and $U\in\mf{X}(M)$ be complete.
There exists a unique 
$F_{[M,U]}\in\mr{Fct}_{\mr{top}}([M,U],\mr{tsa})$
such that for all $X,Y\in[M,U]$, $t\in \mr{mor}_{[M,U]}(X,Y)$
and $f\in\mc{A}(X)$
\begin{enumerate}
\item
$F_{[M,U]}(X)=\lr{\mc{A}(X)}{\upxi_{[M,U]}}$;
\item
$F_{[M,U]}((X,Y),t)f=f\circ\vartheta^{U}(t)\up Y$\footnote{we 
convein to simplify the notation letting
$F_{[M,U]}((X,Y),t)$
denote
$F_{[M,U]}(((X,Y),t))$.}.
\end{enumerate}
\end{proposition}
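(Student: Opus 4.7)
The plan is to construct $F_{[M,U]}$ directly from the two prescriptions in the statement and then verify in sequence: well-definedness on objects, well-definedness on morphisms (as $\ast$-algebra maps \emph{and} as continuous maps), functoriality, and $\ms{top}$-enrichment, i.e. continuity of the morphism map. Uniqueness is immediate since the object and morphism values are prescribed on all of $[M,U]$.

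First I would check the object assignment. For every open $X\subseteq M$, the convention at the end of Def.~\ref{12051029} declares $F_{[M,U]}(X)=\lr{\mc{A}(X)}{\upxi_{[M,U]}}$ to be the $\ast$-algebra $\mc{A}(X)$ equipped with the topology inherited from $\upxi_{[M,U]}$, so Def.~\ref{12051029}\eqref{12051029a} yields that $F_{[M,U]}(X)\in Obj(\ms{tsa})$. Next, for a morphism $((X,Y),t)$, i.e.\ $t\in\R$ with $\vartheta^{U}(t)Y\subseteq X$, the map $\Phi_{X,Y,t}:f\mapsto f\circ\vartheta^{U}(t)\up Y$ is clearly a unital $\ast$-algebra homomorphism $\mc{A}(X)\to\mc{A}(Y)$; only its continuity requires argument. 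The key observation is that $\vartheta^{U}(t):M\to M$ is a diffeomorphism with $U$ related to itself by $\vartheta^{U}(t)$, hence it belongs to $Mor_{\ms{vf}_{0}}([M,U],[M,U])$, so by Def.~\ref{12051029}\eqref{12051029c} the global pullback $(\vartheta^{U}(t))^{\ast}\in Mor_{\ms{tsa}}(\mc{A}(M),\mc{A}(M))$ with respect to $\upxi_{[M,U]}$. Because $\vartheta^{U}(t)(Y)\subseteq X$, the local map $\Phi_{X,Y,t}$ is the map induced on the inherited-topology quotients (respectively subspaces) by $(\vartheta^{U}(t))^{\ast}$, and therefore it is continuous by the functoriality of the inherited-topology construction.

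Functoriality is then a direct verification from the group law of the flow: for $t\in mor_{[M,U]}(X,Y)$ and $s\in mor_{[M,U]}(Y,Z)$, one has
\begin{equation*}
\Phi_{Y,Z,s}\circ\Phi_{X,Y,t}(f)=f\circ\vartheta^{U}(t)\circ\vartheta^{U}(s)\up Z=f\circ\vartheta^{U}(s+t)\up Z=\Phi_{X,Z,s+t}(f),
\end{equation*}
matching the composition law of $[M,U]$ given in Prp.~\ref{09201646}, while $\vartheta^{U}(0)=\un_{M}$ gives $\Phi_{X,X,0}=\un_{\mc{A}(X)}$.

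Finally, to establish $F_{[M,U]}\in\ms{Fct}_{\ms{top}}([M,U],\ms{tsa})$ it remains to show that for fixed $X,Y$ the assignment
\begin{equation*}
mor_{[M,U]}(X,Y)\ni t\longmapsto\Phi_{X,Y,t}\in Mor_{\ms{tsa}}(\mc{A}(X),\mc{A}(Y))
\end{equation*}
is continuous with respect to the topology inherited from $\R$ on the left and the pointwise topology on the right. Unwinding the pointwise topology on $Mor_{\ms{tsa}}$, this reduces to showing that for every $f\in\mc{A}(X)$, the curve $t\mapsto f\circ\vartheta^{U}(t)\up Y$ is continuous from $mor_{[M,U]}(X,Y)$ into $\lr{\mc{A}(Y)}{\upxi_{[M,U]}}$. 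This is exactly the localized form of Def.~\ref{12051029}\eqref{12051029b}: that axiom gives continuity globally on $M$, and then one passes to the local version on $Y$ via the same inherited-topology compatibility used above.

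The main obstacle I expect is precisely the technical point just flagged, namely that the continuities provided by Def.~\ref{12051029}\eqref{12051029b} and \eqref{12051029c} are formulated for global objects on $M$, whereas what is needed here are the analogous statements after restricting the source algebra to $\mc{A}(X)$ and the target to $\mc{A}(Y)$. Handling this will require being explicit about what ``topology inherited from $\upxi_{[M,U]}$'' means on $\mc{A}(X)$ (the natural reading being either the subspace or the quotient topology under restriction) and verifying that in either interpretation, a continuous linear map on $\mc{A}(M)$ that carries functions agreeing on $X$ to functions agreeing on $Y$ descends to a continuous linear map $\mc{A}(X)\to\mc{A}(Y)$. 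Once this functorial compatibility of the inherited topology is stated and checked once, both the continuity of each $\Phi_{X,Y,t}$ and the continuity of $t\mapsto\Phi_{X,Y,t}$ follow by direct comparison with the global statements.
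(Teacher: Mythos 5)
Your proposal is correct and follows essentially the same route as the paper: the functor is forced by the two prescriptions, functoriality is the flow group law $\vartheta^{U}(t+s)=\vartheta^{U}(t)\circ\vartheta^{U}(s)$, and continuity of $t\mapsto F_{[M,U]}((X,Y),t)$ comes from Def.~\ref{12051029}\eqref{12051029b} together with the simple-convergence topology on $Mor_{\ms{tsa}}$. You are in fact more explicit than the paper on two points it passes over silently, namely the continuity of each individual $F_{[M,U]}((X,Y),t)$ (which you correctly derive from Def.~\ref{12051029}\eqref{12051029c} applied to $\vartheta^{U}(t)$ as an endomorphism in $\ms{vf}_{0}$) and the passage from the global statements on $\mc{A}(M)$ to the restricted algebras $\mc{A}(X)$, which the paper absorbs into its convention on the ``inherited'' topology without further comment.
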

\begin{proof}
Let us assume the notation of the statements.
The second statement is well-set since $\vartheta^{U}(t)Y\subseteq X$.
Let $\tilde{f}\in\mc{A}(M)$ extending $f$ whose existence is ensured for instance by \cite[Lemma 2.26]{28lee}.
Next $r_{Y}:F_{[M,U]}(M)\to F_{[M,U]}(Y)$ is continuous since \cite[Prp. 6 I.14]{28BourGT}, 
while $\zeta:\R\to F_{[M,U]}(M)\,t\mapsto\tilde{f}\circ\uptheta^{U}(t)$ is continuous by definition. 
Therefore $r_{Y}\circ\zeta:\R\to F_{[M,U]}(Y)$ is continuous moreover 
$r_{Y}\circ\zeta:t\mapsto f\circ\uptheta^{U}(t)\up Y$ since $\uptheta^{U}(t)(Y)\subseteq X$.
Next clearly $F_{[M,U]}$ is uniquely determined by the properties in the statement,
thus by recalling that for all $\mc{X},\mc{Y}\in\mr{tsa}$
the space $\mr{Mor}_{\mr{tsa}}(\mc{X},\mc{Y})$ is provided with the topology of simple convergence, we obtain 
by the above shown continuity of $r_{Y}\circ\zeta$ that 
\begin{equation*}
F_{[M,U]}\in\mc{C}(\mr{Mor}_{[M,U]}(X,Y),\mr{Mor}_{\mr{tsa}}(F_{[M,U]}(X),F_{[M,U]}(Y))).
\end{equation*}
Next 
\begin{equation*}
\begin{aligned}
F_{[M,U]}(((Y,Z),s)\circ((X,Y),t))
f
&=
\\
f\circ\vartheta^{U}(t+s)\up Z
&=
\\
f\circ
(\vartheta^{U}(t)\up Y)
\circ
\vartheta^{U}(s)\up Z
&=
\\
(F_{[M,U]}((Y,Z),s)\circ F_{[M,U]}((X,Y),t))
f,
\end{aligned}
\end{equation*}
and 
$(F_{[M,U]}((X,X),0)f=f$.
\end{proof}
\begin{remark}
\label{12111210}
Prp. \ref{09201647} states that 
$\lr{[M,U]}{F_{[M,U]}}\in\mr{Obj}(\mf{dp})$, see Def. \ref{09081303}.
\end{remark}
Now we are in the position of stating the following 
\begin{theorem}
\label{09201707}
There exists a unique $\mf{a}\in\mr{Fct}(\mr{vf}_{0},\mf{dp})$ 
such that for all $(M,U),(N,V)\in\mr{vf}_{0}$ and $\phi\in \mr{Mor}_{\mr{vf}_{0}}((M,U),(N,V))$
\begin{enumerate}
\item
$\mf{a}((M,U))=\lr{[M,U]}{F_{[M,U]}}$,
\item
$\mf{a}(\phi)=(f_{\phi},T_{\phi})$;
\end{enumerate}
where 
$f_{\phi}\in\mr{Fct}_{\mr{top}}([N,V],[M,U])$ 
and 
$T_{\phi}\in \mr{Mor}_{\mr{Fct}([N,V],\mr{tsa})}(F_{[M,U]}\circ f_{\phi},F_{[N,V]})$
such that for all $Y,Z\in[N,V]$ and 
$t\in \mr{mor}_{[N,V]}(Y,Z)$
\begin{enumerate}
\item
$f_{\phi}(Y)=\phi(Y)$;
\item
$f_{\phi}((Y,Z),t)=((\phi(Y),\phi(Z)),t)$;
\item
$T_{\phi}(Y):\mc{A}(\phi(Y))\ni h\mapsto h\circ\phi\up Y\in\mc{A}(Y)$.
\end{enumerate}
In particular $\ps{\Uppsi}\circ\mf{a}\in \mr{Mor}_{2-\mf{dp}}(\mr{vf}_{0},\mf{Chdv})$,
namely $\ps{\Uppsi}\circ\mf{a}\in\mf{Sp}(\mr{vf}_{0})$
where $\ps{\Uppsi}$ is the functor in Thm. \ref{12312222}.
\end{theorem}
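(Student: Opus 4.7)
\bigskip

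\noindent\emph{Proof proposal.} Uniqueness of $\mf{a}$ is immediate from the explicit formulas in the statement, so only existence and functoriality have to be established; Rmk.~\ref{12111210} already guarantees that the prescribed value on objects, $\lr{[M,U]}{F_{[M,U]}}$, is an object of $\mf{dp}$. The plan is to verify in order: (i) that for each $\phi\in Mor_{\ms{vf}_0}((M,U),(N,V))$ the map $f_\phi$ is a $\ms{top}$-functor $[N,V]\to[M,U]$; (ii) that $T_\phi$ is a natural transformation in $\ms{tsa}$ with the correct source and target; (iii) that $\phi\mapsto(f_\phi,T_\phi)$ respects identities and composition in the sense of \eqref{09091136}. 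The final sentence then drops out of Thm.~\ref{12312222} with no extra work.

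For (i), the decisive ingredient is openness of $\phi$: because we work in $\ms{vf}_0$ rather than $\ms{vf}$, $\phi(Y)\in Op(M)=Obj([M,U])$ for every open $Y\subseteq N$. The flow-intertwining identity $\phi\circ\vartheta^V(t)=\vartheta^U(t)\circ\phi$ (a standard consequence of $V$ and $U$ being $\phi$-related together with uniqueness of integral curves) immediately converts $\vartheta^V(t)Z\subseteq Y$ into $\vartheta^U(t)\phi(Z)\subseteq\phi(Y)$, so $f_\phi$ is well defined on morphisms; functoriality and continuity are transparent because the composition in both $[N,V]$ and $[M,U]$ is just addition of the real parameter and $f_\phi$ acts trivially there. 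For (ii), the fact that pullback is pointwise a unital $\ast$-algebra morphism puts $T_\phi(Y)$ in $Hom^\ast(\mc{A}(\phi(Y)),\mc{A}(Y))$; continuity with respect to the subspace topologies inherited from $\upxi_{[M,U]}$ and $\upxi_{[N,V]}$ follows from Def.~\ref{12051029}\eqref{12051029c} applied to the global map $\phi$ together with the fact that the subspace topologies on $\mc{A}(\phi(Y))$ and $\mc{A}(Y)$ are, by construction, the ones induced by the pointwise restriction of these global topologies. Naturality in $t\in mor_{[N,V]}(Y,Z)$ reduces, for $h\in\mc{A}(\phi(Y))$, to
\begin{equation*}
(h\circ\vartheta^U(t)\!\up\!\phi(Z))\circ\phi\!\up\!Z
=(h\circ\phi\!\up\!Y)\circ\vartheta^V(t)\!\up\!Z,
\end{equation*}
which is once more the flow-intertwining identity.

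For (iii), $\phi=\un_M$ gives $f_{\un_M}=\un_{[M,U]}$ and $T_{\un_M}(Y)=\un_{\mc{A}(Y)}$ by inspection. For composition, let $\phi\in Mor_{\ms{vf}_0}((M,U),(N,V))$ and $\psi\in Mor_{\ms{vf}_0}((N,V),(L,W))$; since $\psi\circ_{\ms{vf}_0}\phi=\phi\circ\psi$ (opposite of map composition) and since composition in $\mf{dp}$ is given by \eqref{09091136}, what has to be checked is $f_{\phi\circ\psi}=f_\phi\circ f_\psi$ and $T_{\phi\circ\psi}=T_\psi\circ(T_\phi\ast\un_{f_\psi})$. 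The first equality is immediate from the explicit formula, while for the second, evaluating at $Y\in[L,W]$, using \eqref{20061403} to rewrite $(T_\phi\ast\un_{f_\psi})(Y)=T_\phi(\psi(Y))$, and unfolding both sides on $h\in\mc{A}((\phi\circ\psi)(Y))$ reduces everything to the associativity $h\circ(\phi\circ\psi)\!\up\!Y=(h\circ\phi\!\up\!\psi(Y))\circ\psi\!\up\!Y$. The final assertion $\ps{\Uppsi}\circ\mf{a}\in\mf{Sp}(\ms{vf}_0)$ then follows from Thm.~\ref{12312222} and the definition of $\mf{Sp}$. The main obstacle, as far as I can see, is not conceptual but technical: verifying the continuity of $T_\phi(Y)$, where one must carefully argue that Def.~\ref{12051029}\eqref{12051029c}, which is stated globally for $\phi:N\to M$, descends to the restricted pullback $\mc{A}(\phi(Y))\to\mc{A}(Y)$ under the inherited topologies; everything else is a matter of unwinding the identity $\phi\circ\vartheta^V(t)=\vartheta^U(t)\circ\phi$.
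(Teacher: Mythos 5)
Your proposal is correct and follows essentially the same route as the paper's own proof: openness of $\phi$ plus the flow-naturality identity $\phi\circ\vartheta^{V}(t)=\vartheta^{U}(t)\circ\phi$ for well-definedness and naturality of $(f_{\phi},T_{\phi})$, Def.~\ref{12051029}\eqref{12051029c} for continuity of $T_{\phi}(Y)$, and \eqref{20061403} to reduce functoriality to $T_{\phi\circ\psi}=T_{\psi}\circ(T_{\phi}\circ(f_{\psi})_{o})$. The descent-of-continuity issue you flag is real but is treated no more carefully in the paper, which simply invokes Def.~\ref{12051029}\eqref{12051029c} together with the convention that $\mc{A}(\phi(Y))$ and $\mc{A}(Y)$ carry the topologies inherited from $\upxi_{[M,U]}$ and $\upxi_{[N,V]}$.
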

\begin{proof}
Let us use the notation of the statement.
The last sentence of the statement follows since the first and Thm. \ref{12312222}.
The properties in the statement determine uniquely the maps $f_{\phi}$, $T_{\phi}$ and $\mf{a}$,
let us prove that these properties are well-set and that the remaining of the statement holds.
First claim: $f_{\phi}$ is well-set and $f_{\phi}\in\mr{Fct}_{\mr{top}}([N,V],[M,U])$. 
Since the naturality of flows, see for instance \cite[Prp. 9.13]{28lee}, 
we deduce that $\phi\circ\vartheta^{V}(t)=\vartheta^{U}(t)\circ\phi$, hence 
\begin{equation*}
\vartheta^{U}(t)(\phi(Z))=\phi(\vartheta^{V}(t)Z)\subseteq\phi(Y);
\end{equation*}
moreover $\phi(Y)$ and $\phi(Z)$ are open since $\phi$ is open by construction, 
so $\phi(Y),\phi(Z)\in[M,U]$ and $t\in \mr{mor}_{[M,U]}(\phi(Y),\phi(Z))$, hence $f_{\phi}$ is well-set, clearly 
continuous and composition preserving, hence our first claim has been proven. 
Second claim: $T_{\phi}$ is well-set and
$T_{\phi}\in \mr{Mor}_{\mr{Fct}([N,V],\mr{tsa})}(F_{[M,U]}\circ f_{\phi},F_{[N,V]})$.
$T_{\phi}$ is well-set since the first proven claim and Prp. \ref{09201647}.
Next $T_{\phi}(Y)$ clearly is a morphism of $\ast-$algebras let us now prove that it is continuous.
Let $\chi:F_{[M,U]}(M)\to F_{[N,V]}(N)\,k\mapsto k\circ\phi$ thus by Def. \ref{12051029} $\chi$ is continuous,
next $r_{Y}:F_{[N,V]}(N)\to F_{[N,V]}(Y)$ is continuous by \cite[Prp. 6 I.14]{28BourGT}, therefore 
$r_{Y}\circ\chi:F_{[M,U]}(M)\to F_{[N,V]}(Y)$ is continuous. Now easily we see that 
$T_{\phi}(Y)\circ r_{\phi(Y)}=r_{Y}\circ\chi$ which is therefore continuous, 
thus since \cite[Prp. 6 I.14]{28BourGT} we deduce that $T_{\phi}(Y):F_{[M,U]}(\phi(Y))\to F_{[N,V]}(Y)$ 
is continuous. It remains to show that the following diagram in $\mr{tsa}$ is commutaive
\begin{equation*}
\xymatrix{
\mc{A}(\phi(Z))\ar[rr]^{T_{\phi}(Z)}
&&
\mc{A}(Z)
\\
&&
\\
\mc{A}(\phi(Y))
\ar[rr]_{T_{\phi}(Y)}
\ar[uu]^{F_{[M,U]}((\phi(Y),\phi(Z)),t)}
&&
\mc{A}(Y)
\ar[uu]_{F_{[N,V]}((Y,Z),t)}
}
\end{equation*}
Let $h\in\mc{A}(\phi(Y))$ then 
\begin{equation*}
\begin{aligned}
(T_{\phi}(Z)
\circ 
F_{[M,U]}((\phi(Y),\phi(Z)),t))h
&=
\\
T_{\phi}(Z)(h\circ\vartheta^{U}(t)\up\phi(Z))
&=
\\
h\circ\vartheta^{U}(t)\circ\phi\up Z
&=
\\
h\circ\phi\circ\vartheta^{V}(t)
\up Z
&=
\\
(h\circ\phi\up Y)\circ\vartheta^{V}(t)
\up Z
&=
\\
(F_{[N,V]}((Y,Z),t)\circ T_{\phi}(Y))h,
\end{aligned}
\end{equation*}
where the $3$th equality follows since the naturality of flows,
thus the previous diagram is commutative and the second claim is shown.
In order to conclude the proof we need to show the following
$3$th claim: $\mf{a}(\psi\circ_{\mr{vf}_{0}}\phi)=\mf{a}(\psi)\circ_{\mf{dp}}\mf{a}(\phi)$
for any $\phi\in \mr{Mor}_{\mr{vf}_{0}}((M,U),(N,V))$ and $\psi\in \mr{Mor}_{\mr{vf}_{0}}((N,V),(L,W))$.
To this end note that
$\mf{a}(\psi\circ_{\mr{vf}_{0}}\phi)
=
\mf{a}(\phi\circ\psi)
=
(f_{\phi\circ\psi},T_{\phi\circ\psi})$
and
\begin{equation*}
\begin{aligned}
\mf{a}(\psi)\circ_{\mf{dp}}\mf{a}(\phi)
&=
(f_{\psi},T_{\psi})\circ_{\mf{dp}}(f_{\phi},T_{\phi})
\\
&=
(f_{\phi}\circ f_{\psi}, T_{\psi}\circ(T_{\phi}\ast\un_{f_{\psi}}))
\\
&=
(f_{\phi}\circ f_{\psi}, T_{\psi}\circ(T_{\phi}\circ (f_{\psi})_{o})),
\end{aligned}
\end{equation*}
where the last equality follows since \eqref{20061403},
thus our claim is equivalent to the following one
\begin{equation}
\label{10181007}
\begin{aligned}
f_{\phi\circ\psi}&=f_{\phi}\circ f_{\psi},
\\
T_{\phi\circ\psi}&=T_{\psi}\circ(T_{\phi}\circ (f_{\psi})_{o}).
\end{aligned}
\end{equation}
The first equality is trivially true.
Next let $D\in[L,W]$ and $k\in\mc{A}((\phi\circ\psi)(D))$
then
$T_{\phi\circ\psi}(D)k=(k\circ\phi\up\psi(D))\circ\psi\up D$,
while 
\begin{equation*}
\begin{aligned}
(T_{\psi}\circ(T_{\phi}\circ (f_{\psi})_{o}))(D)k
&=
\\
(T_{\psi}(D)\circ T_{\phi}(\psi(D)))k
&=
\\
T_{\psi}(D)(k\circ\phi\up\psi(D))
&=
(k\circ\phi\up\psi(D))\circ\psi\up D,
\end{aligned}
\end{equation*}
so the second equality and then our $3$th claim has been proven. 
\end{proof}
\begin{definition}
\label{10041441}
$\mr{a}^{n}\coloneqq\ps{\Uppsi}\circ\mf{a}\circ\imath_{n}$.
\end{definition}
\begin{corollary}
[$\mr{a}^{n}$ is a species]
\label{01191431}
$\mr{St}_{n}\in\mr{Cat}$ while $\mr{a}^{n}\in \mr{Mor}_{2-\mf{dp}}(\mr{St}_{n},\mf{Chdv})$,
namely $\mr{a}^{n}\in\mf{Sp}(\mr{St_{n}})$, in particular 
$\un_{\mr{a}^{n}}\in\mf{Sct}(\mr{a}^{n})$.
\end{corollary}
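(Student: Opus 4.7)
The plan is to chase each clause of the corollary through functors already in hand, using nothing more than the definition of $2-\mf{dp}$ and elementary bookkeeping. First I would verify that $\ms{St}_{n}\in\ms{Cat}$. The object set of $\ms{St}_n$ (pairs of an $n$-dimensional connected time-oriented Lorentz manifold with a complete timelike unit future-pointing field) and its hom-sets (sets of orientation-preserving metric-preserving submersions relating the fields) are all bounded in size by sets constructible from $\mc{U}$, so $\ms{St}_n$ is a $\mc{U}_0$-category with $\mc{U}_0$-small object set; associativity and the unit axiom follow because morphism composition is the opposite of smooth map composition. Hence $\ms{St}_n\in\mc{U}_1-\ms{cat}=\ms{Cat}$.

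For the central assertion I would simply unwind Def. \ref{10041441}: $\ms{a}^n=\ps{\Uppsi}\circ\mf{a}\circ\imath_n$. By construction $\imath_n\in\ms{Fct}(\ms{St}_n,\ms{vf}_0)$ (trivially a functor, as it is the identity on morphisms), Thm. \ref{09201707} gives $\mf{a}\in\ms{Fct}(\ms{vf}_0,\mf{dp})$, and Thm. \ref{12312222} gives $\ps{\Uppsi}\in\ms{Fct}(\mf{dp},\mf{Chdv})$. Setting $\Uppsi\coloneqq\mf{a}\circ\imath_n\in\ms{Fct}(\ms{St}_n,\mf{dp})$ and $\Upphi\coloneqq\ps{\Uppsi}\in\ms{Fct}(\mf{dp},\mf{Chdv})$, one has $\ms{a}^n=\Upphi\circ\Uppsi$ with $\Uppsi$ landing in the pivot category $\mf{dp}$. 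Thus $\ms{a}^n\in Obj(2-\mf{dp}(\ms{St}_n,\mf{Chdv}))=Mor_{2-\mf{dp}}(\ms{St}_n,\mf{Chdv})$ directly by Def. \ref{09011927}, and consequently $\ms{a}^n\in\mf{Sp}(\ms{St}_n)$ by Def. \ref{10031349}.

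For the last clause I would chase the defining chain $\mf{Sct}(\ms{a}^n)=\mf{Cnt}(\ms{a}^n,\ms{a}^n)=Mor_{\mf{Sp}(\ms{St}_n)}(\ms{a}^n,\ms{a}^n)=Mor_{2-\mf{dp}(\ms{St}_n,\mf{Chdv})}(\ms{a}^n,\ms{a}^n)$ provided by Defs. \ref{10031349} and \ref{10031410}. According to Def. \ref{09011927} this last set consists of Godement products $\beta\ast\alpha$ with $\alpha$ a natural transformation in $\ms{Fct}(\ms{St}_n,\mf{dp})$ and $\beta$ one in $\ms{Fct}(\mf{dp},\mf{Chdv})$ whose composed factorizations both give $\ms{a}^n$. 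Using the identity $\un_G\ast\un_F=\un_{G\circ F}$ recalled in the preliminaries, one has $\un_{\ms{a}^n}=\un_{\ps{\Uppsi}}\ast\un_{\mf{a}\circ\imath_n}$ with factor functors $\mf{a}\circ\imath_n$ and $\ps{\Uppsi}$, so $\un_{\ms{a}^n}\in\mf{Sct}(\ms{a}^n)$.

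No genuine obstacle is foreseen: the statement is essentially a tautological assembly of the three functors just constructed. The only point requiring minor care is the universe bookkeeping in the first clause, since $\mf{dp}$ and $\mf{Chdv}$ are $\mc{U}_0$-type rather than $\mc{U}$-type (Cors. \ref{09100952} and \ref{31122146}), which is exactly why the conclusion lands in $\ms{Cat}=\mc{U}_1-\ms{cat}$ and $2-\mf{dp}$ is taken relative to $\ms{Cat}$.
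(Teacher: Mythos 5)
Your proposal is correct and follows essentially the same route as the paper, which simply cites Thm. \ref{09201707} (whose final sentence already records $\ps{\Uppsi}\circ\mf{a}\in Mor_{2-\mf{dp}}(\ms{vf}_{0},\mf{Chdv})$) and leaves the precomposition with $\imath_{n}$, the universe bookkeeping for $\ms{St}_{n}$, and the identification $\un_{\ms{a}^{n}}=\un_{\ps{\Uppsi}}\ast\un_{\mf{a}\circ\imath_{n}}$ implicit. Your write-up just makes those routine verifications explicit.
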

\begin{proof}
Since Thm. \ref{09201707}.
\end{proof}
This result permits the following
\begin{definition}
\label{10181456}
$\mf{u}(\mr{a}^{n})\equiv$ of classical $n-$dimensional gravity.
\end{definition}
Next we shall see that the equiformity principle 
$\mc{E}(\un_{\mr{a}^{n}})(\mf{P}^{\mr{a}^{n}},\mr{G}^{\mr{a}^{n}})(\un)$
-
of the connector, actually a sector,
$\un_{\mr{a}^{n}}$ evaluated on the standard experimental setting
associated with $\mr{a}^{n}$
-
is nothing else than the generalization at $n$ dimensions 
of the equivalence principle in general relativity.
In the next section we shall consider instead equiformity principles between 
$\mr{a}^{n}$ and quantum gravity species.
\begin{corollary}
[Equiformity principle of $\un_{\mr{a}^{n}}$ 
alias equivalence principle of general relativity]
\label{01201634}
Let $n\in\Z_{0}^{+}$, $(\mc{M},U),(\mc{N},V)\in\mr{St}_{n}$, 
where $\mc{M}=(M,g)$ and $\mc{N}=(N,h)$.
Thus for any $\phi\in \mr{Mor}_{\mr{St}_{n}}((\mc{M},U),(\mc{N},V))$, 
$Y,Z\in[N,V]$ and $t\in \mr{mor}_{[N,V]}(Y,Z)$,
we have for all $\uppsi\in\mf{P}_{F_{[N,V]}(Z)}$ 
and $A\in\mc{A}(\phi(Y))_{ob}$ 
\begin{enumerate}
\item
invariant form
\begin{equation*}
\mf{f}_{(\uppsi,A\circ\phi\up Y)}^{\mr{a}^{n},(\mc{N},V),Y,Z}(((Y,Z),t))
=
\mf{f}_{(\uppsi\circ(\phi\up Z)^{\ast},A)}^{\mr{a}^{n},
(\mc{M},U),\phi(Y),\phi(Z)}(((\phi(Y),\phi(Z)),t));
\end{equation*}
\label{01201634st1}
\item
equivariant form
\begin{equation*}
(\phi\up Y)^{\ast}
\mf{t}^{\mr{a}^{n},(\mc{N},V),Y,Z,\uppsi}
=
f_{\phi}^{\intercal}
\mf{t}^{\mr{a}^{n},(\mc{M},U),\phi(Y),\phi(Z),\uppsi\circ(\phi\up Z)^{\ast}}.
\end{equation*}
\end{enumerate}
Moreover if $\phi$ is a diffeomorphism,
then $U$ is geodedic if and only if $V$ is geodedic
and if in addition $\phi$ preserves the orientation
then
$U$ is a vector field of a perfect fluid on $\mc{M}$
if and only if
$V$ is a vector field of a perfect fluid on $\mc{N}$.
\end{corollary}
\begin{proof}
Since Cor. \ref{01191431}, Thm. \ref{10081910}\eqref{10081910st5}
and Prp. \ref{01162038} applied for the positions
$\mr{a}=\mr{b}=\mr{a}^{n}$ and $\mr{T}$ the identity morphism of $\mr{Fct}(\mr{St}_{n},\mf{Chdv})$
relative to $\mr{a}^{n}$. The final sentence follows since Thm. \ref{11151519}.
\end{proof}
\subsection{Equiformity principle between classical and quantum gravity}
\label{10181450}
We start the present section by defining the concept of realization which 
represents the equiformity principle when the context morphism $\phi$ is trivial. 
Roughly speaking a species $\mr{b}$ realizes a species $\mr{a}$ when 
the dynamics $\uptau_{\mr{b}}$ mirrors the dynamics $\uptau_{\mr{a}}$.
Since any connector induces an equiformity principle according to Thm. \ref{10081910},
one shows that the target species of any connector is a realization of its source species
Cor. \ref{04081852}.
Based on the structure of $\mr{a}^{n}$ we constructed in Cor. \ref{01191431},
we define the collection of $n-$dimensional gravity species, 
the subcollection of classical ones to which $\mr{a}^{n}$ belongs
and diverse subcollections of $n-$dimensional quantum gravity species.
Then we establish that for any connector $\mr{T}$ from $\mr{a}^{n}$ to 
a $n-$dimensional strict quantum gravity species,
exist quantum realizations of the velocity 
of maximal integral curves of complete vector fields 
Cor. \ref{11091638}.
As an application to a Robertson-Walker spacetime $M(x,f)$ 
we establish that the Hubble parameter $H=f^{\prime}/f$,
the acceleration $f^{\prime\prime}$ of the scale function
and the constraints of its positivity evaluated over a subset of the range 
of the galactic time $t$ of a geodesic $\alpha$, 
are expressed in terms of a quantum realization of the velocity of $\alpha$
Thm. \ref{04151343} and Cor. \ref{04171614}.
\begin{definition}
[Realizations]
\label{04081745}
Let $(\mr{a},\mr{b})\in\mf{Sp}_{\ast}$. 
We say that 
$\mr{b}$ is a realization of $\mr{a}$
or that
$\mr{b}$ realizes the dynamics of $\mr{a}$
if for all 
$\mf{E}^{\mr{b}}=(\mf{S}^{\mr{b}},\mr{R}^{\mr{b}})\in Exp(\mr{b})$
there exists
$\mf{E}^{\mr{a}}=(\mf{S}^{\mr{a}},\mr{R}^{\mr{a}})\in Exp(\mr{a})$,
for all $M\in d(\mr{a})$ and all $y,z\in\mr{R}_{M}^{\mr{b}}$
there exist
$y^{\prime},z^{\prime}\in\mr{R}_{M}^{\mr{a}}$,
for all $g\in\mr{R}_{M}^{\mr{b}}(y,z)$ there exists
$g^{\prime}\in\mr{R}_{M}^{\mr{a}}(y^{\prime},z^{\prime})$, 
for all 
$\uppsi\in\mf{S}_{M}^{\mr{b}}(z)$ 
there exists
$\uppsi^{\prime}\in\mf{S}_{M}^{\mr{a}}(z^{\prime})$ 
and 
for all $A\in\mc{A}_{\mr{a}(M)}(y^{\prime})_{ob}$ 
there exists
$A^{\prime}\in\mc{A}_{\mr{b}(M)}(y)_{ob}$ 
satisfying
\begin{equation*}
\mf{f}_{(\uppsi,A^{\prime})}^{\mr{b},M,y,z}(g)
=
\mf{f}_{(\uppsi^{\prime},A)}^{\mr{a},M,y^{\prime},z^{\prime}}(g^{\prime}).
\end{equation*}
\end{definition}
As a result of the equiformity principle 
the target species of a connector realizes the dynamics
of the source species, namely
\begin{corollary}
[Realizations induced by connectors]
\label{04081852}
Let $(\mr{a},\mr{b})\in\mf{Sp}_{\ast}$ and $\mr{T}\in\mf{Cnt}(\mr{a},\mr{b})$,
thus $\mr{b}$ is a realization of $\mr{a}$ such that 
for all $\mf{E}\in Exp(\mr{b})$
we can select $\mf{E}^{\mr{a}}=(\mf{P}^{\mr{a}},\mr{G}^{\mr{a}})$;
while if $\Upgamma(\mf{E},\mr{T})\neq\emptyset$\footnote{for instance whenever 
$\mr{T}_{1}^{o}(M)$ is injective for all $M\in d(\mr{a})$
see Rmk. \ref{10201649}.},
then for any $s\in\Upgamma(\mf{E},\mr{T})$ we can select
$\mf{E}^{\mr{a}}=\mr{T}[\mf{E},s]$.
Moreover letting $\mf{E}=(\mf{S},\mr{R})$, 
for any 
$M\in d(\mr{a})$, $y,z\in\mr{R}_{M}$, $g\in\mr{R}_{M}(y,z)$ and $\uppsi\in\mf{S}_{M}(z)$
we can select
$y^{\prime}=\mr{T}_{1}^{o}(M)y$,
$z^{\prime}=\mr{T}_{1}^{o}(M)z$,
$g^{\prime}=\mr{T}_{1}^{m}(M)g$
and
$\uppsi^{\prime}=\mr{T}_{3}^{\dagger}(M)(z)\uppsi$,
while for any
$A\in\mc{A}_{\mr{a}(M)}(y^{\prime})_{ob}$
we can select 
$A^{\prime}=\mr{T}_{3}(M)(y)A$.
\end{corollary}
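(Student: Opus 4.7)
The plan is to reduce the verification of Def. \ref{04081745} to a direct application of the equiformity principle Prp. \ref{01162038}, specialized to the trivial morphism $\phi = \un_{M}$. First, I would dispose of the choice of $\mf{E}^{\ms{a}}$: for the unconditional part of the statement I would apply Thm. \ref{01181342}\eqref{01181342st1} to obtain that $\ms{T}$ is a link from $\mf{E}$ to the standard experimental setting $(\mf{P}^{\ms{a}},\mr{G}^{\ms{a}})$; while if $\Upgamma(\mf{E},\ms{T})\neq\emptyset$, I would instead invoke Thm. \ref{10081910}\eqref{10081910st1}$\&$\eqref{10081910st2} to get, for each $s\in\Upgamma(\mf{E},\ms{T})$, that $\ms{T}[\mf{E},s]\in Exp(\ms{a})$ and that $\ms{T}$ is a link from $\mf{E}$ to $\ms{T}[\mf{E},s]$. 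In both cases the relevant statistical ensemble spaces of $\mf{E}^{\ms{a}}$ contain the sets $\ms{T}_{3}^{\dagger}(M)(z)\mf{S}_{M}(z)$, so the proposed $\uppsi^{\prime}\coloneqq\ms{T}_{3}^{\dagger}(M)(z)\uppsi$ indeed belongs to $\mf{S}_{M}^{\ms{a}}(z^{\prime})$ with $z^{\prime}=\ms{T}_{1}^{o}(M)z$, by Def. \ref{01161922}\eqref{01161922st2} and Def. \ref{10081523}.

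Next I would specialize the equiformity principle Thm. \ref{10081910}\eqref{10081910st5} (equivalently Prp. \ref{01162038}) to the case $N=M$ and $\phi=\un_{M}$. Using that $\ms{a}$ and $\ms{b}$ are functors, $\ms{a}_{1}(\un_{M})$ and $\ms{b}_{1}(\un_{M})$ are identity functors on the respective dynamical categories, whence $\ms{a}_{1}^{o}(\un_{M})=\un$, $\ms{a}_{1}^{m}(\un_{M})=\un$ and similarly for $\ms{b}$; moreover $\ms{a}_{3}(\un_{M})$ and $\ms{b}_{3}^{\dagger}(\un_{M})$ are identities since the functor $\ms{a}$ (resp. $\ms{b}$) sends $\un_{M}$ to the identity morphism in $\mf{Chdv}$, which by Cor. \ref{31122146} has components $(\un_{\mr{G}_{\ms{a}(M)}},\un_{\upsigma_{\ms{a}(M)}^{\dagger}},\un_{\mf{p}\circ\upsigma_{\ms{a}(M)}})$. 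Substituting these identities into the equality in Prp. \ref{01162038} collapses it to
\begin{equation*}
\mf{f}_{(\ms{T}_{3}^{\dagger}(M)(z)\uppsi,\,A)}^{\ms{a},M,\ms{T}_{1}^{o}(M)y,\ms{T}_{1}^{o}(M)z}(\ms{T}_{1}^{m}(M)g)
=
\mf{f}_{(\uppsi,\,\ms{T}_{3}(M)(y)A)}^{\ms{b},M,y,z}(g),
\end{equation*}
which is precisely the defining identity of Def. \ref{04081745} with the selections declared in the statement.

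Finally I would check the remaining membership requirements: $y^{\prime}=\ms{T}_{1}^{o}(M)y$, $z^{\prime}=\ms{T}_{1}^{o}(M)z$ are objects of $\ms{R}_{M}^{\ms{a}}$ and $g^{\prime}=\ms{T}_{1}^{m}(M)g\in\ms{R}_{M}^{\ms{a}}(y^{\prime},z^{\prime})$, both immediate from Def. \ref{01161922}\eqref{01161922st1} applied to the link $\ms{T}$; while $A^{\prime}=\ms{T}_{3}(M)(y)A\in\mc{A}_{\ms{b}(M)}(y)_{ob}$ follows from $\ms{T}_{3}(M)(y)\in Mor_{\ms{tsa}}(\mc{A}_{\ms{a}(M)}(\ms{T}_{1}^{o}(M)y),\mc{A}_{\ms{b}(M)}(y))$. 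I do not expect a genuine obstacle here: the entire content of the corollary is the observation that the general equiformity principle, which relates data across contexts via a morphism $\phi$, already contains the ``vertical'' $(\phi=\un)$ case as a purely species-internal realization statement. The only bookkeeping subtlety is keeping straight the covariance of $\ms{T}_{3}$ versus the contravariance of $\ms{T}_{3}^{\dagger}$ on observables and ensembles respectively, which is transparent once one writes the specialized identity above.
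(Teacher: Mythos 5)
Your proposal is correct and follows essentially the same route as the paper: the paper's proof likewise cites Thm. \ref{01181342}\eqref{01181342st1} for the standard experimental setting, Thm. \ref{10081910}\eqref{10081910st2} for the case $\Upgamma(\mf{E},\ms{T})\neq\emptyset$, and Prp. \ref{01162038} specialized to $M=N$ and $\phi=\un_{M}$. Your additional verification that the components of $\ms{a}(\un_{M})$ and $\ms{b}(\un_{M})$ collapse to identities, and the membership checks for $y^{\prime},z^{\prime},g^{\prime},A^{\prime}$, merely make explicit what the paper leaves implicit.
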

\begin{proof}
The general case follows since Prp. \ref{01162038} applied for $M=N$ and $\phi=\un_{M}$,
and by Thm. \ref{01181342}\eqref{01181342st1},
the case $\Upgamma(\mf{E},\mr{T})\neq\emptyset$
by Thm. \ref{10081910}\eqref{10081910st2}.
\end{proof}
Next we shall define the collection of $n-$dimensional gravity species 
as the set of species contextualized on $\mr{St}_{n}$ and factorizable
through $\ps{\Uppsi}$ and $\imath_{n}$. 
In this way we provide at the same time
a sort of minimal extension of the construction of $\mr{a}^{n}$, the possibility 
to select diverse subcollections of quantum gravity species, and most importantly 
a path to construct connectors between the classical and a quantum gravity.
\begin{definition}
[$n-$dimensional gravity species]
\label{10031749}
Let $n\in\Z_{0}^{+}$, define
\begin{equation*}
\begin{aligned}
G_{n}
&\coloneqq
\{\mr{b}
\in\mf{Sp}(\mr{St}_{n})
\,\vert\,
(\exists\,\mf{b}\in\mr{Fct}(\mr{vf}_{0},\mf{dp})) 
(\mr{b}=\ps{\Uppsi}\circ\mf{b}\circ\imath_{n})\},
\\
CG_{n}
&\coloneqq
\{\mr{b}\in G_{n}
\,\vert\,
(\forall M\in\mr{St}_{n})(\forall x\in\mr{G}_{M}^{\mr{b}})
(\mc{A}_{\mr{b}(M)}(x)\text{ is commutative})\};
\\
QG_{n}
&\coloneqq
\{\mr{b}\in G_{n}
\,\vert\,
(\forall M\in\mr{St}_{n})(\forall x\in\mr{G}_{M}^{\mr{b}})
(\mc{A}_{\mr{b}(M)}(x)\text{ is noncommutative})\};
\\
sQG_{n}
&\coloneqq
\{\mr{b}\in QG_{n}
\,\vert\,
(\forall M\in\mr{St}_{n})
(\mr{G}_{M}^{\mr{b}}=\mr{G}_{M}^{\mr{a}^{n}})
\}.
\end{aligned}
\end{equation*}
We call 
\begin{enumerate}
\item
$G_{n}$ the collection of $n-$dimensional gravity species;
\item
$CG_{n}$ the collection of $n-$dimensional classical gravity species;
\item
$QG_{n}$ the collection of $n-$dimensional quantum gravity species;
\item
$sQG_{n}$ the collection of $n-$dimensional strict quantum gravity species.
\end{enumerate}
\end{definition}
Clearly $\mr{a}^{n}\in CG_{n}$.
We can define the relativistic species
roughly speaking by restriction of $4-$dimensional gravity species 
$\mr{b}$ over the category of vector fields on subregions of a fixed Minkowski spacetime,
namely $\mr{b}$ composed the identity 
functor valued in $\mr{St}_{4}$ 
and defined in the full subcategory $\mr{Sr}(\M)$ 
of submanifolds of a given Minkowski spacetime $\M$
and observer fields on them.
\begin{definition}
Let $\M$ be a fixed Minkowski spacetime.
Define $\mr{Sr}(\M)$ to be the full subcategory of $\mr{St}_{4}$ whose object set
is the subset of all the $(\mc{M},U)$ such that 
$\mc{M}$ is a semi-Riemanniann submanifold of $\M$.
Moreover let $\mr{sr}(\M)$ be the subcategory of $\mr{Sr}(\M)$ whose object set 
is $\mr{Sr}(\M)$ and for any object $P,Q$
we have that $\mr{Mor}_{\mr{sr}(\M)}(P,Q)$ is the 
subset of those elements of $\mr{Mor}_{\mr{Sr}(\M)}(P,Q)$ which are 
local diffeomorphisms.
\end{definition}
Note that if for any $P\in\mr{sr}(\M)$ and $p\in P$, 
we identify the tangent space $T_{p}P$ with 
the Minkowski $4-$space $\R_{1}^{4}$,
then $\{d\phi_{p}\,\vert\,\phi\in \mr{Mor}_{\mr{sr}(\M)}(P,P)\}=O_{1}(4)$ 
the subgroup of linear isometries of the space $\R_{1}^{4}$. 
We recall that given a category $B$ and a subcategory $A$ of $B$,
thus $\mr{I}_{A\to B}$ denotes the functor from $A$ to $B$
whose object and morphism maps are the identity maps.
\begin{definition}
Define
$\mr{I}_{\M}\coloneqq\mr{I}_{\mr{Sr}(\M)\to\mr{St}_{4}}$, 
$\imath^{\M}\coloneqq\imath_{4}\circ\mr{I}_{\M}$,
and
$\mr{a}^{rt}\coloneqq\mr{a}^{4}\circ\mr{I}_{M}$
\end{definition}
\begin{definition}
Define
\begin{equation*}
\begin{aligned}
RT(\M)
&\coloneqq
\{\mr{c}\in\mf{Sp}(\mr{Sr}(\M))
\,\vert\,
(\exists\,\mr{b}\in G_{4})
(\mr{c}=\mr{b}\circ\mr{I}_{\M})\};
\\
CRT(\M)
&\coloneqq
\{\mr{c}\in\mf{Sp}(\mr{Sr}(\M))
\,\vert\,
(\exists\,\mr{b}\in CG_{4})
(\mr{c}=\mr{b}\circ\mr{I}_{\M})\};
\\
QRT(\M)
&\coloneqq
\{\mr{c}\in\mf{Sp}(\mr{Sr}(\M))
\,\vert\,
(\exists\,\mr{b}\in QG_{4})
(\mr{c}=\mr{b}\circ\mr{I}_{\M})\};
\\
sQRT(\M)
&\coloneqq
\{\mr{c}\in\mf{Sp}(\mr{Sr}(\M))
\,\vert\,
(\exists\,\mr{b}\in sQG_{4})
(\mr{c}=\mr{b}\circ\mr{I}_{\M})\}.
\end{aligned}
\end{equation*}
We call 
\begin{enumerate}
\item
$RT(\M)$ the collection of relativistic species in $\M$;
\item
$CRT(\M)$ the collection of classical relativistic species in $\M$;
\item
$QRT(\M)$ the collection of quantum relativistic species in $\M$;
\item
$sQRT(\M)$ the collection of strict quantum relativistic species in $\M$.
\end{enumerate}
\end{definition}
\begin{remark}
Clearly $\mr{a}^{rt}\in CRT(\M)$ moreover we have that 
\begin{equation*}
\begin{aligned}
RT(\M)
&\subseteq
\{\mr{c}
\in\mf{Sp}(\mr{Sr}(\M))
\,\vert\,
(\exists\,\mf{b}
\in\mr{Fct}(\mr{vf}_{0},\mf{dp})) 
(\mr{c}=\ps{\Uppsi}\circ\mf{b}\circ\imath^{\M})\};
\\
CRT(\M)
&\subseteq
\{\mr{c}\in RT(\M)
\,\vert\,
(\forall P\in\mr{Sr}(\M))(\forall x\in\mr{G}_{P}^{\mr{c}})
(\mc{A}_{\mr{c}(P)}(x)\text{ is commutative})\};
\\
QRT(\M)
&\subseteq
\{\mr{c}\in RT(\M)
\,\vert\,
(\forall P\in\mr{Sr}(\M))(\forall x\in\mr{G}_{P}^{\mr{c}})
(\mc{A}_{\mr{c}(P)}(x)\text{ is noncommutative})\};
\\
sQRT(\M)
&\subseteq
\{\mr{c}\in QRT(\M)
\,\vert\,
(\forall P\in\mr{Sr}(\M))
(\mr{G}_{P}^{\mr{c}}=\mr{G}_{P}^{\mr{a}^{rt}})
\}.
\end{aligned}
\end{equation*}
\end{remark}
Next let us come back to the general case and examine 
the outcomes of a realization of $\mr{a}^{n}$ belonging to $sQG_{n}$. 
For any context $(\mc{M},U)$ in $\mr{St}_{n}$
and any open set $Z$ of $\mc{M}$, 
there exists a variety
of observables in $\mc{A}(Z)$ obtained by using the metric tensor 
of $\mc{M}$ and the vector field $U$,
indeed $\lr{B}{U}\up Z\in\mc{A}(Z)_{ob}$
for any smooth vector field $B$ on $\mc{M}$ 
in particular any component of a frame field on $\mc{M}$.
Thus let us set
\begin{definition}
\label{04050854}
Let $n\in\Z_{0}^{+}$, 
$O=(\mc{M},U)\in\mr{St}_{n}$, where $\mc{M}=(M,g_{\mc{M}})$,
$\mc{F}=\{E_{r}\}_{r=1}^{n}$ be a frame field of $\mc{M}$, 
$Y,Z\in Op(M)$,
$t\in\R$ satisfying $\uptheta^{U}(t)Z\subseteq Y$,
$\chi\in\mf{P}_{F_{[M,U]}(Z)}$ 
thus for any $k\in\{1,\dots,n\}$ we define
\begin{equation*}
u_{k,\chi,\mc{F}}^{O,Y,Z}(t)
\coloneqq
\chi(\lr{E_{k}}{U}_{\mc{M}}\circ\uptheta^{U}(t)\up Z).
\end{equation*}
If in addition the topology $\upxi_{[M,U]}$ is stronger than 
or equal to the pointwise topology on $\mc{A}(M)$,
then for all $p\in Z$ set 
$u_{k,p,\mc{F}}^{O,Y,Z}\coloneqq u_{k,\delta_{p}^{Z},\mc{F}}^{O,Y,Z}$,
where $\updelta_{p}^{Z}$ is the Dirac distribution on $\mc{A}(Z)$ centered in $p$.
\end{definition}
\begin{remark}
\label{04050941}
If the topology $\upxi_{[M,U]}$ is stronger than the 
pointwise topology on $\mc{A}(M)$, then the topological dual 
of $\mc{A}(Z)$ w.r.t. the pointwise topology is a subset 
of the topological dual of $F_{[M,U]}(Z)$, 
hence $u_{k,p,\mc{F}}^{O,Y,Z}$ is well defined.
Moreover since for all $p\in Z$ and $t\in\R$ 
we have $\uptheta^{U}(t)p=\alpha_{p}^{U}(t)$
and $U(\alpha_{p}^{U}(t))=(\alpha_{p}^{U})^{\prime}(t)$,
we obtain if $\uptheta^{U}(t)Z\subseteq Y$
\begin{equation*}
u_{k,p,\mc{F}}^{O,Y,Z}(t)
=
g_{\mc{M}}(\alpha_{p}^{U}(t))
\left(E_{k}(\alpha_{p}^{U}(t)),(\alpha_{p}^{U})^{\prime}(t)\right),
\end{equation*}
where we recall that 
$\alpha_{p}^{U}$ is 
the maximal integral curve of $U$ 
such that $\alpha_{p}^{U}(0)=p$, 
while
$(\alpha_{p}^{U})^{\prime}(t)$
is the velocity vector of $\alpha_{p}^{U}$ at $t$.
\end{remark}
\begin{convention}
Let $n\in\Z_{0}^{+}$, $\mr{b}\in sQG_{n}$, 
$O=(\mc{M},U)\in\mr{St}_{n}$, where $\mc{M}=(M,g_{\mc{M}})$,
$Y,Z\in Op(M)$ and $t\in \mr{mor}_{[M,U]}(Y,Z)$.
Then let $\uptau_{\mr{b}(O)}(t)$ denote 
$\uptau_{\mr{b}(O)}(((Y,Z),t))$ 
whenever it will not cause confusion. 
\end{convention}
Next we will apply Cor. \ref{04081852} when $\mr{a}$ is 
the classical $n-$dimensional gravity 
and $\mr{b}$ is a $n-$dimensional strict quantum gravity,
but let us start with the following
\begin{definition}
\label{05101016}
Let $n\in\Z_{0}^{+}$, $\mr{b}\in sQG_{n}$, 
$O=(\mc{M},U)\in\mr{St}_{n}$, where $\mc{M}=(M,g_{\mc{M}})$,
$\mc{F}=\{E_{r}\}_{r=1}^{n}$ be a frame field of $\mc{M}$ 
and $Y,Z\in Op(M)$.
We define $\{U_{r}\}_{r=1}^{n}$ to be
a quantum realization from $Y$ to $Z$ through $L$
of the velocity of the maximal integral curves of $U$ on $\mc{M}$
and relative to $\mc{F}$,
if $\{U_{r}\}_{r=1}^{n}\subset\mc{A}_{\mr{b}(O)}(Y)_{ob}$
and if there exist $Y^{\prime},Z^{\prime}\in Op(M)$ 
with the following properties.
$L: \mr{mor}_{[M,U]}(Y,Z)\to \mr{mor}_{[M,U]}(Y^{\prime},Z^{\prime})$
and for any $\uppsi\in\mf{P}_{\mc{A}_{\mr{b}(O)}(Z)}$ 
there exists $\chi\in\mf{P}_{F_{[M,U]}(Z^{\prime})}$ 
such that for all $t\in \mr{mor}_{[M,U]}(Y,Z)$ and all $k\in\{1,\dots,n\}$ 
we have
\begin{equation*}
\uppsi(\uptau_{\mr{b}(O)}(t)U_{k})
=
u_{k,\chi,\mc{F}}^{O,Y^{\prime},Z^{\prime}}(L(t)).
\end{equation*}
\end{definition}
We call the map $t\mapsto\uppsi(\uptau_{\mr{b}(O)}(t)U_{k})$ 
on $\mr{mor}_{[M,U]}(Y,Z)$ the $k-$evaluation of 
$\{U_{r}\}_{r=1}^{n}$ in $\uppsi$.
Now we can state the following
\begin{corollary}
[Quantum realization of the velocity of maximal integral curves of 
complete vector fields]
\label{11091638}
Let $n\in\Z_{0}^{+}$, $\mr{b}\in sQG_{n}$, 
$\mr{T}\in\mf{Cnt}(\mr{a}^{n},\mr{b})$
and 
$O=(\mc{M},U)\in\mr{St}_{n}$, where $\mc{M}=(M,g_{\mc{M}})$.
Thus for any $Y,Z\in Op(M)$ and any frame field 
$\mc{F}=\{E_{r}\}_{r=1}^{n}$  of $\mc{M}$
there exists a quantum realization  $\{U_{r}\}_{r=1}^{n}$ 
from $Y$ to $Z$ through $\mr{T}_{1}^{m}(O)$
of the velocity of the maximal integral curves of $U$ on $\mc{M}$
and relative to $\mc{F}$.
Namely there exist $Y^{\prime},Z^{\prime}\in Op(M)$ with the following properties.
For any $\uppsi\in\mf{P}_{\mc{A}_{\mr{b}(O)}(Z)}$ 
there exists $\chi\in\mf{P}_{F_{[M,U]}(Z^{\prime})}$ 
such that for all $t\in\R$ satisfying $\uptheta^{U}(t)Z\subseteq Y$ 
and all $k\in\{1,\dots,n\}$ we obtain
\begin{equation}
\label{11091638a}
\begin{aligned}
\uppsi(\uptau_{\mr{b}(O)}(t)U_{k})
&=
u_{k,\chi,\mc{F}}^{O,Y^{\prime},Z^{\prime}}(\ov{t}),
\\
\ov{t}
&=
\mr{T}_{1}^{m}(O)(t).
\end{aligned}
\end{equation}
Moreover we can select 
$Y^{\prime}=\mr{T}_{1}^{o}(O)Y$, $Z^{\prime}=\mr{T}_{1}^{o}(O)Z$,
$\chi=\mr{T}_{3}^{\dagger}(O)(Z)\uppsi$,
and $U_{k}=\mr{T}_{3}(O)(Y)A$ with $A=\lr{E_{k}}{U}_{\mc{M}}\up Y^{\prime}$
for any $k\in\{1,\dots,n\}$.
If in addition the topology $\upxi_{[M,U]}$ is stronger than 
or equal to the pointwise topology on $\mc{A}(M)$ 
and if there exists $p\in Z^{\prime}$ such that 
$\chi=\updelta_{p}^{Z^{\prime}}$,
then 
\begin{equation}
\label{11091638b}
\uppsi(\uptau_{\mr{b}(O)}(t)U_{k})
=
g_{\mc{M}}(\alpha_{p}^{U}(\ov{t}))
\left(
E_{k}(\alpha_{p}^{U}(\ov{t})),
(\alpha_{p}^{U})^{\prime}(\ov{t})
\right).
\end{equation}
\end{corollary}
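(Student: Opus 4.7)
The plan is to derive the corollary as a direct specialization of the equiformity principle Prp. \ref{01162038} (equivalently Cor. \ref{04081852}) applied to the connector $\ms{T}$, with the trivial context morphism $\phi=\un_{O}$ and $A$ chosen to be the classical ``metric $\times$ velocity'' observable $\lr{E_{k}}{U}_{\mc{M}}\up Y^{\prime}$. The hypothesis $\ms{b}\in sQG_{n}$ gives the identification $\mr{G}_{O}^{\ms{b}}=\mr{G}_{O}^{\ms{a}^{n}}=[M,U]$, so that $Y,Z$ and the morphism $((Y,Z),t)$ can be read simultaneously as data in the dynamical category of either species, and $\ms{T}_{1}(O)$ becomes a $\ms{top}$-functor from $[M,U]$ to $[M,U]$; in particular $\ms{T}_{1}^{m}(O)(((Y,Z),t))=((Y^{\prime},Z^{\prime}),\ov{t})$ with $Y^{\prime}=\ms{T}_{1}^{o}(O)Y$, $Z^{\prime}=\ms{T}_{1}^{o}(O)Z$.

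Next I will set up the data and apply Prp. \ref{01162038} with $M=N=O$, $\phi=\un_{O}$: since $\ms{a}^{n}$ and $\ms{b}$ are functors, $\ms{a}^{n}(\un_{O})$ and $\ms{b}(\un_{O})$ are identity morphisms in $\mf{Chdv}$, so $\ms{a}^{n}_{3}(\un_{O})$, $\ms{b}_{3}^{\dagger}(\un_{O})$, $\ms{b}_{1}^{o}(\un_{O})$, $\ms{b}_{1}^{m}(\un_{O})$ act as identities, and the principle collapses to
\begin{equation*}
\mf{f}^{\ms{a}^{n},O,Y^{\prime},Z^{\prime}}_{(\ms{T}_{3}^{\dagger}(O)(Z)\uppsi,\,A)}(((Y^{\prime},Z^{\prime}),\ov{t}))
=
\mf{f}^{\ms{b},O,Y,Z}_{(\uppsi,\,\ms{T}_{3}(O)(Y)A)}(((Y,Z),t)).
\end{equation*}
The right-hand side is exactly $\uppsi(\uptau_{\ms{b}(O)}(t)U_{k})$ by Def. \ref{01162119}, where $U_{k}=\ms{T}_{3}(O)(Y)A\in\mc{A}_{\ms{b}(O)}(Y)_{ob}$ because $\ms{T}_{3}(O)(Y)$ is a unital $\ast$-morphism and $A$ is self-adjoint. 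For the left-hand side, Def. \ref{10041441} unwinds $\ms{a}^{n}=\ps{\Uppsi}\circ\mf{a}\circ\imath_{n}$, so that $\uptau_{\ms{a}^{n}(O)}=F_{[M,U]}$ on morphisms by Thm. \ref{09201707} and Prp. \ref{09201647}; hence
\begin{equation*}
\uptau_{\ms{a}^{n}(O)}(((Y^{\prime},Z^{\prime}),\ov{t}))A
=
\lr{E_{k}}{U}_{\mc{M}}\circ\uptheta^{U}(\ov{t})\up Z^{\prime},
\end{equation*}
and applying $\chi=\ms{T}_{3}^{\dagger}(O)(Z)\uppsi\in\mf{P}_{F_{[M,U]}(Z^{\prime})}$ yields $u_{k,\chi,\mc{F}}^{O,Y^{\prime},Z^{\prime}}(\ov{t})$ by Def. \ref{04050854}. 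This gives \eqref{11091638a}.

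Finally, for \eqref{11091638b} I will invoke Rmk. \ref{04050941}: the hypothesis that $\upxi_{[M,U]}$ dominates the pointwise topology guarantees $\updelta_{p}^{Z^{\prime}}\in\mf{P}_{F_{[M,U]}(Z^{\prime})}$, and the identity $U(\alpha_{p}^{U}(s))=(\alpha_{p}^{U})^{\prime}(s)$ turns $u_{k,p,\mc{F}}^{O,Y^{\prime},Z^{\prime}}(\ov{t})$ into $g_{\mc{M}}(\alpha_{p}^{U}(\ov{t}))\bigl(E_{k}(\alpha_{p}^{U}(\ov{t})),(\alpha_{p}^{U})^{\prime}(\ov{t})\bigr)$, which is the claimed formula. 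The argument is essentially bookkeeping; the only genuinely delicate point is the correct identification of $\uptau_{\ms{a}^{n}(O)}$ with the flow-pullback $F_{[M,U]}$ through the chain of functors $\ps{\Uppsi}\circ\mf{a}\circ\imath_{n}$, and the verification that each quantity lives in the appropriate algebra or dual so that the equiformity equation of Prp. \ref{01162038} is literally applicable with $\phi=\un_{O}$.
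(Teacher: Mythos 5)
Your proof is correct and follows essentially the same route as the paper: the paper simply cites Cor. \ref{04081852} (realizations induced by connectors), which is exactly the specialization of the equiformity principle Prp. \ref{01162038} to $M=N=O$ and $\phi=\un_{O}$ that you carry out explicitly, with the same selections $Y^{\prime}=\ms{T}_{1}^{o}(O)Y$, $Z^{\prime}=\ms{T}_{1}^{o}(O)Z$, $\chi=\ms{T}_{3}^{\dagger}(O)(Z)\uppsi$, $U_{k}=\ms{T}_{3}(O)(Y)A$, and the same appeal to Rmk. \ref{04050941} for \eqref{11091638b}. Your explicit verification that $\uptau_{\ms{a}^{n}(O)}$ coincides with $F_{[M,U]}$ via the factorization $\ps{\Uppsi}\circ\mf{a}\circ\imath_{n}$ is a detail the paper leaves implicit, but it is the same argument.
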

\begin{proof}
According to Cor. \ref{04081852} 
$\mr{b}$ is a realization of $\mr{a}^{n}$
such that we can select 
$\mf{E}^{\mr{b}}=(\mf{P}^{\mr{b}},\mr{G}^{\mr{b}})$,
$\mf{E}^{\mr{a}^{n}}=(\mf{P}^{\mr{a}^{n}},\mr{G}^{\mr{a}^{n}})$,
$Y^{\prime},Z^{\prime},\chi$ as in the statement
and $A^{\prime}=\mr{T}_{3}(O)(Y)A$.
Thus \eqref{11091638a} follows since Def. \ref{04081745} applied to
the object $O$ and to $A=\lr{E_{k}}{U}_{\mc{M}}\up Y^{\prime}\in F_{[M,U]}(Y^{\prime})_{ob}$.
\eqref{11091638b} follows since \eqref{11091638a} and Rmk. \ref{04050941}.
\end{proof}
Let $M(x,f)=I\times_{f}S$ be the Robertson-Walker spacetime 
with sign $x$ and scale function $f$ see \cite[Def. 12.7]{28one}.
Here concerning $M(x,f)$ we follow the general notation in \cite[p. 204]{28one} in particular
$\pi$ (galactic time) and $\sigma$ are the projections defined on $M(x,f)$ 
onto $I$ and $S$ respectively.
Let $H:I\to\R$ be the Hubble parameter relative to $M(x,f)$
defined $H\coloneqq f^{\prime}/f$, where
$f^{\prime}$ is the derivative of the map $f$, 
see for instance \cite[eq (1.10)]{28mkh}, see also \cite[p.347]{28one}.
Let $H^{\prime}$ denote the derivative of the map $H$.
\par
Now let $\mr{T}$ be a connector from $\mr{a}^{4}$ to a $4-$dimensional strict quantum gravity species
and $\alpha$ be a geodesic in $M(x,f)$ such that it is complete 
one of the existing vector fields $V$ of $M(x,f)$ for which a possibly restriction of
$\alpha$ is an integral curve. Moreover let $\{E_{r}\}_{r=1}^{3}$ be a frame field on $S$,
$t=\pi\circ\alpha$ be the galactic time of $\alpha$
and $t_{0}$ be a suitable restriction of $t\circ\mr{T}_{1}^{m}(O)$ where $O=(M(x,f),V)$.
In Thm. \ref{04151343} (see also \eqref{04221653} in a more compact form)
we establish that $H\circ t_{0}$ is the quotient of the classical factor $c_{k}$ and 
the quantum factor $q_{k}$ equal to the $k-$evaluation of 
a quantum realization of the velocity of $\alpha$ 
relative to any frame field on $M(x,f)$ extending the lift of $\{E_{r}\}_{r=1}^{3}$ to $M(x,f)$. 
\par
Since \cite[Prp. 12.22(2)]{28one} 
$H\circ t$ enters in the expression 
of the velocity $\beta^{\prime}$ of the projection $\beta=\sigma\circ\alpha$. 
Thus the strategy is to extract the component 
$\beta^{\prime}$ from the velocity $\alpha^{\prime}$
by using the lift to $M(x,f)$ of $\{E_{r}\}_{r=1}^{3}$ 
and then to apply Cor. \ref{11091638} to any frame field on $M(x,f)$ extending this lift 
and to the complete vector field $V$.
\begin{theorem}
[Quantum and classical factors of the Hubble parameter]
\label{04151343}
Let $\mr{b}\in sQG_{4}$, $\mr{T}\in\mf{Cnt}(\mr{a}^{4},\mr{b})$ and 
$\alpha$ be a geodesic on $M(x,f)$.
Assume that $0\in dom(\alpha)$ eventually by a reparametrization, 
that $\alpha$ is regular in $0$, that it is complete one of the vector fields $V$ on $M(x,f)$ 
for which the restriction of $\alpha$ to an open neighbourhood $K_{0}$ of $0$
is an integral curve of $V$. 
Thus for any $Y,Z\in Op(M(x,f))$
and any frame field $\{E_{r}\}_{r=1}^{3}$ of $S$, 
said $O=(M(x,f),V)$ there exists $\{V_{r}\}_{r=1}^{3}\subset\mc{A}_{\mr{b}(O)}(Y)_{ob}$
with the following properties.
For any $\uppsi\in\mf{P}_{\mc{A}_{\mr{b}(O)}(Z)}$, 
for all $s\in\R$ satisfying $\uptheta^{V}(s)Z\subseteq Y$
and for all $k\in\{1,2,3\}$ we obtain
\begin{equation}
\label{04151128a}
\begin{aligned}
\uppsi(\uptau_{\mr{b}(O)}(s)V_{k})
&=
\chi(\lr{\ov{E}_{k}}{V}_{M(x,f)}
\circ
\uptheta^{V}(\ov{s})\up Z^{\prime}),
\\
\ov{s}
&\coloneqq
\mr{T}_{1}^{m}(O)(s),
\end{aligned}
\end{equation}
where $\chi=\mr{T}_{3}^{\dagger}(O)(Z)\uppsi$,
$\ov{E}_{k}$ is the lift of $E_{k}$ to $M(x,f)$
and $Z^{\prime}=\mr{T}_{1}^{o}(O)Z$.
Moreover said $Y^{\prime}=\mr{T}_{1}^{o}(O)Y$ we can select 
$V_{k}=\mr{T}_{3}(O)(Y)(\lr{\ov{E}_{k}}{V}_{M(x,f)}\up Y^{\prime})$.
Set
\begin{equation*}
K^{Y,Z}
\coloneqq\{\lambda\in\R\,\vert\,
\uptheta^{V}(\lambda)Z\subseteq Y,\,\mr{T}_{1}^{m}(O)(\lambda)\in K_{0}\}.
\end{equation*}
If the topology $\upxi_{[M(x,f),V]}$ is stronger than 
or equal to the pointwise topology on $\mc{A}(M(x,f))$,
$\alpha(0)\in Z^{\prime}$ and $\chi=\updelta_{\alpha(0)}^{Z^{\prime}}$
then 
for all 
$s\in K^{Y,Z}$
we obtain
\begin{equation}
\label{04151128b}
H(t(\ov{s}))
=
\frac{-f^{2}(t(\ov{s}))\,g_{S}(\beta(\ov{s}))
\left(E_{k}(\beta(\ov{s})),\beta^{\prime\prime}(\ov{s})\right)}
{2\uppsi(\uptau_{\mr{b}(O)}(s)V_{k})}
\left(\frac{dt}{ds}(\ov{s})\right)^{-1},
\end{equation}
where $t=\pi\circ\alpha$ and $\beta=\sigma\circ\alpha$.
\end{theorem}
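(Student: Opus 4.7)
The plan is to derive \eqref{04151128a} by a direct application of Cor. \ref{11091638}, and then to obtain \eqref{04151128b} by specializing to $\chi = \updelta_{\alpha(0)}^{Z'}$ and decomposing the velocity $\alpha'$ through the warped-product structure $M(x,f) = I \times_f S$, closing the calculation with the spatial geodesic equation of \cite[Prp. 12.22(2)]{one}. First I would extend $\{E_r\}_{r=1}^{3}$ to a frame field $\mc{F} = \{E_\mu\}_{\mu=0}^{3}$ on $M(x,f)$ by adjoining the time direction $\partial_\pi$. Applying Cor. \ref{11091638} to the context $O = (M(x,f), V)$ with this frame field produces a quantum realization $\{U_\mu\}_{\mu=0}^{3}$ of the velocity of the maximal integral curves of $V$; restricting to the spatial indices $\mu \in \{1,2,3\}$ and setting $V_r \coloneqq U_r = \ms{T}_3(O)(Y)\bigl(\lr{\ov{E}_r}{V}_{M(x,f)}\up Y'\bigr)$, equation \eqref{04151128a} follows directly from Def. \ref{04050854}, since $E_r = \ov{E}_r$ in the spatial directions.

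For the second part, under the topology condition on $\upxi_{[M(x,f),V]}$, the Dirac distribution $\updelta_{\alpha(0)}^{Z'}$ is a well-defined element of $\mf{P}_{F_{[M(x,f),V]}(Z')}$, and so the specialization clause of Cor. \ref{11091638} (invoking Rmk. \ref{04050941}) with $\chi = \updelta_{\alpha(0)}^{Z'}$ yields
\begin{equation*}
\uppsi(\uptau_{\ms{b}(O)}(s)\, V_k)
=
g_{M(x,f)}\bigl(\alpha_{\alpha(0)}^{V}(\ov{s})\bigr)\bigl(\ov{E}_k(\alpha_{\alpha(0)}^{V}(\ov{s})),(\alpha_{\alpha(0)}^{V})'(\ov{s})\bigr).
\end{equation*}
Writing $p \coloneqq \alpha(0)$, the hypothesis that $\alpha\up K_0$ is an integral curve of $V$ starting at $p$, combined with uniqueness of inextendible integral curves, forces $\alpha_{p}^{V}\up K_0 = \alpha\up K_0$. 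Since $s \in K^{Y,Z}$ guarantees simultaneously $\uptheta^{V}(\ov{s})Z \subseteq Y$ and $\ov{s} \in K_0$, the preceding identity reduces to $\uppsi(\uptau_{\ms{b}(O)}(s)\, V_k) = \lr{\ov{E}_k(\alpha(\ov{s}))}{\alpha'(\ov{s})}_{M(x,f)}$.

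Now I would exploit the warped-product decomposition. The velocity $\alpha'(\ov{s})$ splits into its $\pi$-component $(dt/ds)(\ov{s})\,\partial_\pi|_{\alpha(\ov{s})}$ and its spatial lift, whose $\sigma$-projection is $\beta'(\ov{s})$; since $\ov{E}_k$ is a spatial lift, hence orthogonal to $\partial_\pi$, and since the warped inner product restricted to spatial lifts equals $f^{2}\cdot g_S$, this produces
\begin{equation*}
\lr{\ov{E}_k(\alpha(\ov{s}))}{\alpha'(\ov{s})}_{M(x,f)}
=
f^{2}(t(\ov{s}))\, g_S(\beta(\ov{s}))\bigl(E_k(\beta(\ov{s})),\beta'(\ov{s})\bigr).
\end{equation*}
Finally \cite[Prp. 12.22(2)]{one} supplies the tangential geodesic equation $\beta''(\ov{s}) = -2\,H(t(\ov{s}))\,(dt/ds)(\ov{s})\,\beta'(\ov{s})$; contracting it with $E_k$ via $g_S$ and solving for $H(t(\ov{s}))$, then substituting the preceding identity for $\lr{\ov{E}_k(\alpha(\ov{s}))}{\alpha'(\ov{s})}_{M(x,f)}$ into the denominator, delivers \eqref{04151128b}.

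The main obstacle is the bookkeeping of the parameters: $s$ belongs to $mor_{[M(x,f),V]}(Y,Z)$ (the $\ms{b}$-side parameter), its image $\ov{s} = \ms{T}_1^m(O)(s)$ lies in $mor_{[M(x,f),V]}(Y',Z')$ (the $\ms{a}^{4}$-side parameter), and $dt/ds$ must be interpreted as the derivative of $\pi \circ \alpha$ with respect to its argument evaluated at $\ov{s}$; verifying that the regularity of $\alpha$ at $0$, the localization within $K_0$, and the non-vanishing of $\beta'$ required for the geodesic equation to yield a nontrivial expression for $H$ are mutually consistent on the relevant range of $s$ is the point where errors in signs, normalizations, or reparametrizations are most likely to creep in. A secondary technical point is the well-definedness of $\updelta_{\alpha(0)}^{Z'}$ as a continuous positive linear functional, which is precisely what the topology hypothesis on $\upxi_{[M(x,f),V]}$ secures in accordance with Rmk. \ref{04050941}.
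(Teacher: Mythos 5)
Your proof is correct and follows essentially the same route as the paper's: apply Cor. \ref{11091638} to the context $O=(M(x,f),V)$ and a frame field extending the lifts $\ov{E}_{k}$ to obtain \eqref{04151128a}, identify $\alpha_{\alpha(0)}^{V}$ with $\alpha$ on $K_{0}$ by uniqueness of integral curves, reduce $\lr{\ov{E}_{k}}{\alpha^{\prime}}_{M(x,f)}$ to $f^{2}(t)\,g_{S}(E_{k},\beta^{\prime})$ via the warped-product metric, and close with \cite[Prp. 12.22(2)]{one}. The only immaterial difference is the order of steps: the paper first derives the classical warped-product identity (its eq. \eqref{04151143a}) and then invokes the corollary, whereas you invoke the corollary first and then decompose the velocity.
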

\begin{proof}
Let $\alpha$ be a geodesic on $M(x,f)$ such that $0\in dom(\alpha)$ eventually by a reparametrization, 
and let $t=\pi\circ\alpha$ and $\beta=\sigma\circ\alpha$.
Assume $\alpha$ be regular in $0$ then by an application of \cite[Ex. 3.12]{28one} 
we can find local vector fields $N$ of $I$ and $J$ of $S$, 
and an open neighbourhood $K_{0}$ of $0$ in $dom(\alpha)$ such that 
$N$ is defined on an open neighbourhood of $t(0)$, 
$J$ is defined on an open neighbourhood $\beta(0)$ 
and $t\up K_{0}$ and $\beta\up K_{0}$ are integral curves of $N$ and $J$ respectively.
Now since \cite[Ex. 1.18]{28one} we deduce that $N$ and $J$ admit extensions to vector fields 
on $I$ and $S$ respectively, let us denote such extensions with the same symbols,
thus we can take the lifts $\ov{N}$ and $\tilde{J}$ on $M(x,f)$ of $N$ and $J$ respectively.
It is easy to see that $\alpha\up K_{0}$ 
is an integral curve of $V=\ov{N}+\tilde{J}$, moreover for any $r\in K_{0}$
we have that $\alpha^{\prime}(r)=\tilde{t}^{\prime}(r)+\tilde{\beta}^{\prime}(r)$,
where $\tilde{t}^{\prime}(r)$ and $\tilde{\beta}^{\prime}(r)$ 
are the lifts on $M(x,f)$ of $t^{\prime}(r)$ and $\beta^{\prime}(r)$ respectively.
We recall that $g_{\mc{N}}$ and $\lr{\cdot}{\cdot}_{\mc{N}}$ denote the metric tensor of 
any semi-Riemannian manifold $\mc{N}$,
thus by the above equality, for any frame field $\{E_{k}\}_{k=1}^{3}$ of $S$
and by letting $\ov{E}_{k}$ be the lift of $E_{k}$ on $M(x,f)$
we obtain for all $r\in K_{0}$
\begin{equation}
\label{04151143a}
g_{M(x,f)}(\alpha(r))
\left(\ov{E}_{k}(\alpha(r)),\alpha^{\prime}(r)\right)
=
f^{2}(t(r))g_{S}(\beta(r))
\left(E_{k}(\beta(r)),\beta^{\prime}(r)\right).
\end{equation}
Now if $V$ is complete, then we can apply 
Cor. \ref{11091638} to $O=(M(x,f),V)$
and to any frame field on $M(x,f)$ extending $\{\ov{E}_{k}\}_{k=1}^{3}$,
thus \eqref{04151128a} follows since \eqref{11091638a}.
Next $\alpha\up K_{0}$ is an integral curve of $V$, 
thus since the uniqueness of the integral curves see for instance \cite[Cor. 1.50]{28one}, 
we obtain by \eqref{04151143a} that for all $r\in K_{0}$ and $k\in\{1,2,3\}$ 
\begin{equation}
\label{04151143}
(\lr{\ov{E}_{k}}{V}_{M(x,f)}\circ\uptheta^{V}(r))
(\alpha(0))
=
f^{2}(t(r))g_{S}(\beta(r))
\left(E_{k}(\beta(r)),\beta^{\prime}(r)\right).
\end{equation}
Now we can use \cite[Prp. 12.22(2)]{28one} to express $\beta^{\prime}$ 
in terms of the Hubble parameter to obtain by \eqref{04151143} 
\begin{equation}
\label{04151111}
(\lr{\ov{E}_{k}}{V}_{M(x,f)}\circ\uptheta^{V}(r))
(\alpha(0))
=
\frac{-f^{2}(t(r))g_{S}(\beta(r))
\left(E_{k}(\beta(r)),\beta^{\prime\prime}(r)\right)}
{2H(t(r))}
\left(\frac{dt}{ds}(r)\right)^{-1}.
\end{equation}
\eqref{04151128b} follows since \eqref{04151111} and \eqref{04151128a}. 
\end{proof}
\begin{definition}
\label{04171613}
Assume the hypothesis and notation of Thm. \ref{04151343}. 
For all $k\in\{1,2,3\}$ define 
\begin{equation*}
\begin{aligned}
\mf{q}_{k}:K^{Y,Z}\to\R
\quad 
s&\mapsto\uppsi(\uptau_{\mr{b}(O)}(s)V_{k}),
\\
e_{k}:K_{0}\to\R
\quad
r&\mapsto
-\frac{1}{2}
f^{2}(t(r))\,g_{S}(\beta(r))
\left(E_{k}(\beta(r)),\beta^{\prime\prime}(r)\right)
\left(\frac{dt}{ds}(r)\right)^{-1};
\end{aligned}
\end{equation*}
\begin{equation*}
\begin{aligned}
\mf{c}_{k}&\coloneqq e_{k}\circ\mr{T}_{1}^{m}(O)\up K^{Y,Z},
\\
\mf{t}_{0}&\coloneqq t\circ\mr{T}_{1}^{m}(O)\up K^{Y,Z},
\\
q_{k}&\coloneqq\mf{q}_{k}\up\mathring{K}^{Y,Z},
\\
c_{k}&\coloneqq\mf{c}_{k}\up\mathring{K}^{Y,Z},
\\
t_{0}&\coloneqq\mf{t}_{0}\up\mathring{K}^{Y,Z};
\end{aligned}
\end{equation*}
with $\mathring{K}^{Y,Z}$ the interior of $K^{Y,Z}$. 
Call $\mf{q}_{k}$ and $\mf{c}_{k}$ as well their restrictions to $\mathring{K}^{Y,Z}$
the quantum and classical factors of the Hubble parameter relative to 
$\mr{T}$, $\{E_{r}\}_{r=1}^{3}$, $Y,Z$ and $k$ respectively.
Let $q_{k}^{\prime}$, $c_{k}^{\prime}$ and $dt_{0}/ds$
be the derivatives of the maps 
$q_{k}$, $c_{k}$ and $t_{0}$ respectively.
Define 
\begin{equation*}
c_{k}^{\pm}
\coloneqq
\frac{1}{2}
\left(\frac{dt_{0}}{ds}\right)^{-1}
\left(
q_{k}^{\prime}
\pm
\left(
(q_{k}^{\prime})^{2}-4c_{k}^{\prime}q_{k}\frac{dt_{0}}{ds}
\right)^{1/2}
\right).
\end{equation*}
and set for all $s\in\mathring{K}^{Y,Z}$ 
\begin{equation*}
\mf{I}(\mr{T},O,k,s)
\coloneqq
]-\infty,c_{k}^{-}(s)[\cup]c_{k}^{+}(s)+\infty[.
\end{equation*}
\end{definition}
The above designations emerge since 
$\mf{q}_{k}$ enroles only the quantum system $\mr{b}(O)$, 
$\mf{c}_{k}$ engages only the classical system $\mr{a}^{4}(O)$
and since according to \eqref{04151128b} we have 
for any $k\in\{1,2,3\}$ 
\begin{equation}
\label{04221653}
H\circ\mf{t}_{0}=\frac{\mf{c}_{k}}{\mf{q}_{k}}.
\end{equation}
By using the above equality, in the next result we express
$(f^{\prime\prime}/f)\circ t_{0}$ 
and the conditions for the positivity
of $f^{\prime\prime}\circ t_{0}$ 
as functions of $q_{k}$, $c_{k}$ and their derivatives.
\begin{corollary}
[Acceleration in terms of the
quantum and classical factors of the Hubble parameter]
\label{04171614}
Assume the hypothesis and notation of Thm. \ref{04151343}. 
Thus for all $k\in\{1,2,3\}$ we have 
\begin{equation}
\label{04181346}
\frac{f^{\prime\prime}}{f}\circ t_{0}
=
q_{k}^{-2}
\left(
c_{k}^{2}
-
(c_{k}q_{k}^{\prime}-c_{k}^{\prime}q_{k})
\left(\frac{dt_{0}}{ds}\right)^{-1}
\right).
\end{equation}
In particular for any $s\in\mathring{K}^{Y,Z}$ 
if $c_{k}^{\pm}(s)\in\C-\R$ then $f^{\prime\prime}(t_{0}(s))>0$,
while if $c_{k}^{\pm}(s)\in\R$, then
\begin{equation}
\label{04171746b}
f^{\prime\prime}(t_{0}(s))>0
\Leftrightarrow
c_{k}(s)\in
\mf{I}(\mr{T},O,k,s).
\end{equation}
\end{corollary}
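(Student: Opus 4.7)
The plan is to reduce \eqref{04181346} to the Riccati-type identity $f^{\prime\prime}/f = H^{\prime} + H^{2}$, obtained by differentiating $H = f^{\prime}/f$ and rearranging, and then to express both $H \circ t_{0}$ and $H^{\prime} \circ t_{0}$ as rational functions in $c_{k}$, $q_{k}$ and their derivatives. Restricting \eqref{04221653} to the interior $\mathring{K}^{Y,Z}$ gives $H \circ t_{0} = c_{k}/q_{k}$, and squaring yields the summand $c_{k}^{2}/q_{k}^{2}$. Differentiating the same identity in $s$ and applying the chain rule produces $(H^{\prime} \circ t_{0})(dt_{0}/ds) = (c_{k}^{\prime} q_{k} - c_{k} q_{k}^{\prime})/q_{k}^{2}$; since $\alpha$ is timelike future-pointing and $t = \pi \circ \alpha$ is the galactic time, $dt/ds > 0$ along $\alpha$, and this sign propagates to $dt_{0}/ds$ on $\mathring{K}^{Y,Z}$, so the division is legitimate. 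Summing $(H^{\prime} \circ t_{0}) + (H \circ t_{0})^{2}$ and collecting $q_{k}^{-2}$ yields \eqref{04181346}.

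For the positivity statements I work with the bracket $B(s) \coloneqq c_{k}(s)^{2} - (c_{k}(s) q_{k}^{\prime}(s) - c_{k}^{\prime}(s) q_{k}(s))(dt_{0}/ds)(s)^{-1}$. Since $f > 0$ in a Robertson-Walker spacetime and $q_{k}^{-2} > 0$, the identity \eqref{04181346} shows that $f^{\prime\prime}(t_{0}(s)) > 0$ is equivalent to $B(s) > 0$. Setting $D \coloneqq (dt_{0}/ds)(s) > 0$ and multiplying $B(s)$ by $D$, the condition $B(s) > 0$ becomes $P_{s}(c_{k}(s)) > 0$ for the quadratic $P_{s}(x) \coloneqq D\, x^{2} - q_{k}^{\prime}(s)\, x + c_{k}^{\prime}(s) q_{k}(s)$. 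The quadratic formula applied to $P_{s}$ identifies $c_{k}^{\pm}(s)$ from Def.~\ref{04171613} as precisely the roots of $P_{s}$.

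If the discriminant $(q_{k}^{\prime}(s))^{2} - 4 c_{k}^{\prime}(s) q_{k}(s) D$ is negative, then $c_{k}^{\pm}(s) \in \C - \R$, the quadratic $P_{s}$ has no real roots, and since its leading coefficient $D$ is positive, $P_{s}$ is strictly positive on all of $\R$; hence $B(s) > 0$ and $f^{\prime\prime}(t_{0}(s)) > 0$. If the discriminant is non-negative, $c_{k}^{\pm}(s) \in \R$ with $c_{k}^{-}(s) \leq c_{k}^{+}(s)$, and $P_{s}(x) = D(x - c_{k}^{-}(s))(x - c_{k}^{+}(s))$ is strictly positive iff $x$ lies outside the closed interval $[c_{k}^{-}(s), c_{k}^{+}(s)]$, i.e.\ iff $x \in \mf{I}(\ms{T}, O, k, s)$; specialising to $x = c_{k}(s)$ yields \eqref{04171746b}.

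The main obstacle I anticipate is the sign bookkeeping in the last step: the description of $\mf{I}$ as $(-\infty, c_{k}^{-}(s)) \cup (c_{k}^{+}(s), +\infty)$ captures the positivity set of $P_{s}/D$ only under the assumption $D > 0$, so the delicate point is to verify that $dt_{0}/ds > 0$ throughout $\mathring{K}^{Y,Z}$. I expect this to follow from the timelike future-pointing character of $\alpha$ together with a monotonicity property of $\ms{T}_{1}^{m}(O)$ inherited from the orientation convention built into $\ms{T}$; without that sign control the two intervals defining $\mf{I}$ would overlap and the stated equivalence would fail.
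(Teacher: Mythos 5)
Your proposal is correct and follows essentially the same route as the paper: identity \eqref{04181346} is obtained exactly as you describe, from $f^{\prime\prime}/f=H^{\prime}+H^{2}$, the chain rule $(H^{\prime}\circ t_{0})=(H\circ t_{0})^{\prime}(dt_{0}/ds)^{-1}$ and $H\circ t_{0}=c_{k}/q_{k}$ (equation \eqref{04221653}); the paper's proof consists of precisely these three observations and leaves the positivity claims entirely implicit, so your identification of $c_{k}^{\pm}(s)$ as the roots of $P_{s}(x)=D\,x^{2}-q_{k}^{\prime}(s)x+c_{k}^{\prime}(s)q_{k}(s)$ with $D=(dt_{0}/ds)(s)$ is the intended completion. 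The obstacle you flag is genuine but is a gap shared with the paper rather than a defect of your argument: the hypotheses of Thm.~\ref{04151343} make $\alpha$ only a geodesic (not necessarily future-pointing) and impose no monotonicity on $\ms{T}_{1}^{m}(O)$, so $D>0$ is an unstated assumption on which the equivalence \eqref{04171746b} depends, since for $D<0$ the set $\mf{I}(\ms{T},O,k,s)=\,]-\infty,c_{k}^{-}(s)[\,\cup\,]c_{k}^{+}(s),+\infty[$ no longer coincides with the positivity locus of $B$. Note, however, that your first positivity claim survives without any sign control: if $c_{k}^{\pm}(s)\in\C-\R$ then $P_{s}$ has the constant sign of its leading coefficient $D$, so $B(s)=P_{s}(c_{k}(s))/D>0$ for either sign of $D$.
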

\begin{proof}
Let $k\in\{1,2,3\}$ thus $f^{\prime\prime}/f=H^{\prime}+H^{2}$ on $I$,
clearly $(H^{\prime}\circ t_{0})=(H\circ t_{0})^{\prime}(\frac{dt_{0}}{ds})^{-1}$.
Thus \eqref{04181346} and \eqref{04171746b} follow since \eqref{04221653}. 
\end{proof}
Let us discuss the results obtained in the present section.
We define the concept of realization as the equiformity principle
applied for identity context morphisms Def. \ref{04081745}.
Then the target species of a connector realizes the dynamics
of the source species as a result of its equiformity principle 
Cor. \ref{04081852}.
We define the collection of gravity species 
and subcollections of quantum gravity species 
modeled by the structure of $\mr{a}^{n}$ Def. \ref{10031749}.
Thus we apply Cor. \ref{04081852} to establish the existence of 
quantum realizations of the velocity of the maximal integral curves of complete vector fields on spacetimes
(Def. \ref{05101016})
provided the existence of a connector from $\mr{a}^{n}$
to an $n-$dimensional strict quantum gravity Cor. \ref{11091638}.
\par
Then we employ this result in the special case of a Robertson-Walker spacetime $M(x,f)$ 
with sign $k$ and scale function $f$.
The outcomes can be so summarized.
Let 
$\mr{T}$ be a connector from $\mr{a}^{4}$ 
to a $4-$dimensional strict quantum gravity species $\mr{b}$ 
and $\alpha$ be a geodesic on $M(x,f)$, then under the hypothesis of  
Thm. \ref{04151343} we establish what follows.
\begin{enumerate}
\item
There exists a quantum realization of the velocity of $\alpha$
Thm. \ref{04151343}\eqref{04151128a}.
\item
$H\circ\mf{t}_{0}=\mf{c}_{k}/\mf{q}_{k}$ Thm. \ref{04151343}\eqref{04151128b}
(see \eqref{04221653}).
Here $H$ is the Hubble parameter, $\mf{t}_{0}$ is the restriction on $K^{Y,Z}$
of the composition of the galactic time of $\alpha$ with $\mr{T}_{1}^{m}(O)$, 
while $\mf{q}_{k}$ and $\mf{c}_{k}$ are the quantum and classical factors of the Hubble parameter
Def. \ref{04171613}.
\item
$(f^{\prime\prime}/f)\circ t_{0}$ is a function of $q_{k}$, $c_{k}$ and their derivatives
Cor. \ref{04171614}\eqref{04181346},
where $q_{k}$, $c_{k}$ and $t_{0}$ are the restrictions of 
$\mf{q}_{k}$, $\mf{c}_{k}$ and $\mf{t}_{0}$ 
to the interior $\mathring{K}^{Y,Z}$ of $K^{Y,Z}$
respectively.
\item
For any $s\in\mathring{K}^{Y,Z}$ we have that 
$f^{\prime\prime}(t_{0}(s))>0$ is equivalent to $c_{k}(s)\in \mf{I}(\mr{T},O,k,s)$
constraining the quantum and classical factors   
of the Hubble parameter
Cor. \ref{04171614}\eqref{04171746b}.
\end{enumerate}
It is well-known that for the Robertson-Walker perfect fluid $(U,\rho,p)$
with energy density $\rho$, pressure $p$ and where $U=\partial_{t}$ see \cite[Thm. 12.11]{28one},
the detected positivity of the acceleration $f^{\prime\prime}$ \cite{28rie2,28per}
is equivalent to the negative pressure $p<-\rho/3$ see for example \cite[Cor. 12.12]{28one},
occurrence commonly ascribed to the dark energy \cite[p. 65]{28mkh}.
\par
However if in Thm. \ref{04151343} we take $\alpha$ an integral curve of $U=\partial_{t}$ 
so we can choose $V=U$, well-done since $U$ is geodesic being $D_{U}U=\ze$ see \cite[p.346]{28one}, 
then we have that for any $s\in\mathring{K}^{Y,Z}$, 
\emph
{$f^{\prime\prime}(t(\ov{s}))>0$ 
namely the positivity of the acceleration when evaluated over 
the galactic time $t(\ov{s})=t_{0}(s)$ of $\alpha(\ov{s})$,
as well the negative pressure $p(\alpha(\ov{s}))<-\rho(\alpha(\ov{s}))/3$,
are explained via the equivalence \eqref{04171746b}
as a consequence of the existence of 
a connector $\mr{T}$,
from $\mr{a}^{4}$ to a strict quantum gravity species, 
satisfying the constraints $c_{k}(s)\in \mf{I}(\mr{T},O,k,s)$,
rather than the existence of dark energy}.
\par
Diagrammatically we can summarize what said as follows.
If we assume the hypothesis of Thm. \ref{04151343}
and let $\mf{De}$ denote the dark energy hypothesis,
then for all $s\in\mathring{K}^{Y,Z}$ and $k\in\{1,2,3\}$ we have 
\begin{equation*}
\xymatrix{
\boxed{
\begin{cases}
\mr{T}&\in\mf{Cnt}(\mr{a}^{4},\mr{b})
\\
c_{k}(s)&\in\mf{I}(\mr{T},O,k,s)
\end{cases}
}
\ar@{=>}[rr]^{\eqref{04171746b}}
&&
\boxed{f^{\prime\prime}(t(\ov{s}))>0}
\\
\boxed{\mf{De}}
\ar@{=>}[rr]
&&
\boxed{p(\alpha(\ov{s}))<-\frac{\rho(\alpha(\ov{s}))}{3}}
\ar@{<=>}[u]
}
\end{equation*}
It is aim of a future work to find $\mr{T}\in\mf{Cnt}(\mr{a}^{4},\mr{b})$
such that $c_{k}(s)\in\mf{I}(\mr{T},O,k,s)$.
We conclude with the following observation.
If any of the following occurrences
\begin{enumerate}
\item
\eqref{04221653}
by replacing $H\circ\mf{t}_{0}$ with the constant map 
equal to the Hubble parameter at the present time 
$73.02\pm 1.79\, \rm{km}\, \rm{s}^{-1}\,\rm{Mpc}^{-1}$
as determined by Riess et al. in \cite{28rie},
\item
$c_{k}(s)\in\mf{I}(\mr{T},O,k,s)$,
for any $s\in\mathring{K}^{Y,Z}$;
\end{enumerate}
would be experimentally confirmed, 
then the empirical existence of $\mr{T}$ as stated in Posit \ref{10031521}\eqref{10031521rps},
in particular that the equiformity principle of $\mr{T}$ is part of its empirical representation,
will be experimentally validated.

\section*{Appendix}
\subsection*{Ozawa semiobservables and their measuring processes}
\emph{Semiobservables and observables} on a Hilbert space $\mf{H}$ are defined in \cite[section 2]{oza1}, while 
\emph{discrete} semiobservables are defined in \cite[p.85]{oza1}. Here we note that an observable in this context is a 
spectral measure in the Banach space $\mf{H}$ in the sense of Dunford-Schwartz. 
We can then associate with any observable $V$ on $\mf{H}$ with value space $(\R,\mf{B}(\R))$ a possibly unbounded 
selfadjoint operator $o^{V}$ in $\mf{H}$ defined as the one whose resolution of the identity equals $V$. 
In particular since the spectrum of any selfadjoint operator equals the support of its resolution of the identity 
\cite[Prp.5.10]{schumb}, we obtain that $V$ is discrete if and only if the spectrum of $o^{V}$ is discrete.
\par
According to \cite[Thm. 5.1]{oza1} or \cite[(5.7)]{oza2} with any \emph{measuring process} $\mf{x}$ \cite[Def.3.1]{oza1}
of a semiobservable $X$ on $\mf{H}$ with value space $(\Omega,\mf{B})$ 
a $X$-compatible CP instrument $\mc{I}_{\mf{x}}$ on $\mf{L}(\mf{H})$ \cite[section 4]{oza1} remains associated and 
given by \cite[(5.2)]{oza1} or equivalently by \cite[(5.6) and (3.11)]{oza1}. 
Measuring processes of any observable $V$ might be referred as measuring processes of the corresponding operator $o^{V}$.
\par
Set $\mc{M}=\lr{\mf{L}(\mf{H})}{\sigma(\mf{L}(\mf{H}),\mf{L}(\mf{H})_{\ast})}$,
$\mc{N}=\lr{\mf{L}(\mf{H}\otimes\mf{K})}{\sigma(\mf{L}(\mf{H}\otimes\mf{K}),\mf{L}(\mf{H}\otimes\mf{K})_{\ast})}$,
define for every $b\in\mf{L}(\mf{K})$ the map $R_{b}^{\otimes}:\mc{M}\to\mc{N}$, $a\mapsto a\otimes b$, and
$\mf{I}_{\mf{x}}\coloneqq\mc{I}_{\mf{x}}^{\dagger}$. 
Thus, if $\mf{x}=\lr{\mf{K},Y}{\sigma,U}$, then since \cite[(5.2)]{oza1} we deduce that 
\begin{equation}
\label{10311837}
(\forall B\in\mf{B})(\mf{I}_{\mf{x}}(B)=(R_{Y(B)}^{\otimes})^{\dagger}\circ\updelta_{\mc{N}}^{\dagger}(U)\circ E_{\sigma}^{\dagger});
\end{equation}
where 
\begin{equation*}
\begin{cases}
E_{\sigma}^{\dagger}:\mf{L}(\mf{H})_{\ast}\to\mf{L}(\mf{H}\otimes\mf{K})_{\ast},
\\
\omega_{\xi}\mapsto\omega_{\xi\otimes\sigma},\,\xi\text{ trace class operator on }\mf{H}.
\end{cases}
\end{equation*}
Thus, for every $\rho$ trace class operator on $\mf{H}$ we have 
\begin{equation}
\label{11191127} 
(\forall a\in\mf{L}(\mf{H}))
(\mf{I}_{\mf{x}}(B)(\omega_{\rho})a=\mr{Tr}((\rho\otimes\sigma)U^{\ast}(a\otimes Y(B))U)).
\end{equation}
Furthermore for every $B\in\mf{B}$ define the endomorphism $\mc{Y}_{\mf{x}}(B)$ of the linear space of trace class 
operators on $\mf{H}$ such that for every $\rho$ trace class operator on $\mf{H}$ we have 
\begin{equation*}
\mc{Y}_{\mf{x}}(B)\rho\coloneqq E_{\mf{K}}(\ep_{\mc{N}}(U)(\rho\otimes\sigma)R_{Y(B)}^{\otimes}(\un_{\mf{H}}));
\end{equation*}
where $E_{\mf{K}}$ is the partial trace over $\mf{K}$ \cite[section 2]{oza1}. Thus, \cite[(5.3)]{oza1} yields
\begin{equation}
\label{11191108}
\mf{I}_{\mf{x}}(B)\omega_{\rho}=\omega_{\mc{Y}_{\mf{x}}(B)\rho}.
\end{equation}
We call \emph{$\mf{I}_{\mf{x}}$ the channel map associated with the measurement process $\mf{x}$}.
By \eqref{11191127} and \cite[(3.1)]{oza1} immediately we see that $\mc{I}_{\mf{x}}$ is $X$-compatible namely
\begin{equation}
\label{10311146}
(\forall B\in\mf{B})(\forall\uppsi\in\mf{L}(\mf{H})_{\ast})
(\mf{I}_{\mf{x}}(B)(\uppsi)(\un)=\uppsi(X(B))).
\end{equation}
If $A$ is a discrete observable on $\mf{H}$ with value space $(\R,\mf{B}(\R))$, 
then according to \cite[(9.3)]{oza1} there exists $\mf{a}$ a measuring process of $A$
whose associated channel map $\mf{I}_{\mf{a}}$ is the usual von-Neumann map, namely by letting
$\mc{M}\coloneqq\lr{\mf{L}(\mf{H})}{\sigma(\mf{L}(\mf{H}),\mf{L}(\mf{H})_{\ast})}$ we have 
\begin{equation}
\label{10301254}
(\forall B\in\mf{B}(\R))
\begin{cases}
B\cap\sigma(o^{A})\neq\emptyset\Rightarrow
\mf{I}_{\mf{a}}(B)=\sum_{\lambda\in B\cap\sigma(o^{A})}\upzeta_{\mc{M}}^{\dagger}(A(\{\lambda\})),
\\
B\cap\sigma(o^{A})=\emptyset\Rightarrow
\mf{I}_{\mf{a}}(B)=\ze;
\end{cases}
\end{equation}
sum converging in $\mf{L}_{s}(\mc{M}_{s}^{\ast})$. 
Note that $\mc{M}_{s}^{\ast}=\lr{\mf{L}(\mf{H})_{\ast}}{\sigma(\mf{L}(\mf{H})_{\ast},\mf{L}(\mf{H}))}$.
We call $\mf{a}$ \emph{the von Neumann measuring process associated with the discrete observable $A$},
and call $\mf{I}_{\mf{a}}$ \emph{the von Neumann channel map associated with the discrete observable $A$}.
\subsection*{Construction of discrete observables}
\begin{definition}
\label{10311739}
Let $p$ be a family defined on $Z\subseteq\Z_{\geq}$ of orthogonal projectors on a Hilbert space $\mf{H}$
such that $p_{i}p_{j}=\ze$ if $i\neq j$ and $i,j\in Z$, 
and such that $\sum_{i\in Z}p_{i}=\mr{Id}_{\mf{H}}$ sum converging in $\mc{M}$, where 
$\mc{M}\coloneqq\lr{\mf{L}(\mf{H})}{\sigma(\mf{L}(\mf{H}),\mf{L}(\mf{H})_{\ast})}$,\footnote{Equivalently in weak operator 
topology since on the unit ball weak operator and sigma weak operator topologies coincide.}
and let $\lambda$ be a family defined on $Z$ 
of real numbers. We call $p$ a spectral map on $\mf{H}$ defined on $Z$ and $(p,\lambda)$ a spectral couple on $\mf{H}$ 
defined on $Z$. Define
\begin{equation*}
(\forall B\in\mf{B}(\R))
\begin{cases}
\overset{-1}{\lambda}(B)\neq\emptyset
\Rightarrow
E_{(p,\lambda)}(B)\coloneqq\sum_{i\in\overset{-1}{\lambda}(B)}p_{i},
\\
\overset{-1}{\lambda}(B)=\emptyset
\Rightarrow
E_{(p,\lambda)}(B)\coloneqq\ze;
\end{cases}
\end{equation*}
sum converging in $\mc{M}$. $E_{(p,\lambda)}$ is called the discrete observable associated with $(p,\lambda)$. 
Easily we see that $E_{(p,\lambda)}$ is a spectral measure and therefore, we can define 
$\lr{p}{\lambda}\coloneqq\int\imath\,dE_{(p,\lambda)}$ in the sense of the functional
calculus associated with any spectral measure in a Banach space. 
\end{definition}
\end{document}